\newtheorem{problem}{Problem}
\newtheorem{theorem}{Theorem}[section]
\newtheorem{corollary}[theorem]{Corollary}
\newtheorem{heuristic algorithm}{Heuristic Algorithm}
\newtheorem{lemma}[theorem]{Lemma}
\newtheorem{definition}[theorem]{Definition}
\newtheorem{claim}[theorem]{Claim}
\newtheorem{observation}[theorem]{Observation}
\newtheorem{conjecture}{Conjecture}
\newtheorem{approximation algorithm}{Approximation Algorithm}
\newcommand{\poly}{\textnormal{poly}}
\newcommand{\ceil}[1]{\ensuremath{\left\lceil#1\right\rceil}}
\newcommand{\floor}[1]{\ensuremath{\left\lfloor#1\right\rfloor}}
\newcommand{\cset}{\mathcal{C}}
\newcommand{\fset}{\mathcal{F}}
\newcommand{\tset}{\mathcal{T}}
\newcommand{\set}[1]{\{#1\}}
\newcommand{\aset}{{\mathcal{A}}}
\newcommand{\bset}{{\mathcal{B}}}
\newcommand{\iset}{{\mathcal{I}}}
\newcommand{\rset}{{\mathcal{R}}}
\newcommand{\sset}{{\mathcal{S}}}
\newcommand{\hset}{{\mathcal{H}}}
\newcommand{\event}{{\cal{E}}}
\newcommand{\alg}{{\cal{A}}}
\newcommand{\vol}{\operatorname{vol}}
\newcommand{\yi}{{\sc Yes-Instance}\xspace}
\renewcommand{\ni}{{\sc No-Instance}\xspace}
\renewcommand{\phi}{\varphi}
\newcommand{\optcro}[1]{\mathsf{OPT}_{\mathsf{cr}}(#1)}
\newenvironment{proofof}[1]{\noindent{\bf Proof of #1.}}%
{\hfill\stopproof}
\renewenvironment{proof}{\par \smallskip{\bf Proof:}}{\hfill\stopproof}
\newcommand{\expect}[2][]{\text{\bf E}_{#1}\left [#2\right]}
\newcommand{\prob}[2][]{\text{\bf Pr}_{#1}\left [#2\right]}
\newcommand{\eps}{\varepsilon}
\newcommand{\val}{\textsf{val}}
\newenvironment{properties}[2][0]
{
	\begin{enumerate} \setcounter{enumi}{#1}}{\end{enumerate}}
\newcounter{note}[section]
\newcommand{\mynote}[2][red]{\textcolor{#1}{\sc\bf{[#2]}}}
\newcommand{\tnote}[2][blue]{\textcolor{#1}{\sc\bf{[#2]}}}
\def\stopproof{\square}
\def\square{\vbox{\hrule height.2pt\hbox{\vrule width.2pt height5pt \kern5pt
\vrule width.2pt} \hrule height.2pt}}
\newtheorem*{rep@theorem}{\rep@title}
\newcommand{\newreptheorem}[2]{%
	\newenvironment{rep#1}[1]{%
		\def\rep@title{#2 \ref{##1}}%
		\begin{rep@theorem}}%
		{\end{rep@theorem}}}
\newcommand{\opt}{\text{OPT}}
\newcommand{\dist}{\operatorname{dist}}
\newcommand{\DkS}{\textnormal{\textsf{Densest $k$-Subgraph}}\xspace}
\newcommand{\DBS}{\textnormal{\textsf{Bipartite Densest $(k_1,k_2)$-Subgraph}}\xspace}
\newcommand{\BDkS}{\textnormal{\sf{Bipartite Densest $(k_1,k_2)$-Subgraph}}\xspace}
\newcommand{\MCN}{\textnormal{\sf{Minimum Crossing Number}}\xspace}
\newcommand{\DkC}{\textnormal{\textsf{Dense $k$-Coloring}}\xspace}
\newcommand{\CSP}{\textnormal{\textsf{CSP}}\xspace}
\newcommand{\GCSP}{\textnormal{\textsf{Gap-CSP}}\xspace}
\newcommand{\WGP}{\textnormal{\textsf{(r,h)-Graph Partitioning}}\xspace}
\newcommand{\WGPwB}{\textnormal{\textsf{(r,h)-Graph Partitioning with Bundles}}\xspace}
\newcommand{\BCS}{\textnormal{\textsf{Maximum Bounded-Crossing Subgraph}}\xspace}
\newcommand{\LDCSP}{\textnormal{\textsf{LD-2CSP}}\xspace}
\newcommand{\MBCS}{\textnormal{\textsf{Maximum Bounded-Crossing Subgraph}}\xspace}
\newcommand{\size}{\operatorname{size}}
\newcommand{\reals}{\mathbb{R}}
\newcommand{\crn}{\mathsf{cr}}
\newcommand{\CrN}{\mathsf{CrN}}
\newcommand{\optcn}{\mathrm{OPT_{MBCS}}}
\newcommand{\optwgp}{\mathrm{OPT_{GP}}}
\newcommand{\pcn}{\mathrm{MBCS}}
\newcommand{\optdks}{\mathrm{OPT_{DkS}}}
\newcommand{\optbdks}{\mathrm{OPT_{BDkS}}}
\newcommand{\pdks}{\mathrm{DkS}}
\newcommand{\pbdks}{\mathrm{BDkS}}
\newcommand{\pwgp}{\mathrm{GP}}
\newcommand{\pcl}{\mathrm{DkC}}
\newcommand{\pdkc}{\mathrm{DkC}}
\newcommand{\optcl}{\mathrm{OPT_{DkC}}}
\newcommand{\DS}{\textsf{BDkS}\xspace}
\newcommand{\BCSP}{\textsf{Bipartite 2-CSP}\xspace}
\newcommand{\algdbs}{\alg_{\sf BDKS}}
\newcommand{\algsep}{\alg_{\sf sep}}
\newcommand{\conj}{{\sf{Low-Degree CSP Conjecture}}\xspace}
\newcommand{\tH}{\tilde H}
\renewcommand{\P}{\mbox{\sf P}}
\newcommand{\NP}{\mbox{\sf NP}}
\newcommand{\DTIME}{\mbox{\sf DTIME}}
\newcommand{\BPTIME}{\mbox{\sf BPTIME}}
\newcommand{\NDP}{{\sf NDP}\xspace}
\begin{document}

\begin{titlepage}
\title{A New Conjecture on Hardness of Low-Degree 2-CSP’s with Implications to Hardness of Densest $k$-Subgraph and Other Problems}

\author{Julia Chuzhoy\thanks{Toyota Technological Institute at Chicago. Email: {\tt cjulia@ttic.edu}. Supported in part by NSF grant CCF-2006464.}  
\and Mina Dalirrooyfard\thanks{Massachusetts Institute of Technology. Email: {\tt minad@mit.edu}. Part of the work was done while the author was a summer intern at TTIC.}
\and Vadim Grinberg\thanks{Weizmann Institute of Science. Email: {\tt vadim.grinberg@weizmann.ac.il}.}
\and Zihan Tan\thanks{DIMACS, Rutgers University. Email: {\tt zihantan1993@gmail.com}.  Supported by a grant to DIMACS from the Simons
	Foundation (820931). Work done while the author was a graduate student at University of Chicago. } 
}
\maketitle

\thispagestyle{empty}

\begin{abstract}

We propose a new conjecture on hardness of low-degree $2$-CSP's, and show that new hardness of approximation results for Densest $k$-Subgraph and several other problems, including a graph partitioning problem, and a variation of the Graph Crossing Number problem,  follow from this conjecture. The conjecture can be viewed as occupying a middle ground between the $d$-to-$1$ conjecture, and hardness results for $2$-CSP's that can be obtained via standard techniques, such as Parallel Repetition combined with standard $2$-prover protocols for the 3SAT problem. We hope that this work will motivate further exploration of hardness of $2$-CSP's in the regimes arising from the conjecture. We believe that a positive resolution of the conjecture will provide a good starting point for further hardness of approximation proofs.

Another contribution of our work is proving that the problems that we consider are roughly equivalent from the approximation perspective. Some of these problems arose in previous work, from which it appeared that they may be related to each other. We formalize this relationship in this work.
\end{abstract}

\end{titlepage}

\pagenumbering{gobble}
\tableofcontents
\newpage
\pagenumbering{arabic}

\section{Introduction}

In this paper we consider several graph optimization problems, the most prominent and extensively studied of which is \DkS. 
One of the main motivations of this work is to advance our understanding of the approximability of these problems. Towards this goal, we propose a new conjecture on the hardness of a class of 2-\CSP problems, that we call \conj, and we show that new hardness of approximation results for all these problems follow from this conjecture. We believe that the conjecture is interesting in its own right, as it can be seen as occupying a middle ground between the $d$-to-$1$ conjecture, and the type of hardness of approximation results that one can obtain for 2-\CSP problems via standard methods (such as using constant-factor hardness of approximation results for 3-SAT, combined with standard 2-prover protocols and Parallel Repetition). While our conditional hardness of approximation proofs are combinatorial and algorithmic in nature, we hope that this work will inspire complexity theorists to study the conjecture, and also lead to other hardness of approximation proofs that combine both combinatorial and algebraic techniques. 

We prove a new conditional hardness of approximation result for \DkS based on \conj.
In addition to the \DkS problem, we study three other problems. The first problem, called \WGP, recently arose in the hardness of approximation proof of the Node-Disjoint Paths problem of  \cite{NDP-grid-hardness}, who mention that the problem appears similar to \DkS, but could not formalize this intuition. We also study a new problem that we call \DkC, that can be viewed as a natural middle ground between \DkS and \WGP. The fourth problem that we study is a variation of the notoriously difficult \MCN problem, that we call \BCS. This problem also arose implicitly in \cite{NDP-grid-hardness}. We show that all four problems are roughly equivalent from the approximation perspective, in the regime where the approximation factors are somewhat large (but some of our reductions require quasi-polynomial time). We then derive conditional hardness of approximation results for all these problems based on these reductions and the conditional hardness of \DkS. 

The main contribution of this paper is thus twofold: first, we propose a new conjecture on hardness of \CSP's and show that a number of interesting hardness of approximation results follow from it. Second, we establish a close connection between the four problems that we study. The remainder of the Introduction is organized as follows.
We start by providing a brief overview of the four problems that we study in this paper. We then state the \conj and put it into context with existing results and well-known conjectures. Finally, we provide a more detailed overview of our results and techniques.

\paragraph{Densest $k$-Subgraph.}
In the \DkS problem, given an $n$-vertex graph $G$ and an integer $k>1$, the goal is to compute a subset $S$ of $k$ vertices of $G$, while maximizing the number of edges in $G[S]$. 
 \DkS  is one of the most basic graph optimization problems that has been studied extensively (see e.g. \cite{kortsarz1993choosing,feige1997densest,feige2001dense,feige2001approximation,feige2002relations,khot2006ruling,goldstein2009dense,dks10,dks_average_hardness,bhaskara2012polynomial,barman2015approximating,braverman2017eth,Manurangsi16,chlamtac2018densest,manurangsi2018inapproximability,lin2018parameterized,sotirov2020solving,chang2020hardness,hanaka2022computing}). At the same time it seems notoriously difficult, and despite this extensive work, our understanding of its approximability is still incomplete.
The best current approximation algorithm for \DkS, due to \cite{dks10}, achieves, for every $\eps>0$, an $O(n^{1/4+\eps})$-approximation, in  time $n^{O(1/\eps)}$. 
Even though the problem appears to be very hard, its hardness of approximation proof has been elusive. 
For example, no constant-factor hardness of approximation proofs for \DkS are currently known under the standard $\P\neq \NP$ assumption, or even the stronger assumption that $\NP\not\subseteq \BPTIME(n^{\poly\log n})$. In a breakthrough result, Khot \cite{khot2006ruling} proved a factor-$c$ hardness of approximation for \DkS, for some small constant $c$, assuming that $\NP\not\subseteq \cap_{\eps>0}\BPTIME(2^{n^{\eps}})$. 
Several other papers proved constant and super-constant hardness of approximation results for \DkS under \emph{average-case} complexity assumptions: namely that no efficient algorithm can refute random $3$-SAT or random $k$-AND formulas \cite{feige2002relations, dks_average_hardness}. Additionally,
a factor $2^{\Omega(\log^{2/3}n)}$-hardness of approximation was shown under
assumptions on solving Planted Clique \cite{dks_average_hardness}. 
In a recent breakthrough, Manurangsi~\cite{Manurangsi16} proved that, under the Exponential Time Hypothesis (ETH), the \DkS problem is hard to approximate to within factor $n^{1/(\log\log n)^c}$, for some constant $c$. Proving a super-constant hardness of \DkS under weaker complexity assumptions remains a tantalizing open question that we attempt to address in this paper. Unfortunately, it seems unlikely that the techniques of \cite{Manurangsi16} can yield such a result. In this paper we show that, assuming the \conj that we introduce,  \DkS  is \NP-hard to approximate to within factor $2^{(\log n)^{\eps}}$, for some constant $\eps>0$.

\paragraph{The $(r,h)$-Graph Partitioning Problem.}
A recent paper \cite{NDP-grid-hardness} on the hardness of approximation of the Node-Disjoint Paths  (\NDP) problem formulated and studied a new graph partitioning problem, called \WGP. The input to the problem is a graph $G$, and two integers, $r$ and $h$. The goal is to compute $r$ vertex-disjoint subgraphs $H_1,\ldots,H_r$ of $G$, such that for each $1\leq i\leq r$, $|E(H_i)|\leq h$, while maximizing  $\sum_{i=1}^r|E(H_i)|$. A convenient intuitive way of thinking about this problem is that we are interested in obtaining a balanced partition of the graph $G$ into $r$ vertex-disjoint subgraphs, so that the subgraphs contain sufficiently many edges. Unlike standard graph partitioning problems, that typically aim to minimize the number of edges connecting the different subgraphs in the solution, our goal is to maximize the total number of edges that are contained in the subgraphs. In order to avoid trivial solutions, in which one of the subgraphs contains almost the entire graph $G$, and the remaining subgraphs are almost empty, we place an upper bound $h$ on the number of edges that each subgraph may contribute towards the solution. Note that the subgraphs $H_i$ of $G$ in the solution need not be vertex-induced subgraphs. 

The work of \cite{NDP-grid-hardness} attempted to use \WGP as a proxy problem for proving hardness of approximation of \NDP. Their results imply that \NDP is at least as hard to approximate as \WGP, to within polylogarithmic factors. In order to prove hardness of \NDP, it would then be sufficient to show that \WGP is hard to approximate. 
Unfortunately, \cite{NDP-grid-hardness} were unable to do so. Instead, they considered a generalization of \WGP, called \WGPwB. They showed that \NDP is at least as hard as \WGPwB, and then proved hardness of this new problem. In the \WGPwB problem, the input is the same as in  \WGP, but now graph $G$ must be bipartite, and, for every vertex $v$, we are given a partition $\bset(v)$ of the set of edges incident to $v$ into subsets that are called \emph{bundles}. We require that, in a solution $(H_1,\ldots,H_r)$ to the problem, for every vertex $v\in V(G)$, and every bundle $\beta\in \bset(v)$, at most one edge of $\beta$ contributes to the solution; in other words, at most one edge of $\beta$ may lie in $\bigcup_iE(H_i)$. This is a somewhat artificial problem, but  this definition allows one to bypass some of the barriers that arise when trying to prove the hardness of \WGP from existing hardness results for \CSP's.

It was noted in \cite{NDP-grid-hardness} that the \WGP problem resembles the \DkS problem for two reasons. First, in \DkS, the goal is to compute a dense subgraph of a given graph, with a prescribed number of vertices. One can think of \WGP as the problem of computing many vertex-disjoint dense subgraphs of a given graph. Second, natural hardness of approximation proofs for both problems seem to run into the same barriers.
It is therefore natural to ask: (i) Can we prove that the \WGP problem itself is hard to approximate? In particular, can the techniques  of \cite{NDP-grid-hardness} be exploited in order to obtain such a proof? and (ii) Can we formalize this intuitive connection between \WGP and \DkS? In this paper we make progress on both these questions. Our conditional hardness result for \DkS indeed builds on the ideas from \cite{NDP-grid-hardness} for proving hardness of \WGPwB. We also provide ``almost'' approximation-preserving reductions between \WGP to \DkS:  we show that, if there is an efficient factor $\alpha(n)$-approximation algorithm for \DkS, then there is a randomized efficient factor $O(\alpha(n^2)\cdot\poly\log n)$-approximation algorithm to \WGP. We also provide a reduction in the opposite direction: we prove that, if there is an efficient $\alpha(n)$-approximation algorithm for \WGP, then there is a randomized algorithm for \DkS, that achieves approximation factor $O\left ((\alpha(n^{O(\log n)}))^3\cdot \log^2 n\right )$, in time $n^{O(\log n)}$. Therefore, we prove that \DkS and \WGP are roughly equivalent from the approximation perspective (at least for large approximation factors and quasi-polynomial running times).
Combined with our conditional hardness of approximation for \DkS, our results show that, assuming the \conj, for some constant $0<\eps\leq 1/2$, there is no efficient $2^{(\log n)^{\eps}}$-approximation algorithm for \WGP, unless $\NP\subseteq \BPTIME(n^{O(\log n)})$.

\paragraph{Maximum Bounded-Crossing Subgraph.}
The third problem that we study is a variation of the classical \MCN problem. In the \MCN problem, given an input $n$-vertex graph $G$, the goal is to compute a drawing of $G$ in the plane while minimizing the number of crossings in the drawing. We define the notions of graph drawing and crossings formally in the Preliminaries, but these notions are quite intuitive and the specifics of the definition are not important in this high-level overview.

The \MCN problem was initially introduced by Tur\'an \cite{turan_first} in 1944, and has been extensively studied since then  (see, e.g., \cite{chuzhoy2011algorithm, chuzhoy2011graph, chimani2011tighter, chekuri2013approximation, KawarabayashiSidi17, kawarabayashi2019polylogarithmic,chuzhoy2020towards}, and also \cite{richter_survey, pach_survey, matousek_book, schaefer2012graph} for excellent surveys). 
But despite all this work, most aspects of the problem are still poorly understood. A long line of work \cite{leighton1999multicommodity, even2002improved,  chuzhoy2011graph,chuzhoy2011algorithm,KawarabayashiSidi17, kawarabayashi2019polylogarithmic,improved-improved-gmt-arxiv,chuzhoy2022subpolynomial} has recently led to the first sub-polynomial approximation algorithm for the problem in \emph{low degree graphs}. Specifically, \cite{chuzhoy2022subpolynomial} obtain a factor $O\left(2^{O((\log n)^{7/8}\log\log n)}\cdot\Delta^{O(1)}\right )$-approximation algorithm for \MCN, where $\Delta$ is the maximum vertex degree. To the best of our knowledge, no non-trivial approximation algorithms are known for the problem when vertex degrees in the input graph $G$ can be arbitrary. However, on the negative side, only APX-hardness is known for the problem \cite{cabello2013hardness,ambuhl2007inapproximability}. As the current understanding of the \MCN problem from the approximation perspective is extremely poor, it is natural to study hardness of approximation of its variants.

 Let us consider two extreme variations of the \MCN problem. The first variant is the \MCN problem itself, where we need to draw an input graph $G$ in the plane with fewest crossings. The second variant is where we need to compute a subgraph $G'$ of the input graph $G$ that is planar, while maximizing $|E(G')|$. The latter problem has a simple constant-factor approximation algorithm, obtained by letting $G'$ be any spanning forest of $G$ (this is since a planar $n$-vertex graph may only have $O(n)$ edges).

In this paper we study a variation of the \MCN problem, that we call \BCS, which can be viewed as an intermediate problem between these two extremes. In the \BCS problem, given an $n$-vertex graph $G$ and an integer $L>0$, the goal is to compute a subgraph $H\subseteq G$, such that $H$ has a plane drawing with at most $L$ crossings, while maximizing $|E(H)|$. This problem is only interesting when the bound $L$ on the number of crossings  is $\Omega(n)$. This is since, from the Crossing Number Inequality \cite{ajtai82,leighton_book}, if $|E(G)|\geq 4|V(G)|$, then the crossing number of $G$ is at least $ \Omega(|E(G)|^3/|V(G)|^2)$. Therefore, for $L=O(n)$, a spanning tree provides a constant-factor approximation to the problem. 
We emphasize that the focus here is on dense graphs, whose crossing number may be as large as $\Omega(n^4)$. 

The \BCS problem was implicitly used in  \cite{NDP-grid-hardness} for proving hardness of approximation of \NDP, as an intermediate problem, in the reduction from \WGPwB to \NDP. Their work suggests that  there may be a connection between \WGP  and  \BCS, even though the two problems appear quite different. In this paper we prove that the two problems are roughly equivalent from the approximation perspective: if there is an efficient factor $\alpha(n)$-approximation algorithm for \WGP, then there is an efficient $O(\alpha(n)\cdot\poly\log n)$-approximation algorithm for \BCS. On the other hand, an efficient $\alpha(n)$-approximation algorithm for \BCS implies an efficient  $O((\alpha(n))^2\cdot \poly\log n)$-approximation algorithm for \WGP. Combined with our conditional hardness of approximation for \WGP, we get that, assuming the \conj, for some constant $0<\eps\leq 1/2$ there is no efficient $2^{(\log n)^{\eps}}$-approximation algorithm for \BCS, unless $\NP\subseteq \BPTIME(n^{O(\log n)})$.

\paragraph{Dense $k$-Coloring.}
The fourth and last problem that we consider is \DkC. In this problem, the input is an $n$-vertex graph $G$ and an integer $k$, such that $n$ is an integral multiple of $k$. The goal is to partition $V(G)$ into $n/k$ disjoint subsets $S_1,\ldots,S_{n/k}$, of cardinality $k$ each, so as to maximize $\sum_{i=1}^{n/k}|E(S_i)|$.  This problem can be viewed as an intermediate problem between \DkS and \WGP. The connection to \WGP seems clear: in both problems, the goal is to compute a large collection of subgraphs of the input graph $G$, that contain many edges of $G$. While in \WGP we place a limit on the number of edges in each subgraph, in \DkC we require that each subgraph contains exactly $k$ vertices. The connection to the \DkS problem is also clear: while in \DkS  the goal is to compute a single dense subgraph of $G$ containing $k$ vertices, in \DkC we need to partition $G$ into many dense subgraphs, containing $k$ vertices each. We show reductions between the \DkC and the \DkS problem in both directions, that provide very similar guarantees to the reductions between \WGP and \DkS. In particular,  our results show that, assuming the \conj, for some constant $0<\eps\leq 1/2$, there is no efficient $2^{(\log n)^{\eps}}$-approximation algorithm for \DkC, unless $\NP\subseteq \BPTIME(n^{O(\log n)})$.

\paragraph{The Low-Degree CSP Conjecture.}
We now turn to describe our new conjecture on the hardness of 2-\CSP's.
We consider the following bipartite version of the Constraint Satisfaction Problem with 2 variables per constraint (2-\CSP).
The input consists of two sets $X$ and $Y$ of variables, together with  an integer $A\geq 1$. Every variable in $X\cup Y$ takes values in $[A]=\set{1,\ldots,A}$. We are also given a collection $\cset$ of constraints, where each constraint $C(x,y)\in \cset$ is defined over a pair of variables $x\in X$ and $y\in Y$. For each such constraint, we are given a truth table that, for every pair of assignments $a$ to $x$ and $a'$ to $y$, specifies whether $(a,a')$ \emph{satisfy} the constraint. 
The \emph{value} of the \CSP is the largest fraction of constraints that can be simultaneously satisfied by an assignment to the variables.
For given values $0<s<c\leq 1$, the $(c,s)$-\GCSP problem is the problem of distinguishing \CSP's of value at least $c$ from those of value at most $s$.

We can associate, to each constraint $C=C(x,y)\in \cset$, a bipartite graph $G_C=(L,R,E)$, where $L=R=[A]$, and there is an edge $(a,a')$ in $E$ iff the assignments $a$ to $x$ and $a'$ to $y$ satisfy $C$. 
Notice that instance $\iset$ of the \BCSP problem is completely defined by $X,Y,A,\cset$, and the graphs in $\set{G_C}_{C\in \cset}$, so we will denote $\iset=(X,Y,A,\cset,\set{G_C}_{C\in \cset})$.
We let the \emph{size} of instance $\iset$ be $\size(\iset)=|\cset|\cdot A^2+|X|+|Y|$. We sometimes refer to $A$ as the \emph{size of the alphabet for instance $\iset$}.
We say that instance $\iset$ of 2-\CSP is \emph{$d$-to-$d'$} iff for every constraint $C$, every vertex of $G_C$  that lies in $L$ has degree at most $d$, and every vertex that lies in $R$ has degree at most $d'$. (We note that this is somewhat different from the standard definition, that requires that all vertices in $L$ have degree exactly $d$ and all vertices of $R$ have degree exactly $d'$. In the standard definition, the alphabet sizes for variables in $X$ and $Y$ may be different, that is, variables in $X$ take values in $[A]$ and variables of $Y$ take values in $[A']$ for some integers $A,A'$. However, this difference is insignificant to our discussion, and it is more convenient for us to use this slight variation of the standard definition).

The famous Unique-Games Conjecture of Khot \cite{Khot-UGC} applies to $1$-to-$1$ CSP's. The conjecture states that, for any $0<\eps<1$, there is a large enough value $A$, such that the $(1-\eps,\eps)$-\GCSP problem is \NP-hard for $1$-to-$1$ instances with alphabet size $A$. 
The conjecture currently remains open, though interesting progress has been made on the algorithmic side: the results of \cite{arora2015subexponential} provide an algorithm for the problem with running time $2^{n^{O(1/\eps^{1/3})}}$.

A conjecture that is closely related to the Unique-Games Conjecture is the $d$-to-$1$ Conjecture of  Khot \cite{Khot-UGC}. The conjecture states that,  for every $0<\eps<1$, and $d>0$, there is a large enough value $A$, such that the $(1,\eps)$-\GCSP problem in $d$-to-$1$ instances with alphabet size $A$ is \NP-hard.

H{\aa}stad \cite{Hastad} proved the following nearly optimal hardness of approximation results for \CSP's: he showed that for every $0<\eps<1$, there are values $d$ and $A$, such that the problem of $(1,\eps)$-\GCSP in $d$-to-$1$ instances with alphabet size $A$ is \NP-hard. The value $d$, however, depends exponentially on $\poly(1/\eps)$ in this result.
In contrast, in the $d$-to-$1$ Conjecture, both $d$ and $\eps$ are fixed, and $d$ may not have such a strong dependence on $1/\eps$.

On the algorithmic side, the results of \cite{arora2015subexponential, steurer2010subexponential} provide an algorithm for  $(c,s)$-\GCSP on $d$-to-$1$ instances. The running time of the algorithm is $2^{n^{O(1/(\log(1/s))^{1/2})}}$, where the $O(\cdot)$ notation hides factors that are polynomial in $d$ and $A$.

A recent breakthrough in this area is the proof of the $2$-to-$2$ conjecture (now theorem), that builds on a long sequence of work \cite{barak2015making,2-to-2-7,2-to-2-6,2-to-2-5,2-to-2-4,2-to-2-3,2-to-2-2,2-to-2}. The theorem proves that for every $0<\eps<1$, there is a large enough value $A$, such that the $(1-\eps,\eps)$-\GCSP problem is \NP-hard on $2$-to-$2$ instances with alphabet size $A$.

In this paper, we propose the following conjecture, that we refer to as \emph{\conj}, regarding the hardness of $\GCSP$ in $d$-to-$d$ instances.

\begin{conjecture}[\conj]
	\label{conj: main}
	There is a constant $0<\eps\leq 1/2$, such that it is \NP-hard to distinguish between $d(n)$-to-$d(n)$ instances of 2-\CSP of size $n$, that have value at least $1/2$, and those of value at most $s(n)$, where 
	$d(n)=2^{(\log n)^{\eps}}$ and $s(n)= 1/2^{64 (\log n)^{1/2+\eps}}$. 
\end{conjecture}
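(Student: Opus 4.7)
Since the final statement is a conjecture that the authors propose but do not prove, any ``proof proposal'' is necessarily speculative; still, a natural starting point would be to try to derive it from known hardness results for structured 2-CSP's via a careful combination of parallel repetition and gadget reductions. The plan is to begin with a base \NP-hardness result for a 2-prover game with high completeness --- for example, the Label Cover hardness obtained from H\aa stad's theorem, or the more recent $2$-to-$2$ theorem of \cite{2-to-2} --- and then drive the soundness down to the target $s(n)=2^{-64(\log n)^{1/2+\eps}}$ while tracking how the degree parameter grows. The relaxation of completeness from $1-\eps$ or $1$ down to $1/2$ should be used as a source of slack: it allows verifier constructions whose acceptance probability in the completeness case is ``intentionally'' around $1/2$ (as in Feige--Kilian-type derandomized set-system gadgets), while concentrating much more sharply in the soundness case.

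The central technical challenge is balancing the growth of three quantities under parallel repetition: the instance size $n$, the alphabet size $A$, and the one-sided degree $d$. After $k$ rounds of standard parallel repetition starting from an instance of size $n_0$ with constant alphabet and constant left/right degree $d_0$, the alphabet becomes $A^k$, the degree becomes $d_0^k$, the instance size becomes roughly $N=n_0^k$, and the soundness decays as $(1-\gamma)^{\Omega(k)}$ for a fixed $\gamma>0$. Choosing $k\approx (\log N)^{\eps}$ would give $d(N)\approx 2^{(\log N)^{\eps}}$, as required, but the resulting soundness is only $2^{-\Omega((\log N)^{\eps})}$, which is much weaker than the targeted $2^{-64(\log N)^{1/2+\eps}}$. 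Closing this gap seems to require either a sharper repetition-style amplification that reduces soundness faster per unit of degree growth, or an altogether different PCP/label-cover construction in which the constraint graph on both sides has a tightly controlled, nearly regular degree.

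A second, complementary route is direct PCP composition: take an outer Raz-style Label Cover with near-optimal soundness and compose it with an inner verifier whose query pattern is engineered so that the bipartite constraint graph $G_C$ is $d$-regular on both sides with $d$ growing only as $2^{(\log n)^{\eps}}$. Candidate inner verifiers could be based on long-code/short-code style tests with a carefully pruned query set, or on algebraic constructions in the spirit of the low-degree polynomial tests that underlie the $2$-to-$2$ proof; the hope is that the completeness target of merely $1/2$ allows a more aggressive pruning of the query distribution than would be possible at completeness $1-\eps$, thereby regularizing degrees on both sides without sacrificing the soundness analysis.

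The hard part --- and the reason the conjecture is posed rather than proved --- is precisely this quantitative trade-off between degree and soundness in the sub-polynomial regime. Existing hardness results either fix $d$ as a constant (as in \cite{Hastad}), in which case polynomial-time reductions cannot push soundness below a constant, or let $d$ depend on $1/\eps$ in a way that is incompatible with the $d(n)=2^{(\log n)^{\eps}}$ bound once $\eps$ is translated into a function of $n$. A successful proof will likely need a genuinely new parameterization of either the PCP construction or the repetition step, so that the one-sided degree and the soundness scale together along the curve prescribed by \Cref{conj: main}; identifying such a parameterization is exactly the open problem the paper leaves to the complexity-theory community.
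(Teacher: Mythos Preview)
Your assessment is correct: the paper does not prove \Cref{conj: main}; it is proposed as a conjecture, and the authors explicitly discuss why the natural approaches you outline---H\aa stad's result, parallel repetition from a constant gap, and the $2$-to-$2$ theorem combined with repetition---all fall short of the required degree/soundness trade-off. Your analysis of the obstruction (degree growing as $2^{\Omega((\log n)^{1/2+\eps})}$ rather than $2^{(\log n)^{\eps}}$ under standard repetition) matches the paper's own discussion essentially verbatim, so there is nothing to correct.
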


We now compare this conjecture to existing conjectures and results in this area that we are aware of.
First, in contrast to the $d$-to-$1$ conjecture, we allow the parameter $d$ and the soundness parameter $s$ to be functions of $n$ -- the size of the input instance. Note that the size of the input instance depends on the alphabet size $A$, so, unlike in the setting of the $d$-to-$1$ conjecture, $A$ may no longer be arbitrarily large compared to $d$ and $s$. 

The hardness of approximation result of H{\aa}stad \cite{Hastad} 
for $d$-to-$d$ \CSP's only holds when $d$ depends exponentially on $\poly(1/s)$,  (in particular it may not extend to the setting where $s(n)= 1/2^{64 (\log n)^{1/2+\eps}}$, since the size $n$ of the instance depends polynomially on $d(n)$).

We can also combine standard constant hardness of approximation results for \CSP's (such as, for example, $3$-SAT) with the Parallel Repetition theorem, to obtain \NP-hardness of $(1,s(n))$-\GCSP on $d(n)$-to-$d(n)$ instances. Using this approach, if we start from an instance of \CSP of size $N$ and a constant hardness gap (with perfect completeness), after $\ell$ rounds of parallel repetition, we obtain hardness of  $(1,s)$-\GCSP on $d$-to-$d$ instances with $s=2^{O(\ell)}$, $d=2^{O(\ell)}$, and the resulting instance size $n=N^{O(\ell)}$.
By setting the number of repetition to be $\ell=\Theta\left((\log N)^{(1/2+\eps)/(1/2-\eps)}\right )$, we can ensure the desired bound  $s(n)=1/2^{64 (\log n)^{1/2+\eps}}$. However, in this setting, we also get that $d(n)=2^{\Omega((\log n)^{1/2+\eps})}$, which is significantly higher than the desired value $d(n)=2^{(\log n)^{\eps}}$.

Lastly, one could attempt to  combine the recent proof of the $2$-to-$2$ conjecture with Parallel Repetition in order to reap the benefits of both approaches, but the resulting parameters also fall short of the ones stated in the conjecture.

From the above discussion, one can view the \conj as occupying a middle ground between the $d$-to-$1$ conjecture, and the results one can obtain via standard techniques of amplifying a constant hardness of a \CSP, such as 3SAT, via Parallel Repetition.

We now proceed to discuss our results and techniques in more detail.

\subsection{A More Detailed Overview of our Results and Techniques}

In addition to posing the \conj that we already described above, we prove conditional hardness of approximation of the four problems that we consider. We also prove that all four problems are roughly equivalent approximation-wise. We now discuss the conditional hardness of approximation for \DkS and the connections between the four problems that we establish.

\paragraph{Conditional Hardness of \DkS.}
Our first result is a conditional hardness of \DkS. Specifically, we prove that, assuming \Cref{conj: main} holds and $\P\ne \NP$,  for some $0<\eps\leq 1/2$, there is no efficient approximation algorithm for \DkS problem that achieves approximation factor $2^{(\log N)^{\eps}}$, where $N$ is the number of vertices in the input graph.

We now provide a brief overview of our techniques. The proof of the above result employs a Cook-type reduction, and follows some of the ideas that were introduced in  \cite{NDP-grid-hardness}. We assume for contradiction that there is a factor-$\alpha$ algorithm $\aset$ for the \DkS problem, where $\alpha=2^{(\log N)^{\eps}}$. Given an input instance $\iset$ of the $2$-\CSP problem of size $n$, that is a $d(n)$-to-$d(n)$ instance, we construct a constraint graph $H$ representing $\iset$. We gradually decompose graph $H$ into a collection $\hset$ of disjoint subgraphs, such that, for each subgraph $H'\in \hset$, we can either certify that the value of the corresponding instance of $2$-\CSP is at most $1/4$, or it is at least $\beta$, for some carefully chosen parameter $\beta$. In order to compute the decomposition, we start with $\hset=\set{H}$. If, for a graph $H'\in \hset$, we certified that the corresponding instance of $2$-\CSP has value at most $1/4$, or at least $\beta$, then we say that graph $H'$ is \emph{inactive}. Otherwise, we say that it is \emph{active}. As long as $\hset$ contains at least one active graph, we perform iterations. In each iteration, we select an arbitrary graph $H'\in \hset$ to process. In order to process $H'$, we consider an \emph{assignment graph} $G'$ associated with $H'$, that contains a vertex for every variable-assignment pair $(x,a)$, where $x$ is a variable whose corresponding vertex belongs to $H'$. We view $G'$ as an instance of the \DkS problem, for an appropriately chosen parameter $k$, and apply the approximation algorithm $\aset$ for \DkS to it. Let $S$ be the set of vertices of $G'$ that Algorithm $\aset$ computes as a solution to this instance. We exploit the set $S$ of vertices in order to either (i) compute a large subset $E'\subseteq E(H')$ of edges, such that, if we denote by $\cset'\subseteq \cset$ the set of constraints corresponding to $E'$, then at most $1/4$ of the constraints of $\cset'$ can be simultaneously satisfied; or (ii) compute a large subset $E'\subseteq E(H')$ of edges as above, and certify that at least a $\beta$-fraction of such constraints can be satisfied; or (iii) compute a subgraph $H''\subseteq H'$, such that $|V(H'')|\ll |V(H')|$, and the number of edges contained in graphs $H''$ and $H'\setminus V(H'')$ is sufficiently large compared to $E(H')$. In the former two cases, we replace $H'$ with graph $H'[E']$ in $\hset$, and graph $H'[E']$ becomes inactive. In the latter case, we replace $H'$ with two graphs: $H''$ and $H'\setminus V(H'')$, that both remain active. The algorithm terminates once every graph in $\hset$ is inactive. The crux of the analysis of the algorithm is to show that, when the algorithm terminates, the total number of edges lying in the subgraphs $H'\in \hset$ is high, compared to $|E(H)|$. This algorithm for decomposing graph $H$ into subgraphs and its analysis employ some of the techniques and ideas introduced in \cite{NDP-grid-hardness}, and is very similar in spirit to the hardness of approximation proof of the \WGPwB problem, though  details are different. We employ this decomposition algorithm multiple times, in order to obtain a partition $(E_0,E_1,\ldots,E_z)$ of the set $E(H)$ of edges into a small number of subsets, such that, among the constraints corresponding to the edges of $E_0$, at most a $1/4$-fraction can be satisfied by any assignment to $X\cup Y$, and, for all $1\leq i\leq z$, a large fraction of constraints corresponding to edges of $E_i$ can be satisfied by some assignment. Depending on the cardinality of the set $E_0$ of edges we then determine whether $\iset$ is a \yi or a \ni.

\paragraph{Reductions from \DkC and \WGP to \DkS.}
We show that, if there is an efficient factor $\alpha(n)$-approximation algorithm for the \DkS problem, then there is an efficient $O(\alpha(n^2)\cdot\poly\log n)$-approximation algorithm for \DkC, and an efficient $O(\alpha(n^2)\cdot \poly\log n)$-approximation algorithm for \WGP. The two reductions are very similar, so we focus on describing the first one. We believe that the reduction is of independent interest, and uses unusual techniques.

We assume that there is an $\alpha(n)$-approximation algorithm for the \DkS problem. In order to obtain an approximation algorithm for \DkC, we start by formulating a natural LP-relaxation for the problem. Unfortunately, this LP-relaxation has a large number of variables: roughly $n^{\Theta(k)}$, where $n$ is the number of vertices in the 
input graph and $k$ is the parameter of the \DkC problem instance. We then show an efficient algorithm, that, given a solution to the LP-relaxation, whose support size is bounded by $\poly(n)$, computes an approximate integral solution to the \DkC problem.

The main challenge is that, since the LP relaxation has $n^{\Theta(k)}$ variables, it is unclear how to solve it efficiently. We consider the dual linear program, that has $\poly(n)$ variables and $n^{\Theta(k)}$ constraints. Using the $\alpha(n)$-approximation algorithm for \DkS as a subroutine, we design an approximate separation oracle for the dual LP, that allows us to solve the original LP-relaxation for \DkC, obtaining a solution whose support size is bounded by $\poly(n)$. By applying the LP-rounding approximation algorithm to this solution, we obtain the desired approximate solution to the input instance of \DkC.

\paragraph{Reductions from \DkS to \WGP and \DkC.}
We prove that, if there is an efficient $\alpha(n)$-approximation algorithm for \DkC, then there is a randomized algorithm for the \DkS problem, whose running time is $n^{O(\log n)}$, that with high probability obtains an $O(\alpha(n^{O(\log n)})\cdot\log n)$-approximate solution to the input instance of the problem. We also show a similar reduction from \DkS to \WGP, but now the resulting approximation factor for \DkS becomes $O((\alpha(n^{O(\log N)}))^3\cdot\log^2 n)$. By combining these reductions with our conditional hardness result for \DkS, we get that,  assuming the \conj, for some constant $0<\eps\leq 1/2$, there is no efficient $2^{(\log n)^{\eps}}$-approximation algorithm for \WGP and for \DkC, unless $\NP\subseteq \BPTIME(n^{O(\log n)})$.

The two reductions are very similar; we focus on the reduction to \DkC in this overview. Our construction is inspired by the results of \cite{khanna2000hardness}, and we borrow some of our ideas from them. 
Assume that there is an efficient $\alpha(n)$-approximation algorithm for \DkC. Let $G$ be an instance of the \DkS problem. The main difficulty in the reduction is that it is possible that $G$ only contains one very dense subgraph induced by $k$ vertices, while the \DkC problem requires that the input graph $G$ can essentially be partitioned into many such dense subgraphs. To overcome this difficulty, we construct a random ``inflated'' bipartite graph $H$, that contains $n^{O(\log n)}$ vertices, where $n=|V(G)|$. Every vertex of $G$ is mapped to some vertex of $H$ at random, while every edge of $G$ is mapped to a large number of edges of $H$. This allows us to ensure that, if $G$ contains a subgraph $G'$ induced by a set of $k$ vertices, where $|E(G')|=R$, then graph $H$ can essentially be partitioned into a large number of subgraphs that contain $k$ vertices each, and many of them contain close to $R$ edges. Therefore, we can apply our $\alpha(n)$-approximation algorithm for \DkC to the new graph $H$. The main challenge in the reduction is that, while this approximation algorithm is guaranteed to return a large number of disjoint dense subgraphs of $H$, since every edge of $G$ contributes many copies to $H$, it is not clear that one can extract a single dense subgraph of $G$ from dense subgraphs of $H$. The main difficulty in the reduction is to ensure that, on the one hand, a single $k$-vertex dense subgraph in $G$ can be translated into $|V(H)|/k$ dense subgraphs of $H$; and, on the other hand, a dense $k$-vertex subgraph of $H$ can be translated into a dense subgraph of $G$ on $k$ vertices. We build on and expand the ideas from \cite{khanna2000hardness} in order to ensure these properties.

\paragraph{Reductions between \WGP and \BCS.}
Lastly, we provide reductions between \WGP and \BCS in both directions. First, we show that, if there is an efficient factor $\alpha(n)$-approximation algorithm for \WGP, then there is an efficient $O(\alpha(n)\cdot\poly\log n)$-approximation algorithm for \BCS. On the other hand, an efficient $\alpha(n)$-approximation algorithm for \BCS implies an efficient $O((\alpha(n))^2\cdot \poly\log n)$-approximation algorithm for \WGP. Combined with our conditional hardness of approximation for \WGP, we get that, assuming the \conj, for some constant $0<\eps\leq 1/2$, there is no efficient $2^{(\log n)^{\eps}}$-approximation algorithm for \WGP, unless $\NP\subseteq \BPTIME(n^{O(\log n)})$.

Both these reductions exploit the following connection between crossing number and graph partitioning: if a graph $G$ has a drawing with at most $L$ crossings, then there is a balanced cut in $G$, containing at most $O\left(\sqrt{L+\Delta\cdot |E(G)|}\right )$ edges, where $\Delta$ is maximum vertex degree in $G$. This result can be viewed as an extension of the classical Planar Separator Theorem of \cite{lipton1979separator}.
Another useful fact exploited in both reductions is that any graph $G$ with $m$ edges has a plane drawing with at most $m^2$ crossings. In particular, if $\hset=\set{H_1,\ldots,H_r}$ is a solution to an instance of the \WGP problem on graph $G$, then there is a drawing of graph $H=\bigcup_{i=1}^rH_i$, in which the number of crossings is bounded by $r\cdot h^2$. These two facts establish a close relationship between the \WGP and \BCS problems, that are exploited in both our reductions.

We have now obtained a chain of reductions that show that all four problems, \DkS, \DkC, \WGP, and \BCS are almost equivalent from approximation viewpoint (if one considers sufficiently large approximation factors, and allows randomized quasi-polynomial time algorithms). We also obtain conditional hardness of approximation results for all four problems based on the \conj.

\paragraph{Organization.}
We start with preliminaries in \Cref{sec: prelim}. In \Cref{sec: conditional} we provide the conditional hardness of approximation proof for the \DkS problem. In \Cref{sec: DkC and WGP to DkS} we provide our reductions from \DkC and \WGP to \DkS, and in \Cref{sec: DkS to DkC and WGP} we provide reductions in the opposite direction. 
Lastly in \Cref{sec: WGP and BCS} we provide reductions between \WGP and \BCS.

\section{Preliminaries}
\label{sec: prelim}

By default, all logarithms are to the base of $2$. For a positive integer $N$, we denote by $[N]=\set{1,2,\ldots,N}$. All graphs are finite, simple and undirected. We say that an  event holds with high probability if the probability of the event is $1-1/n^c$ for a large enough constant $c$, where $n$ is the number of vertices in the input graph.

\subsection{General Notation}
Let $G$ be a graph and let $S$ be a subset of its vertices. We denote by $G[S]$ the subgraph of $G$ induced by $S$.
For two disjoint subsets $A,B$ of vertices of $G$, we denote by $E_G(A,B)$ the set of all edges with one endpoint in $A$ and the other endpoint in $B$, and we denote by $E_G(A)$ the set of all edges with both endpoints in $A$.
Given a graph $G$ and a vertex $v\in V(G)$, we denote by $\deg_G(v)$ the degree of $v$ in $G$. For a subset $S$ of vertices of $G$, its \emph{volume} is $\vol_G(S)=\sum_{v\in S}\deg_G(v)$.
 We sometimes omit the subscript $G$ if it is clear from the context.


Given a graph $G$, a drawing $\varphi$ of $G$ is an embedding of  $G$ into the plane, that maps
every vertex $v$ of $G$ to a point (called the \emph{image of $v$} and denoted by $\varphi(v)$), and every edge $e$ of $G$ to a simple curve (called the \emph{image of $e$} and denoted by $\varphi(e)$), that connects the images of its endpoints. If $e$ is an edge of $G$ and $v$ is a vertex of $G$, then the image of $e$ may only contain the image of $v$ if $v$ is an endpoint of $e$. Furthermore, if some point $p$ belongs to the images of three or more edges of $G$, then $p$ must be the image of a common endpoint of all edges $e$ with $p\in \phi(e)$. We say that two edges $e,e'$ of $G$ \emph{cross} at a point $p$, if $p\in \phi(e)\cap \phi(e')$, and $p$ is not the image of a shared endpoint of these edges.
Given a graph $G$ and a drawing $\varphi$ of $G$ in the plane, we use $\crn(\varphi)$ to denote the number of crossings in $\varphi$, and the crossing number of $G$, denoted by $\CrN(G)$, is the minimum number of crossings in any drawing of $G$.

\subsection{Problem Definitions and Additional Notation}


%
%

In this paper we consider the following four problems: \DkS, \DkC, \WGP and \BCS. We now define the problems, along with some additional notation.

\paragraph{Densest $k$-Subgraph.} In the \DkS problem, the input is a graph $G$ and an integer $k>0$. The goal is to compute a subset $S\subseteq V(G)$ of $k$ vertices, maximizing $|E_G(S)|$. We denote an instance of the problem by  $\pdks(G, k)$, and we denote the value of the optimal solution to instance $\pdks(G, k)$ by $\optdks(G, k)$.

We also consider a bipartite version of the \DkS problem, called \newline $\DBS$. 
This problem was first studied in \cite{dks_average_hardness}.
The input to the problem is a bipartite graph $G=(A,B,E)$ and positive integers $k_1,k_2$. The goal is to compute a subset $S\subseteq V(G)$ of vertices with $|S\cap A|=k_1$ and $|S\cap B|=k_2$, such that $|E_G(S)|$ is maximized.
An instance of this problem is denoted by $\pbdks(G, k_1,k_2)$, and the value of the optimal solution to instance $\pbdks(G, k_1,k_2)$ is denoted by $\optbdks(G, k_1,k_2)$.
The following lemma shows that the \DBS problem and the \DkS problem are roughly equivalent from the approximation viewpoint. Similar results were also shown in prior work. For completeness, we provide the proof in \Cref{apd: Proof of DkS and Dk1k2S}. 

\begin{lemma}
\label{lem: DkS and Dk1k2S}
Let $\alpha: \mathbb{Z^+}\to \mathbb{Z^+}$ be an increasing function such that $\alpha(n)=o(n)$. Then the following hold:
\begin{itemize}
\item If there exists an $\alpha(n)$-approximation algorithm for the \DkS problem with running time at most $T(n)$, where $n$ is the number of vertices in the input graph, then there exists an  $O(\alpha(N^2))$-approximation algorithm for the \BDkS problem, with running time $O(T(N^2)\cdot \poly(N))$, where $N$ is the number of vertices in the input graph. Moreover, if the algorithm for \DkS is deterministic, then so is the algorithm for \BDkS.

\item Similarly, if there exists an efficient $\alpha(N)$-approximation algorithm for the \BDkS problem, where $N$ is the number of vertices in the input graph, then there exists an efficient $O(\alpha(2n))$-approximation algorithm for the \DkS problem, where $n$ is the number of vertices in the input graph. Moreover, if the algorithm for \BDkS is deterministic, then so is the algorithm for \DkS.
\end{itemize}
\end{lemma}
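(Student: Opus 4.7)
I would prove the two bullets via separate constructions; both follow the standard spirit of reductions between DkS-type problems, but the asymmetric direction also needs a fractional rounding argument.

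For the second bullet (BDkS yields DkS), I would form the bipartite double cover $\tilde G$ of the DkS input $(G, k)$: take two disjoint copies $A, B$ of $V(G)$ and replace each $uv \in E(G)$ by the two crossing edges $u_A v_B$ and $v_A u_B$, so that $|V(\tilde G)| = 2n$. The diagonal lift of any $k$-subset $S \subseteq V(G)$ has exactly $2|E_G(S)|$ bipartite edges in $\tilde G$, hence $\optbdks(\tilde G, k, k) \geq 2\,\optdks(G, k)$. Running the assumed BDkS algorithm on $(\tilde G, k, k)$ yields $(\tilde S_A, \tilde S_B)$; I then set $T := \{v : v_A \in \tilde S_A\} \cup \{v : v_B \in \tilde S_B\}$, so $|T| \leq 2k$, and a direct double count gives $|E_G(T)| \geq |E_{\tilde G}(\tilde S_A, \tilde S_B)|/2$. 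To extract a $k$-vertex DkS solution from $T$, I pad with arbitrary vertices if $|T| < k$ (which only adds edges), and otherwise observe that a uniformly random $k$-subset retains each edge of $G[T]$ with probability $\geq 1/6$; derandomization via the method of conditional expectations produces a deterministic $k$-subset achieving the same expectation. Altogether this is the claimed $O(\alpha(2n))$-approximation, deterministic whenever the BDkS algorithm is.

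For the first bullet (DkS yields BDkS), given $(G = (A, B, E), k_1, k_2)$ on $N$ vertices, I would build a blown-up graph $H$: replace each $a \in A$ by a cluster $C_a$ of $k_2$ fresh copies, each $b \in B$ by a cluster $C_b$ of $k_1$ copies, and for every $(a,b) \in E$ install the complete bipartite graph between $C_a$ and $C_b$. Then $|V(H)| \leq N^2$, and any BDkS solution of value $m^*$ lifts (by taking all copies of the chosen vertices) to a DkS solution of size $2 k_1 k_2$ with $k_1 k_2 m^*$ edges, so $\optdks(H, 2 k_1 k_2) \geq k_1 k_2 m^*$. Running the DkS algorithm on $(H, 2 k_1 k_2)$ produces $S$; with $X_v := |S \cap C_v|$ I have $V := |E_H(S)| = \sum_{(a,b) \in E} X_a X_b \geq k_1 k_2 m^*/\alpha(|V(H)|)$. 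I then scale
\[
\tilde x_a := \frac{k_1 X_a}{\max\bigl(k_1 k_2,\; \sum_{a'} X_{a'}\bigr)}, \qquad \tilde y_b := \frac{k_2 X_b}{\max\bigl(k_1 k_2,\; \sum_{b'} X_{b'}\bigr)},
\]
and verify $\tilde x, \tilde y \in [0, 1]$, $\sum \tilde x \leq k_1$, $\sum \tilde y \leq k_2$, together with $\sum_{(a,b) \in E} \tilde x_a \tilde y_b \geq V/(2 k_1 k_2)$, using the identity $\sum_a X_a + \sum_b X_b = 2 k_1 k_2$. Finally I round by a two-step deterministic greedy: let $d_a := \sum_{b: (a,b) \in E} \tilde y_b$, pick $S_A$ to be the top $k_1$ vertices by $d_a$, then pick $S_B$ to be the top $k_2$ vertices of $B$ by $|N_G(b) \cap S_A|$. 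Each step maximizes a linear objective subject to an exact cardinality constraint with the fractional variables in $[0, 1]$, so the integral value dominates the fractional one; applying this twice yields $|E_G(S_A, S_B)| \geq \sum_{(a,b) \in E} \tilde x_a \tilde y_b \geq m^*/(2\,\alpha(N^2))$.

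\textbf{Main obstacle.} The delicate step is the balancing in the first bullet: a priori, the DkS solution on $H$ could concentrate almost all of its mass on the $A$-side (or the $B$-side) of the blow-up, and it is not obvious that one can still recover a useful fractional BDkS solution. The saving grace is the elementary product bound $V \leq \bigl(\sum_a X_a\bigr)\bigl(\sum_b X_b\bigr)$ combined with $\sum_a X_a + \sum_b X_b = 2 k_1 k_2$: together these force the two scaling factors to multiply to at least $1/2$, so the fractional objective loses only a constant factor. Past this point the greedy rounding is optimal for a single-cardinality LP with a linear objective, so no further loss is incurred.
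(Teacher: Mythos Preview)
Your proposal is correct. For the second bullet it is essentially identical to the paper: both use the bipartite double cover, map the BDkS output back to a $2k$-vertex set in $G$, and then shrink to $k$ vertices with a constant loss (the paper does this via the greedy ``delete the minimum-degree vertex'' routine of \Cref{lem: size reducing}, you via a random $k$-subset plus conditional expectations; either way the loss is $O(1)$).

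For the first bullet your construction of the blown-up graph $H$ and the bound $\optdks(H,2k_1k_2)\ge k_1k_2\cdot\optbdks(G,k_1,k_2)$ match the paper exactly. The difference is in how a DkS solution $W$ on $H$ is pulled back to a $(k_1,k_2)$-solution on $G$. The paper avoids the fractional detour entirely: it orders $W\cap A'$ so that copies of the same original vertex are consecutive and distributes them round-robin into $2k_2$ buckets, each of size at most $k_1$ and containing at most one copy of any $a\in A$; the $B$-side is handled symmetrically with $2k_1$ buckets. Averaging over the $4k_1k_2$ bucket pairs yields one pair with at least $|E_H(W)|/(4k_1k_2)$ edges, and since each bucket is a set of copies of distinct originals, it projects to $G$ with no edge loss. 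Your route through the fractional bilinear program plus two rounds of greedy top-$k$ selection is a genuinely different argument; it gives a slightly better constant ($1/2$ versus $1/4$) and makes the ``balancing'' issue explicit via the product inequality $V\le S_A S_B$ with $S_A+S_B=2k_1k_2$. The paper's partitioning proof is a bit more elementary and sidesteps the fractional intermediate altogether, but both approaches lead to the same $O(\alpha(N^2))$ guarantee.
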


\paragraph{Dense $k$-Coloring.}
The input to the \DkC problem consists of an $n$-vertex graph $G$ and an integer $k>0$, such that $n$ is an integral multiple of $k$. The goal is to compute a partition of $V(G)$ into  $n/k$ subsets $S_1, \ldots, S_{n/k}$ of cardinality $k$ each, while maximizing $\sum_{i = 1}^{n/k}|E_G(S_i)|$. 
An instance of the \DkC problem is denoted by $\pcl(G, k)$, and the value of the optimal solution to instance $\pcl(G, k)$ is denoted by $\optcl(G, k)$.

\paragraph{$(r,h)$-Graph Partitioning.}
  The input to the \WGP problem consists of a graph $G$, and integers $r, h > 0$. The goal is to compute $r$ vertex-disjoint subgraphs $H_1, \ldots, H_r$ of $G$, such that for all $1\leq i\leq r$, $|E(H_i)|\leq h$, while maximizing $\sum_{i = 1}^r|E(H_i)|$.
An instance of the \WGP problem is denoted by $\pwgp(G, r, h)$, and the value of the optimal solution to instance $\pwgp(G, r, h)$ is denoted by $\optwgp(G, r, h)$.

\paragraph{Maximum Bounded-Crossing Subgraph.}

In the \MBCS problem, the input is a graph $G$ and an integer $L>0$. The goal is to compute a subgraph $H\subseteq G$ with $\CrN(H)\leq L$, while maximizing $|E(H)|$.
An instance of the \MBCS problem is denoted by $\pcn(G, L)$, and the value of the optimal solution to instance $\pcn(G, L)$ is denoted by $\optcn(G, L)$. We note that we can assume that $L\le |V(G)|^4$, as otherwise the optimal solution is the whole graph $G$, since the crossing number of a simple graph $G$ is at most $|E(G)|^2\leq |V(G)|^4$.

\subsection{Chernoff Bound}

We use the following standard version of Chernoff Bound (see. e.g., \cite{dubhashi2009concentration}).

\begin{lemma}[Chernoff Bound]
	\label{lem: Chernoff}
	Let $X_1,\ldots,X_n$ be independent randon variables taking values in $\set{0,1}$. Let $X=\sum_{1\le i\le n}X_i$, and let $\mu=\expect{X}$. Then for any $t>2e\mu$,
	\[\Pr\Big[X>t\Big]\le 2^{-t}.\]
	Additionally, for any $0\le \delta \le 1$,
	\[\Pr\Big[X<(1-\delta)\cdot\mu\Big]\le e^{-\frac{\delta^2\cdot\mu}{2}}.\]
\end{lemma}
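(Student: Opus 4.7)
The plan is to prove both inequalities by the classical exponential moment (Chernoff--Cram\'er) method: apply Markov's inequality to $e^{\lambda X}$ for a well-chosen $\lambda>0$, factor the resulting moment generating function using independence, and optimize over $\lambda$. The common backbone is the elementary estimate $\expect{e^{\lambda X_i}} = 1 + p_i(e^{\lambda}-1) \le \exp\bigl(p_i(e^{\lambda}-1)\bigr)$, using $1+x\le e^x$, where $p_i = \Pr[X_i=1]$. Multiplying over $i$ and using $\sum_i p_i = \mu$ gives the uniform bound $\expect{e^{\lambda X}} \le \exp\bigl(\mu(e^{\lambda}-1)\bigr)$ valid for every real $\lambda$.

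For the upper tail, I would start from
\[\Pr[X>t] \;\le\; e^{-\lambda t}\cdot\expect{e^{\lambda X}} \;\le\; \exp\bigl(\mu(e^{\lambda}-1) - \lambda t\bigr),\]
and optimize by setting $\lambda = \ln(t/\mu)$, which yields the standard bound $\Pr[X>t] \le (e\mu/t)^{t}$. Plugging in the hypothesis $t>2e\mu$, the base satisfies $e\mu/t < 1/2$, hence $(e\mu/t)^{t} < 2^{-t}$, which is exactly the first claim.

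For the lower tail, I would mirror the argument by applying Markov to $e^{-\lambda X}$ for $\lambda>0$:
\[\Pr[X < (1-\delta)\mu] \;\le\; e^{\lambda(1-\delta)\mu}\cdot\expect{e^{-\lambda X}} \;\le\; \exp\bigl(\mu(e^{-\lambda}-1) + \lambda(1-\delta)\mu\bigr),\]
and optimize at $\lambda = -\ln(1-\delta)$. The exponent simplifies to $-\mu\bigl((1-\delta)\ln(1-\delta) + \delta\bigr)$, so to obtain the claimed bound $e^{-\delta^{2}\mu/2}$ it suffices to verify the pointwise calculus inequality $(1-\delta)\ln(1-\delta) + \delta \ge \delta^{2}/2$ for $\delta \in [0,1]$.

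The only step that requires anything beyond bookkeeping is this last inequality, which I view as the main (and essentially only) obstacle. I would handle it by defining $f(\delta) = (1-\delta)\ln(1-\delta) + \delta - \delta^{2}/2$ and checking that $f(0) = 0$, $f'(0) = 0$, and $f''(\delta) = \delta/(1-\delta) \ge 0$ for $\delta \in [0,1)$; thus $f'$ is non-decreasing from $0$, so $f$ is non-decreasing from $0$, and hence nonnegative on $[0,1)$. The boundary case $\delta = 1$ follows by continuity (interpreting $0\ln 0 = 0$). The rest of the proof is a mechanical composition of Markov's inequality, independence, and $1+x\le e^x$, so no additional structural ideas should be needed.
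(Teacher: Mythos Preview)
Your proposal is correct and is the standard exponential-moment proof of the Chernoff bound. The paper does not actually prove this lemma; it is stated with a citation to a standard reference (Dubhashi--Panconesi) and used as a black box, so there is no ``paper's proof'' to compare against. Your argument supplies exactly the textbook derivation one would expect.
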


\subsection{Auxiliary Lemma}

We use the following simple lemma.

\begin{lemma}
\label{lem: size reducing}
There is an efficient algorithm, that, given a graph $G$, a subset $S$ of its vertices, and a parameter $\frac{2}{|S|}<\beta<1$, computes a set $S'\subseteq S$ of vertices, such that $|S'|\le \beta\cdot |S|$, and $|E_G(S')|\ge \Omega(\beta^2\cdot |E_G(S)|)$ holds.
\end{lemma}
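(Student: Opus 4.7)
The plan is to prove the lemma by a standard first-moment argument on a uniformly random subset of the right size, followed by derandomization via the method of conditional expectations. Let $k=\lfloor \beta |S|\rfloor$. The assumption $\beta>2/|S|$ gives $\beta|S|>2$, so $k\ge 2$, and clearly $k\le \beta|S|$, so any set $S'$ of size $k$ already satisfies the first requirement. Consider the random experiment of choosing a uniformly random subset $T\subseteq S$ of size exactly $k$. For every edge $e=(u,v)\in E_G(S)$, the probability that both $u,v\in T$ is $\frac{k(k-1)}{|S|(|S|-1)}$, so by linearity of expectation
\[
\mathbb{E}[|E_G(T)|]\;=\;\frac{k(k-1)}{|S|(|S|-1)}\cdot |E_G(S)|.
\]
A short case analysis using $k\ge 2$ and $k\ge \beta|S|-1$ shows $k(k-1)\ge \beta^2|S|^2/8$ whenever $\beta|S|>2$ (for $\beta|S|\ge 4$ one has $k\ge 3\beta|S|/4$; for $2<\beta|S|<4$ one has $k(k-1)\ge 2\ge \beta^2|S|^2/8$). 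Hence $\mathbb{E}[|E_G(T)|]\ge \Omega(\beta^2)\cdot |E_G(S)|$, proving the existence of the required set $S'$.

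To make this constructive in polynomial time, I would derandomize using the method of conditional expectations. Order the vertices of $S$ arbitrarily as $v_1,\ldots,v_{|S|}$ and scan them one at a time. After step $i-1$ the algorithm maintains a partial decision specifying an ``include'' set $A_{i-1}\subseteq\{v_1,\ldots,v_{i-1}\}$ and an ``exclude'' set, with $|A_{i-1}|\le k$. For each of the two possible decisions about $v_i$, compute the conditional expectation of $|E_G(T)|$ assuming a uniformly random completion to a $k$-subset of $S$. This quantity has a closed-form obtained by splitting the edges of $G[S]$ according to whether each endpoint is already included, already excluded, or undecided, and weighting each edge by the appropriate ratio of binomial coefficients, so it can be evaluated in polynomial time. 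Pick the decision for $v_i$ that does not decrease the conditional expectation; since the unconditional expectation of the two branches averages to the current conditional expectation, such a choice always exists. At termination the set $S':=A_{|S|}$ is determined, has size exactly $k\le \beta|S|$, and satisfies $|E_G(S')|\ge \Omega(\beta^2)\cdot |E_G(S)|$.

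There is no real obstacle here: the only point requiring a bit of care is writing the closed-form for the conditional expectation so it is clearly computable in polynomial time, and handling the boundary cases $\beta|S|\in(2,4)$ in the edge-count bound. Everything else is a textbook combination of linearity of expectation and the standard derandomization recipe.
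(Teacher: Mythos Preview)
Your proof is correct and takes a genuinely different route from the paper. The paper uses a purely greedy deterministic algorithm: starting from $G[S]$ with $|S|=k$, iteratively delete the lowest-degree vertex until $\lfloor\beta k\rfloor$ vertices remain. Since removing a minimum-degree vertex from a $j$-vertex graph costs at most a $2/j$ fraction of the edges, a telescoping product gives
\[
\frac{|E_G(S')|}{|E_G(S)|}\;\ge\;\prod_{j=\lfloor\beta k\rfloor+1}^{k}\Bigl(1-\tfrac{2}{j}\Bigr)\;=\;\frac{\lfloor\beta k\rfloor(\lfloor\beta k\rfloor-1)}{k(k-1)}\;=\;\Omega(\beta^2).
\]
Note this final ratio is exactly the expression $\tfrac{k'(k'-1)}{|S|(|S|-1)}$ that appears in your expectation computation, so the two arguments yield the identical quantitative bound. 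The paper's approach is more direct as an algorithm (a simple greedy loop, no conditional expectations to evaluate), whereas your probabilistic argument is arguably cleaner conceptually and extends more readily, e.g., to edge-weighted or hypergraph variants.
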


\begin{proof}
Consider the graph $G'=G[S]$ and denote $|S|=k$.
We iteratively remove the lowest-degree vertex from $G'$, until $G'$ contains $\floor{\beta k}$ vertices. Once the algorithm terminates, we output $S'=V(G')$. It now remains to show that $|E_{G'}(S')|\ge \Omega(\beta^2|E(G')|)$.

Observe that, if $H$ is an $n$-vertex graph, and $v$ is a lowest-degree vertex of $H$, then the degree of $v$ in $H$ is at most $2|E(H)|/n$. Therefore, if vertex $v$ is removed from $H$, then $|E(H)|$ decreases by at most a factor $(1-2/n)$. Therefore,
\[
\frac{|E_{G'}(S')|}{|E(G')|}\ge \bigg(1-\frac{2}{k}\bigg)\bigg(1-\frac{2}{k-1}\bigg) \cdots \bigg(1-\frac{2}{\floor{\beta k}+1}\bigg)=\frac{\floor{\beta k}\cdot(\floor{\beta k}-1)}{k\cdot(k-1)}=\Omega(\beta^2).
\]
\end{proof}

\section{Conditional Hardness of \DkS}
\label{sec: conditional}

\subsection{Low-Degree CSP Conjecture}
We consider the \BCSP problem, that is defined as follows.
The input to the problem consists of two sets $X,Y$ of variables, together with an integer $A>1$. Every variable $z\in X\cup Y$ takes values in set $[A]=\set{1,\ldots,A}$. We are also given a collection $\cset$ of constraints, where each constraint $C(x,y)\in \cset$ is defined over a pair of variables $x\in X$ and $y\in Y$. For each such constraint, we are given a truth table that, for every pair of assignments $a$ to $x$ and $a'$ to $y$, specifies whether $(a,a')$ \emph{satisfy} constraint $C(x,y)$. 
The \emph{value} of the \CSP is the largest fraction of constraints that can be simultaneously satisfied by an assignment to the variables.

We associate with each constraint $C=C(x,y)\in \cset$, a bipartite graph $G_C=(L,R,E)$, where $L=R=[A]$, and there is an edge $(a,a')$ in $E$ iff the assignments $a$ to $x$ and $a'$ to $y$ satisfy $C$. Notice that instance $\iset$ of the \BCSP problem is completely defined by $X,Y,A,\cset$, and the graphs in $\set{G_C}_{C\in \cset}$, so we will denote $\iset=(X,Y,A,\cset,\set{G_C}_{C\in \cset})$.
The \emph{size} of instance $\iset$ is defined to be $\size(\iset)=|\cset|\cdot A^2+|X|+|Y|$.

Consider some instance $\iset=(X,Y,A,\cset,\set{G_C}_{C\in \cset})$ of \BCSP. We say that $\iset$ is a \emph{$d$-to-$d$ instance} if, for every constraint $C$, every vertex of graph $G_C=(L,R,E)$  has degree at most $d$.

Consider now some functions $d(n),s(n):\reals^+\rightarrow \reals^+$. We assume that, for all $n$, $d(n)\geq 1$ and $s(n)<1$.
In a $(d(n),s(n))$-\LDCSP problem, the input is an instance $\iset=(X,Y,A,\cset,\set{G_C}_{C\in \cset})$ of \BCSP, such that, if we denote by $n=\size(\iset)$, then the instance is $d(n)$-to-$d(n)$. We say that $\iset$ is a  \yi, if there is some assignment to the variables of $X\cup Y$ that satisfies at least $|\cset|/2$ of the constraints, and we say that it is a \ni, if the largest number of constraints of $\cset$ that can be simultaneously satisfied by any assignment is at most $s(n)\cdot |\cset|$. Given an instance $\iset$ of $(d(n),s(n))$-\LDCSP problem, the goal is to distinguish between the case where $\iset$ is a \yi and the case where $\iset$ is a \ni. If $\iset$ is neither a \yi nor a \ni, the output of the algorithm can be arbitrary.
We now state our conjecture regarding hardness of $(d(n),s(n))$-\LDCSP, that is a restatement of  \Cref{conj: main} from the Introduction.

\begin{conjecture}[\conj]
	\label{conj: main1}
	There is a constant $0<\eps\leq 1/2$, such that the $(d(n),s(n))$-\LDCSP problem is \NP-hard for $d(n)=2^{(\log n)^{\eps}}$ and $s(n)= 1/2^{64 (\log n)^{1/2+\eps}}$. 
\end{conjecture}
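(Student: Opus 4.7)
The statement is posed as a conjecture and is not proved in the paper; the authors explicitly call it out as a genuinely open problem. The parameter regime sits in a gap between two standard routes to $d$-to-$d$ CSP hardness: H{\aa}stad-style PCPs give hardness of $(1,s)$-\GCSP but only when $d$ depends exponentially on $\poly(1/s)$, while amplifying constant-gap 3-SAT via Parallel Repetition couples $d$ and $1/s$ tightly, forcing $d(n)=2^{\Omega((\log n)^{1/2+\eps})}$ whenever $1/s(n)\geq 2^{(\log n)^{1/2+\eps}}$; even combining the $2$-to-$2$ Games Theorem with Parallel Repetition still misses the target. Any plan must therefore actively \emph{decouple} the growth of $d$ from the growth of $1/s$, while simultaneously keeping completeness above the required $1/2$.

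My first attempt would be a \emph{derandomized} parallel repetition. I would start from a constant-soundness, perfect-completeness 2-CSP (Raz-style Label Cover obtained from 3-SAT), and, instead of taking all $m^{\ell}$ tuples of constraints in the $\ell$-fold product, sample a pseudorandom collection of tuples, for instance via walks on an explicit expander over the constraint set, so that each variable participates in far fewer tuples than in the full repetition. The hope is that an analogue of Raz's theorem still yields soundness $2^{-\Omega(\ell)}$ on the sparser game while the effective per-variable degree grows only as $2^{O(\ell^{c})}$ for some $c<1$. Choosing $\ell \approx (\log N)^{1/2+\eps}$ and tuning $c$ would then match the conjectured bounds. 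A complementary route is to begin from the recent $2$-to-$2$ Theorem, which already has constant degree but completeness only $1-\eps$: first boost completeness back up to $1/2$ via correlated repetition or a Dinur-style amplification designed not to harm soundness, and then repeat to drive soundness below $2^{-64(\log n)^{1/2+\eps}}$, exploiting the constrained $2$-to-$2$ structure to keep the alphabet small throughout the repetition phase.

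The main obstacle, and the reason the conjecture remains open, is precisely the step every such plan rests on: producing a soundness-amplification theorem whose effect on $d$ is substantially weaker than its effect on $1/s$, i.e.\ breaking the $\log d \asymp \log(1/s)$ scaling imposed by every known parallel-repetition theorem. I would not expect a short proof; a positive resolution will likely require either a genuinely new derandomized repetition theorem tailored to $d$-to-$d$ games, or a novel alphabet-reduction gadget that can be interleaved with the repetition rounds so that $d$ stays below $2^{(\log n)^{\eps}}$ while $1/s$ still reaches $2^{64(\log n)^{1/2+\eps}}$. Short of such a new ingredient, any attempted proof will collapse back to one of the two known regimes the conjecture was specifically designed to sit between.
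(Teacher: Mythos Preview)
Your proposal is correct: the statement is a conjecture, not a theorem, and the paper does not prove it. Your discussion of why the conjectured regime sits strictly between H{\aa}stad-style hardness and Parallel-Repetition-based hardness mirrors exactly the paper's own motivation for the conjecture, and your identification of the core obstacle (decoupling the growth of $d$ from that of $1/s$) is the point the authors themselves highlight as the reason the conjecture is open.
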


\subsection{Conditional Hardness of \DkS}

In the remainder of this section, we prove the following theorem on the conditional hardness of \DkS.

\begin{theorem}
	\label{thm: main_DkS}
	Assume that \Cref{conj: main1} holds and that $\P\ne \NP$. Then for some $0<\eps\leq 1/2$, there is no efficient approximation algorithm for \DkS problem that achieves approximation factor $2^{(\log N)^{\eps}}$, where $N$ is the number of vertices in the input graph.
\end{theorem}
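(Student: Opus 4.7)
The plan is to perform a Cook-type reduction from the $(d(n),s(n))$-$\LDCSP$ problem. Assume for contradiction that there is an efficient algorithm $\cA$ for \DkS with approximation ratio $\alpha(N)=2^{(\log N)^{\eps'}}$ for a sufficiently small constant $\eps'>0$. Given an instance $\iset=(X,Y,A,\cset,\{G_C\})$ of $(d(n),s(n))$-$\LDCSP$ with $n=\size(\iset)$, build the \emph{constraint graph} $H$ whose vertices are $X\cup Y$ and whose edges encode the constraints of $\cset$. For any subgraph $H'\subseteq H$ define its \emph{assignment graph} $G(H')$: a bipartite graph on vertex set $\{(z,a):z\in V(H'),\,a\in[A]\}$, with an edge between $(x,a)$ and $(y,a')$ for every constraint $C(x,y)\in E(H')$ whose pair $(a,a')$ satisfies $C$. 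The crucial observation is that any assignment $f:V(H')\to[A]$ satisfying a $v$-fraction of the constraints of $H'$ yields a subgraph of $G(H')$ on exactly $|V(H')|$ vertices containing $v\cdot|E(H')|$ edges (one per variable, via $z\mapsto(z,f(z))$).

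The key primitive processes a single subgraph $H'$: run $\cA$ on $G(H')$ with $k=|V(H')|$, obtaining a vertex set $S$. Choose a threshold $\beta$ carefully (essentially $\beta\approx s(n)^{1/\poly\log n}$, well above $s(n)$ and well below $1/2$). Three outcomes are possible. (i) If $|E_{G(H')}(S)|/|E(H')|<1/(4\alpha)$, then by the approximation guarantee no assignment can satisfy more than a $1/4$-fraction of constraints of $H'$; mark $H'$ \emph{inactive–low} and record $E(H')\subseteq E_0$. (ii) If $|E_{G(H')}(S)|$ is large and the projection of $S$ onto variables is well spread, i.e.\ for each variable $z$ at most $O(\log n)$ vertices $(z,\cdot)$ lie in $S$, then a simple random assignment (assign each $z$ a uniformly chosen label from its share of $S$) satisfies at least a $\beta$-fraction of constraints of some large $E'\subseteq E(H')$; mark $H'$ \emph{inactive–high} and record $E'$ in a new high-value bucket $E_i$. (iii) Otherwise $S$ concentrates most of its mass on a small vertex subset $V''\subset V(H')$ with $|V''|\le|V(H')|/2$; split $H'$ into the two active subgraphs $H'[V'']$ and $H'\setminus V''$, losing only edges of $E_{H'}(V'',V(H')\setminus V'')$, which we charge against the edges kept inside the two pieces (this charging is possible because the density guarantee on $S$ forces many edges to survive in one of the two sides).

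Iterate this primitive until every graph in $\hset$ is inactive; since every split halves the vertex count, the recursion tree has depth $O(\log n)$ and the whole decomposition runs in time $\poly(n,A)$ using $\poly(n,A)$ calls to $\cA$ on graphs of size $N=\poly(n)$. Running this decomposition and then re-running it on the set of edges not yet accounted for, $O(\log n)$ times, we obtain a partition $(E_0,E_1,\ldots,E_z)$ of $E(H)$ with $z=\poly\log n$, where $E_0$ collects all ``low'' edges (on each contributing subgraph at most $1/4$ of constraints are satisfiable) and each $E_i$ ($i\ge 1$) comes with an explicit assignment satisfying at least $\beta|E_i|$ of its constraints. The distinguishing rule: if $|E_0|\ge|E(H)|/2$ output \ni; otherwise some $E_i$ with $i\ge 1$ has size $\ge|E(H)|/(2z)$, and its assignment witnesses a $\Omega(\beta/z)$-fraction of constraints satisfied in that subgraph—lifting to a global assignment on $X\cup Y$ (extending arbitrarily outside) certifies that the total satisfied fraction exceeds $s(n)$, so output \yi. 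Correctness on \yi inputs follows by averaging: at least $|E(H)|/2$ edges belong to some $H'$ on which the optimal CSP value is $\ge 1/4$, so these edges land in high buckets.

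The main obstacle is parameter tuning. On the one hand, the threshold $\beta$ separating ``low'' from ``high'' must exceed $s(n)=2^{-64(\log n)^{1/2+\eps}}$ by a comfortable factor that can absorb (a) the approximation slack $\alpha(N)$, (b) the $\poly\log n$ loss from the randomized assignment in case (ii), which pays an extra factor up to $d(n)=2^{(\log n)^{\eps}}$ because each variable may be assigned from up to $O(d(n)\log n)$ candidates consistent with high-density edges, and (c) the $1/z=1/\poly\log n$ loss from aggregating over the $z$ buckets. On the other hand, $\beta$ must be bounded away from $1/2$ so that the case analysis of the primitive is non-degenerate. Choosing $\eps'$ sufficiently small compared to $\eps$ and the constant in $1/2+\eps$, the product $\alpha(N)\cdot d(n)\cdot\poly\log n$ stays well below $1/s(n)=2^{64(\log n)^{1/2+\eps}}$, so the gap survives. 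The secondary delicate point is the charging argument in case (iii), where one must verify that the edges in $E_{H'}(V'',V(H')\setminus V'')$ discarded by the split are bounded by a constant fraction of the edges retained in $H'[V'']\cup(H'\setminus V'')$; this follows from the density lower bound on $S$ together with the degree cap coming from the $d(n)$-to-$d(n)$ property of $\iset$. Together these estimates, and the conjectured NP-hardness from Conjecture~\ref{conj: main1}, yield the desired contradiction with $\P\neq\NP$.
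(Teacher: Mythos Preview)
Your proposal follows essentially the same route as the paper (Cook-type reduction via the constraint/assignment graph pair, the same three-outcome primitive---certify bad, certify good via randomized rounding, or split into two smaller active pieces---iterated decomposition, and a bad-edge-mass threshold for distinguishing); the paper in fact makes this work with the \emph{same} $\eps$ as in the conjecture rather than a strictly smaller $\eps'$, by taking $\beta=2^{8(\log n)^{1/2+\eps}}$ and allowing $r=\lceil\beta\log n\rceil$ good buckets (not $\poly\log n$), with each $E_i$ certified $\beta^3$-good. One small correction: your distinguishing threshold should be $|E_0|>2|\cset|/3$, not $|E_0|\ge|\cset|/2$---the averaging you invoke only yields $|E_0|<2|E(H)|/3$ on a \yi (since a global assignment satisfying half the constraints forces $|E_0|/4+(|E(H)|-|E_0|)\ge|E(H)|/2$), so with the $1/2$ threshold you could wrongly output \ni on a \yi.
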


In fact we will prove a slightly more general theorem, that will be useful for us later.

\begin{theorem}
	\label{thm: main_DkS2}
	Suppose there is an algorithm for the \DkS problem, that, given an instance $\pdks(G,k)$ with $|V(G)|=N$, in time at most $T(N)$, computes a factor $2^{(\log N)^{\eps}}$-approximate solution to the problem, for some constant $0<\eps\le 1/2$. Then there is an algorithm, that, given an instance $\iset$ of $(d(n),s(n))$-\LDCSP problem of size $n$, where $d(n)=2^{(\log n)^{\eps}}$ and $s(n)= 1/2^{64 (\log n)^{1/2+\eps}}$, responds ``YES'' or ''NO'', in time $O(\poly(n)\cdot T(\poly(n)))$. If $\iset$ is a \yi, the algorithm is guaranteed to respond ``YES'', and if it is a \ni, it is guaranteed to respond ``NO''. 
\end{theorem}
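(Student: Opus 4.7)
The plan is to mount a Cook-type reduction along the lines of the informal overview above. Assume $\aset$ is an $\alpha(N)$-approximation algorithm for \DkS with $\alpha(N)=2^{(\log N)^{\eps}}$; by \Cref{lem: DkS and Dk1k2S} we may assume instead that $\aset$ solves \BDkS to within approximation factor $2^{O((\log N)^{\eps})}$. Given an input instance $\iset=(X,Y,A,\cset,\{G_C\})$ of the $(d(n),s(n))$-\LDCSP problem of size $n$, build the \emph{constraint graph} $H$ on vertex set $X\cup Y$ with one edge $e_C\in E(H)$ per constraint $C\in\cset$, and the \emph{assignment graph} $G$ on vertex set $\{(z,a):z\in X\cup Y,\;a\in[A]\}$, with an edge between $(x,a)$ and $(y,a')$ iff $(a,a')$ satisfies $C(x,y)$. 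For any subgraph $H'\subseteq H$, let $\iset_{H'}$ denote the induced sub-CSP and $G'$ its restricted assignment graph. The pivotal quantitative observation is that if $\val(\iset_{H'})\ge\theta$, then there is a vertex set in $G'$ of cardinality $|V(H')|$ (one vertex per variable of $H'$) inducing at least $\theta\cdot|E(H')|$ edges, so invoking $\aset$ on $G'$ with parameters $(|V(H')\cap X|,|V(H')\cap Y|)$ returns a set $S$ with $|E_{G'}(S)|\ge\theta|E(H')|/\alpha$ whenever $\val(\iset_{H'})\ge\theta$.

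The decomposition procedure maintains a family $\hset$ of edge-disjoint active subgraphs of $H$, initialized as $\hset=\{H\}$, together with a ``bad'' edge bucket $E_0$ and ``good'' edge buckets $E_1,E_2,\ldots$, each accompanied by an explicit assignment satisfying a $\beta$-fraction of its constraints. In each iteration we pick an active $H'\in\hset$, call $\aset$ on $G'$, let $S$ be the returned vertex set, and write $V'\subseteq V(H')$ for the variables appearing in $S$. Three cases arise, mirroring the sketch in the introduction. \emph{Low-value case:} if $|E_{G'}(S)|<|E(H')|/(4\alpha)$, the pivotal observation certifies $\val(\iset_{H'})<1/4$; deactivate $H'$ and move its edges into $E_0$. \emph{Extraction case:} if $S$ is essentially balanced, with $|V'|$ comparable to $|V(H')|$ and $O(1)$ chosen values per variable, we read off an assignment on a large subset $V''\subseteq V'$ satisfying a $\beta$-fraction of the constraints induced by $V''$; deactivate $H'[V'']$ as a new good bucket and return the remaining edges of $H'$ to $\hset$ as active. \emph{Splitting case:} if $|V'|\le|V(H')|/2$ but $|E_{G'}(S)|$ is large, the edges of $G'[S]$ all correspond to constraints lying entirely inside $H'[V']$, hence $H'[V']$ is dense; replace $H'$ in $\hset$ by the two active subgraphs $H'[V']$ and $H'\setminus V'$. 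Because each splitting step halves the vertex count of the smaller side, the recursion tree has depth $O(\log n)$, so the procedure terminates in $\poly(n)$ iterations, each calling $\aset$ on a graph with $N\le n^{O(1)}$ vertices. A charging argument in the spirit of \cite{NDP-grid-hardness} shows that only a $\poly(1/\beta)$-fraction of $|E(H)|$ is lost across all splits and extractions; if need be, we iterate the whole procedure $O(\log n)$ times on the residual active edges so that nearly every edge of $H$ eventually lands in some bucket.

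Set the threshold $\beta=4s(n)=4/2^{64(\log n)^{1/2+\eps}}$. In a \yi, the optimal assignment satisfies $\ge|\cset|/2$ constraints, but the constraints in $E_0$ have value $<1/4$, so $|E_0|/4+\sum_{i\ge 1}|E_i|\ge|\cset|/2$, forcing $|E_0|<3|E(H)|/4$; the algorithm outputs ``YES''. In a \ni, if $|E_0|\le 3|E(H)|/4$ the good buckets cover $\ge|E(H)|/4$ edges, and combining the per-bucket assignments (using a routine randomized consolidation of variables that appear in several buckets, which loses only a constant factor) would satisfy an $\Omega(\beta)$-fraction of all constraints, contradicting the \ni guarantee $\val(\iset)\le s(n)=\beta/4$; the algorithm outputs ``NO''. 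The parameter compatibility reduces to $\alpha(n^{O(1)})^{O(1)}\ll 1/\beta$, which holds since $\alpha(\poly(n))=2^{O((\log n)^{\eps})}$ and $1/\beta=2^{\Omega((\log n)^{1/2+\eps})}$. The step I expect to be the main obstacle is the charging analysis of the splitting case: one must verify that the invariant relating $|E(H')|$ to $|V(H')|$ across all active pieces degrades by only a $\poly(1/\beta)$ factor over the $O(\log n)$ recursion levels and repeated passes, and that the edges discarded in the extraction case do not accumulate past the $1/2-s(n)$ margin separating completeness from soundness; this is where the techniques of \cite{NDP-grid-hardness} developed for \WGPwB must be adapted to the present setting.
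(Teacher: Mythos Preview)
Your decomposition scheme and the YES/NO decision based on the size of the bad bucket $E_0$ are exactly the paper's approach (the paper packages the decomposition as a black-box theorem producing a partition $(E^b,E_1,\ldots,E_r)$ of $E(H)$ into one bad set and $r$ good sets, and then derives \Cref{thm: main_DkS2} from it by the same counting you sketch). The gap is in your \ni\ analysis. The claim that ``combining the per-bucket assignments \ldots\ loses only a constant factor'' is not routine: your good buckets are only edge-disjoint, not vertex-disjoint, and a variable may receive conflicting values from many buckets. Randomized consolidation then loses a factor quadratic in the maximum multiplicity, not a constant. Worse, your choice $\beta=4s(n)$ leaves \emph{no} slack to absorb any loss whatsoever: to contradict a \ni\ you must exhibit a single assignment satisfying strictly more than $s(n)\,|\cset|$ constraints, and ``$\Omega(\beta)\cdot|\cset|=\Omega(s(n))\cdot|\cset|$'' does not clear that bar once any sub-constant factor is lost.

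The paper avoids consolidation entirely. It sets the good-bucket guarantee to ``$\beta^3$-good'' with $\beta=2^{8(\log n)^{1/2+\eps}}$ (so each $E_i$ admits an assignment satisfying a $1/\beta^3$ fraction of its constraints, which is far larger than $s(n)$), and its decomposition theorem additionally bounds the number of good buckets by $r=\lceil\beta\log n\rceil$. Then one simply picks the largest bucket: if $|E^b|\le 2|\cset|/3$ then some $|E_{i^*}|\ge |\cset|/(3r)$, and its witnessing assignment already satisfies at least $|\cset|/(3r\beta^3)$ constraints of $\cset$; the arithmetic $3r\beta^3\le 6\beta^4\log n\le 2^{64(\log n)^{1/2+\eps}}=1/s(n)$ is precisely what pins down the constant $64$ in the conjecture's soundness parameter. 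So the two ingredients your plan is missing are (i) a good-bucket threshold set well above $s(n)$ to create slack, and (ii) an explicit bound on the number of good buckets produced by the decomposition---both must be outputs of the decomposition theorem, not afterthoughts.
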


%
%
\Cref{thm: main_DkS} immediately follows from \Cref{thm: main_DkS2}.
The remainder of this section is dedicated to proving \Cref{thm: main_DkS2}. A central notion that we use is a \emph{constraint graph} that is associated with an instance $\iset$ of $2$-CSP.

\paragraph{Constraint Graph.}

Let $\iset=(X,Y,A,\cset,\set{G_C}_{C\in \cset})$ be an instance of the \BCSP problem. 
The \emph{constraint graph} associated with instance $\iset$ is denoted by $H(\iset)$, and it is defined as follows.
The set of vertices of $H(\iset)$ is the union of two subsets: set $V=\set{v(x)\mid x\in X}$ of vertices representing the variables of $X$, and set $U=\set{v(y)\mid y\in Y}$ of vertices representing the variables of $Y$. For convenience, we will not distinguish between the vertices of $V$ and the variables of $X$, so we will identify each variable $x\in X$ with its corresponding vertex $v(x)$. Similarly, we will not distinguish between vertices of $U$ and variables of $Y$. The set of edges of $H(\iset)$ contains, for every constraint $C=C(x,y)\in \cset$, edge $e_C=(x,y)$. We say that edge $e_C$ \emph{represents} the constraint $C$.
Notice that, if $E'$ is a subset of edges of $H(\iset)$, then we can define a set $\Phi(E')\subseteq \cset$ of constraints that the edges of $E'$ represent, namely: $\Phi(E')=\set{C\in \cset\mid e_C\in E'}$.
Next, we define bad sets of constraints and bad collections of edges.

\begin{definition}[Bad Set of Constraints and Bad Collection of Edges]
Let $\cset'\subseteq \cset$ be a collection of constraints of an instance $\iset=(X,Y,A,\cset,\set{G_C}_{C\in \cset})$ of \BCSP. We say that $\cset'$ is a \emph{bad set of constraints} if the largest number of constraints of $\cset'$ that can be simultaneously satisfied by any assignment to the variables of $X\cup Y$ is at most $\frac{|\cset'|}{4}$.
If $E'\subseteq E(H(\iset))$ is a set of edges of $H(\iset)$, whose corresponding set $\Phi(E')$ of constraints is bad, then we say that $E'$ is a \emph{bad collection of edges}.
\end{definition}

The next observation easily follows from the definition of a bad set of constraints.

\begin{observation}\label{obs: bad constraints}
	Let $\iset=(X,Y,A,\cset,\set{G_C}_{C\in \cset})$ be an instance of bipartite 2-CSP, and let $\cset',\cset''\subseteq \cset$ be two disjoint sets of constraints that are both bad. Then $\cset'\cup \cset''$ is also a bad set of constraints.
\end{observation}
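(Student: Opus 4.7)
The plan is to prove this directly by the contrapositive-style additivity of the count of satisfied constraints over disjoint sets. For any assignment $\sigma$ to the variables of $X\cup Y$, let $N(\sigma,\cset')$, $N(\sigma,\cset'')$, and $N(\sigma,\cset'\cup\cset'')$ denote the number of constraints of $\cset'$, $\cset''$, and $\cset'\cup\cset''$, respectively, that $\sigma$ satisfies. Since $\cset'$ and $\cset''$ are disjoint, we have the identity $N(\sigma,\cset'\cup\cset'')=N(\sigma,\cset')+N(\sigma,\cset'')$, which is the only structural fact we need.

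The key step is to upper bound each term using the badness hypotheses. By the definition of a bad set of constraints applied to $\cset'$, we have $N(\sigma,\cset')\leq |\cset'|/4$ for every assignment $\sigma$; similarly $N(\sigma,\cset'')\leq |\cset''|/4$. Adding these two inequalities and using disjointness gives
\[
N(\sigma,\cset'\cup\cset'')\;\leq\;\frac{|\cset'|+|\cset''|}{4}\;=\;\frac{|\cset'\cup\cset''|}{4}.
\]
Since this holds for every assignment $\sigma$, the largest number of constraints of $\cset'\cup\cset''$ that can be simultaneously satisfied by any assignment is at most $|\cset'\cup\cset''|/4$, which is exactly the definition of $\cset'\cup\cset''$ being bad.

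There is no real obstacle here; the only mild subtlety is that the definition of badness requires a single common assignment to be ``charged'' against each set, but because we bound $N(\sigma,\cset')$ and $N(\sigma,\cset'')$ uniformly in $\sigma$ before summing, the same $\sigma$ is used in both bounds automatically. The observation will be invoked later when the decomposition procedure outlined in the introduction repeatedly extracts bad edge subsets from the constraint graph and we need to merge them into one global bad edge set $E_0$ while preserving the $1/4$-satisfiability threshold.
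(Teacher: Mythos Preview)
Your proof is correct and is exactly the straightforward argument the paper has in mind; the paper itself does not spell out a proof, stating only that the observation ``easily follows from the definition of a bad set of constraints.'' Your additivity-over-disjoint-sets argument is the natural way to unpack this.
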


Next, we define good subsets of constraints and good subgraphs of the constraint graph $H(\iset)$.

\begin{definition}[Good Set of Constraints and Good Subgraphs of $H(\iset)$]
	Let $\cset'\subseteq \cset$ be a collection of constraints of an instance $\iset=(X,Y,A,\cset,\set{G_C}_{C\in \cset})$ of \BCSP, and let $0<\beta\leq 1$ be a parameter. We say that $\cset'$ is a \emph{$\beta$-good set of constraints}, if there is an assigmnet to variables of $X\cup Y$ that satisfies at least  $\frac{|\cset'|}{\beta}$ constraints of $\cset'$.
	If $E'\subseteq E(H(\iset))$ is a set of edges of $H(\iset)$, whose corresponding set $\Phi(E')$ of constraints is $\beta$-good, then we say that $E'$ is a \emph{$\beta$-good collection of edges}. Lastly, if $H'\subseteq H(\iset)$ is a subgraph of the constraint graph, and the set $E(H')$ of edges is $\beta$-good, then we say that graph $H'$ is \emph{$\beta$-good}.
\end{definition}

The next observation easily follows from the definition of a good set of constraints.

\begin{observation}\label{obs: good constraints}
	Let $\iset=(X,Y,A,\cset,\set{G_C}_{C\in \cset})$ be an instance of bipartite 2-CSP, let $0<\beta\leq 1$ be a parameter, and let $H',H''$ be two subgraphs of $H(\iset)$ that are both $\beta$-good and disjoint in their vertices. Then graph $H'\cup H''$ is also $\beta$-good.
\end{observation}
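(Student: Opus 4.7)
The plan is to extend a satisfying assignment for each subgraph to a single assignment on the union, which works because vertex-disjointness ensures that no constraint is ``shared'' between $H'$ and $H''$. I will exploit the structure that each edge $e_C$ of the constraint graph $H(\iset)$ is incident only to the vertices corresponding to the two variables in $C$, so constraints represented by edges of $H'$ depend only on the assignment to $V(H')$, and similarly for $H''$.

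First, I would invoke the definition of $\beta$-good to obtain an assignment $f'$ to the variables corresponding to $V(H')$ that satisfies a $1/\beta$-fraction of the constraints in $\Phi(E(H'))$, and similarly an assignment $f''$ to the variables corresponding to $V(H'')$ that satisfies a $1/\beta$-fraction of the constraints in $\Phi(E(H''))$. Since $V(H')\cap V(H'')=\emptyset$, the variable sets of $f'$ and $f''$ are disjoint, so I can define an assignment $f$ on $V(H')\cup V(H'')$ by setting $f(z)=f'(z)$ for $z\in V(H')$ and $f(z)=f''(z)$ for $z\in V(H'')$, and extend $f$ arbitrarily on the remaining variables of $X\cup Y$.

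Next, I would check that $f$ satisfies all the constraints satisfied by $f'$, plus all the constraints satisfied by $f''$. Each constraint $C\in\Phi(E(H'))$ corresponds to an edge $e_C\in E(H')$, whose two endpoints lie in $V(H')$, so whether $C$ is satisfied by $f$ depends only on $f\vert_{V(H')}=f'$; hence $f$ satisfies every constraint that $f'$ satisfies, and symmetrically for $H''$. Finally, I would observe that vertex-disjointness implies $E(H'\cup H'')=E(H')\sqcup E(H'')$ (the graph $H'\cup H''$ contains no additional edges of $H(\iset)$ between $V(H')$ and $V(H'')$), so $\Phi(E(H'\cup H''))$ is the disjoint union $\Phi(E(H'))\cupdot \Phi(E(H''))$, and the fraction of satisfied constraints in this union is at least $1/\beta$ by a direct calculation.

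The argument is essentially bookkeeping once the key structural observation, that constraints in $E(H')$ are determined by the assignment on $V(H')$ alone, is in place; there is no real obstacle. The one subtlety worth flagging is to make sure we use the union $H'\cup H''$ (not the induced subgraph on $V(H')\cup V(H'')$) in the statement, so that no extra constraints sneak in between the two vertex sets; this is automatic from the notation.
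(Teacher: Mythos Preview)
Your proposal is correct and follows essentially the same approach as the paper: the paper's proof is a one-line remark that vertex-disjointness of $H'$ and $H''$ implies the variables appearing in $\Phi(E(H'))$ are disjoint from those in $\Phi(E(H''))$, so the two good assignments can be merged. Your write-up simply spells out this merging and the accompanying bookkeeping more explicitly.
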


The observation follows from the fact that, since graphs $H',H''$ are disjoint in their vertices, if we let $\cset'=\Phi(E(H'))$, $\cset''=\Phi(E(H''))$ be the sets of constraints associated with the edge sets of both graphs, then the variables participating in the constraints of $\cset'$ are disjoint from the variables participating in the constraints of $\cset''$.

The following theorem is key in proving \Cref{thm: main_DkS2}.

\begin{theorem}\label{thm: DkS reduction main}
	Assume that there exists a constant $0<\eps\leq 1/2$, and an $\alpha(N)$-approximation algorithm $\aset$ for the \DkS problem, whose running time is at most $T(N)$, where $N$ is the number of vertices in the input graph, and $\alpha(N)=2^{(\log N)^{\eps}}$. Then there is an algorithm, whose input consists of an instance $\iset=(X,Y,A,\cset,\set{G_C}_{C\in \cset})$ of \BCSP and parameter $n$ that is greater than a large enough constant, so that $\size(\iset)\leq n$ holds, and $\iset$ is a $d(n)$-to-$d(n)$ instance of \BCSP, for $d(n)\leq 2^{(\log n)^{\eps}}$.
	Let $\beta=2^{8(\log n)^{1/2+\eps}}$, and let $r=\ceil{\beta\cdot \log n}$. The algorithm  returns a partition $(E^b,E_1,\ldots,E_r)$ of $E(H(\iset))$, such that $E^b$ is a bad set of edges, and for all $1\leq i\leq r$, set $E_i$ of edges is $\beta^3$-good. The running time of the algorithm is $O(T(\poly(n))\cdot \poly(n)$.
\end{theorem}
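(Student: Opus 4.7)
The plan is to design an iterative decomposition of the constraint graph $H := H(\iset)$ that follows the template sketched in the Introduction, adapting the approach of \cite{NDP-grid-hardness} for \WGPwB to \DkS via \emph{assignment graphs}. I maintain a collection $\hset$ of vertex-disjoint subgraphs of $H$, each marked \emph{active} or \emph{inactive}; initially $\hset = \{H\}$ with $H$ active. In parallel I maintain the bad edge set $E^b$ and a list $E_1,\ldots,E_r$ of to-be-$\beta^3$-good sets. Each iteration selects an active $H' \in \hset$ and processes it using the DkS oracle $\aset$; the procedure halts when $\hset$ contains no active subgraph. At the end, any remaining edges not placed into a good $E_j$ are dumped into $E^b$.

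For an active $H'$, the \emph{assignment graph} $G(H')$ is the bipartite graph on vertex set $\{(x,a) : x \in V(H') \cap X, a \in [A]\} \cup \{(y,a) : y \in V(H') \cap Y, a \in [A]\}$ with an edge between $(x,a)$ and $(y,a')$ for every satisfying pair of a constraint $C(x,y) \in \Phi(E(H'))$. Since $\iset$ is $d(n)$-to-$d(n)$, $|V(G(H'))| \le \poly(n)$. I invoke the bipartite variant of $\aset$ (via \Cref{lem: DkS and Dk1k2S}) on $G(H')$ with parameters $(|V(H') \cap X|, |V(H') \cap Y|)$, obtaining a solution $S$ with edge count $\sigma := |E_{G(H')}(S)|$. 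The foundational identity is that each assignment $f: V(H') \to [A]$ produces a vertex set $T_f := \{(z, f(z)) : z \in V(H')\}$ of the prescribed size in $G(H')$ whose induced edges count the constraints of $\Phi(E(H'))$ satisfied by $f$. Hence $\optbdks(G(H')) \ge \val \cdot |E(H')|$, where $\val$ is the value of the sub-CSP defined by $\Phi(E(H'))$, and conversely $\sigma \ge \optbdks(G(H'))/\alpha'$, where $\alpha' = O(\alpha(\poly(n))) \le 2^{O((\log n)^\eps)}$, a value comfortably smaller than $\beta^3$.

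Based on $S$, I branch into three cases. If $\sigma < |E(H')|/(4\alpha')$, then the sub-CSP has value less than $1/4$, so $E(H')$ is bad: move it to $E^b$ and drop $H'$ from $\hset$. Otherwise classify each $z \in V(H')$ by its multiplicity $m(z) := |\{a : (z,a) \in S\}|$, and let $V^* := \{z : m(z) = 1\}$ be the ``cleanly assigned'' variables. If at least half of the edges of $E_{G(H')}(S)$ have both endpoints in $V^*$, then the induced partial assignment satisfies at least $\sigma/2 \ge |E(H')|/(8\alpha')$ constraints among $\Phi(E_{H'}(V^*))$, so $E_{H'}(V^*)$ is $8\alpha'$-good, and hence $\beta^3$-good; I extract it as the next $E_j$. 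Otherwise, the ``bulk'' of $E_{G(H')}(S)$ is carried by the multiply-represented variables $V_{\ge 2}$; since $\sum_z m(z) = |V(H')|$, these form at most half of $V(H')$, so I can split $H'$ into $H'[V_{\ge 2}]$ and $H' \setminus V_{\ge 2}$ (both re-entering $\hset$ as active), with the cut edges charged against a later good extraction or ultimately absorbed into $E^b$. The $d(n)$-to-$d(n)$ hypothesis caps the degrees in $G(H')$ and makes this charging quantitative.

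Termination follows because every split strictly decreases a polynomial potential (for instance $\sum_{H' \in \hset} |V(H')|^2$), giving $\poly(n)$ iterations and total running time $O(\poly(n) \cdot T(\poly(n)))$. The principal obstacle, which I expect to be the technical heart of the proof, is to bound the number of good extractions by $r = \lceil \beta \log n \rceil$ while simultaneously ensuring that the aggregate edges discarded during splits and inactivations can be absorbed into $E^b$ without breaking its badness property — here \Cref{obs: bad constraints} is used to merge multiple bad fragments into a single bad set. The key lemma to be established is an amortized counting argument showing that each good extraction either accounts for at least an $\Omega(1/\beta^2)$-fraction of the edges currently held by $\hset$ or else reduces a vertex-size measure by a constant factor; combined with the initial size $|E(H)| \le \poly(n)$, this gives $r = \tilde{O}(\beta)$ extractions. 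Handling the boundary edges of each split without blowing up $E^b$ beyond its allowed size is the most delicate book-keeping step, and it is precisely where the bounded-degree structure of $\iset$ is exploited.
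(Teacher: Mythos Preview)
Your high-level template---constraint graph, assignment graph, \BDkS oracle, and three-way branching into bad/good/split---matches the paper, and your bad and good cases are essentially right. But the gap where you flag the ``principal obstacle'' is structural, not just bookkeeping. In the split case you only establish $|V_{\ge 2}|\le |V(H')|/2$; you never bound the number of \emph{constraint-graph} edges crossing the cut $(V_{\ge 2},\, V(H')\setminus V_{\ge 2})$. The Introduction is explicit that case~(iii) must produce $H''$ with $|V(H'')|\ll |V(H')|$ \emph{and} with $|E(H'')|+|E(H'\setminus V(H''))|$ large relative to $|E(H')|$, and it is easy to build instances where your multiplicity-$2$ split loses essentially every edge to the cut. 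Getting the edge-retention guarantee is exactly where the $d(n)$-to-$d(n)$ hypothesis does its quantitative work, and you have not invoked it.

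More fundamentally, your plan to absorb leftover cut edges into $E^b$ cannot work: $E^b$ must satisfy at most a $1/4$-fraction under \emph{every} assignment, and adjoining even a single arbitrary constraint to a bad set can destroy this (the new constraint might always be satisfiable). The paper's resolution is a two-level architecture. An \emph{inner} decomposition runs essentially your procedure but is only required to classify some $\Omega(1/\beta)$-fraction of the current edges into vertex-disjoint good and bad pieces, which are merged via Observations~\ref{obs: bad constraints} and~\ref{obs: good constraints} into one new $E_i$ and one addition to $E^b$; the unclassified leftover is then handed back as input to the next of $r=\lceil\beta\log n\rceil$ \emph{outer} rounds. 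Since $|E(H)|\le n$, after $r$ rounds the leftover is empty and every edge is genuinely good or bad. That is what pins the count at $r$, and it is why searching for a direct single-pass bound on the number of good extractions is aimed at the wrong target.
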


The proof of \Cref{thm: main_DkS2} easily follows from \Cref{thm: DkS reduction main}. 	Assume that there exists a constant $0<\eps\leq 1/2$, and an $\alpha(N)$-approximation algorithm $\aset$ for the \DkS problem, whose running time is at most $T(N)$, where $N$ is the number of vertices in the input graph, and $\alpha(N)=2^{(\log N)^{\eps}}$. We show an algorithm for the  $(d(n),s(n))$-\LDCSP problem, for  $d(n)=2^{(\log n)^{\eps}}$ and $s(n)= 1/2^{64 (\log n)^{1/2+\eps}}$. Let $\iset=(X,Y,A,\cset,\set{G_C}_{C\in \cset})$ be an input instance of the 
 \BCSP problem, with $\size(\iset)=n$, so that $\iset$ is a $d(n)$-to-$d(n)$ instance of \BCSP, for $d(n)\leq 2^{(\log n)^{\eps}}$. If $n$ is bounded by a constant, then we can determine whether $\iset$ is a \yi or a \ni by exhaustively trying all assignments to its variables. Therefore, we assume that $n$ is greater than a large enough constant.
 We apply the algorithm from \Cref{thm: DkS reduction main} to this instance $\iset$. Let $(E^b,E_1,\ldots,E_r)$ be the partition of the edges of $E(H(\iset))$ that the algorithm returns. We now consider two cases.
 
 Assume first that $|E^b|> 2|\cset|/3$. Let $\cset^b\subseteq \cset$ be the set of all constraints that correspond to the edges of $E^b$. Recall that set $\cset^b$ of constraints is bad, so in any assignment, at most $\frac{|\cset'|}{4}$ of the constraints in $\cset$ may be satisfied. Therefore, if $f$ is any assignment to variables of $X\cup Y$, the number of constraints in $\cset$ that are not satisfied by $f$ is at least $\frac{3|\cset'|}{4}> \frac{|\cset|}{2}$. Clearly, $\iset$ may not be a \yi in this case. Therefore, if $|E^b|>2|\cset|/3$, we report that $\iset$ is a \ni.
 
 If $|E^b|\leq 2|\cset|/3$, then we report that $\iset$ is a \yi. It is now enough to show that, if  $|E^b|\leq 2|\cset|/3$, then instance $\iset$ may not be a \ni. In other words, it is enough to show that there is an assignment that satisfies more than $\frac{|\cset|}{2^{64 (\log n)^{1/2+\eps}}}$ constraints. Indeed, since $|E^b|\leq 2|\cset|/3$, there is an index $1\leq i\leq r$, with $|E_i|\geq \frac{|\cset|}{3r}$. Since set $E_i$ of edges is $\beta^3$-good, there is an assignment to the variables of $X\cup Y$, that satisfies at least $\frac{|E_i|}{\beta^3}\geq\frac{|\cset|}{3r\beta^3}$ constraints that correspond to the edges of $E_i$. Recall that $\beta=2^{8(\log n)^{1/2+\eps}}$ and $r=\ceil{\beta\cdot \log n}$. Therefore, $3r\beta^3\leq 6\beta^4\log n\leq 2^{64(\log n)^{1/2+\eps}}$. We conclude that there is an assignment satisfying at least $|\cset|/2^{64(\log n)^{1/2+\eps}}$ constraints, and so $\iset$ may not be a \ni. It is easy to verify that the running time of the algorithm is $O(T(\poly(n))\cdot \poly(n)$.
 
 To conclude, we have shown that, if there is an  $\alpha(N)$-approximation algorithm $\aset$ for the \DkS problem, with running time at most $T(N)$, where $N$ is the number of vertices in the input graph, and $\alpha(N)=2^{(\log N)^{\eps}}$, then there is an  algorithm for the  $(d(n),s(n))$-\LDCSP problem, for $d(n)=2^{(\log n)^{\eps}}$ and $s(n)= 1/2^{8 (\log n)^{1/2+\eps}}$, whose running time is $O(T(\poly(n))\cdot \poly(n)$.

In the remainder of this section we prove \Cref{thm: DkS reduction main}.

\subsection{Proof of \Cref{thm: DkS reduction main}}

The following theorem is the main technical ingredient of the proof of \Cref{thm: DkS reduction main}.
\begin{theorem}\label{thm: partition or solution for no instance}
	Assume that there exists an $\alpha(N)$-approximation algorithm $\aset$ for the \BDkS problem, whose running time is at most $T(N)$,  where $N$ is the number of vertices in the input graph. Then there is an algorithm, that, given an instance $\iset=(X,Y,A,\cset,\set{G_C}_{C\in \cset})$ of \BCSP and parameters $n,\beta\geq 1$, so that $\size(\iset)\leq n$,  $\beta \geq 2^{30}(\alpha(n))^3(\log n)^{12}$, and $\iset$ is a $d(n)$-to-$d(n)$ instance of \BCSP, for some function $d(n)$, in time $O(T(n)\cdot\poly(n))$, does one of the following:
	
	\begin{itemize}
		\item either correctly establishes that graph $H(\iset)$ is $\beta^3$-good; or
		
		\item computes a bad set $\cset'\subseteq \cset$ of constraints, with $|\cset'|\geq \frac{|\cset|}{8\log^2n}$; or
		\item  computes a subgraph $H'=(X',Y',E')$ of $H(\iset)$, for which the following hold:
		\begin{itemize}
			\item $|X'|\leq \frac{2d(n)\cdot |X|}{\beta}$;
			\item $|Y'|\leq \frac{2d(n)\cdot |Y|}{\beta}$; and
			\item $|E'|\geq \frac{\vol_H(X'\cup Y')}{2048d(n)\cdot \alpha(n)\cdot \log^4n}$.
		\end{itemize}
	\end{itemize} 
\end{theorem}

We prove \Cref{thm: partition or solution for no instance} in Section \Cref{subsec: proof of inner thm for DkS}, after we complete the proof of \Cref{thm: DkS reduction main} using it.
We start with the following corollary of \Cref{thm: partition or solution for no instance}.

\begin{corollary}\label{thm: general partition of solution}
	Assume that there exists an $\alpha(N)$-approximation algorithm $\aset$ for the \DS problem, whose running time is at most $T(N)$, where $N$ is the number of vertices in the input graph. Then there is an algorithm, whose input consists of an instance $\iset=(X,Y,A,\cset,\set{G_C}_{C\in \cset})$ of \BCSP and parameters $n,\beta\geq 1$, so that $\size(\iset)\leq n$,  $\beta \geq 2^{30}(\alpha(n))^3(\log n)^{12}$, and $\iset$ is a $d(n)$-to-$d(n)$ instance of \BCSP. The algorithm  returns a partition $(E_1,E_2)$ of $E(H(\iset))$, where $E_1$ is a bad set of edges, and:

\begin{itemize}
	\item either the algorithm correctly certifies that $E_2$ is a $\beta^3$-good set of edges; or

\item it computes a subgraph $H'=(X',Y',E')$ of $H(\iset)$, with $E(H')\subseteq E_2$, for which the following hold:
	\begin{itemize}
		\item $|X'|\leq \frac{2d(n)\cdot |X|}{\beta}$;
		\item $|Y'|\leq \frac{2d(n)\cdot |Y|}{\beta}$; and
		\item $|E'|\geq \frac{|E^*_2|}{2048d(n)\cdot \alpha(n)\cdot \log^4n}$, where $E^*_2$ is a set of edges containing every edge $e\in E_2$ with exactly one endpoint in $V(H')$.
	\end{itemize}
\end{itemize} 
The running time of the algorithm is $O(T(n)\cdot \poly(n))$.
\end{corollary}

\begin{proof}
The algorithm is iterative. We start with $E_1=\emptyset$, $E_2=E(H(\iset))$ and $H=H(\iset)$. We then iterate. In every iteration, we compute a graph $H'=H\setminus E_1$. We denote by $\cset'=\Phi(E(H'))$ the set of all constraints of $\cset$ corresponding to the edges of $H'$.  Notice that graph $H'$ naturally defines a $d(n)$-to-$d(n)$ instance $\iset'$ of \BCSP, whose size is at most $n$, that corresponds to the subset $\cset'\subseteq \cset$ of constraints. 
We apply the algorithm from \Cref{thm: partition or solution for no instance} to instance $\iset'$. If the outcome of the algorithm is a  bad set $\cset''\subseteq \cset'$ of constraints, then we let $\tilde E=\set{e_C\mid C\in \cset''}$ be the set of edges of $H'$ corresponding to the constraints of $\cset''$. We add the edges of $\tilde E$ to $E_1$, remove them from $E_2$, and continue to the next iteration.

If the algorithm from \Cref{thm: partition or solution for no instance} certifies that graph $H'$ is $\beta^3$-good, then we terminate the algorithm with the current partition $(E_1,E_2)$ of $E(H)$, and certify that the set $E_2$ of edges is $\beta^3$-good.

Otherwise, the outcome of the algorithm from \Cref{thm: partition or solution for no instance} must be a subgraph $H''=(X',Y',E')$ of $H'$, with $|X'|\leq \frac{2d(n)\cdot |X|}{\beta}$ and $|Y'|\leq \frac{2d(n)\cdot |Y|}{\beta}$. The algorithm also guarantees that $|E'|\geq \frac{\vol_{H'}(X'\cup Y')}{2048d(n)\cdot \alpha(n)\cdot \log^4n}$.

Let $E^*_2$ be the set of edges containing every edge $e\in E_2$ with exactly one endpoint in $V(H'')$. Since $E(H')=E_2$, it is immediate to verify that $|E^*_2|\leq \vol_{H'}(X'\cup Y')$. Therefore, we are guaranteed that $|E'|\geq \frac{|E^*_2|}{2048d(n)\cdot \alpha(n)\cdot \log^4n}$. We return the current partition $(E_1,E_2)$ of $E(H)$ and subgraph $H''$ of $H(\iset)$, and terminate the algorithm.

It is easy to verify that the algorithm consists of at most $O(\poly(n))$ iterations, and the running time of each iteration is at most $O(T(n)\cdot \poly(n))$. Therefore, the total running time of the algorithm is at most $O(T(n)\cdot \poly(n))$.
\end{proof}

Next, we obtain the following corollary.

\begin{corollary}\label{cor: a phase outer DkS}
	Assume that there exists a constant $0<\eps\leq 1/2$, and an  $\alpha(N)$-approximation algorithm $\aset$ for the \BDkS problem, whose running time is at most $T(N)$, where $N$ is the number of vertices in the input graph, and  $\alpha(N)=2^{(4\log N)^{\eps}}$. Then there is an algorithm, whose input consists of an instance $\iset=(X,Y,A,\cset,\set{G_C}_{C\in \cset})$ of \BCSP and parameter $n$ that is greater than a large enough constant, so that $\size(\iset)\leq n$ holds, and $\iset$ is a $d(n)$-to-$d(n)$ instance of \BCSP, for $d(n)\leq 2^{(\log n)^{\eps}}$.
	Let $\beta=2^{8(\log n)^{1/2+\eps}}$. The algorithm  returns a partition $(E_1,E_2,E_3)$ of $E(H(\iset))$, where $E_1$ is a bad set of constraints, $E_2$ is a $\beta^3$-good set of constraints, and $|E_1\cup E_2|\geq \frac{|E(H(\iset))|}{\beta}$. The running time of the algorithm is $O(T(n)\cdot \poly(n))$.
\end{corollary}

\begin{proof}
	Throughout the proof, we assume that there exists a constant $0<\eps\leq 1/2$, and an $\alpha(N)$-approximation algorithm $\aset$ for the \BDkS problem, whose running time is at most $T(N)$, where $N$ is the number of vertices in the input graph, and $\alpha(N)=2^{(2\log N)^{\eps}}$. 
Assume that we are given an instance $\iset=(X,Y,A,\cset,\set{G_C}_{C\in \cset})$  of \BCSP, together with a parameter $n$ that is greater than a large enough constant, so that  $\size(\iset)\leq n$, and $\iset$ is a $d(n)$-to-$d(n)$ instance of \BCSP, for $d(n)\leq 2^{(\log n)^{\eps}}$.
For convenience, we denote $H=H(\iset)$.  Our algorithm uses a parameter $\eta=2^{12} d(n)\cdot \alpha(n)\cdot \log^4n$.

The algorithm is iterative. Over the course of the algorithm, we maintain a collection $\hset$ of subgraphs of $H$, and another subgraph $H^g$ of $H$. We will ensure that, throughout the algorithm, all graphs in $\hset\cup \set{H^g}$ are mutually disjoint in their vertices. We denote by $E^g=E(H^g)$ and $E^1=\bigcup_{H'\in \hset}E(H')$. Additionally, we maintain another set $E^b$ of edges of $H$, that is disjoint from $E^g\cup E^1$, and we denote by $E^0=E(H)\setminus (E^g\cup E^b\cup E^1)$ the set of all remaining edges of $H$.
We ensure that the following invariants hold throughout the algorithm.

\begin{properties}{I}
	\item set $E^g=E(H^g)$ of edges is $\beta^3$-good; \label{inv: good edges}
	\item set $E^b$ of edges is bad;\label{inv: bad edges} and
	
	\item all graphs in $\hset\cup \set{H^g}$ are disjoint in their vertices. \label{inv: disjoint graphs}
	
\end{properties}

Intuitively, we will start with the set $\hset$ containing a single graph $H$, and $E^g=E^b=E^0=\emptyset$. As the algorithm progresses, we will iteratively add edges to sets $E^g,E^b$ and $E^0$, while partitioning the graphs in $\hset$ into smaller subgraphs. The algorithm will terminate once $\hset=\emptyset$. The key in the analysis of the algorithm is to ensure that $|E^0|$ is relatively small when the algorithm terminates. We do so via a charging scheme: we assign a budget to every edge of $E^1\cup E^g\cup E^b$, that evolves over the course of the algorithm, and we keep track of this budget over the course of the algorithm.

In order to define vertex budgets, we will assign, to every graph $H\in \hset$ a \emph{level}, that is an integer between $0$ and $\ceil{\log n}$. We will ensure that, throughout the algorithm, the following additional invariants hold:

\begin{properties}[3]{I}
\item If $H'\in \hset$ is a level-$i$ graph, then the budget of every edge $e\in E(H')$ is at most $\eta^i$; and \label{prop: charge each edge}

\item Throughout the algorithm's execution, the total budget of all edges in $E^g\cup E^b\cup E^1$ is at least $|E(H)|$.\label{prop: total budget}
\end{properties}

Intuitively, at the end of the algorithm, we will argue that the level of every graph in $\hset$ is not too large, and that the budget of every edge in $E^g\cup E^b\cup E^1$ is not too large. Since the total budget of all edges in  $E^g\cup E^b\cup E^1$ is at least $|E(H)|$, it will then follow that $|E^g\cup E^b\cup E^1|$ is sufficiently large.
We now proceed to describe the algorithm.

Our algorithm will repeatedly use the algorithm from \Cref{thm: general partition of solution}, with the same functions $\alpha(N), d(n)$, and parameter $\beta$. In order to be able to use the corollary, we need to estalish that $\beta \geq 2^{30}(\alpha(n))^3(\log n)^{12}$. This is immediate to verify since $\beta=2^{8(\log n)^{1/2+\eps}}$, $\alpha(n)=2^{(4\log n)^{\eps}}$, and $n$ is large enough.

\paragraph{Initialization.}
At the beginning of the algorithm, we set $E^0=E^g=E^b=\emptyset$, and we  let $\hset$ contain a single graph $H$, which is assigned level $0$. Note that $E^1=E(H)$ must hold. Every edge $e\in E(H)$ is assigned budget $b(e)=1$. Clearly, the total budget of all edges of $E^1\cup E^g\cup E^b$ is $B=\sum_{e\in E^1\cup E^g\cup E^b}b(e)=|E(H)|$.

The algorithm performs iterations, as long as $\hset\neq \emptyset$. In every iteration, we select an arbitrary graph $H'\in \hset$ to process. We now describe a single iteration.

\paragraph{Iteration description.}
We now describe an iteration where some graph $H'\in \hset$ is processed. We assume that graph $H'$ is assigned level $i$. Notice that graph $H'$ naturally defines an instance $\iset'=(X',Y',A,\cset',\set{G_C}_{C\in \cset'})$ of \BCSP, where $X'=V(H')\cap X$, $Y'=V(H')\cap Y$, $\cset'=\set{C\in \cset\mid e_C\in E(H')}$, and the graphs $G_C$ for constraints $C\in \cset'$ remain the same as in instance $\iset$. Clearly, $\size(\iset')\leq \size(\iset)\leq n$, and $H(\iset')=H'$. Furthermore, instance $\iset'$ remains a $d(n)$-to-$d(n)$ instance. We apply the algorithm from \Cref{thm: general partition of solution} to instance $\iset'$, with parameters $n$ and $\beta$ remaining unchanged. Consider the partition $(E_1,E_2)$ of $E(H')$ that the algorithm returns. Recall that the set $E_1$ of edges is bad. We add the edges of $E_1$ to set $E^b$. From Invariant \ref{inv: bad edges} and \Cref{obs: bad constraints}, set $E^b$ of edges continues to be bad. If the algorithm from \Cref{thm: general partition of solution} certified that $E_2$ is a $\beta^3$-good set of edges, then we update graph $H^g$ to be $H^g\cup (H'\setminus E_1)$, and we add the edges of $E_2$ to set $E^g$. We then remove graph $H'$ from $\hset$, and continue to the next iteration. Note that, from \Cref{obs: good constraints} and Invariants \ref{inv: good edges} and \ref{inv: disjoint graphs}, the set $E^g$ of edges continues to be $\beta^3$-good. It is easy to verify that all remaining invariants also continue to hold.

From now on we assume that the algorithm from \Cref{thm: general partition of solution} returned a subgraph $H''=(X'',Y'',E'')$ of $H'$, with $E''\subseteq E_2$, such that $|X''|\leq \frac{2d(n)\cdot |X'|}{\beta}$ and $|Y''|\leq \frac{2d(n)\cdot |Y'|}{\beta}$. In particular, $|V(H'')|=|X''|+|Y''|\leq \frac{2d(n)}{\beta}\cdot (|X'|+|Y'|)\leq  \frac{2d(n)}{\beta}\cdot |V(H')|$.
Additinally, if we denote by $E^*_2$ the subset of edges of $E_2$ containing all edges with exactly one endpoint in $X''\cup Y''$, then
$|E''|\geq \frac{|E^*_2|}{2048d(n)\cdot \alpha(n)\cdot \log^4n}$ must hold.
We let $H^*$ be the graph obtained from $H'\setminus E_1$, by deleting the vertices of $H''$ from it, so $V(H^*)\cup V(H'')=V(H')$, and $E(H^*)\cup E(H'')\cup E^*_2=E_2$. We remove graph $H'$ from $\hset$, and we add graphs $H''$ and $H^*$ to $\hset$, with graph $H''$ assigned level $(i+1)$, and graph $H^*$ assigned level $i$. We also add the edges of $E_2^*$ to $E^0$, and we update the set $E^1$ of edges to contain all edges of $\bigcup_{\tilde H\in \hset}E(\tilde H)$. Since we did not modify graph $H^g$ in the current iteration, it is immediate to verify that Invariants \ref{inv: good edges}--\ref{inv: disjoint graphs} continue to hold. Next, we update the budgets of edges, in order to ensure that 
Invariants \ref{prop: charge each edge} and \ref{prop: total budget} continue to hold. Intuitively, the edges of $E^*_2$ are now added to set $E^0$, so we need to distribute their budget among the edges of $E(H'')$, in order to ensure that the total budget of all edges in $E^g\cup E^b\cup E^1$ does not decrease. This will ensure that Invariant \ref{prop: total budget} continues to hold. At the same time, since the level of graph $H''$ is $(i+1)$, while the level of graph $H'$ was $i$, we can increase the budgets of the edges of $E(H')$ and still maintain Invariant \ref{prop: charge each edge}.

Formally, recall that \Cref{thm: general partition of solution} guarantees that $|E^*_2|\leq |E''|\cdot (2048d(n)\cdot  \alpha(n)\cdot \log^4n)=\frac{|E''|\cdot \eta} 2$. From Invariant \ref{prop: charge each edge}, the current budget of every edge in $E''\cup E^*_2$ is bounded by $\eta^i$. Therefore, at the beginning of the current iteration:

\[\sum_{e\in E''\cup E^*_2}b(e)\leq \eta^i\cdot \left(|E^*_2|+|E''|\right )\leq \eta^i\cdot |E''|\cdot \left(1+\frac{\eta}{2}\right )<\eta^{i+1}\cdot |E''|.
\]

We set the budget of every edge in $E''$ to be $\eta^{i+1}$, and leave the budgets of all other edges unchanged. It is easy to verify that $\bigcup_{e\in E^g\cup E^b\cup E^1}b(e)$ does not decrease in the current iteration, so Invariant \ref{prop: total budget} continues to hold. It is also easy to verify that Invariant \ref{prop: charge each edge} continues to hold. Therefore, all invariants continue to hold at the end of the iteration. This completes the description of an iteration.

The algorithm terminates when $\hset=\emptyset$. Clearly, we obtain a partition $(E^g,E^b,E^0)$ of $E(H)$ into disjoint subsets, where the set $E^b$ of edges is bad, and the set $E^g$ of edges is $\beta^3$-good. It remains to show that $|E^g\cup E^b|\geq \frac{|E(H)|}{\beta}$.
We use the edge budgets in order to prove this. Let $L^*$ be the largest level of any subgraph of $H$ that belonged to $\hset$ at any time during the algorithm. We start with the following key observation.

\begin{observation}\label{obs: graph level}
 $L^*\leq (\log n)^{1/2-\eps}$.
\end{observation}
\begin{proof}
	Consider any graph $H''$ that was added to set $\hset$ at any time during the algorithm's execution, and assume that $H''$ was assigned level $i$. Consider the iteration during which $H''$ was added to $\hset$, and let $H'\in \hset$ be the graph that was processed during that iteration. We refer to graph $H'$ as the \emph{parent-graph} of $H''$. Note that the level of $H'$ is either $i$ or $(i-1)$. Assume that it is the latter. Then, from the algorithm's description, $|V(H'')|\leq  \frac{2d(n)}{\beta}\cdot |V(H')|$ must hold.
	
	We can now construct a \emph{partitioning tree}, that contains a vertex $v(H')$ for every graph $H'$ that was ever present in $\hset$ over the course of the algorithm, an an edge between vertices $v(H')$ and $v(H'')$ whenever graph $H'$ is a parent-graph of graph $H''$. The root of the tree is $v(H)$. Consider now again some graph $H''$, and the unique path $P$ in the partitioning tree, connecting $v(H)$ to $v(H'')$. Denote the vertices on this path by $v(H)=v(H_0),v(H_1),\ldots,v(H_r)=v(H'')$, and assume that these vertices appear on path $P$ in this order. For all $1\leq i\leq r$, denote the level of graph $H_i$ by $L_i$. Then $0=L_1\leq L_2\leq\cdots\leq L_r$ must hold. Moreover, for every index $0<i\leq r$, either $L_i=L_{i-1}$; or $L_i=L_{i-1}+1$ hold. In the latter case, $|V(H_i)|\leq |V(H_{i-1})|\cdot \frac{2d(n)}{\beta}$. Denote $\Delta= \frac{\log n}{\log\left(\frac{\beta}{2d(n)}\right )}$. We claim that $L_r\leq \Delta$. Indeed, assume for contradiction that $L_r>\Delta$. Then there is a collection $J\subseteq \set{1,\ldots,r}$ of at least $\Delta+1$ indices $i$, for which $L_i=L_{i-1}+1$. But then:
	
	\[|V(H'')|\leq n\cdot\left(\frac{2d(n)}{\beta}\right )^{\Delta+1}<1, \]
	
	a contradiction.
	We conclude that $L^*\leq \frac{\log n}{\log\left(\frac{\beta}{2d(n)}\right )}$. Substituting $\beta=2^{8(\log n)^{1/2+\eps}}$ and $d(n)\leq 2^{(\log n)^{\eps}}$, we get that:
	
	\[ L^*\leq \frac{\log n}{\log\left(2^{6(\log n)^{1/2+\eps}}\right )}\leq (\log n)^{1/2-\eps}. \]
	
	\end{proof}


 From Invariant \ref{prop: charge each edge}, throughout the algorithm, for every edge $e\in E^1$, $b(e)\leq \eta^{L^*}$ must hold. Once an edge is added to $E^b\cup E^g$, its budget does not change. Therefore, at the end of the algorithm, the budget of every edge in $ E^g\cup E^b$ is at most $\eta^{L^*}$. On the other hand, from Invariant \ref{prop: total budget}, at the end of the algorithm, the total budget of all edges in $E^1\cup E^g\cup E^b$ is at least $|E(H)|$. Therefore, at the end of the algorithm:

\[|E^g\cup E^b|\geq \frac{|E(H)|}{\eta^{L^*}}. \]

We now bound $\eta^{L^*}$. Recall that $\eta=2^{12} d(n)\cdot \alpha(n)\cdot \log^4n\leq 2^{4(\log n)^{\eps}}$, since
$d(n)\leq 2^{(\log n)^{\eps}}$,
$\alpha(n)=2^{(4\log n)^{\eps}}$, and $n$ is large enough. Since, from \Cref{obs: graph level}, $L^*\leq (\log n)^{1/2-\eps}$, we get that $\eta^{L^*}\leq 2^{4(\log n)^{1/2}}<\beta$, since $\beta=2^{8(\log n)^{1/2+\eps}}$. Therefore, $|E^g\cup E^b|\geq |E(H)|/\beta$ as required.

Lastly, it is easy to verify that the algorithm has at most $\poly(n)$ iterations, and the running time of each iteration is bounded by $O(T(n)\cdot\poly(n))$, so the total running time of the algorithm is at most $O(T(n)\cdot\poly(n))$.
\end{proof}

We are now ready to complete the proof of \Cref{thm: DkS reduction main}.
	Assume that there exists a constant $0<\eps\leq 1/2$, and an  $\alpha(N)$-approximation algorithm $\aset$ for the \DkS problem, whose running time is at most $T(N)$, where $N$ is the number of vertices in the input graph, and $\alpha(N)=2^{(\log N)^{\eps}}$. From \Cref{lem: DkS and Dk1k2S}, there exists an  $\alpha'(N)$-approximation algorithm $\aset$ for the \BDkS problem, where $N$ is the number of vertices in the input graph, and $\alpha'(N)\leq O(\alpha(N^2))\leq O\left(2^{(2\log N)^{\eps}}\right )$. The running time of the algorithm is at most $O(T(N^2)\cdot \poly(N))$. Denote $T'(N)= O(T(N^2)\cdot \poly(N))$ this bound on the running time of the algorithm, and let $\alpha''(N)=2^{(4\log N)^{\eps}}$. Then there is an $\alpha''(N)$-approximation algorithm for \BDkS with running time at most $O(T'(N))$.  Indeed, if $N$ is greater than a sufficiently large constant, then we can use Algorithm $\aset$, to obtain a solution whose approximation factor is $\alpha'(N)\leq O\left(2^{(2\log N)^{\eps}}\right )\leq 2^{(4\log N)^{\eps}}\leq \alpha''(N)$. Otherwise, we can solve the problem exactly via exhaustive search.

	Assume now that we are given an instance $\iset=(X,Y,A,\cset,\set{G_C}_{C\in \cset})$ of \BCSP and parameter $n$ that is greater than a large enough constant, so that $\size(\iset)\leq n$ holds, and $\iset$ is a $d(n)$-to-$d(n)$ instance of \BCSP, for $d(n)\leq 2^{(\log n)^{\eps}}$.
	Let $\beta=2^{8(\log n)^{1/2+\eps}}$, and let $r=\ceil{\beta\cdot \log n}$. For convenience, we denote $H=H(\iset)$. Initially, we set $E^b=\emptyset$. Our algorithm performs $r$ iterations, where for all $1\leq j\leq r$, in iteration $j$ we construct the set $E_j\subseteq E(H)$ of edges, that is $\beta^3$-good, and possibly adds some edges to set $E^b$. We ensure that, throughout the algorithm, the set $E^b$ of edges is bad.
	
	Initially, $E^b=\emptyset$. We now describe the $j$th iteration. We assume that sets $E_1,\ldots,E_{j-1}$ of edges of $H$ were already defined. We construct graph $H_j$, that is obtained from graph $H$, by deleting the edges of $E_1\cup\cdots\cup E_{j-1}\cup E^b$ from it. Notice that graph $H_j$ naturally defines an instance $\iset_j=
	(X,Y,A,\cset_j,\set{G_C}_{C\in \cset_j})$ of \BCSP, with $H_j=H(\iset_j)$, where $\cset_j=\set{C\in \cset\mid e_C\in E(H_j)}$. We apply the algorithm from  \Cref{cor: a phase outer DkS} to graph $H_j$, with parameters $n, \beta$, and $d(n)$ remaining unchanged. Consider a partition $(E^1,E^2,E^3)$ of $E(H_j)$ that the algorithm returns. We add the edges of $E^1$ to set $E^b$. Since both sets of edges are bad, from \Cref{obs: bad constraints}, set $E^b$ of edges continues to be bad. We also set $E_{j}=E^2$, which is guaranteed to be a $\beta^3$-good set of edges from \Cref{cor: a phase outer DkS}. Recall that \Cref{cor: a phase outer DkS} also guarantees that $|E^1\cup E^2|\geq |E(H_j)|/\beta$. We then continue to the next iteration.
	
	Since, from the above discussion, for all $1\leq j<r$, $|E(H_{j+1})|\leq \left (1-\frac{1}{\beta}\right )|E(H_j)|$, and since $r=\ceil{\beta\cdot \log n}$, at the end of the algorithm, we are guaranteed that the final collection $E^b,E_1,\ldots,E_r$ of subsets of edges indeed partitions $E(H)$.
	
	Notice that the running time of a single iteration is bounded by $O(T'(n)\cdot \poly(n))\leq O(T(\poly(n))\cdot \poly(n))$. Since the number of iterations is bounded by $\poly(n)$, the total running time of the algorithm is bounded by $O(T(\poly(n))\cdot \poly(n))$.

In order to complete the proof of \Cref{thm: DkS reduction main}, it is now enough to prove \Cref{thm: partition or solution for no instance}, which we do next.

\subsection{Proof of \Cref{thm: partition or solution for no instance}}
\label{subsec: proof of inner thm for DkS}

The proof partially relies on ideas and techniques from  \cite{NDP-grid-hardness}.
Assume that there exists an $\alpha(N)$-approximation algorithm $\aset$ for the \BDkS problem, whose running time is at most $T(N)$,  where $N$ is the number of vertices in the input graph. Let $\iset=(X,Y,A,\cset,\set{G_C}_{C\in \cset})$ be the input instance of \BCSP, with $\size(\iset)\leq n$.
For convenience, we denote $H=H(\iset)$. If $|E(H)|\leq \beta^3$, then  graph $H$ is $\beta^3$-good, since we can compute an assignment to the variables of $X\cup Y$ that satisfies at least one constraint of $\cset$. Therefore, we assume from now on that $|E(H)|>\beta^3$. We can also assume that graph $H$ contains no isolated vertices, as isolated vertices of $H$ correspond to variables that do not participate in any constraints, and can be discarded.

The proof consists of four steps. In the first step, in order to simplify the proof, we will regularize graph $H$, by computing a ``nice'' subgraph $\tH\subseteq H$. In the second step, we will define an \emph{assignment graph} associated with $\tilde H$, and we will use it in order to obtain an instance of \BDkS, to which algorithm $\aset$ will then be applied. In the last two steps, we will use the outcome of algorithm $\aset$ in order to either correctly establish that graph $H$ is $\beta^3$-good, or to compute a bad subset of constraints, or a subgraph $H'$ of $H$ as required. We now describe each of the three steps in turn.

\subsubsection{Step 1: Regularization}

In this step we will compute a subgraph $\tilde H$ of $H$ that has a convenient structure. We refer to graphs with such structure as \emph{nice subgraphs} of $H$, and define them next.

\begin{definition}[Nice Subgraph of $H$]
		Let $\tilde H=(\tilde X,\tilde Y,\tilde E)$ be a subgraph of $H$, and let $d_1,d_2\geq 1$ be parameters. We say that $\tilde H$ is a $(d_1,d_2)$-nice subgraph of $H$, if the following hold:
		\begin{itemize}
			\item For every vertex $x\in \tilde X$, $d_1\leq \deg_{H}(x)< 2d_1$; 
			
			\item For every vertex $y\in \tilde Y$, $d_2\leq \deg_{H}(y)< 8d_2\log n$ and
			$d_2\leq \deg_{\tilde H}(y)< 2d_2$; and
			
			\item $|\tilde E|\geq \frac{d_1}{4\log n}\cdot |\tilde X|$.
		
		\end{itemize}
		We say that $\tilde H$ is a \emph{nice} subgraph of $H'$ if it is a $(d_1,d_2)$-nice subgraph of $H$ for any pair $d_1,d_2\geq 1$ of parameters.
	\end{definition}

	The first step of our algorithm is summarized in the following claim, that allows us to compute a nice subgraph $\tilde H$ of $H$ that contains many edges of $H$.
	
	\begin{claim}\label{claim: regularizing constraint graph}
		There is an algorithm with running time $O(\poly(n))$, that computes parameters $d_1,d_2>1$, and a subgraph $\tilde H$ of $H$, such that $\tilde H$ is a $(d_1,d_2)$-nice subgraph of $H$, and  $|E(\tilde H)|\geq \frac{|E(H)|}{8\log^2n}$.
	\end{claim}

	\begin{proof}
		The proof uses standard regularization techniques, and consists of three steps. Denote $H=(X,Y,E)$. 
		
		In the first step, we partition the vertices of $X$ into  groups $S_0,\ldots,S_q$, for $q=\ceil{\log n}$, where for all $0\leq i\leq q$, $S_i=\set{x\in X\mid 2^i\leq \deg_{H}(x)<2^{i+1}}$.
		We also partition the set $E$ of edges into subsets $E_0,\ldots,E_q$, where for all $0\leq i\leq q$, set $E_i$ contains all edges $e\in E$ that are incident to vertices of $S_i$. Clearly, there is an index $0\leq i^*\leq q$, with $|E_{i^*}|\geq \frac{|E(H)|}{2\log n}$. We let $\tilde X=S_{i^*}$, and we let $H_1$ be the graph whose vertex set is $\tilde X\cup Y$, and edge set is $E_{i^*}$. We also define $d_1=2^{i^*}$. Clearly, for every vertex $x\in \tilde X$, $d_1\leq \deg_H(x)<2d_1$. This completes the first regularization step.


We now proceed to describe our second step, in which we consider the vertices of $y\in Y$ one by one. We say that a vertex $y\in Y$ is \emph{bad}, if $\deg_{H_1}(y)<\frac{\deg_{H}(y)}{4\log n}$. Let $Y''\subseteq Y$ be the set of all bad vertices, and let $Y'=Y\setminus Y''$ be the set of all remaining vertices of $Y$, that we refer to as \emph{good vertices}.
We use the following observation.

\begin{observation}\label{obs: few edges incident to bad vertices}
	$\sum_{y\in Y''}\deg_{H_1}(y)\leq \frac{|E(H_1)|}{2}$.
\end{observation}
\begin{proof}
	Since, for every bad vertex $y$, $\deg_{H_1}(y)<\frac{\deg_{H}(y)}{4\log n}$, we get that:
	
	\[\sum_{y\in Y''}\deg_{H_1}(y)<\sum_{y\in Y''}\frac{\deg_{H}(y)}{4\log n}\leq \frac{|E(H)|}{4\log n}. \]
	
	Since, as observed above, $|E(H_1)|\geq \frac{|E(H)|}{2\log n}$, the observation follows.
\end{proof}

We let $H_2$ be a graph that is obtained from $H_1$, by discarding the vertices of $Y''$ from it. Therefore, $V(H_2)=\tilde X\cup Y'$. Additionally, from \Cref{obs: few edges incident to bad vertices}, $|E(H_2)|\geq \frac{|E(H_1)|}{2}\geq \frac{|E(H)|}{4\log n}$.

Lastly, in our third step, we perform a geometric grouping of the vertices of $Y'$ by their degree in $H_2$. 
Specifically, we let $r=\ceil{\log n}$, and we partition the vertices of $Y'$ into sets $S'_0,\ldots,S'_r$, where for $0\leq j\leq r$, $S'_j=\set{y\in Y'\mid 2^j\leq \deg_{H_2}(y)<2^{j+1}}$.
As before, we also partition the set $E(H_2)$ of edges into subsets $E'_0,\ldots,E'_r$, where for $0\leq j\leq r$ set $E'_j$ contains all edges $e\in E(H_2)$ that are incident to the vertices of $S'_j$. As before, there must be an index $0\leq j^*\leq r$ with $|E'_{j^*}|\geq \frac{|E(H_2)|}{2\log n}\geq \frac{|E(H)|}{8\log^2n}$. We set $\tilde Y=S'_{j^*}$, $d_2=2^{j^*}$, and we let $\tilde H$ be the graph whose vertex set is $\tilde X\cup \tilde Y$, and edge set is $E'_{j^*}$. We now verify that this graph has all required properties.

First, as observed already, for every vertex $x\in \tilde X$, $d_1\leq \deg_H(x)<2d_1$. Let $\tilde E=E(\tilde H)$. As observed already, $|\tilde E|\geq \frac{|E(H)|}{8\log^2n}$. Moreover, since, for every vertex $x\in \tilde X$, $\deg_{H_1}(x)=\deg_H(x)\geq d_1$, we get that $|E(H_1)|\geq d_1\cdot |\tilde X|$, and so $|\tilde E|\geq \frac{|E(H_2)|}{2\log n}\geq \frac{|E(H_1)|}{4\log n}\geq \frac{d_1}{4\log n}\cdot |\tilde X|$.

Consider now some vertex $y\in \tilde Y$. From the definition of graph $\tilde H$, it is immediate to verify that $\deg_{\tilde H}(y)=\deg_{H_2}(y)$. Therefore, $d_2\leq \deg_{\tilde H}(y)< 2d_2$. 
Clearly, $\deg_H(y)\geq \deg_{\tilde H}(y)\geq d_2$. Lastly, since vertex $y$ is good, we get that:

\[\deg_{\tilde H}(y)=\deg_{H_2}(y)=\deg_{H_1}(y)\geq \frac{\deg_H(y)}{4\log n}.  \]

Since $\deg_{\tilde H}(y)< 2d_2$, we get that $\deg_H(y)\leq (4\log n)\deg_{\tilde H}(y)<8d_2\log n$. We conclude that
$d_2\leq \deg_{H}(y)< 8d_2\log n$.
\end{proof}

\subsubsection{Step 2: Assignment Graph and Reduction to \BDkS}

Recall that we have computed, in the first step, a subgraph $\tilde H=(\tilde X,\tilde Y,\tilde E)$ of the graph $H=H(\iset)$. Since every edge of $H$ is associated with a distinct constraint in $\cset$, we can define a collection $\tilde \cset\subseteq \cset$ of constraints corresponding to the edges of $\tilde H$: $\tilde \cset=\set{C\in \cset\mid e_C\in \tilde E}$.

Next, we define a bipartite graph $G=(U,V,\hat E)$, called \emph{assignment graph}, that is associated with graph $\tilde H$.
For every variable  $z\in \tilde X\cup \tilde Y$, we define a set $R(z)=\set{v(z,a)\mid 1\leq a\leq A}$ of vertices that represent the possible assignments to variable $z$. 
We then set $U=\bigcup_{x\in \tilde X}R(x)$, and $V=\bigcup_{y\in \tilde Y}R(y)$. The set of vertices of $G$ is defined to be $U\cup V$.

In order to define the edges, 
	consider any constraint $C=C(x,y)\in \tilde \cset$. We define a set $E(C)$ of at most $d(n)\cdot A$ edges corresponding to $C$, as follows: we add an edge between vertex $v(x,a)$ and vertex $v(y,a')$ to $E(C)$ if assignments $a$ to $x$ and $a'$ to $y$ satisfy the constraint $C$. Since instance $\iset$ is a $d(n)$-to-$d(n)$ instance of Bipartite 2-CSP, every vertex of $R(x)\cup R(y)$ is incident to at most $d(n)$ edges of $E(C)$.
	We then let $E(G)=\bigcup_{C\in \tilde \cset}E(C)$.

Let $k_1=|\tilde X|$ and $k_2=|\tilde Y|$. We can then view graph $G$, together with parameters $k_1$ and $k_2$ as an instance of the  \BDkS problem, $\pdks(G, k_1,k_2)$. Notice that $|V(G)|\leq |\tilde \cset|\cdot A^2\leq |\cset|\cdot A^2\leq \size(\iset)\leq n$. We apply Algorithm $\aset$ to instance $\pdks(G, k_1,k_2)$ of \BDkS, and we let $S$ be the solution that the algorithm returns. Denote $G'=G[S]$, so the value of the solution is $|E(G')|$. Assume first that $|E(G')|<\frac{|\tilde \cset|}{4 \alpha(n)}$. We use the following observation to show that, in this case, the set $\tilde \cset$ of constraints is bad.

\begin{observation}\label{obs: bad se of constraint}
	If the set $\tilde \cset$ of constraints is not bad, then the value of the optimal solution to instance $\pdks(G, k_1,k_2)$ of \BDkS is at least $\frac{|\tilde \cset|}{4}$.
\end{observation}	
\begin{proof}	
Assume that the set $\tilde \cset$ of constraints is not bad. Then there is an assignment $f$ to variables of $\tilde X\cup \tilde Y$ that satisfies more than $\frac{|\tilde \cset|}{4}$ constraints of $\tilde \cset$. For each variable $z\in \tilde X\cup \tilde Y$, we denote the corresponding assignment by $f(z)$. Let $S'$ be the set of vertices of $G$ that contains, for every variable $x\in \tilde X$, vertex $v(x,f(x))$, and for every variable $y\in \tilde Y$, vertex $v(y,f(y))$. Then $S'$ is a valid solution to instance $\pdks(G, k_1,k_2)$ of \BDkS. Moreover, for every constraint $C\in \tilde \cset$ that is satisfied by the assignment $f$, an edge of $E(C)$ must be contained in $G[S']$. Therefore, the value of solution $S'$ is at least $\frac{|\tilde \cset|}{4}$. 
\end{proof}

From \Cref{obs: bad se of constraint}, if the set $\tilde \cset$ of constraints is not bad, then algorithm $\aset$ must have returned a solution whose value is at least $\frac{|\tilde \cset|}{4 \alpha(n)}$. Therefore, if the value of the solution $S$ that the algorithm returns is less than $\frac{|\tilde \cset|}{4\cdot \alpha(n)}$, then we terminate the algorithm, and return $\tilde \cset$ as a bad set of constraints. Recall that, from \Cref{claim: regularizing constraint graph}, $|\tilde \cset|=|E(\tilde H)|\geq \frac{|E(H)|}{8\log^2n}=\frac{|\cset|}{8\log^2n}$.
From now on we assume that the value of the soution $S$ is at least $\frac{|\tilde \cset|}{4 \alpha(n)}$.
We will use the set $S$ of vertices of $G$, in order to either correctly certify that graph $H$ is $\beta^3$-good, or to compute a subgraph $H'\subseteq H$ with the required properties. It will be convenient for us to further regularize graph $G'$, which we do in the next step.

\subsubsection{Step 3: Further Regularization}
It would be convenient for us to further regularize graph $G'$, by computing a subgraph $G''\subseteq G'$, with $|E(G'')|$ roughly comparable to $|E(G')|$, that has the following additional properties. First, for all  variables $x\in \tilde X$ with $R(x)\cap V(G'')\neq \emptyset$, the cardinalities of the sets $R(x)\cap V(G'')$ of vertices are roughly equal to each other (to within factor $2$). Similarly, for all  variables $y\in \tilde Y$ with $R(y)\cap V(G'')\neq \emptyset$, the cardinalities of the sets $R(y)\cap V(G'')$ of vertices are roughly equal to each other. Lastly, for all constraints $C\in \tilde \cset$ with $E(C)\cap E(G'')\neq \emptyset$, 
the cardinalities of the sets $E(C)\cap E(G'')$ of edges are roughly equal to each other. In this step, we compute a subgraph $G''\subseteq G'$ with all these properties. The algorithm is summarized in the following claim.
	
\begin{claim}\label{claim: regularization}
There is an  algorithm with running time $O(\poly(n))$, that computes a subgraph $G''\subseteq G'$, subsets $X^*\subseteq \tilde X$, $Y^*\subseteq \tilde Y$ of variables, a subset $\cset^*\subseteq \tilde \cset$ of constraints, and integers $q,q',r\geq 1$, such that, if we denote, for every variable $z\in X\cup Y$,  $R'(z)= R(z)\cap V(G'')$, and for every constraint $C\in \cset$, $E'(C)= E(C)\cap E(G'')$, then the following hold:
		
\begin{itemize}
\item for every variable $x\in X^*$, $2^q\leq |R'(x)|< 2^{q+1}$, and for $x\not\in X^*$, $R'(x)=\emptyset$;
\item for every variable $y\in Y^*$, $2^{q'}\leq |R'(y)|< 2^{q'+1}$, and for $y\not\in Y^*$, $R'(y)=\emptyset$;
\item for every constraint $C\in \cset^*$, $2^r\leq |E'(C)|<2^{r+1}$, and for $C\not\in \cset^*$, $E'(C)=\emptyset$;  and
\item $|\cset^*|\geq \frac{|\tilde \cset|}{2^{r+6}\cdot \alpha(n)\cdot \log^3n}$.
\end{itemize}
\end{claim}
	
\begin{proof}
The proof follows a standard regularization process. Let $E^0=E(G')$, so that $|E^0|\geq \frac{|\tilde \cset|}{4\cdot \alpha(n)}$.

Our first step regularizes the variables of $\tilde X$. We group the variables of $\tilde X$ into sets $J_0,J_1,\ldots,J_{\ceil{\log A}}$. For all $0\leq i\leq \ceil{\log A}$, we let $J_i=\set{x\in \tilde X\mid 2^i\leq |R(x)\cap V(G')|<2^{i+1}}$.
Note that, if $R(x)\cap V(G')=\emptyset$, then variable $x$ does not belong to any of the groups that we have defined. We also partition the edges of $E^0$ into groups $E_0,E_1,\ldots,E_{\ceil{\log A}}$, where for all $0\leq i\leq \ceil{\log A}$, group $E_i$ contains all edges $e\in E^0$ that are incident to the vertices of $\bigcup_{x\in J_i}(R(x)\cap V(G'))$. It is easy to verify that $(E_0,\ldots,E_{\ceil{\log A}})$ is indeed a partition of the set $E^0$ of edges. Therefore, there is an index $0\leq q\leq \ceil{\log A}$ with $|E_q|\geq \frac{|E^0|}{2\log A}\geq \frac{|E^0|}{2\log n}$. We then set $X^*=J_q$. For each such variable $x\in X^*$, we set $R'(x)=R(x)\cap V(G')$, and for each variable $x\in \tilde X\setminus X^*$, we set $R'(x)=\emptyset$. We also let $E^1=E_q$. From the above discussion,  $|E^1|\geq \frac{|E^0|}{2\log n}$, and, for every variable $x\in X^*$, $2^q\leq |R'(x)|<2^{q+1}$. Note that all edges of $E^1$ are incident to vertices of $\bigcup_{x\in X^*}R'(x)$.

Our second step is to regularize the variables of $\tilde Y$, exactly as before. 
We group the variables of $\tilde Y$ into sets $J'_0,J'_1,\ldots,J'_{\ceil{\log A}}$. For all $0\leq i'\leq \ceil{\log A}$, we let $J'_{i'}=\set{y\in \tilde Y\mid 2^{i'}\leq |R(x)\cap V(G')|<2^{i'+1}}$.
As before, if $R(y)\cap V(G')=\emptyset$, then variable $y$ does not belong to any of the sets that we have defined. We also partition the edges of $E^1$ into sets $E'_0,E'_1,\ldots,E'_{\ceil{\log A}}$, where for all $0\leq i'\leq \ceil{\log A}$, set $E_{i'}$ contains all edges $e\in E^1$ that are incident to the vertices of $\bigcup_{y\in J_{i'}}(R(y)\cap V(G'))$. As before, $(E'_0,\ldots,E'_{\ceil{\log A}})$ is a partition of the set $E^1$ of edges. Therefore, there is an index $0\leq q'\leq \ceil{\log A}$ with $|E'_{q'}|\geq \frac{|E^1|}{2\log A}\geq \frac{|E^1|}{2\log n}\geq \frac{|E^0|}{4\log^2n}$. We then set $Y^*=J'_{q'}$. For each variable $y\in Y^*$, we let $R'(y)=R(y)\cap V(G')$, and for each variable $y\in \tilde Y\setminus Y^*$, we set $R'(y)=\emptyset$. We also let $E^2=E'_{q'}$. From the above discussion,  $|E^2|\geq \frac{|E^0|}{4\log^2 n}$, and, for every variable $y\in Y^*$, $2^{q'}\leq |R'(y)|<2^{y+1}$. Notice that every edge of $E^2$ is incident to a vertex of $\bigcup_{x\in X^*}R'(x)$ and a vertex of $\bigcup_{y\in Y^*}R'(y)$.
		
Our third and final step is to regularize the constraints. Recall that for each constraint $C=C(x,y)\in \tilde \cset$, we defined a set $E(C)$ of edges. Since $|R(x)|= A$ and $|R(y)|=A$, $|E(C)|\leq A^2\leq n$ must hold. We group all constraints $C\in \tilde \cset  $ into 
sets $\cset_0,\cset_1,\ldots,\cset_{\ceil{\log n}}$, where for all $0\leq j\leq \ceil{\log n}$, set $\cset_j$ contains all constraints $C\in \tilde \cset$ with $2^j\leq |E(C)\cap E^2|<2^{j+1}$. Note that, if $E(C)\cap E^2=\emptyset$, then constraint $C$ does not belong to any set. Next, we define a partition $E_0'',E_1'',\ldots,E_{\ceil{\log n}}''$ of the set $E^2$ of edges: for $0\leq j\leq \ceil{\log n}$, set $E''_j$ contains all edges in $\bigcup_{C\in \cset_j}(E(C)\cap E^2)$. It is easy to verify that $(E''_0,\ldots,E_{\ceil{\log n}}'')$ is indeed a partition of $E^2$. Therefore, there must be an index $0<r\leq \ceil{\log n}$, with $|E''_r|\geq \frac{|E^2|}{2\log n}\geq \frac{|E^0|}{8\log^3n}$. We let $\cset^*=\cset_r$ and $E^3=E''_r$.
Since every constraint $C\in \cset^*$ contributes at most $2^{r+1}$ edges to $E^3$, we get that:

\[|\cset^*|\geq \frac{|E^3|}{2^{r+1}} \geq \frac{|E^0|}{2^{r+4}\log^3n}\geq \frac{|\tilde \cset|}{2^{r+6} \cdot \alpha(n)\cdot \log^3n}. \]

 For every constraint $C\in \cset^*$, we let $E'(C)=E(C)\cap E^2$, and for every constraint $C\in \tilde \cset\setminus\cset^*$, we let $E'(C)=\emptyset$. We are now ready to define graph $G''$. Its vertex set is $\left (\bigcup_{x\in X^*}R'(X)\right )\cup \left (\bigcup_{y\in Y^*}R'(y)\right )$, and its edge set is $\bigcup_{C\in \cset^*}E'(C)=E^3$. 
It is immediate to verify, from the above discussion, that graph $G'$, sets $X^*,Y^*$ of variables, and set $\cset^*$ of constraints have all required properties.
\end{proof}

In the next observation, we establish some useful bounds on the cardinalities of the sets $X^*,Y^*$ of variables, and the set $\cset^*$ of constraints.
	
\begin{observation}\label{obs: bounds on sizes of sets}
All of the following bounds hold:

\begin{itemize}
	\item  $|X^*|\leq \frac{|\tilde  X|}{2^q}$; 
	\item 	$|Y^*|\leq \frac{|\tilde Y|}{2^{q'}}$;
	\item $|\cset^*|\leq 2d_1|X^*|\leq \frac{2d_1|\tilde X|}{2^q}$; and
	\item $|\cset^*|\leq 2d_2|Y^*|\leq \frac{2d_2|\tilde Y|}{2^{q'}}$.
\end{itemize}
\end{observation}
	
\begin{proof}
	From the definition of the instance $(G,k_1,k_2)$ of \BDkS, graph $G'$ contains at most $k_1=|\tilde X|$ vertices of $\bigcup_{x\in \tilde X}R(x)$, and at most $k_2=|\tilde Y|$ vertices of $\bigcup_{y\in \tilde Y}R(y)$. Since, for every variable $x\in X^*$, $R'(x)\subseteq V(G')$ and $|R'(x)|\geq 2^q$, we get that $|X^*|\leq \frac{|\tilde X|}{2^q}$. Similarly, $|Y^*|\leq \frac{|\tilde Y|}{2^{q'}}$. 
	
	Recall that, since $\tilde H$ is a nice subgraph of $H$, for every vertex $x\in \tilde H$, $\deg_{\tilde H}(x)\leq \deg_H(x)\leq 2d_1$, and so $x$ may participate in at most $2d_1$ constraints of $\tilde \cset$. Since $\cset^*\subseteq \tilde \cset$ and $X^*\subseteq \tilde X$, every variable $x\in X^*$ may participate in at most $2d_1$ constraints of $\cset^*$. Therefore, $|\cset^*|\leq 2d_1|X^*|$. 
	
	Similarly, since $\tilde H$ is a nice subgraph of $H$, for every vertex $y\in \tilde Y$, $\deg_{\tilde H}(y)\leq 2d_2$. Using the same arguments as before, $|\cset^*|\leq 2d_2|Y^*|$.
	\end{proof}

Recall that, from \Cref{claim: regularization}, $|\cset^*|\geq \frac{|\tilde \cset|}{2^{r+5}\cdot \alpha(n)\cdot \log^3n}$. Since graph $\tilde H$ is a nice subgraph of $H$, we get that, for every vertex $y\in \tilde Y$, $\deg_{\tilde H}(y)\geq d_2$. Therefore, $|\tilde \cset|\geq |\tilde Y|\cdot d_2$, and so:

\begin{equation}
|\cset^*|\geq \frac{d_2\cdot |\tilde Y|}{2^{r+6}\cdot \alpha(n)\cdot \log^3n}.  \label{eq2}
\end{equation}

Similarly, from the definition of a nice subgraph, $|\tilde \cset|\geq \frac{d_1}{4\log n}\cdot |\tilde X|$, and so:

\begin{equation}
|\cset^*|\geq \frac{d_1\cdot |\tilde X|}{2^{r+8}\cdot \alpha(n)\cdot \log^4n}.  \label{eq3}
\end{equation}

	Lastly, we show that both $2^q,2^{q'}$ are close to $2^r$, in the following corollary of \Cref{obs: bounds on sizes of sets}.
	
	\begin{corollary}\label{cor: q q' and r}
		The following inequalities hold:
		
		\begin{itemize}
			\item 	$\frac{2^r}{2d(n)}\leq 2^{q}\leq 2^{r+8}\cdot \alpha(n)\cdot \log^4n$; and
			\item $\frac{2^r}{2d(n)}\leq 2^{q'}\leq 2^{r+6}\cdot \alpha(n)\cdot \log^3n$.
		\end{itemize}
	\end{corollary}
	
	\begin{proof}
		Consider some constraint $C=C(x,y)\in \cset^*$, and recall that $|E'(C)|\geq 2^r$. From the definition of the $d$-to-$d$ instances, every vertex $v(x,a)\in R(x)$ may be incident to at most $d(n)$ edges of $E(C)$. Since all edges of $E'(C)$ are incident to vertices of $R'(x)$, and $|R'(x)|\leq 2^{q+1}$, we get that $|E'(C)|\leq d(n)\cdot |R'(x)|\leq d(n)\cdot 2^{q+1}$. We conclude that $2^r\leq d(n)\cdot 2^{q+1}$.
		
		Similarly, every vertex $v(y,a')\in R(y)$ may be incident to at most $d(n)$ edges of $E(C)$. Since all edges of $E'(C)$ are incident to vertices of $R'(y)$, and $|R'(y)|\leq 2^{q'+1}$, we get that $|E'(C)|\leq d(n)\cdot 2^{q'+1}$.
		This proves the inequalities $\frac{2^r}{2d(n)}\leq 2^{q}$ and $\frac{2^r}{2d(n)}\leq 2^{q'}$.

		Next, we prove that 	$2^{q'}\leq 2^{r+6}\cdot \alpha(n)\cdot \log^3n$. 
	Recall that, from Inequality \ref{eq2}, $|\cset^*|\geq	\frac{d_2\cdot |\tilde Y|}{2^{r+6}\cdot \alpha(n)\cdot \log^3n}$.
		On the other hand, from the definition of a nice subgraph, every variable $y\in Y^*$ may participate in at most $2d_2$ constraints of $\cset^*$, and, from \Cref{obs: bounds on sizes of sets}, 	$|Y^*|\leq \frac{|\tilde Y|}{2^{q'}}$. Therefore:

		\begin{equation}
		|\cset^*|\leq 2d_2\cdot |Y^*|\leq  \frac{2d_2|\tilde Y|}{2^{q'}}.
		\end{equation}
		
		Combining the two inequalities, we get that:	$2^{q'}\leq 2^{r+6}\cdot \alpha(n)\cdot \log^3n$.

		Lastly, we prove that 	$2^q\leq 2^{r+8}\cdot \alpha(n)\cdot \log^4n$. 
Recall that, from Inequality \ref{eq3}, 
$|\cset^*|\geq \frac{d_1\cdot |\tilde X|}{2^{r+7}\cdot \alpha(n)\cdot \log^4n}$ holds.
As before, from the definition of a nice subgraph, every variable $x\in X^*$ may participate in at most $2d_1$ constraints of $\cset^*$, and, from \Cref{obs: bounds on sizes of sets}, 	$|X^*|\leq \frac{|\tilde X|}{2^{q}}$. Therefore:

\begin{equation}
|\cset^*|\leq 2d_1\cdot |X^*|\leq  \frac{2d_1\cdot |\tilde X|}{2^{q}}.
\end{equation}

Combining the two inequalities together, we get that:	$2^{q}\leq 2^{r+8}\cdot \alpha(n)\cdot \log^4n$.		
\end{proof}

\subsubsection{Step 4: Certifying that $H$ is a Good Graph or Computing a Subgraph of $H$}

We consider two cases, depending on whether $2^r\leq \beta$ holds. We start by showing that, if $2^r\leq \beta$, then graph $H$ is $\beta^3$-good.

\begin{observation}\label{obs: good graph}
	If $2^r\leq \beta$, then graph $H$ is $\beta^3$-good.
\end{observation}
\begin{proof}
We show that there exists an assignment to variables of $X\cup Y$ that satisfies at least $|\cset|/\beta^3$ constraints of $\cset$. In order to do it, we show a randomized algorithm that computes assignments to variables of $X\cup Y$, such that the expected number of satisfied constraints is at least $|\cset|/\beta^3$.

The assignments are computed as follows. Consider a variable $x\in X$. If $x\not\in X^*$, then we assign to $x$ an arbitrary value from $[A]$. Assume now that $x\in X^*$. Recall that we have defined a set $R'(x)\subseteq R(x)$ of vertices, whose cardinality is at most $2^{q+1}$. Set $R'(x)$ of vertices naturally defines a collection $\hat A(x)=\set{a\in [A]\mid v(x,a)\in R'(x)}$ of assignments to variable $x$, with $|\hat A(x)|\leq 2^{q+1}$. We choose an assignment $a\in \hat A(x)$ uniformly at random, and assign value $a$ to $x$.

Assignments to variables of $Y$ are defined similarly. Consider a variable $y\in Y$. If $y\not\in Y^*$, then we assign to $y$ an arbitrary value from $[A]$. Assume now that $y\in Y^*$. Recall that we have defined a set $R'(y)\subseteq R(y)$ of vertices, whose cardinality is at most $2^{q'+1}$. Set $R'(y)$ of vertices naturally defines a collection $\hat A(y)=\set{a\in [A]\mid v(y,a)\in R'(y)}$ of assignments to variable $y$, with $|\hat A(y)|\leq 2^{q'+1}$. We choose an assignment $a'\in \hat A(y)$ uniformly at random, and assign value $a'$ to $y$.

Recall that we have computed, in \Cref{claim: regularization}, a collection $\cset^*\subseteq \cset$ of constraints, with $|\cset^*|\geq \frac{|\tilde \cset|}{2^{r+6}\cdot \alpha(n)\cdot \log^3n}$. Consider now any constraint $C=C(x,y)\in \cset^*$, and recall that $x\in X^*,y\in Y^*$ must hold. Recall that graph $G''$ contains a collection $E'(C)\subseteq E(C)$ of edges, with $|E'(C)|\geq 2^r$. Consider now any such edge $e=(v(x,a),v(y,A))$. We say that edge $e$ \emph{wins} if $x$ is assigned value $a$, and $y$ is assigned value $a'$. The probability that edge $e$ wins is at least $\frac{1}{2^{q+1}\cdot 2^{q'+1}}$. Notice that at most one edge of $E'(C)$ may win, and so the probability that any edge of $E'(C)$ wins is at least $\frac{|E'(C)|}{2^{q+1}\cdot 2^{q'+1}}\geq \frac{2^r}{2^{q+1}\cdot 2^{q'+1}}$. If any edge of $E'(C)$ wins, the constraint $C$ is satisfied by the assignment that the algorithm chooses. Therefore, the probability that a constraint $C\in \cset^*$ is satisfied is at least $ \frac{2^r}{2^{q+1}\cdot 2^{q'+1}}$.

Overall, the expected number of constraints that are satisfied by the assignment is at least:

\[\frac{|\cset^*|\cdot 2^r}{2^{q+1}\cdot 2^{q'+1}}\geq \frac{|\tilde \cset|}{2^{8}\cdot \alpha(n)\cdot 2^q\cdot 2^{q'}\cdot \log^3n}\]

Recall that, from \Cref{claim: regularizing constraint graph}, $|\tilde \cset|=|E(\tilde H)|\geq \frac{|E(H)|}{8\log^2n}=\frac{|\cset|}{8\log^2n}$, and, from \Cref{cor: q q' and r}, $2^q\cdot 2^{q'}\leq 2^{2r+14}\cdot (\alpha(n))^2\cdot \log^7n\leq 2^{14}\cdot \beta^2\cdot  (\alpha(n))^2\cdot \log^7n$, since we have assumed that $2^r\leq \beta$.
Therefore, the expected number of constraints of $\cset$ that are satisfied by the assignment is at least:

\[\frac{|\tilde \cset|}{2^{8}\cdot \alpha(n)\cdot 2^q\cdot 2^{q'}\cdot \log^3n}\geq \frac{|\cset|}{2^{11}\cdot \alpha(n)\cdot 2^q\cdot 2^{q'}\cdot \log^5n}\geq  \frac{|\cset|}{2^{25}\cdot \beta^2\cdot (\alpha(n))^3 \cdot \log^{12}n}\geq \frac{|\cset|}{\beta^3},\]

since $\beta \geq 2^{27}(\alpha(n))^3\log^{12}n$.

We conclude that there is an assignment to the variables of $X\cup Y$ that satisfies at least $|\cset|/\beta^3$ constraints of $\cset$, and so graph $H$ is $\beta^3$-good.
\end{proof}

If $2^r\leq \beta$, then we terminate the algorithm and report that graph $H$ is $\beta$-good. 

From now on we assume that $2^r>\beta$. In this case, we return a subgraph  a subgraph $H'=(X',Y',E')$ of $H(\iset)$, that is defined as follows: $X'=X^*$, $Y'=Y^*$, and $E'=E^*$. We now verify that this graph has all required properties.

Recall that, from \Cref{obs: bounds on sizes of sets}, $|X^*|\leq  \frac{|\tilde  X|}{2^q}\leq \frac{|X|}{2^q}$, from \Cref{cor: q q' and r}, $2^q\geq \frac{2^r}{2d(n)}$, and, from our assumption, $2^r>\beta$. Therefore:

\[|X^*|\leq  \frac{|X|}{2^q}\leq \frac{|X|\cdot 2d(n)}{2^r}\leq \frac{2d(n)}{\beta}\cdot |X|. \]

Similarly, from \Cref{obs: bounds on sizes of sets}, $|Y^*|\leq  \frac{|\tilde  Y|}{2^{q'}}\leq \frac{|Y|}{2^{q'}}$, from \Cref{cor: q q' and r}, $2^{q'}\geq \frac{2^r}{2d(n)}$, and, from our assumption, $2^r>\beta$. Therefore:

\[|Y^*|\leq  \frac{|Y|}{2^{q'}}\leq \frac{|Y|\cdot 2d(n)}{2^r}\leq \frac{2d(n)}{\beta}\cdot |Y|. \]

It now only remains to show that  $|E^*|\geq \frac{\vol_H(X^*\cup Y^*)}{256d(n)\cdot \alpha(n)\cdot \log^4n}$.

Recall that, from Inequality \ref{eq3}, 
$|\cset^*|\geq \frac{d_1\cdot |\tilde X|}{2^{r+8}\cdot \alpha(n)\cdot \log^4n}$ holds. Since, from \Cref{cor: q q' and r}, $2^r\leq 2^q\cdot 2d(n)$, we get that $|E^*|=|\cset^*|\geq \frac{d_1\cdot |\tilde X|}{2^{q+9}\cdot d(n)\cdot \alpha(n)\cdot \log^4n}$. At the same time, from the definition of a nice graph, for every vertex $x\in X^*$, $\deg_H(x)\leq 2d_1$, so $\vol_H(X^*)\leq 2d_1\cdot |X^*|\leq \frac{2d_1\cdot |\tilde X|}{2^q}$, since $|X^*|\leq \frac{|\tilde  X|}{2^q}$ from \Cref{obs: bounds on sizes of sets}. Therefore, 
$|E'|\geq \frac{\vol_H(X^*)}{1024d(n)\cdot \alpha(n)\cdot \log^4n}
$.

Similarly, from Inequality \ref{eq2}, $|\cset^*|\geq	\frac{d_2\cdot |\tilde Y|}{2^{r+6}\cdot \alpha(n)\cdot \log^3n}$.
Since, from \Cref{cor: q q' and r}, $2^r\leq 2^{q'}\cdot 2d(n)$, we get that $|E'|=|\cset^*|\geq \frac{d_2\cdot |\tilde Y|}{2^{q'+7}\cdot d(n)\cdot \alpha(n)\cdot \log^3n}$. At the same time, from the definition of a nice graph, for every vertex $y\in Y^*$, $\deg_H(y)\leq 2d_2$, so $\vol_H(Y^*)\leq 2d_2\cdot |Y^*|\leq \frac{2d_2\cdot |\tilde Y|}{2^{q'}}$, since $|Y^*|\leq \frac{|\tilde  Y|}{2^{q'}}$ from \Cref{obs: bounds on sizes of sets}. Therefore, 
$|E^*|\geq \frac{\vol_H(Y^*)}{1024d(n)\cdot \alpha(n)\cdot \log^3n}
$.

Altogether, we get that $|E'|\geq \frac{\vol_H(X^*\cup Y^*)}{2048d(n)\cdot \alpha(n)\cdot \log^4n}$.

Note that every step of the algorithm, except Step 2, has running time $O(\poly(n))$, while the running time of Step 2 is $O(T(n)+\poly(n))$. Therefore, the total running time of the algorithm is at most $O(T(n)\cdot \poly(n))$.

	
	
	
	
	
	

\section{Reductions from \DkC and \WGP to \DkS}
\label{sec: DkC and WGP to DkS}

In this section we provide reductions from the \DkC and \WGP problems to \DkS, by proving the following theorem.

\begin{theorem}
\label{thm: alg_DkS gives alg_DkC}
Let $\alpha: \mathbb{Z^+}\to \mathbb{Z^+}$ be an increasing function, such that $\alpha(n)\leq o(n)$. 
Assume that there is an efficient $\alpha(n)$-approximation algorithm for the \DkS problem, where $n$ is the number of vertices in the input graph. Then both of the following hold:

\begin{itemize}
\item there is an efficient randomized algorithm that, given an instance of \DkC whose graph contains $N$ vertices, with high probability computes an $O(\alpha(N^2)\cdot\poly\log N)$-approximate solution to this instance; and

\item there is an efficient randomized algorithm that, given an instance of \WGP whose graph contains $N$ vertices, with high probability computes an $O(\alpha(N^2)\cdot\poly\log N)$-approximate solution to this instance.
\end{itemize}
\end{theorem}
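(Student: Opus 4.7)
The plan is to prove both statements via the same template: formulate an LP relaxation whose primal has $n^{\Theta(k)}$ variables but only $\poly(N)$ constraints, pass to the dual, and use the given $\alpha$-approximation for \DkS to build an approximate separation oracle so that the ellipsoid method can solve the LP efficiently. I describe the argument for \DkC; the reduction for \WGP is analogous.

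For an instance $\pcl(G,k)$ with $|V(G)|=N$, I introduce a variable $x_S$ for every $k$-subset $S\subseteq V(G)$ and write the packing LP
\[
\max\ \sum_{|S|=k}|E_G(S)|\cdot x_S\quad\text{s.t.}\quad \sum_{S\ni v}x_S\le 1\ \ \forall v\in V(G),\ \ x_S\ge 0,
\]
whose optimum upper-bounds $\optcl(G,k)$. The dual has one nonnegative variable $y_v$ per vertex and, for every $k$-subset $S$, the constraint $\sum_{v\in S}y_v\ge |E_G(S)|$, with objective $\min\sum_v y_v$. The crux is the approximate separation oracle for the dual: given $y$, decide whether some $S$ satisfies $|E_G(S)|>\sum_{v\in S}y_v$. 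I reduce this to \DkS (routed through \BDkS via \Cref{lem: DkS and Dk1k2S}) by bucketing: round each $y_v$ down to the nearest power of $2$, and, for each of the $O(\log N)$ thresholds $\tau$, restrict attention to $V_\tau=\{v:y_v\le\tau\}$ and run the approximation for \DkS on $G[V_\tau]$ with parameter $k$. Any $k$-subset $S\subseteq V_\tau$ returned has $\sum_{v\in S}y_v\le k\tau$, so comparing $|E_G(S)|$ with $k\tau$ either locates a violated constraint or, via an averaging argument over the buckets, certifies feasibility up to a factor $O(\alpha(N^2)\cdot\log N)$; the $N^2$ blow-up is inherited from the reduction of \Cref{lem: DkS and Dk1k2S}.

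Plugging this oracle into the ellipsoid method on the dual yields, after $\poly(N)$ iterations, a primal LP solution $x^*$ of value $\Omega\bigl(\optcl(G,k)/(\alpha(N^2)\cdot\poly\log N)\bigr)$ whose support is the set of $\poly(N)$ subsets ever returned by the oracle. To round $x^*$ into an integral partition I will use a greedy randomized procedure on this support: repeatedly pick a subset $S$ with probability proportional to $x_S^*\cdot|E_G(S)|$, add $S$ to the solution, delete its vertices, and iterate; finally pack the leftover vertices arbitrarily into groups of size exactly $k$. Standard charging against the packing constraints $\sum_{S\ni v}x_S^*\le 1$ then shows that the greedy collects an $\Omega(1/\poly\log N)$-fraction of the LP objective, giving the claimed $O(\alpha(N^2)\cdot\poly\log N)$-approximation with high probability. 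The \WGP reduction is analogous: primal variables are indexed by subgraphs $H\subseteq G$ with $|E(H)|\le h$, the dual gains one scalar variable for the $r$-upper-bound, and the dual separation problem again reduces, after identical bucketing, to finding a subgraph on at most $2h$ vertices with many edges, which is an instance of \DkS.

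The main obstacle I anticipate is the rounding step and its interaction with the separation oracle: each step loses a logarithmic factor, and these losses must compose cleanly into the stated $\poly\log N$ bound. In particular, controlling the logarithmic blow-up from bucketing uniformly across all ellipsoid iterations, and guaranteeing that the rounding respects the exact group size $k$ for \DkC without destroying too many of the collected edges when absorbing the leftover vertices, are the two places where the analysis will require the most care.
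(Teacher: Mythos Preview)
Your overall plan---packing LP with exponentially many variables, dual with $\poly(N)$ variables, approximate separation oracle via \DkS, then ellipsoid and LP-rounding---matches the paper exactly. The gap is in your separation oracle.

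The threshold scheme you describe only yields an $O(\alpha\cdot k)$-approximate oracle, not $O(\alpha\cdot\poly\log N)$. Concretely: suppose the oracle accepts, so for every threshold $\tau$ the set $S_\tau$ returned by \DkS on $G[V_\tau]$ satisfies $|E_G(S_\tau)|\le k\tau$. Take any $k$-set $S$ and let $\tau^*=\max_{v\in S}y'_v$. Then $S\subseteq V_{\tau^*}$, so $|E_G(S)|\le \alpha\,|E_G(S_{\tau^*})|\le \alpha k\tau^*$; but all you know about the cost side is $\sum_{v\in S}y'_v\ge \tau^*$ (a single vertex witnesses the maximum). This gives $|E_G(S)|\le \alpha k\sum_{v\in S}y'_v$, a factor-$k$ loss that no averaging over $O(\log N)$ thresholds can recover. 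The problem is that \DkS on $V_\tau$ may return a set whose vertices all sit at the top of the threshold, even when the violating $S^*$ has almost all its mass in the zero bucket.

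The fix---which is what the paper does---is to bucket vertices by their (rounded) $y'$-value into $O(\log N)$ \emph{level sets} $B_i=\{v:y'_v=2^i\}$, take a random bipartition $(A,B)$ of $V(G)$, and for every pair $(i,j)$ of levels and every pair $(k_1,k_2)$ of cardinalities run \BDkS on the bipartite graph between $A\cap B_i$ and $B\cap B_j$. The point is that the returned set $S''$ then has $|S''\cap B_i|=k_1$ and $|S''\cap B_j|=k_2$, and since $y'$ is \emph{constant} on each level, $\sum_{v\in S''}y'_v\le\sum_{v\in S^*\cap(B_i\cup B_j)}y'_v\le\sum_{v\in S^*}y'_v$. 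This is precisely the inequality your threshold approach cannot deliver. Your mention of routing through \BDkS via \Cref{lem: DkS and Dk1k2S} is on the right track, but you must actually \emph{use} the two-sided cardinality control that \BDkS gives, not just invoke the equivalence and then run plain \DkS on a thresholded graph.

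Two smaller points. First, for \WGP the separation subproblem is not ``find a subgraph on at most $2h$ vertices'': a subgraph with $\le h$ edges can sit on anywhere between $O(\sqrt h)$ and $2h$ non-isolated vertices, so you must search over all $(k_1,k_2)$ pairs and compare against $\min\{h,|E_G(S)|\}$. Second, your rounding (greedy sampling proportional to $x_S|E_G(S)|$, then delete and repeat) is different from the paper's; the paper samples each $S$ independently with probability $x_S$, then lets each vertex pick one of its $\le O(\log N)$ covering sets uniformly, and only then pads to size exactly $k$. Your scheme may well work, but the charging you allude to needs care because after deletion the residual $x^*$ is no longer feasible for the residual instance.
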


We provide the proof of the first assertion of the theorem, by showing a reduction from \DkC to \DkS. The proof of the second assertion is similar and is  deferred to Section  \ref{apd: WGP to DkS} of Appendix.
We start by considering an LP-relaxation of the \DkC problem, whose number of variables is at least $N\choose k$. Due to this high number of variables, we cannot solve it directly. We first show an algorithm, that, given an approximate fractional solution to this LP-relaxation, whose support size is polynomial in $N$, computes an approximate integral solution to the \DkC problem instance. We then show an efficient algorithm that computes an approximate solution to the LP-relaxation, whose support is relatively small. In order to do so, we design an approximate separation oracle to the dual LP of the LP-relaxation.

\subsection{An LP-Relaxation and Its Rounding}

Let $\pcl(G,k)$ be an input instance of the \DkC problem, and denote $|V(G)|=N$.
We let $\rset$ the collection of all subsets of $V(G)$ containing at most $k$ vertices, that is: $\rset=\set{S\subseteq V(G)\mid |S|\leq k}$. For every set $S\in \rset$ of vertices, we denote by $m(S)=|E_G(S)|$.
We consider the following LP-relaxation of the \DkC problem, that has a variable $x_S$ for every set $S\in \rset$ of vertices.

\begin{eqnarray*}
	\mbox{(LP-P)}&&\\
		\max &\sum_{S\in \rset} m(S)\cdot x_S&\\
	\mbox{s.t.}&&\\
	&\sum_{\stackrel{S\in \rset:}{v\in S}} x_S\leq 1 &\forall v\in V(G)\\
	&\sum_{S\in \rset}x_S \leq N/k\\
	&x_S\geq 0&\forall S\in \rset
\end{eqnarray*}

It is easy to verify that (LP-P) is an LP-relaxation of the \DkC problem. Indeed, consider a solution $(S_1,\ldots,S_r)$ to the input instance $\pcl(G,k)$, where $r=N/k$. For all $1\leq i\leq r$, we set $x_{S_i}=1$, and for every other set $S\in \rset$, we set $x_S=0$. This provides a feasible solution to (LP-P), whose value is precisely $\sum_{i=1}^r|E_G(S_i)|$. We denote the value of the optimal solution to (LP-P) by $\opt_{\textnormal{LP-P}}$. From the above discussion, $\opt_{\textnormal{LP-P}}\ge \optcl(G,k)$. 

Note that the number of variables in (LP-P) is at least $N\choose k$, and so we cannot solve it directly.  We will show below an efficient algorithm that provides an approximate solution to (LP-P), whose support is reasonably small. Before we do so, we provide an LP-rounding algorithm, by proving the following claim.

\begin{claim}\label{claim: LP-rounding}
	There is an efficient randomized algorithm, whose input consists of an instance $\pcl(G,k)$ of the \DkC problem with $N=|V(G)|$, such that $N$ is greater than a large enough constant, and a solution $\set{x_S\mid S\in \rset}$ to (LP-P), in which the number of variables $x_S$ with $x_S>0$ is bounded by $\poly(N)$, and $\sum_{S\in \rset}m(S)\cdot x_S\geq \opt_{\textnormal{LP-P}}/\beta$, for some parameter $1\leq \beta\leq N^3$; the solution is given by only specifying values of variables $x_S$ that are non-zero. The algorithm with high probability returns an integral solution $(S_1,\ldots,S_{N/k})$ to instance $\pcl(G,k)$, such that $\sum_{i=1}^{N/k}|E_G(S_i)|\geq \frac{\optcl(G,k)}{2000\beta \log^3N}$.
\end{claim}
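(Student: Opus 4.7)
The plan is to apply a standard bucket-and-round template: I will first reduce to a subset of the fractional solution in which all sets have comparable $m(S)$, then select sets independently with probability proportional to $x_S$ (scaled by a polylogarithmic factor), resolve conflicts vertex-by-vertex, and finally pad each surviving block to exact size $k$. The main technical challenge will be a concentration argument needed to convert an expectation bound into a high-probability guarantee.

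In more detail: since $m(S)\leq \binom{k}{2}\leq N^2$, I partition the support of $\{x_S\}$ into $O(\log N)$ geometric buckets indexed by $\lfloor \log_2 m(S)\rfloor$. By averaging, some bucket $\rset^*\subseteq \rset$ contributes at least a $1/O(\log N)$ fraction of $\sum_{S} m(S)x_S$; let $y_S=x_S$ for $S\in\rset^*$ and $y_S=0$ otherwise. All three LP constraints of (LP-P) remain satisfied by $\{y_S\}$, every set in $\rset^*$ has $m(S)\in [M,2M)$ for a single value $M$, and $\sum_{S\in\rset^*} m(S)y_S=\Omega(\optcl(G,k)/(\beta\log N))$. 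Next, for each $S\in\rset^*$ I independently include $S$ in a family $\sset$ with probability $p_S=y_S/(C\log N)$, where $C$ is a sufficiently large constant. The LP constraints give $\mathbb{E}[|\sset|]\leq N/(kC\log N)$ and $\sum_{S\in\rset^*,\,S\ni v} p_S\leq 1/(C\log N)$ for every $v\in V(G)$. I then resolve conflicts by keeping each vertex in only one of the selected sets containing it. For each $S\in\sset$ and each $v\in S$, the conditional probability given $S\in\sset$ that $v$ lies in another selected set is at most $1/(C\log N)$, so each edge $e\in E_G(S)$ survives conflict-resolution with conditional probability at least $1-2/(C\log N)$. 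Hence the expected number of retained edges is at least $(1-o(1))\sum_{S\in\rset^*} m(S)p_S=\Omega(\optcl(G,k)/(\beta\log^2 N))$. Finally, I pad each shrunken set to size exactly $k$ using arbitrary unassigned vertices and split any leftover vertices into additional size-$k$ blocks; this last step can only add edges.

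The main obstacle is the concentration step needed to turn the expectation bound into a high-probability one. The retained-edge count is a function of the $\poly(N)$ independent Bernoulli indicators $\{\mathbf{1}[S\in\sset]\}$, but the conflict-resolution rule is not a sum of independent bounded terms, so standard Chernoff does not apply directly. My plan is to replace conflict-resolution by the more conservative accounting ``count an edge $(u,v)$ only if exactly one selected set contains $u$, exactly one contains $v$, and it is the same set''. Under this accounting, toggling a single indicator $\mathbf{1}[S\in\sset]$ alters the count by at most $O(M\cdot D)$, where $D=O(\log N)$ is a Chernoff-based high-probability upper bound on the per-vertex over-selection $\max_v \sum_{S'\ni v}\mathbf{1}[S'\in\sset]$ (obtained from \Cref{lem: Chernoff}). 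McDiarmid's inequality, conditioned on this high-probability event, or alternatively a two-stage argument that first conditions on Chernoff-concentrated bounds for $|\sset|$ and per-vertex over-selection and then applies a Chernoff bound to the now-sum-of-bounded-terms retained-edge count, loses at most one additional $O(\log N)$ factor. This yields the claimed final high-probability lower bound of $\Omega(\optcl(G,k)/(\beta\log^3 N))$.
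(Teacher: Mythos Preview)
Your overall template (bucket by $m(S)$, sample independently, resolve conflicts vertex-by-vertex, pad) matches the paper's approach closely, and your scaling of the sampling probabilities by $1/(C\log N)$ is a nice twist that pushes the per-edge survival probability to $1-o(1)$ rather than $1/\Theta(\log^2 N)$. The gap is in the concentration step.

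Your claimed bounded-difference constant $O(M\cdot D)$ is not correct. Toggling a single indicator $\mathbf{1}[S_0\in\sset]$ has an \emph{indirect} effect beyond the $\le 2M$ edges of $S_0$ itself: for every vertex $v\in S_0$ that lies in exactly one other selected set $S'$, all edges of $G[S']$ incident to $v$ change status under your conservative accounting. The number of such edges is $\deg_{G[S']}(v)$, which is only bounded by $\min(k-1,2M)$, and you must sum this over all $|S_0|\le k$ vertices of $S_0$. The resulting Lipschitz constant is therefore $\Theta(kM)$ (or $\Theta(kMD)$ under the looser bound), not $O(M\log N)$; the factor $k$ cannot be absorbed into $D$. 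With this constant, McDiarmid gives no useful concentration in general. Your alternative ``two-stage Chernoff'' does not work either: after conditioning on $|\sset|$ and the over-selection bound the selection indicators are no longer independent, and after fixing $\sset$ the retained-edge count is a function of the per-vertex tie-breaking choices with edge-indicators that share vertices, so it is not a sum of independent bounded terms.

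The paper sidesteps this entirely. It first handles the easy case where some single $S$ in the support already has $m(S)\ge C/(300\log^3 N)$ (output that one set padded out); you will need this case too. In the remaining case every bucket with non-negligible contribution has $\sum_{S\in\text{bucket}} x_S\ge\Omega(\log^2 N)$, so a direct Chernoff bound on the \emph{count} $|\rset_i\cap\rset''|$ in each bucket (a genuine sum of independent indicators) gives high-probability concentration on $\sum_{S\in\rset''} m(S)$. The conflict-resolution step is then handled not by concentration at all, but by a Markov-type argument: the final partition value is always at most $\optcl(G,k)$, and its conditional expectation is at least $\optcl(G,k)/\Theta(\beta\log^3 N)$, so the success probability per trial is $\Omega(1/(\beta\log^3 N))$; repeating $\poly(N)$ times then gives the high-probability guarantee. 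This is both simpler and correct; your argument can be repaired by replacing the McDiarmid step with exactly this Markov-plus-repetition device.
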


\begin{proof}
We assume that we are given a solution  $\set{x_S\mid S\in \rset}$ to (LP-P), in which the number of variables $x_S$ with $x_S>0$ is bounded by $\poly(N)$. Denote $C=\sum_{S\in \rset}m(S)\cdot x_S$, so that $C\geq \frac{\opt_{\textnormal{LP-P}}}{\beta}\geq \frac{\optcl(G,k)}{\beta}$ holds. We denote by $\rset'\subseteq \rset$ the collection of all sets $S\in \rset$ with $x_S>0$. Assume first that there is any set $S\in \rset'$ with $m(S)\geq \frac{C}{300\log^3N}$. Then we can obtain the desired solution $(S_1,\ldots,S_{N/k})$  as follows. We start with $S_1=S$ and $S_2=\cdots=S_{N/k}=\emptyset$. We then iteratively add vertices of $V(G)\setminus S$ to sets $S_1,\ldots,S_{N/k}$ arbitrarily, to ensure that the cardinality of each such set is exactly $k$. It is immediate to verify that $\sum_{i=1}^{N/k}|E_G(S_i)|\geq \frac{C}{300\log^3N}\geq \frac{\optcl(G,k)}{300\beta \log^3N}$. Therefore, we assume from now on that, for every set $S\in \rset'$, $m(S)<\frac{C}{300\log^3N}$.

We construct another collection $\rset''\subseteq \rset'$ of subsets of vertices of $G$ as follows. For every set $S\in \rset'$, we add $S$ to $\rset''$ independently, with probability $x_S$. Clearly, $\expect{\sum_{S\in \rset''}m(S)}=\sum_{S\in \rset}m(S)\cdot x_S=C$.
	
We say that a bad event $\event_1$ happens if some vertex $v\in V(G)$ lies in more than $5\log N$ sets in $\rset''$. We say that a bad event $\event_2$ happens if $|\rset''|> (5N\log N)/k$. We say that a bad event $\event_3$ happens if $\sum_{S\in \rset''}m(S)<\frac{C}{8}$.
Lastly, we say that a bad event $\event$ happens if either of the events $\event_1, \event_2$, or $\event_3$ happen.
We start with the following simple observation.

\begin{observation}\label{obs: first bad event}
	$\prob{\event}\leq \frac{2}{N^3} $.
\end{observation}

\begin{proof}
	For every set $S\in \rset'$, let $Y_S$ be the random variable whose value is $1$ if $S\in \rset''$ and $0$ otherwise. 
	
	Consider any vertex $v\in V(G)$. Denote by $Z_v$ the number of vertex sets in $\rset''$ containing $v$. Clearly, $Z_v=\sum_{\stackrel{S\in \rset':}{v\in S}}Y_S$. Therefore: 
	
	$$\expect{Z_v}=\expect{\sum_{\stackrel{S\in \rset':}{v\in S}}Y_S}=\sum_{\stackrel{S\in \rset':}{v\in S}}x_S\leq 1,$$ 
		
		where the last inequality follows from the constraints of (LP-P). 
		
		By applying the Chernoff Bound from \Cref{lem: Chernoff}, we get that $\prob{Z_v>5\log N}\leq 1/N^5$. Using the union bound over all vertices $v\in V(G)$, we get that $\prob{\event_1}\leq 1/N^4$.

		Notice that $\expect{|\rset''|}=\expect{\sum_{S\in \rset}Y_S}=\sum_{S\in \rset}x_S\leq N/k$ from the constraints of (LP-P). Bad Event $\event_2$ happens if $|\rset''|=\sum_{S\in \rset}Y_S> (5N\log N)/k$. By applying the Chernoff Bound from \Cref{lem: Chernoff} to the variables of $\set{Y_S\mid S\in \rset'}$, we get that $\prob{\event_2}=\prob{\sum_{S\in \rset'}Y_S>(5N\log N)/k}\leq 1/N^5$.

Lastly, we bound the probability of Event $\event_3$.
Recall that we have assumed that, for every set $S\in \rset'$, $m(S)<\frac{C}{300\log^3N}$ holds. We partition the collection $\rset'$ of vertex subsets into $\rho=\ceil{2\log N}$ collections $\rset_0,\ldots,\rset_{\rho-1}$, as follows. For all $0\leq i<\rho$, we let $\rset_i=\set{S\in \rset'\mid 2^i\leq m(S)<2^{i+1}}$.
For all $0\leq i<\rho$, we denote $C_i=\sum_{S\in \rset_i}m(S)\cdot x_S$. Clearly, $\sum_{i=0}^{\rho-1}C_i=C$. We say that an index $0\leq i<\rho$ is \emph{bad}, if $C_i<\frac{C}{8\log N}$, and otherwise we say that $i$ is a \emph{good index}. Let $I^g,I^b\subseteq \set{0,\ldots,\rho-1}$ be the collections of good and bad indices, respectively. Since $\rho\leq 4\log N$, we get that:

\[ \sum_{i\in I^b}C_i\leq (4\log N)\cdot \frac{C}{8\log N}\leq \frac{C}2. \]

Therefore, $\sum_{i\in I^g}C_i\geq \frac{C}2$ holds. Consider now a good index $i\in I^g$. Recall that, for every set $S\in \rset'$, $m(S)\leq \frac{C}{300\log^3N}$ holds, and so $2^i\leq \frac{C}{300\log^3N}$ must hold. Moreover, since $\sum_{S\in \rset_i}m(S)\cdot x_S= C_i\geq \frac{C}{8\log N}$, we get that:

\[ \frac{C}{8\log N}\leq \sum_{S\in \rset_i}m(S)x_S\leq 2^{i+1}\cdot \sum_{S\in \rset_i}x_S\leq \frac{C}{150\log^3N}\cdot \sum_{S\in \rset_i}x_S.  \]
		
We conclude that $\sum_{S\in \rset_i}x_S\geq 16\log^2N$ holds for every good index $i\in I^g$.

For a good index $i\in I^g$, we denote by $\rset'_i=\rset_i\cap \rset''$, and we let $\tilde \event_i$ be the bad event that $|\rset'_i|<\frac{\sum_{S\in \rset_i}x_S}{2}$.
From the Chernoff Bound (\Cref{lem: Chernoff}), and the fact that $\sum_{S\in \rset_i}x_S\geq 16\log^2N$,  we get that $\prob{\tilde \event_i}\leq e^{-2\log^2N}<N^{-4}$, if $N$ is sufficiently large.
By applying the Union Bound to all indices $i\in I^g$, we get that the probability that any of the events in $\set{\tilde \event_i\mid i\in I^g}$ happens is bounded by $1/N^3$. Note that, if neither of the events in $\set{\tilde \event_i\mid i\in I^g}$ happen, then: 

\[
\begin{split}
\sum_{S\in \rset''}m(S)& \geq \sum_{i\in I^g}2^i\cdot |\rset'_i|\\
&\geq \sum_{i\in I^g}2^i\cdot \frac{\sum_{S\in \rset_i}x_S}{2}\\
&\geq \sum_{i\in I^g}\sum_{S\in \rset_i}\frac{m(S)\cdot x_S}{4}\\
&\geq \sum_{i\in I^g}\frac{C_i}{4}\\
&\geq \frac{C}{8}.
\end{split}
\]		

Therefore, if neither of the events in $\set{\tilde \event_i\mid i\in I^g}$ happen, then Event $\event_3$ also does not happen. We conclude that $\prob{\event_3}\leq 1/N^3$.

		Finally, from the Union Bound, we get that:

		$$\prob{\event}\leq \prob{\event_1}+\prob{\event_2}+\prob{\event_3}\leq \frac{1}{N^4}+\frac{1}{N^5}+\frac{1}{N^3}\leq \frac{2}{N^3}.$$
\end{proof}

Observe that we can efficiently check whether Event $\event$ happened. If Event $\event$ happens, then we terminate the algorithm with a FAIL. We assume from now on that Event $\event$ did not happen. 
In this case, we are guaranteed that $\sum_{S\in \rset''}m(S)\geq \frac{C}{8}\geq \frac{\optcl(G,k)}{8\beta}$.
We denote $\rset''=\set{S_1,S_2,\ldots,S_z}$, where the sets are indexed according to their value $m(S)$, so that $m(S_1)\geq m(S_2)\geq \cdots\geq m(S_z)$. We then let $\sset=\set{S_1,\ldots,S_{N/k}}$ (if $z<N/k$, then  we set $S_{z+1}=\cdots=S_{N/k}=\emptyset$). For all $1\leq i\leq N/k$, we denote $E_i=E_G(S_i)$, so $|E_i|=m(S_i)$, and we denote $E'=\bigcup_{i=1}^{N/k}E_i$. Recall that, since Event $\event$ did not happen, $|\rset''|\leq (5N\log N)/k$ holds. Therefore:

\[ |E'|\geq \frac{\sum_{S\in \rset''}m(S)}{5\log N}\geq \frac{\optcl(G,k)}{40\beta\log N}. \]

Note that the vertex sets in the family $\sset$ may not be mutually disjoint. However, since Event $\event$ did not happen, every vertex of $V(G)$ may lie in at most $5\log N$ such sets. We now construct a new collection $\sset'=\set{S_1',\ldots,S_{N/k}'}$ of sets of vertices, as follows. Consider any vertex $v\in V(G)$, and let $S_{i_1},S_{i_2},\ldots,S_{i_a}\in \sset$ be the sets of $\sset$ containing $v$. Vertex $v$ chooses an index $i^*\in \set{i_1,\ldots,i_a}$ at random, and is then added to $S'_{i^*}$.

Note that for all $1\leq j\leq N/k$, for every vertex $v\in S_j$, the probability that $v\in S'_j$ is at least $1/(5\log N)$. We say that an edge $e=(u,v)\in E_j$ \emph{survives} if both $u,v\in S'_j$. We denote by $E''\subseteq E'$ the set of all edges that survive. Since $\prob{u\in S'_j}\geq 1/(5\log N)$, $\prob{v\in S'_j}\geq 1/(5\log N)$, and the two events are independent, we get that the probability that edge $e$ survives is at least $1/(25\log^2N)$. Overall, we get that:

\[\expect{|E''|}\geq \frac{|E'|}{25\log^2N}\geq \frac{\optcl(G,k)}{1000\beta\log^3 N} .\]

We obtain a final solution $\sset^*$ to instance $\pcl(G,k)$ of the \DkC problem by starting with $\sset^*=\sset'$, and then partitioning the vertices of $V(G)\setminus \left(\bigcup_{j=1}^{N/k}S'_j\right )$ among the sets of $\sset^*$ arbitrarily, until each such set contains exactly $k$ vertices. Clearly, the value of the resulting solution is at least $|E''|$.

So far we have obtained a randomized algorithm that either returns FAIL (with probability at most $2/N^3$), or it returns a solution to instance $\pcl(G,k)$ of the \DkC problem, whose expected value is at least $\frac{\optcl(G,k)}{1000\beta\log^3 N}$.

Let $p'$ be the probability that the algorithm returned a solution of value at least   $\frac{\optcl(G,k)}{3000\beta\log^3 N}$, given that it did not return FAIL. Note that the expected solution value, assuming the algorithm did not return FAIL, is at most $\frac{\optcl(G,k)}{2000\beta\log^3 N}+p'\cdot \optcl(G,k)$. Since this expectation is also at least $\frac{\optcl(G,k)}{1000\beta\log^3 N}$, we get that $p'\geq \frac{1}{2000\beta\log^3N}$. Overall, the probability that our algorithm successfully returns a solution of value at least $\frac{\optcl(G,k)}{2000\beta\log^3 N}$ is $p'\cdot \prob{\neg\event}\geq \Omega\left(\frac{1}{\beta\log^3N}\right )$. By repeating the algorithm $\poly(N)$ times we can ensure that it successfully computes a solution of value at least $\frac{\optcl(G,k)}{2000\beta\log^3 N}$ with high probability.
\end{proof}

\subsection{Approximately Solving the LP-Relaxation}
\label{subsec: solve the LP}
As observed already,  (LP-P) has $\binom{N}{k}$ variables, and so we cannot solve it directly. Instead, we will use the Ellipsoids algorithm with an approximate separation oracle to its dual LP, that appears below. This LP has a variable $z$, and, additionally, for every vertex $v\in V(G)$, it has a variable $y_v$.

\begin{eqnarray*}
	\mbox{(LP-D)}&&\\
	\min& \frac{N}{k}\cdot z+\sum_{v\in V(G)} y_v\\
	\mbox{s.t.}&&\\
	& z+\sum_{v\in S} y_v\geq m(S) &\forall S\in \rset\\
	&z\ge 0\\
	&y_v\geq 0&\forall v\in V(G)
\end{eqnarray*}

We denote the value of the optimal solution to (LP-D) by $\opt_{\textnormal{LP-D}}$.

Next, we define an approximate separation oracle, and provide such a separation oracle to (LP-D).

\paragraph{Approximate Separation Oracle.}
Consider the following general minimization Linear Program, whose variables are $\set{x_1,\ldots,x_n}$.

\begin{eqnarray*}
	\mbox(P)&&\\
	\min& \sum_{i=1}^nc_ix_i&\\
	\mbox{s.t.}&&\\
	& \sum_{i=1}^nA_{j,i}x_{i}\ge b_j &\forall 1\le j\le m\\
	&x_i\geq 0 &\forall 1\le i\le n
\end{eqnarray*}


Let $\beta: \mathbb{Z^+}\to \mathbb{Z^+}$ be an increasing function. A \emph{randomized $\beta(n)$-approximate separation oracle} for (P) is an efficient randomized algorithm (that is, the running time of the algorithm is bounded by a polynomial function of its input size). The input to the algorithm is a set $\set{x_i}_{i=1}^n$ of non-negative real values. 
The algorithm either returns  ``accept'', or it returns an LP-constraint (called a \emph{violated constraint}) that does not hold for the given values $x_1,\ldots,x_n$. We say that the algorithm \emph{errs} if it returns ``accept'' and yet there is some index $1\leq j\leq m$ for which $\sum_{i=1}^nA_{j,i}x_{i}< b_j/\beta(n)$ holds. We require that the probability that the algorithm errs is at most $2/3$.

For the case where a linear program has a very large number of constraints, or the constraints are not given explicitly, one can use an approximate separation oracle, combined with the Ellipsoids algorithm, in order to compute an approximate LP-solution in time polynomial in the number of variables of the LP, provided that there is an Ellipsoid containing the feasible region, whose volume is not too large. For the case where a linear program has a large number of variables, but a relatively small number of constraints, we can use an approximate separation oracle for its dual LP in order to solve the original LP approximately. 
We start by providing an approximate separation oracle for (LP-D). We then show that this separation oracle can be used in order to obtain an approximate solution to (LP-P) in time $\poly(N)$.

\begin{lemma}\label{lemma: separation oracle}
Assume that there is an efficient $\alpha(n)$-approximation algorithm for the \DkS problem, where $\alpha$ is an increasing function, and $n$ is the number of vertices in the input graph. Then	there is a randomized $\beta(N)$-approximate separation oracle for (LP-D), where $N$ is the number of variables in the input graph $G$, and $\beta(N)=O(\alpha(N^2)\cdot \log^2N)$.
\end{lemma}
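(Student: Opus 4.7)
The plan is to reformulate the separation task as approximate maximization of a ``penalized density'' $m(S) - \sum_{v \in S} y_v$, and to reduce this maximization to \BDkS via a weight-class decomposition of $V(G)$. Given the input values $(z, \{y_v\}_{v \in V(G)})$, observe that a constraint of (LP-D) indexed by a set $S \in \rset$ is truly violated iff $m(S) > z + \sum_{v \in S} y_v$, and is $\beta$-approximately violated iff $m(S) > \beta \cdot (z + \sum_{v \in S} y_v)$. The oracle must therefore exhibit some truly violated $S$ whenever a $\beta$-approximately violated $S^*$ exists.

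First, I would truncate all weights with $y_v \geq \poly(N)$ down to $\poly(N)$; any vertex whose weight is large enough in this sense cannot belong to any violating set $S^*$ with $|S^*| \leq k$. Partition the remaining vertices into $O(\log N)$ weight classes $V_i = \{v : y_v \in [2^{i-1}, 2^i)\}$, together with a zero-weight class. For any candidate $S^* \subseteq V(G)$ with $|S^*| \leq k$, the edges of $G[S^*]$ decompose over the $O(\log^2 N)$ class-pairs $(V_i, V_j)$ (including the diagonal $i=j$), so by averaging some pair carries at least an $\Omega(1/\log^2 N)$ fraction of the edges of $G[S^*]$. Crucially, within a single class $V_i$ all weights are within a factor of two of each other, so cardinality constraints on $S \cap V_i$ translate up to a constant factor into weight constraints.

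The algorithm then enumerates, in powers of two, guesses for $|S^* \cap V_i|$ and $|S^* \cap V_j|$; together with the choice of class-pair $(i,j)$ this yields $\poly\log N$ combinations. For each guess, invoke the $O(\alpha(N^2))$-approximation for \BDkS on $G[V_i, V_j]$ provided by \Cref{lem: DkS and Dk1k2S} (or for \DkS on $G[V_i]$ when $i = j$) to obtain a candidate set $S$ of size at most $k$. By the class structure, $\sum_{v \in S} y_v \leq O(Y^*)$ where $Y^* = \sum_{v \in S^*} y_v$, while for the correct class-pair guess $m(S) \geq m(S^*)/O(\alpha(N^2) \cdot \log^2 N)$. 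For each candidate $S$, test whether $m(S) > z + \sum_{v \in S} y_v$; if so, return $S$ as a truly violated constraint, and otherwise accept.

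For correctness, set $\beta(N) = C \cdot \alpha(N^2) \cdot \log^2 N$ for a sufficiently large constant $C$. If there exists a $\beta$-approximately violated $S^*$, so that $m(S^*) > \beta \cdot (z + Y^*)$, then for the correct guess the returned $S$ satisfies
\[
m(S) \;\geq\; \frac{m(S^*)}{O(\alpha(N^2)\,\log^2 N)} \;>\; \frac{\beta \cdot (z + Y^*)}{O(\alpha(N^2)\,\log^2 N)} \;\geq\; z + O(Y^*) \;\geq\; z + \sum_{v \in S} y_v,
\]
so $S$ is truly violated and is returned. The main subtlety, and what motivates the class-pair decomposition, is that \BDkS controls only the cardinality of the returned set and not its total weight; restricting to a single class-pair is precisely what converts the intractable weight-budget constraint into the cardinality constraint that \BDkS can enforce natively. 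Without such a restriction, an $\alpha$-approximation on the whole graph could return a set whose $\sum y_v$ is inflated by an $\alpha$-factor, which would swamp any gain in $m(S)$. Any randomness enters only through the underlying \DkS algorithm, and is amplified to the required error probability $\leq 2/3$ by a constant number of independent repetitions.
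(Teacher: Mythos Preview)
Your approach is essentially the paper's: partition $V(G)$ into $O(\log N)$ weight classes, enumerate class-pairs and cardinality parameters, and invoke the \BDkS approximation from \Cref{lem: DkS and Dk1k2S}; the near-uniformity of $y_v$ within a class is precisely what converts the weight budget into a cardinality budget. The one structural difference is that the paper first draws a random bipartition $(A,B)$ of $V(G)$ and forms weight classes $A_0,\ldots,A_q$ and $B_0,\ldots,B_q$ on each side separately, so that every class-pair $A_i\times B_j$ is genuinely bipartite and only \BDkS is needed; you instead keep a single family of classes and handle the diagonal $i=j$ by calling \DkS directly on $G[V_i]$. Your variant eliminates the only source of randomness in the paper's oracle construction (beyond whatever the \DkS subroutine itself uses), at the modest cost of treating two cases.

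Two small patches are needed. First, rounding $|S^*\cap V_i|$ and $|S^*\cap V_j|$ \emph{up} to powers of two can make $k_1+k_2$ as large as $2k$, so the returned $S$ need not lie in $\rset$; either enumerate all exact pairs $(k_1,k_2)$ with $k_1+k_2\le k$ (still polynomial, and what the paper does), or post-process with \Cref{lem: size reducing}, which halves $|S|$ at constant-factor edge loss while only decreasing $\sum_{v\in S}y_v$. Second, to get only $O(\log N)$ classes you must specify a lower threshold; the paper sets $y'_v:=0$ whenever $y_v<1/4$ and places those vertices in a single ``zero'' class. That class then requires a short separate argument: once you drop isolated vertices from the returned $S$, you have $|S|\le 2m(S)$, so the total original weight contributed by zero-class vertices is below $m(S)/2$, and the constraint remains truly violated for the untruncated $y_v$.
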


\begin{proof}
Recall that we are given as input real values $z$ and $\set{y_v\mid v\in V(G)}$. Clearly, we can efficiently check whether $z\geq 0$, and whether $y_v\geq 0$ for all $v\in V(G)$. If this is not the case, we can return the corresponding violated constraint.

We say that a set $S\in \rset$ of vertices is \emph{bad} if $z+\sum_{v\in S}y_v<m(S)/\beta$ holds, where $\beta=c\cdot \alpha(N^2)\cdot \log^2N$, and $c$ is a large enough constant whose value we set later.
Our goal is to design an efficient algorithm that either returns a violated constraint of the LP (that is, a set $S\in \rset$ of vertices for which 
$z+\sum_{v\in S}y_v<m(S)$ holds); or it returns ``accept''. We require that, if there exists a bad set $S$ of vertices, then the probability that the algorithm returns ``accept'' is at most $2/3$.

It will be convenient for us to slightly modify the input values in $\set{y_v\mid v\in V(G)}$, as follows. We let $m$ be the smallest integral power of $2$ that is greater than $|E(G)|$. First, for every vertex $v\in V(G)$ with $y_v>m$, we let $y'_v=m$, and for every vertex $v\in V(G)$ with $y_v<1/4$, we set $y'_v=0$. For each remaining vertex $v$, we let $y'_v$ be the smallest integral power of $2$ that is greater than $4y_v$. Note that for every vertex $v$ with $y'_v\neq 0$, $1\leq y'_v\leq 4m$ holds, and $y'_v$ is an integral power of $2$. We also set $z'=2z$. We say that a set $S\in \rset$ of vertices is \emph{problematic} if 
 $z'+\sum_{v\in S}y'_v<8m(S)/\beta$ holds. We
need the following two observations regarding the new values $\set{y'_v\mid v\in V(G)}$.

\begin{observation}\label{obs: relating old and new LP values}
	If $S\in \rset$ is a bad set of vertices, then it is a problematic set of vertices. 
\end{observation}
\begin{proof}
	Recall that, if $S$ is a bad set of vertices, then $z+\sum_{v\in S}y_v<m(S)/\beta$ must hold. Since, for every vertex $v\in V(G)$, $y'_v\leq 8y_v$ holds, and $z'=2z$, we get that:
	
\[z'+\sum_{v\in S}y'_v\leq 2z+8\sum_{v\in S}y_v\leq 8\left(z+\sum_{v\in S}y_v  \right )<8m(S)/\beta.\]
	
Therefore, set $S$ is problematic.	
\end{proof}

\begin{observation}\label{obs: relating old to new 2}
	Assume that there exists a set $S\in \rset$ of vertices, for which $z'+\sum_{v\in S}y'_v<m(S)$ holds. Let $S'\subseteq S$ be the set of vertices obtained from $S$ by deleting every vertex $v\in S$ that has no neighbors in $S$. Then $z+\sum_{v\in S'}y_v<m(S')$ holds. 
\end{observation}

\begin{proof}
Since, for every vertex $v\in S\setminus S'$, no neighbor of $v$ lies in $S$, we get that $m(S')=|E_G(S')|=|E_G(S)|=m(S)$. We partition the vertices of $S'$ into two subsets: set $X$ containing all vertices $v\in S'$ with $y_v<1/4$, and set $Y$ containing all remaining vertices. Clearly, $\sum_{v\in X}y_v<\frac{|X|}{4}\leq \frac{m(S')} 2$ (since $m(S')\geq |S'|/2\geq |X|/2$, as graph $G[S']$ contains no isolated vertices).

Assume for contradiction that $z+\sum_{v\in S'}y_v\geq m(S')$. Then:

\[ z+\sum_{v\in Y}y_v\geq m(S')-\sum_{v\in X}y_v\geq m(S')/2.\]

We now consider two cases. The first case is when there is some vertex $v\in Y$ with $y_v\geq m$. In this case, $y'_v\geq m$ holds, and $z'+\sum_{v\in S}y'_v\geq m>m(S)$ holds, a contradiction.

Otherwise,  for every vertex $v\in Y$, $y'_v\geq 4y_v$ holds. Since $z'=2z$ also holds, we get that:

\[z'+\sum_{v\in S}y'_v\geq z'+\sum_{v\in Y}y'_v\geq 2z+4\sum_{v\in Y}y_v\geq m(S')=m(S),\]

a contradiction. 
\end{proof}

From now on we focus on values $z',\set{y'_v\mid v\in V(G)}$. It is now enough to design an efficient randomized algorithm, that either computes a set $S\in \rset$ of vertices, for which $z'+\sum_{v\in S}y'_v<m(S)$ holds, or returns ``accept''. It is enough to ensure that, if there is a problematic set $S\in \rset$ of vertices, then the algorithm returns ``accept'' with probability at most $2/3$. Indeed, if there is a bad set $S\in \rset$ of vertices, then, from  \Cref{obs: relating old and new LP values}, there is a problematic set of vertices, and the algorithm will return ``accept'' with probability at most $2/3$. On the other hand, if the algorithm computes a set $S\in \rset$ of vertices, for which $z'+\sum_{v\in S}y'_v<m(S)$ holds, then we can return the set $S'\subseteq S$ of vertices from the statement of \Cref{obs: relating old to new 2}, that defines a violated constraint with respect to the original LP-values.

Our algorithm computes a random partition $(A,B)$ of the vertices of $G$, where every vertex $v\in V(G)$ is independently added to $A$ or to $B$ with probability $1/2$ each. Let $q=\log(8m)$. For all $1\leq i\leq q$, we define a set $A_i\subseteq A$ of vertices: $A_i=\set{v\in A\mid y'_v=2^{i-1}}$, and we let $A_0=\set{v\in A\mid y'_v=0}$. Clearly, $(A_0,\ldots,A_q)$ is a partition of the set $A$ of vertices.

We compute  a partition $(B_0,\ldots,B_q)$ of the vertices of $B$ similarly. For all $0\leq i,j\leq  q$, we denote by $E_{i,j}$ the set of all edges $e=(u,v)$ with $u\in A_i$ and $v\in B_j$, and we define a bipartite graph $G_{i,j}$, whose vertex set is $A_i\cup B_j$, and edge set is $E_{i,j}$.

Recall that we have assumed that there is an efficient $\alpha(n)$-approximation algorithm for the \DkS problem, where $n$ is the number of vertices in the input graph. From \Cref{lem: DkS and Dk1k2S}, there exists an efficient $O(\alpha(\hat n^2))$-approximation algorithm for the \BDkS problem, where $\hat n$ is the number of vertices in the input graph.
We denote this algorithm by $\aset'$.

For every pair $0\leq i,j\leq q$ of integers, and every pair $k_1,k_2\geq 0$ of integers with $k_1+k_2\leq k$, we apply Algorithm $\aset'$ for the \BDkS problem to graph $G_{i,j}$, with parameters $k_1$ and $k_2$. Let $S_{i,j}^{k_1,k_2}$ be the output of this algorithm, and let $m_{i,j}^{k_1,k_2}$ be the number of edges in the subgraph of $G_{i,j}$ that is induced by the set $S_{i,j}^{k_1,k_2}$ of vertices. We say that the application of algorithm $\aset'$ is \emph{successful} if $z'+\sum_{v\in S_{i,j}^{k_1,k_2}}y'_v<m_{i,j}^{k_1,k_2}$, and otherwise it is \emph{unsuccessful}. If, for any quadruple $(i,j,k_1,k_2)$ of indices, the application of algorithm $\aset'$ was successful, then we return the resulting set $S=S_{i,j}^{k_1,k_2}$ of vertices. Clearly, $|S|\leq k_1+k_2\leq k$, so $S\in \rset$ holds. Moreover, we are guaranteed that $z'+\sum_{v\in S}y'_v<m_{i,j}^{k_1,k_2}<m(S)$, as required. If every application of algorithm $\aset'$ is unsuccessful, then we return ``accept''. The following observation will finish the proof of \Cref{lemma: separation oracle}.

\begin{observation}\label{obs: if problematic then fail}
	Suppose there is a problematic set $S\in \rset$ of vertices. Then the probability that the algorithm returns ``accept'' is at most $2/3$.
\end{observation}

\begin{proof}
	Let $S\in \rset$ be a problematic set of vertices, so $z'+\sum_{v\in S}y'_v<8m(S)/\beta$ holds. Let $E'=E_G[S]$, so $|E'|=m(S)$. Denote $A_S=A\cap S,B_S=B\cap S$, and let $E''\subseteq E'$ be the set of edges $e$, such that exactly one endpoint of $e$ lies in $A$. Clearly, for every edge $e\in  E'$, $\prob{e\in E''}=1/2$. Therefore, $\expect{|E''|}=|E'|/2$. Let $\event'$ be the bad event that $|E''|<|E'|/8$, and let $p=\prob{\event'}$. Clearly:
	
	\[ \expect{|E''|}\leq p\cdot \frac{|E'|}8+(1-p)\cdot |E'|=|E'|\left(1-\frac{7p}{8}\right ). \]
	
	Since $ \expect{|E''|}=|E'|/2$, we get that $p\leq 2/3$. Next, we show that, if Event $\event'$ does not happen, then the algorithm does not return ``accept''. 
	
	From now on we assume that Event $\event'$ did not happen, so $|E''|\geq |E'|/8$. 
	Therefore: 
	
	$$z'+\sum_{v\in S}y'_v <\frac{8m(S)}{\beta}\leq \frac{64|E''|}{\beta}$$
	
	 holds.

	Clearly, there must be a pair $0\leq i,j\leq q$ of indices, such that $|E''\cap E_{i,j}|\geq \frac{|E''|}{4q^2}\geq\frac{|E''|}{128\log^2 m}$. We now fix this pair $i,j$ of indices, and denote $A'_i=A_i\cap S$ and $B'_j=B_j\cap S$. We also denote $k_1=|A'_i|$ and let $k_2=|B'_j|$.
	Clearly, $k_1+k_2\leq k$ holds. 
	Denote $M_{i,j}=|E''\cap E_{i,j}|$.
	From our choice of indices $i,j$, we get that:

	\[ z'+\sum_{v\in A'_i\cup B'_j}y'_v\leq z'+\sum_{v\in S}y'_v\leq  \frac{64|E''|}{\beta}\leq \frac{M_{i,j}}{\beta}\cdot (2^{13}\log^2m).\]
	
	 Notice that the set $S'=A'_i \cup B'_j$ of vertices provides a solution to the instance of the \BDkS problem on graph $G_{i,j}$ with parameters $k_1,k_2$, whose value is at least $M_{i,j}$. Let $S''=S_{i,j}^{k_1,k_2}$ be the set of vertices obtained by applying Algorithm $\aset'$ to graph $G_{i,j}$ with parameters $k_1,k_2$. Since $|V(G_{i,j})|\leq N$, and since $\aset'$ is an $O(\alpha(N^2))$-approximation algorithm for \BDkS, we are guaranteed that $|E_G(S'')|\geq \Omega\left(\frac{M_{i,j}}{\alpha(N^2)}\right )$. Recall that $|A\cap S''|\leq k_1$; $A\cap S''\subseteq A_i$, and all vertices  $v\in A_i$ have an identical value $y'_v$. Therefore, $\sum_{v\in A\cap S''}y'_v\leq \sum_{v\in A'_i}y'_v$. Using a similar reasoning, 
	 $\sum_{v\in B\cap S''}y'_v\leq \sum_{v\in B'_j}y'_v$. Overall, we then get that:

	 \[
	 \begin{split}
	 z'+\sum_{v\in S''}y'_v& \leq z'+\sum_{v\in S'}y'_v\\
	 &\leq   \frac{M_{i,j}}{\beta}\cdot (2^{13}\cdot \log^2m)\\
	 &\leq O\left (\frac{|E_G(S'')|\cdot \alpha(N^2)\cdot 2^{13}\cdot \log^2m}{\beta}\right )
	 \end{split}\]

Recall that $\beta=c\cdot \alpha(N^2)\cdot \log^2N$. By letting the value of the constant $c$ be high enough, we can ensure that $z'+\sum_{v\in S''}y'_v<|E_G(S'')|$, and so the application of algorithm $\aset'$ to graph $G_{i,j}$ with parameters $k_1$ and $k_2$ is guaranteed to be successful. Therefore, if Event $\event'$ does not happen, and we set $c$ to be a large enough constant, then our algorithm does not return ''accept''. Since $\prob{\event'}\leq 2/3$, the observation follows.
\end{proof}
\end{proof}

\paragraph{Approximately Solving (LP-P).}

We use standard methods for solving (LP-P) using approximate separation oracle for (LP-D).

For a collection $\rset'\subseteq \rset$ of vertex subsets, we define a linear program ($\mbox{LP}(\rset')$), which is obtained from (LP-D) by only including the constraints associated with the subsets in $\rset'$:

\begin{eqnarray*}
	\mbox{(LP($\rset'$))}&&\\
	\min& \frac{N}{k}\cdot z+\sum_{v\in V(G)} y_v\\
	\mbox{s.t.}&&\\
	& z+\sum_{v\in S} y_v\geq m(S) &\forall S\in \rset'\\
	&z\ge 0\\
	&y_v\geq 0&\forall v\in V(G)
\end{eqnarray*}

We denote by $\opt(\rset')$ the value of the optimal solution to ($\mbox{LP}(\rset')$).
Since a solution to (LP-D) defines a solution to ($\mbox{LP}(\rset')$), we get that $\opt(\rset')\leq \opt_{\textnormal{LP-D}}$
We use the following lemma that allows us to compute a small collection $\rset'\subseteq \rset$ of vertex subsets, such that $\opt(\rset')$ is within a factor $\beta(N)=O(\alpha(N^2)\cdot \log^2N)$ of $\opt_{\textnormal{LP-D}}$.

\begin{claim}\label{lem: important inequalities}
	Assume that there is an efficient $\alpha(n)$-approximation algorithm for \DkS, where $\alpha$ is an increasing function of $n$, and $n$ is the number of vertices in the input graph.
	Then there is a randomized algorithm with running time $O(\poly(N))$, that computes a collection $\rset'\subseteq \rset$ of subsets of vertices with $|\rset'|\leq O(\poly(N))$, such that, with high probability, $\opt(\rset')\geq \Omega\left(\opt_{\textnormal{LP-D}}/\beta(N)\right )$, where $\beta(N)=O(\alpha(N^2)\cdot \log^2N)$.
\end{claim}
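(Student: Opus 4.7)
The plan is to approximately solve (LP-D) via the Ellipsoid algorithm with the randomized approximate separation oracle from Lemma~\ref{lemma: separation oracle}, and to take $\rset'$ as the collection of all vertex subsets returned by the oracle during the execution. First, boost the oracle via $\Theta(\log N)$ independent repetitions (accepting only when every invocation accepts), which reduces the per-call error probability to $1/\poly(N)$; a union bound over the $\poly(N)$ oracle queries made during the overall procedure then ensures, with high probability, that every ``accept'' response is truthful, i.e.\ $z + \sum_{v \in S} y_v \ge m(S)/\beta(N)$ holds for every $S \in \rset$ at the accepted point $(z,\{y_v\})$.

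Next, I would reduce optimization of (LP-D) to feasibility by binary searching its objective value $V$ over a grid of spacing $\epsilon = 1/N$ in the interval $[0, N^2]$; the upper bound uses the trivial dual solution $z=0$, $y_v = \deg_G(v)$, which certifies $\opt_{\textnormal{LP-D}} \le 2|E(G)| \le N^2$. For each candidate $V$, run feasibility Ellipsoid on the system obtained by augmenting (LP-D) with the explicit constraint $\frac{N}{k}z + \sum_v y_v \le V$, using the boosted oracle to separate the implicit constraints indexed by $\rset$. Each Ellipsoid iteration either receives a violated constraint $S$ (which is added to $\rset'$ and used as the separating hyperplane) or receives ``accept'' on the current center, in which case the iterate is returned as approximately feasible. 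Standard Ellipsoid bounds give $\poly(N)$ iterations per binary search round and $O(\log N)$ rounds, so $|\rset'| \le \poly(N)$ overall.

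Finally, to establish $\opt(\rset') \ge \Omega(\opt_{\textnormal{LP-D}}/\beta(N))$, let $V^*$ denote the smallest grid value at which Ellipsoid returns an approximately feasible point $(z^*,y^*)$. Two observations close the argument. First, since the boosted oracle accepted $(z^*,y^*)$, the scaled point $(\beta(N) z^*, \beta(N) y^*)$ is feasible for (LP-D), so $\beta(N)\cdot V^* \ge \opt_{\textnormal{LP-D}}$, i.e.\ $V^* \ge \opt_{\textnormal{LP-D}}/\beta(N)$. Second, for the value $V^* - \epsilon$ just below, Ellipsoid certified infeasibility using only the cuts collected in $\rset'$; that is, the system $\{(z,y)\ge 0 : z + \sum_{v\in S} y_v \ge m(S)\ \forall S \in \rset',\ \frac{N}{k}z + \sum_v y_v \le V^* - \epsilon\}$ is empty, equivalently $\opt(\rset') \ge V^* - \epsilon$. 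Combining and noting that $\epsilon = 1/N$ is negligible compared to $\opt_{\textnormal{LP-D}}/\beta(N)$ (the degenerate regime $\opt_{\textnormal{LP-D}}/\beta(N) = O(1/N)$ is handled trivially by letting $\rset'$ contain a single edge-set) yields the stated bound.

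The main obstacle is the delicate interplay between Ellipsoid's classical correctness guarantees --- which traditionally assume a deterministic exact oracle --- and our randomized approximate oracle. The boosting step bridges the randomized aspect, but a careful two-sided analysis is also needed to exploit the asymmetric guarantees of the oracle: each ``accept'' certifies only $\beta(N)$-approximate feasibility (which lower bounds $V^*$ by $\opt_{\textnormal{LP-D}}/\beta(N)$), while each ``reject'' produces a \emph{genuine} violated constraint for (LP-D) (which is what allows the collected $\rset'$ to certify the matching lower bound on $\opt(\rset')$).
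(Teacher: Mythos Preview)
Your proposal is correct and follows essentially the same approach as the paper: boost the oracle of Lemma~\ref{lemma: separation oracle}, run Ellipsoid on the feasibility version of (LP-D) with an added objective-bound constraint, collect all returned violated sets into $\rset'$, and argue that (i) acceptance at some value $V^*$ gives $\opt_{\textnormal{LP-D}}\le \beta(N)\cdot V^*$, while (ii) rejection at the value just below certifies, using only the collected cuts, that $\opt(\rset')$ exceeds that value. The paper's version differs only in cosmetic details: it uses a geometric grid $\{2^C : 0\le C\le \log M\}$ checked exhaustively (already only $O(\log N)$ values, so no binary search is needed), bounds the box by $0\le z,y_v\le m$ rather than via the $y_v=\deg_G(v)$ dual solution, and boosts the oracle with $N$ rather than $\Theta(\log N)$ repetitions.
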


We prove \Cref{lem: important inequalities} below, after we provide an algorithm for approximately solving (LP-P) using it. 

Recall that for every set $S\in \rset$ of vertices, there is a variable $x_S$ in the primal LP, (LP-P). We now consider the dual LP to $\mbox{LP}(\rset')$, that is defined as follows:

\begin{eqnarray*}
	\mbox{(LP-P2)}&&\\
	\max &\sum_{S\in \rset'} m(S)\cdot x_S&\\
	\mbox{s.t.}&&\\
	&\sum_{\stackrel{S\in \rset':}{v\in S}} x_S\leq 1 &\forall v\in V(G)\\
	&\sum_{S\in \rset'}x_S \leq N/k\\
	&x_S\geq 0&\forall S\in \rset'
\end{eqnarray*}

Notice that this Linear Program can be obtained from (LP-P) by eliminating all variables $x_S$ for $S\in \rset\setminus\rset'$. Since $|\rset'|\leq \poly(N)$, this new linear program has at most $O(\poly(N))$ variables, and it has at most $\poly(N)$ constraints. Therefore, we can solve it in time $O(\poly(N))$ using standard algorithms for LP-solving. Let $\set{x'_S\mid S\in \rset'}$ be the resulting solution. From the Strong Duality Theorem, we get that:

\[ \sum_{S\in \rset'} m(S)\cdot x'_S=\opt(\rset').\]

From  \Cref{lem: important inequalities}, with high probability:

\[\opt(\rset') \geq \Omega\left(\frac{\opt_{\textnormal{LP-D}}}{\beta(N)}\right )=\Omega\left(\frac{\opt_{\textnormal{LP-P}}}{\beta(N)}\right ). \]

Altogether, we get that with high probability, $\sum_{S\in \rset'} m(S)\cdot x'_S\geq \Omega\left(\frac{\opt_{\textnormal{LP-P}}}{\beta(N)}\right )$.
We can extend the solution $\set{x'_S\mid S\in \rset'}$ to (LP-P2) to obtain a feasible solution $\set{x'_S\mid S\in \rset}$ to (LP-P) by setting the value $x'_S$ for all sets $S\in \rset\setminus \rset'$ to $0$. It is immediate to verify that the resulting solution to (LP-P) is feasible, and its value remains unchanged. Therefore, we have obtained a randomized algorithm, with running time bounded by $O(\poly(N))$, that with high probability computes a solution to (LP-P), whose value is at least $\Omega\left(\frac{\opt_{\textnormal{LP-P}}}{\beta(N)}\right )$; here, $\beta(N)=O(\alpha(N^2)\cdot \log^2N)$.

We are now ready to complete the reduction from $\DkC$ to $\DkS$ from \Cref{thm: alg_DkS gives alg_DkC}. Let $\alpha: \mathbb{Z^+}\to \mathbb{Z^+}$ be an increasing function, such that $\alpha(n)\leq o(n)$, and assume that 
there is an efficient $\alpha(n)$-approximation algorithm for the \DkS problem, where $n$ is the number of vertices in the input graph.

Consider now the input instance $\pcl(G,k)$ of the \DkC problem with $N=|V(G)|$. We can assume w.l.o.g. that $N$ is greater than a large enough constant, as otherwise we can solve the problem exactly via exhaustive search. Since $\alpha(n)\leq o(n)$, we can also assume that $\beta(N)\leq N^3$. 
We use the randomized algorithm described above, that, in time $O(\poly(N))$, with high probability computes a $\beta(N)$-approximate solution to (LP-P). Recall that the number of variables of (LP-P) with non-zero LP-value is bounded by $O(\poly(N))$. Next, we apply the algorithm from \Cref{claim: LP-rounding} in order to round the resulting LP solution. The algorithm 
with high probability returns an integral solution $(S_1,\ldots,S_{N/k})$ to instance $\pcl(G,k)$, such that $\sum_{i=1}^{N/k}|E_G(S_i)|\geq \Omega\left( \frac{\optcl(G,k)}{\beta(N)\cdot \log^3N}\right )$. Since $\beta(N)=O(\alpha(N^2)\cdot \log^2N)$, with high probability we obtain an $O(\alpha(N^2)\cdot \poly\log N)$-approximate solution to the input instance of \DkC. In order to complete the reduction  from $\DkC$ to $\DkS$ from \Cref{thm: alg_DkS gives alg_DkC}, it is now enough to prove \Cref{lem: important inequalities}, which we do next. The proof uses standard techniques and is only included for completeness.

\begin{proofof}{\Cref{lem: important inequalities}}
Let $m=|E(G)|$. Notice that we can assume w.l.o.g. that in an optimal solution to (LP-D), for every vertex $v\in V(G)$, $y_v\leq m$ holds, and $z\leq m$. Indeed, if this is not the case, then we can modify the solution by setting $y_v=\min\set{y_v,m}$ for every vertex $v\in V(G)$, and $z=\min\set{z,m}$. It is easy to verify that this remains a feasible solution, and its value does not grow. For convenience, for every subset $\rset'\subseteq \rset$ of vertex subsets, we define the following LP:

	\begin{eqnarray*}
		\mbox{($P'(\rset'))$}&&\\
		\min& \frac{N}{k}\cdot z+\sum_{v\in V(G)} y_v\\
		\mbox{s.t.}&&\\
		& z+\sum_{v\in S} y_v\geq m(S) &\forall S\in \rset'\\
		&0\leq z\leq m\\
		&0\leq y_v\leq m&\forall v\in V(G)
	\end{eqnarray*}

We denote by $\opt'(\rset')$ the value of the optimal solution of the above LP.
From the above discussion, it is enough to provide 	a randomized algorithm with running time $O(\poly(N))$, that computes a collection $\rset'\subseteq \rset$ of subsets of vertices, such that, with high probability, $\opt'(\rset')\geq \Omega\left(\opt'(\rset)/\beta(N)\right )$, where $\beta(N)=O(\alpha(N^2)\cdot \log^2N)$. This is since $\opt(\rset')=\opt'(\rset')$, and $\opt'(\rset)=\opt(\rset)$ holds.

Clearly, $\opt'(\rset)\leq 2Nm$ must hold. Let $M$ be the smallest integral power of $2$ that is greater than $2Nm$. For all $0<C\leq \log M$ and $\rset'\subseteq \rset$, we consider a feasibility LP, that is obtained from $P'(\rset')$, by adding the constraint that 	$\frac{N}{k}\cdot z+\sum_{v\in V(G)} y_v\leq 2^C$:

\begin{eqnarray*}
	\mbox{($F(\rset',C))$}&&\\
	& \frac{N}{k}\cdot z+\sum_{v\in V(G)} y_v\leq 2^C\\
	& z+\sum_{v\in S} y_v\geq m(S) &\forall S\in \rset'\\
	&0\leq z\leq m\\
	&0\leq y_v\leq m&\forall v\in V(G)
\end{eqnarray*}

The key to the proof of \Cref{lem: important inequalities} is the following observation.

\begin{observation}\label{obs: important inequalities for fixed C}
	Assume that there is an efficient $\alpha(n)$-approximation algorithm for \DkS, where $\alpha$ is an increasing function of $n$, and $n$ is the number of vertices in the input graph.
Then	there is a randomized algorithm with running time $O(\poly(N))$, that, given a value $0\leq C\leq \log M$, either:
	
	\begin{itemize}
		\item computes a collection $\rset'(C)\subseteq \rset$ of at most $O(\poly(N))$ subsets of vertices, such that the linear program ($F(\rset',C)$) is infeasible; or
		
		\item computes values $0\leq z'\leq m$ and $0\leq y'_v\leq m$ for all $v\in V(G)$, such that $\frac{N}{k}\cdot z+\sum_{v\in V(G)} y_v\leq 2^C$, and, with high probability, for every vertex set $S\in \rset$, $z'+\sum_{v\in S} y'_v\geq m(S)/\beta(N)$.
	\end{itemize}
\end{observation}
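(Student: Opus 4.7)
The plan is to run the Ellipsoid Algorithm on the feasibility LP $F(\rset, C)$, using the approximate separation oracle from \Cref{lemma: separation oracle} to handle the implicit constraints $z + \sum_{v \in S} y_v \geq m(S)$ over $S \in \rset$, which are too numerous to enumerate. The explicit constraints (the box constraints $0 \leq z \leq m$, $0 \leq y_v \leq m$, and the budget constraint $(N/k)\cdot z + \sum_v y_v \leq 2^C$) can be checked directly at each step.

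First I would amplify the separation oracle so that it succeeds with high probability on every call. Given a candidate point $(z^*, \{y_v^*\})$, invoke the $\beta(N)$-approximate oracle of \Cref{lemma: separation oracle} independently $\Theta(\log N)$ times. If any invocation returns a violated set $S \in \rset$ (meaning $z^* + \sum_{v \in S} y_v^* < m(S)$), treat that as the returned cut; otherwise, declare the point approximately feasible. The probability that the amplified oracle errs on any particular call is at most $(2/3)^{\Theta(\log N)} \leq N^{-(c+1)}$ for any desired constant $c$.

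Next I would run the standard Ellipsoid Algorithm on $F(\rset, C)$, initializing with an ellipsoid containing the bounding box $[0, m]^{|V(G)|+1}$. At each iteration with ellipsoid center $(z^*, \{y_v^*\})$: (i) if any explicit constraint is violated, use it as the cutting plane; (ii) otherwise call the amplified oracle; if it returns a violated set $S$, add $S$ to $\rset'(C)$ and cut using the hyperplane $z + \sum_{v \in S} y_v = m(S)$; (iii) if the amplified oracle accepts, output $(z', \{y_v'\}) = (z^*, \{y_v^*\})$ as the approximate solution. Since the ellipsoid's volume shrinks by a fixed constant factor per iteration, after $T = \poly(N)$ iterations it falls below any $1/\poly(Nm)$ threshold. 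If all $T$ iterations pass without acceptance, output infeasibility together with the accumulated $\rset'(C)$, whose cardinality is at most $T = \poly(N)$.

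For correctness, in case (iii), conditioned on the amplified oracle not erring we have $z' + \sum_{v \in S} y_v' \geq m(S)/\beta(N)$ for every $S \in \rset$, while the explicit constraints hold by direct check, yielding the second bullet of the observation. In the infeasibility case, every cut used either an explicit constraint or a constraint indexed by a set in $\rset'(C)$, so the feasible region of $F(\rset'(C), C)$ is always contained in the current ellipsoid; once the ellipsoid volume falls below the standard perturbation threshold for LPs with polynomially-bounded coefficients, $F(\rset'(C), C)$ must be infeasible. The main obstacle is controlling the accumulated oracle error: with $T = \poly(N)$ iterations and per-call failure probability $N^{-(c+1)}$, a union bound gives total failure probability at most $N^{-c}$, so the amplification depth $\Theta(\log N)$ must be tuned large enough to dominate $T$. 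A minor technical point is ensuring that a feasible $F(\rset'(C), C)$ has feasible region of volume at least $1/\poly(Nm)$, which follows from standard perturbation arguments since the coefficients $m(S)$ and the budget $2^C \leq M = O(Nm)$ are polynomially bounded.
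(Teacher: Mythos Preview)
Your proposal is correct and follows essentially the same approach as the paper: run the Ellipsoid algorithm on $F(\rset,C)$, check the explicit box and budget constraints directly, use the amplified approximate separation oracle of \Cref{lemma: separation oracle} for the implicit constraints, and in the infeasibility case output the collection $\rset'(C)$ of sets whose constraints were used as cuts (so that the same Ellipsoid run certifies infeasibility of $F(\rset'(C),C)$). The only cosmetic difference is that the paper repeats the oracle $N$ times per iteration whereas you repeat it $\Theta(\log N)$ times; both suffice, and your handling of the union bound and the volume argument is fine (modulo the minor slip that the volume shrinks by a $(1-1/\poly(N))$ factor rather than a fixed constant per step, which still yields $\poly(N)$ iterations).
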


We provide the proof of \Cref{obs: important inequalities for fixed C} below, after we complete the proof of \Cref{lem: important inequalities} using it. We apply the algorithm from \Cref{obs: important inequalities for fixed C} to every value $0\leq C\leq \log M$. We say that the application of the algorithm for value $C$ is \emph{successful} if the algorithm returns values $0\leq z'\leq m$ and $0\leq y'_v\leq m$ for all $v\in V(G)$; otherwise we say that it is unsuccessful. We say that the algorithm \emph{errs} if it is successful, and yet there is a set $S\in \rset$ of vertices for which $z'+\sum_{v\in S} y'_v< m(S)/\beta(N)$.

Let $C^*$ be the smallest value of $C$, such that the algorithm from \Cref{obs: important inequalities for fixed C}, when applied to $C$, was successful. Let $z'$ and $\set{y'_v\mid v\in V(G)}$ be the values of the LP-variables returned by the algorithm. We let $\hat \event$ be the bad event that the algorithm from \Cref{obs: important inequalities for fixed C}, when applied to value $C^*$, errs. The probability of $\hat \event$ is at most $1/\poly(N)$. Consider the following solution to $(P'(\rset))$: we set $z^*=\min\set{z'\cdot \beta(N),m}$, and for all $v\in V(G)$, we set $y^*_v=\min\set{y'_v\cdot \beta(N),m}$. It is immediate to verify that, if Event $\hat \event$ did not happen, then we obtain a feasible solution to 
$(P'(\rset))$, whose value is at most $2^{C^*}\cdot \beta(N)$. Notice that, from the choice of the value $C^*$, LP ($F(\rset'(C^*-1),C^*-1)$) does not have a feasible solution. Since the constraints in ($F(\rset'(C^*-1),C^*-1$) are a subset of the constraints in ($F(\rset,C^*-1)$), it follows that ($F(\rset,C^*-1)$) does not have a feasible solution, and so $\opt'(\rset)\geq 2^{C^*-1}$. Therefore, if Event $\hat \event$ did not happen, we obtain a feasible solution $(z^*,\set{y^*_v}_{v\in V(G)})$ to $(P'(\rset))$, whose 
value is at most $2\beta(N)\cdot \opt'(\rset)$.

The final collection of subsets of vertices that our algorithm returns is $\rset'=\bigcup_{C=0}^{C^*-1}\rset'(C)$. Clearly, $|\rset'|\leq \poly(N)$. It is also immediate to verify that $\opt(\rset')\geq 2^{C^*-1}$. Indeed, for all values $0\leq C<C^*$, linear program ($F(\rset'(C),C)$) is infeasible, and, since $\rset'(C)\subseteq \rset'$, linear program ($F(\rset',C)$) is also infeasible. Since every constraint of $P'(\rset')$ is also a constraint of $LP(\rset)$, we get that, if Event $\hat \event$ did not happen, then   $(z^*,\set{y^*_v}_{v\in V(G)})$ is a feasible solution to $P'(\rset')$, whose 
value is at most $\beta(N)\cdot 2^{C^*}$. 
To summarize, if Event $\hat \event$ did not happen, then:

$$2^{C^*-1}\leq \opt'(\rset')\leq 2^{C^*}\cdot \beta(N)$$

and 

$$2^{C^*-1}\leq \opt'(\rset)\leq 2^{C^*}\cdot \beta(N)$$

hold.

Therefore, if Event $\hat \event$ did not happen, then $\opt'(\rset')\geq 2^{C^*-1}\geq \opt'(\rset)/(2\beta(N))$.
It now remains to prove \Cref{obs: important inequalities for fixed C}. The proof is standard; we only provide its sketch below.

\begin{proofof}{\Cref{obs: important inequalities for fixed C}}
We fix a value $0\leq C\leq \log M$, and consider the corresponding Linear Program ($F(\rset,C))$. The idea of the proof is simple: we employ the Ellipsoids algorithm, together with the separation oracle from \Cref{lemma: separation oracle} (after we reduce its error probability by repeating the algorithm a number of times). We then let $\rset'(C)$ be the collection of all vertex subsets $S\in \rset$, such that the separation oracle returns the constraint associated with $S$ over the course of the algorithm.

We now provide more details. Recall first the Ellipsoids algorithm for solving a feasibility Linear Program ($F$) on $N'$ variables. The algorithm proceeds in iterations. The input to the $i$th iteration is an $N'$-dimensional  Ellipsoid $E_i$, that contains the feasible region of ($F$). Let $x_i$ denote the center point of the ellipsoid. If the algorithm is given a constraint $A^j$ of the Linear Program $(F)$ that is violated by point $x_i$, then it produces a new ellipsoid $E_{i+1}$, that contains the feasible region of ($F$), whose volume is at most $(1-1/\poly(N'))$ times the volume of $E_i$. The running time of a single iteration is $O(\poly(N'))$.

Typically, we assume that there is an initial ellipsoid $E_1$, whose volume is at most $2^{\poly(N')}$, that contains the feasible region of $(F)$, which needs to be supplied to the Ellipsoids algorithm.  We can also typically assume that, if $(F)$ has a feasible solution, then the volume of the feasible region of $(F)$ is at least $L=2^{-\poly(N')}$ (if this is not the case, the feasible region can be slightly inflated artificially by adding a small amount of slack to the constraints; in our case, since we are only solving the LP approximately, this is immaterial). If the above two conditions hold, the algorithm can proceed for at most $\poly(N')$ iterations, before the volume of the current ellipsoid becomes smaller than $L$, and the algorithm then correctly declares that $(F)$ does not have a feasible solution. In every iteration, the constraint violated by the  center $x_i$ of the current ellipsoid $E_i$ is supplied by a separation oracle. If the separation oracle declares that $x_i$ is (approximately) feasible solution, then the algorithm halts.

We now turn to consider the linear program ($F(\rset,C))$. Since the LP constraints require that the values of all LP-variables are between $0$ and $m$, it is easy to verify that the feasible region of the LP is contained in the $(N+1)$-dimensional sphere $E_1$, whose radius is bounded by $\poly(m)$, and volume is at most $2^{\poly(m)}\leq 2^{\poly(N)}$. We initially set $\rset'(C)=\emptyset$. We apply the Ellipsoids algorithm to this LP, with the initial ellipsoid $E_1$.

We now consider the $i$th iteration of the algorithm, whose input is an ellipsoid $E_i$, together with its center point $(z^i,\set{y^i_v}_{v\in V(G)})$. We manually check the constraints $\frac{N}{k}\cdot z^i+\sum_{v\in V(G)} y^i_v\leq 2^C$; $0\leq z^i\leq m$; and $0\leq y^i_v\leq m$ for all $v\in V(G)$. If any of these constraints does not hold, then we return it as a violated constraint. Assume now that all these constraints hold. We apply the algorithm from \Cref{lemma: separation oracle} to the current values $(z^i,\set{y^i_v}_{v\in V(G)})$; we do so $N$ times. If, in each of these iterations, the algorithm returns ``accept'', then we terminate our algorithm, and return the current solution  $(z^i,\set{y^i_v}_{v\in V(G)})$. 
Observe that we are guaranteed that 
$0\leq z^i\leq m$; $0\leq y^i_v\leq m$ for all $v\in V(G)$; and $\frac{N}{k}\cdot z^i+\sum_{v\in V(G)} y^i_v\leq 2^C$. Moreover, unless the algorithm from \Cref{lemma: separation oracle}  erred in each of its $N$ applications, we are guaranteed that, for every vertex set $S\in \rset$, $z^i+\sum_{v\in S} y^i_v\geq m(S)/\beta(N)$ holds. The probability that the algorithm from \Cref{lemma: separation oracle}   errs is at most $2/3$, so with high probability, we are guaranteed that for all $S\in \rset$,  $z^i+\sum_{v\in S} y^i_v\geq m(S)/\beta(N)$.

Assume now that in some application of the algorithm from \Cref{lemma: separation oracle} to the current values $(z^i,\set{y^i_v}_{v\in V(G)})$ we obtain a violated constraint of (LP-D). That is, we obtain a set $S\in \rset$ of vertices, for which $z^i+\sum_{v\in S} y^i_v< m(S)$ holds. In this case, we add $S$ to set $\rset'(S)$, and we use this constraint as a violated constraint for the Ellipsoids algorithm. 

If the above algorithm never terminates with an approximately feasible solution  $(z^i,\set{y^i_v}_{v\in V(G)})$, then we are guaranteed that after at most $\poly(N)$ iterations, the algorithm correctly certifies that ($F(\rset,C)$) does not have a feasible solution.
We then return the current collection $\rset(C)$ of vertex subsets.
For convenience, we denote by $A^1,A^1,\ldots,A^r$ the sequence of violated constraints that were fed to the Ellipsoids algorithm. Each of the constraints $A^j$ either corresponds to a set $S\in \rset(C)$, or it is  one of the constraints $0\leq z^i\leq m$; $0\leq y^i_v\leq m$ for all $v\in V(G)$; and $\frac{N}{k}\cdot z^i+\sum_{v\in V(G)} y^i_v\leq 2^C$. In other words, each such constraint $A^j$ is also a constraint of the LP  ($F(\rset(C),C)$).

It now remains to prove that in the latter case, ($F(\rset(C),C)$) does not have a feasible solution. In order to do so, consider applying the Ellipsoids algorithm to this linear program. We start with the same initial ellipsoid $E_1$ as before. Since the Ellipsoid algorithm is deterministic, its behavior is entirely determined by the initial ellipsoid $E_1$ and the sequence of 
the violated constraints that it receives. We will use exactly the same sequence $A^1,A^1,\ldots,A^r$  of violated constraints in this execution of Ellipsoids algorithm. This ensures that for all $i$, the ellipsoid $E_i$ that is used as the input to the $i$th iteration is identical to the ellipsoid that was used as input to iteration $i$ when solving ($F(\rset,C)$), which in turn ensures that constraint $A^i$ is a violating constraint for the center of ellipsoid $E_i$. Therefore, this execution of Ellipsoids algorithm is identical to the execution of the same algorithm when applied to LP ($F(\rset,C)$), and it will end up with a final ellipsoid $E_r$, whose volume is small enough to correctly establish that ($F(\rset(C),C)$) does not have a feasible solution. 
\end{proofof}
\end{proofof}

\section{Reductions from \DkS to \DkC and \WGP}
\label{sec: DkS to DkC and WGP}

In this section we prove the following theorem.

\begin{theorem}
\label{thm: alg_DkC gives alg_DkS}
Let $\alpha: \mathbb{Z^+}\to \mathbb{Z^+}$ be an increasing function with $\alpha(n)\leq o(n)$.  Then the following hold:

\begin{itemize}
\item  If there exists an efficient $\alpha(n)$-approximation algorithm $\aset$ for the \DkC problem, where $n$ is the number of vertices in the input graph, then there exists a randomized 
algorithm for the \DkS problem, whose running time is  $N^{O(\log N)}$, that with high probability computes an 
$O(\alpha(N^{O(\log N)})\cdot\log N)$-approximate solution to the input instance of the problem; here $N$ is the number of vertices in the input instance of \DkS.

\item 
If there exists an efficient $\alpha(n)$-approximation algorithm for the \WGP problem, where $n$ is the number of vertices in the input graph, then exists a randomized 
algorithm for the \DkS problem, whose running time is  $N^{O(\log N)}$, that with high probability computes an 
$O((\alpha(N^{O(\log N)}))^3\cdot\log^2 N)$-approximate solution to the input instance of the problem; here $N$ is the number of vertices in the input instance of \DkS.
\end{itemize}
\end{theorem}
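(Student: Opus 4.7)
The plan is to give a quasi-polynomial time randomized reduction from \DkS to \DkC by constructing a random ``inflated'' gadget graph and analyzing it via an alignment argument, borrowing the template from \cite{khanna2000hardness} as the overview hints. The reduction to \WGP follows the same skeleton, but needs an additional rounding step to turn the edge-budget solutions of \WGP into fixed-cardinality clusters; that extra step is the source of the cubic (rather than linear) dependence on $\alpha$ for \WGP.

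Given $\pdks(G,k)$ with $|V(G)|=N$, first guess, by enumerating the $O(\log N)$ powers of two in $[1,N^2]$, a value $R^*$ with $R^*\le\optdks(G,k)<2R^*$, and try each guess independently. Fix a guess and set $T=N^{\Theta(\log N)}$. Construct a random gadget graph $H$ on vertex set $V(H)=V(G)\times[T]$, where the edges are chosen so that (i) every ``aligned'' $k$-subset of the form $S\times F$ with $S\subseteq V(G)$, $F\subseteq[T]$, $|S|\cdot|F|=k$, sharply concentrates the number of $H$-edges it captures around a value proportional to $|E_G(S)|$, and in particular the optimum $k$-subgraph $S^*$ of $G$ gives rise to $\Omega(|V(H)|/k)$ pairwise-disjoint aligned $k$-clusters each containing $(1-o(1))R^*$ edges; while (ii) every ``unaligned'' $k$-subset of $V(H)$ captures only $o(R^*/(\alpha\log N))$ edges. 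The parameter $T=N^{\Theta(\log N)}$ is dictated by the requirement that (ii) hold simultaneously with high probability over all $\binom{NT}{k}$ possible $k$-subsets, via a union bound and Chernoff. Run the assumed $\alpha(\cdot)$-approximation for \DkC on $\pcl(H,k)$, where $\alpha=\alpha(N^{O(\log N)})$.

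The completeness direction is immediate from (i): the aligned partition certifies $\optcl(H,k)\ge\Omega(R^*\cdot|V(H)|/k)$. For the soundness direction, the returned clusters $(S_1,\ldots,S_{|V(H)|/k})$ satisfy $\sum_i|E_H(S_i)|\ge\Omega(R^*\cdot|V(H)|/(k\alpha))$. By (ii), any cluster contributing non-negligibly to this sum must be nearly aligned, and by a deterministic extraction each such $S_i$ yields a set $T_i\subseteq V(G)$ with $|T_i|=k$ and $|E_G(T_i)|=\Omega(|E_H(S_i)|/\log N)$; the $\log N$ overhead absorbs the loss of having to guess how to ``collapse'' an aligned cluster $S\times F$ with $|S|<k$ into a full $k$-subset of $V(G)$, which we do via enumeration of sizes $|S|\in[k]$ composed with \Cref{lem: size reducing}. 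Averaging over the clusters and taking the best $T_i$ gives a $k$-subset of $V(G)$ of value $\Omega(R^*/(\alpha\log N))$, which is the promised approximation. Repeating the randomized construction $\poly(N)$ times amplifies the success probability to $1-N^{-\omega(1)}$.

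The main obstacle is engineering the random edge distribution of $H$ so that the aligned-vs.-unaligned concentration gap is wide enough to survive the union bound over all $k$-subsets of $V(H)$; the exponent in $T=N^{\Theta(\log N)}$ is forced by this bound, which is also what pushes the runtime into quasi-polynomial. For the \WGP reduction, the \DkC call is replaced by a \WGP call with parameters $r\approx|V(H)|/k$ and $h=O(R^*)$; each returned subgraph $H_i$ has at most $h$ edges but a variable vertex count, so before invoking the alignment argument we first trim $H_i$ to a $k$-vertex subgraph retaining an $\Omega(1/\alpha^2)$ fraction of its edges via \Cref{lem: size reducing} (the $\alpha^2$ factor coming from the worst case where the solution $H_i$ concentrates its edges on a much smaller vertex set than its full support, and from the need to re-allocate loose vertices across clusters), and then apply the same alignment analysis to the trimmed clusters. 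Composing these two $\alpha$ losses with the original reduction yields the cubic factor $\alpha^3$ stated in the theorem.
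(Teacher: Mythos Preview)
Your plan has the right high-level shape --- blow up $G$ to a quasi-polynomial-size graph $H$ so that one dense $k$-subgraph of $G$ spawns many disjoint dense $k$-subgraphs of $H$, then decode the approximate \DkC/\WGP solution back to $G$ --- but the heart of the reduction is missing, and what you sketch in its place will not work.

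The central gap is property~(ii). You assert that the random edge distribution can be engineered so that \emph{every} unaligned $k$-subset captures $o(R^*/(\alpha\log N))$ edges, and you plan to certify this by Chernoff plus a union bound over all $\binom{NT}{k}$ subsets. First, you never specify the distribution; ``chosen so that (i) and (ii) hold'' is precisely the thing to be proved. Second, the dichotomy itself is too brittle: take an aligned subset $S^*\times\{t\}$ and replace a single vertex $(v,t)$ by $(v,t')$; the result is unaligned but loses at most $\deg_{G[S^*]}(v)$ edges, so it cannot drop below $R^*-k$. No amount of randomness on the edges will make such near-aligned subsets sparse, so the extraction step (``any cluster contributing non-negligibly must be nearly aligned'') does not follow from (ii) as stated. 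Third, even if you had the right statement, the union bound has $(NT)^k$ terms with $T=N^{\Theta(\log N)}$; beating this via Chernoff would require deviations on the order of $k\log^2 N$, which is generally larger than $R^*$ itself.

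The paper avoids all of this by abandoning the aligned/unaligned viewpoint entirely. Its graph $H$ lives on $[M]$ for a prime $M\approx N^{5\log N}$: a random map $f\colon[N]\to[M]$ sends $v_i\mapsto u_{f(i)}$, and each edge $(v_i,v_j)$ of $G$ generates the $M$ cyclic shifts $(u_{f(i)+t},u_{f(j)+t})$. The soundness direction is \emph{not} ``unaligned clusters are sparse'' but rather a structural decoding lemma: if no short $\mathbb{Z}/M$-linear relation $\sum_{i\in I} x_i f(i)\equiv 0$ holds (an event of probability $1-N^{-q}$, handled by a union bound over only $N^{O(\log N)}$ ``ensembles'', not over all $k$-subsets), then for \emph{every} subgraph $H'\subseteq H$ with at most $N$ vertices, the origin graph $R(H')\subseteq G$ has $O(|V(H')|)$ vertices. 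The proof builds a high-girth spanner of $H'$ with $O(|V(H')|)$ edges, observes that every short cycle in $H$ has an even-degree origin graph (a consequence of the linear-relation property), and concludes that origins of non-spanner edges are forced to lie among origins of spanner edges. This lets you take \emph{any} $k$-cluster returned by the \DkC algorithm, pass to its origin in $G$ (which has $O(k)$ vertices), and trim via \Cref{lem: size reducing}. For \WGP, the returned pieces can have up to $\Theta(\alpha k\log k)$ vertices before the origin map, and the trimming costs a further $\Theta(\alpha^2\log^2 k)$, which is where the cubic $\alpha^3$ comes from --- not from a separate re-allocation step as you suggest.
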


We obtain the following immediate corollary of \Cref{thm: alg_DkC gives alg_DkS}.

\begin{corollary}\label{cor: hardness of WGP and DkC}
	Assume that \Cref{conj: main1} holds and that  $\NP\not \subseteq \BPTIME(n^{O(\log n)})$. Then for some constant $0<\eps'\leq 1/2$, there is no efficient $2^{(\log n)^{\eps'}}$-approximation algorithm for \WGP, and there is no efficient $2^{(\log n)^{\eps'}}$-approximation algorithm for \DkC.
\end{corollary}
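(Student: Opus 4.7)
The plan is to chain together the conditional hardness of \DkS from \Cref{thm: main_DkS2} with the reductions from \DkS to \DkC and \WGP established in \Cref{thm: alg_DkC gives alg_DkS}. Let $\eps \in (0, 1/2]$ be the constant supplied by \Cref{conj: main1}, so that $(d(n), s(n))$-\LDCSP with $d(n) = 2^{(\log n)^\eps}$ and $s(n) = 1/2^{64 (\log n)^{1/2+\eps}}$ is \NP-hard. I would claim that $\eps' := \eps/3$ serves as a valid choice of constant for both problems.

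Suppose, toward contradiction, that there is an efficient $\alpha(n) := 2^{(\log n)^{\eps'}}$-approximation algorithm for \DkC. First I would note that $\alpha(n) = o(n)$, so the hypothesis of \Cref{thm: alg_DkC gives alg_DkS} is satisfied. Plugging this algorithm into the first bullet of \Cref{thm: alg_DkC gives alg_DkS} yields a randomized algorithm for \DkS whose running time is $T(N) = N^{O(\log N)}$ and whose approximation ratio is $O(\alpha(N^{O(\log N)}) \cdot \log N)$. A direct calculation gives that, for some constant $c > 0$,
\[
\alpha(N^{c \log N}) \cdot \log N \;=\; 2^{(c \log^2 N)^{\eps'}} \cdot \log N \;=\; 2^{\,c^{\eps'} (\log N)^{2\eps'} + \log\log N} \;\leq\; 2^{(\log N)^{\eps}}
\]
for all sufficiently large $N$, since $2\eps' = 2\eps/3 < \eps$. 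I would then feed this \DkS algorithm, with running time bound $T(N) = N^{O(\log N)}$, into \Cref{thm: main_DkS2}, producing a randomized algorithm that decides the $(d(n), s(n))$-\LDCSP problem in time $\poly(n) \cdot T(\poly(n)) = n^{O(\log n)}$, with the YES/NO guarantee holding with high probability. Since \Cref{conj: main1} makes this problem \NP-hard, the existence of such a randomized algorithm would imply $\NP \subseteq \BPTIME(n^{O(\log n)})$, contradicting the standing hypothesis.

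The argument for \WGP is structurally identical, using the second bullet of \Cref{thm: alg_DkC gives alg_DkS}. The only change is that the resulting approximation ratio for \DkS becomes $O((\alpha(N^{O(\log N)}))^3 \cdot \log^2 N) = 2^{O((\log N)^{2\eps'})}$, which is still at most $2^{(\log N)^\eps}$ for large $N$ because $2\eps' < \eps$.

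I do not anticipate a genuine obstacle here; the proof is a composition of two black-box reductions already proved in the excerpt. The two points to be careful about are: (i) choosing $\eps'$ small enough that the composed approximation factor falls inside the regime $2^{(\log N)^\eps}$ where \Cref{thm: main_DkS2} applies, with the cube in the \WGP case requiring no worse a value than $\eps' = \eps/3$; and (ii) the fact that \Cref{thm: alg_DkC gives alg_DkS} only produces a randomized \DkS algorithm, which is precisely why the complexity hypothesis in this corollary is strengthened to $\NP \not\subseteq \BPTIME(n^{O(\log n)})$, rather than the $\P \neq \NP$ used in \Cref{thm: main_DkS}.
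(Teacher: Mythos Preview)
Your proposal is correct and follows essentially the same route as the paper: both chain \Cref{thm: main_DkS2} with \Cref{thm: alg_DkC gives alg_DkS} and observe that the quasipolynomial blowup in instance size only squares the exponent $\eps'$ (and the cube for \WGP adds only a constant factor), so any $\eps'$ with $2\eps' < \eps$ suffices. The paper writes $\eps' = \eps/c$ for an unspecified large constant $c$ and absorbs the arithmetic into $O(\cdot)$ notation, whereas you take the concrete value $\eps' = \eps/3$ and carry out the exponent computation explicitly; this is purely cosmetic.
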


\begin{proof}
	We prove the corollary for \WGP; the proof for \DkC is similar.
	Assume that \Cref{conj: main1} holds and that  $\NP\not \subseteq \DTIME(n^{O(\log n)})$. Then, from \Cref{thm: main_DkS2}, for some constant $0<\eps<1$, there is no randomized factor-$2^{(\log n)^{\eps}}$-approximation algorithm for \DkS with running time $n^{O(\log n)}$, where $n$ is the number of vertices in the input graph.
	
	We let $\eps'=\eps/c$, where $c$ is a sufficiently large constant. We now prove that there is no efficient $2^{(\log n)^{\eps'}}$-approximation algorithm for \WGP. Indeed, assume for contradiction that there is an efficient $2^{(\log n)^{\eps'}}$-approximation algorithm $\aset$ for \WGP.  From \Cref{thm: alg_DkC gives alg_DkS}, there is a randomized 
	algorithm for the \DkS problem, that, given an instance $\pdks(G,k)$ of the problem with $|V(G)|=N$, in time  $N^{O(\log N)}$, computes a
	$c'(\alpha(N^{O(\log N)}))^3\cdot\log^2 N$-approximate solution, where
	$\alpha(x)=2^{(\log x)^{\eps'}}$ and $c'$ is a constant independent of $N$. Note that:
	
	\[\alpha(N^{O(\log N)})=2^{(\log N)^{O(\eps')}}=2^{  (\log N)^{O(\eps/c)}}.\]
	
	Since we can let $c$ be a sufficiently large constant, we can ensure that $c'(\alpha(N^{O(\log N)}))^3\cdot\log^2 N<2^{(\log n)^{\eps}}$.
	
	Therefore, we obtain a randomized factor-$2^{(\log n)^{\eps}}$-approximation algorithm for \DkS, with running time 
$n^{O(\log n)}$, a contradiction.
	\end{proof}

The remainder of this section is dedicated to the proof of \Cref{thm: alg_DkC gives alg_DkS}. In order to obtain both reductions, we start with an instance $\pdks(G,k)$ of the \DkS problem, and construct another auxiliary graph $H$. This graph is then used in order to define the corresponding instances of \DkC and \WGP, respectively. We start by defining graph $H$ and analyzing its properties in \Cref{subsec: graph H}. We then complete the reduction from \DkS to \DkC in \Cref{subsec: finish DkS to DkC}, and the reduction from \DkS to \WGP in
\Cref{subsec: finish DkS to WGP}.
Throughout this section, for an integer $A$, we denote $[A]=\set{0,1,\ldots,A-1}$. We also assume that the  parameter $k$ in the input instance of the \DkS problem is greater than a large enough constant, since otherwise the problem can be solved in time $\poly(N)$ via exhaustive search.

\subsection{Auxiliary Graph $H$}
\label{subsec: graph H}

Let $\pdks(G,k)$ be an instance of the \DkS problem. Denote $V(G)=\set{v_0,v_1,\ldots,v_{N-1}}$.  We now provide a randomized algorithm to construct an auxiliary graph $H$ corresponding to this instance. The construction is somewhat similar to and inspired by the construction used in Section 2 of \cite{khanna2000hardness}.

Let $q=\ceil{\log N}$. We start by computing a prime number $M$, such that $N^{5q}\le M\le 2\cdot N^{5q}$. From the Bertrand-Chebyshev theorem \cite{bertrand1845memory,vcebyvsev1850memoire}, such a prime number must exist, and it can be computed in time $N^{O(\log N)}$ by checking every integer between  $N^{5q}$ and $2\cdot N^{5q}$.
We then construct a random mapping $f:[N]\to [M]$ as follows. For every integer $1\le i\le N$, we let $f(i)$ be an integer chosen independently and uniformly at random (with replacement) from $[M]$. 

We are now ready to define the graph $H$. The set of vertices of $H$ is $V(H)=\set{u_0,\ldots,u_{M-1}}$. For every edge $e=(v_i,v_j)\in E(G)$, we construct a collection $J(e)$ of $M$ edges in $H$: $J(e)=\set{\big(u_{f(i)+t},u_{f(j)+t}\big)\mid 0\le t\le M-1 }$, where the addition in the subscript is modulo $M$ (we use this convention throughout the remainder of this section). We say that edge $e$ is the \emph{origin} of every edge in set $J(e)$. We then set $E(H)=\bigcup_{e\in E(G)}J(e)$. We note that we do not allow parallel edges in $H$, so it is possible for an edge in $H$ to have several origin edges in $G$. This completes the definition of the graph $H$. We now analyze its properties.

\paragraph{Good Event $\event^g$.}
We say that a good event $\event^g$ happens if there is a collection $\set{H_1,\ldots,H_r}$ of $r=\floor{\frac M {k\log k}}$ disjoint subgraphs of $H$, such that the following hold:

\begin{itemize}
	\item for all $1\leq j\leq r$, $|V(H_j)|= k$; 
	\item for all $1\leq j\leq r$, $|E(H_j)|\leq \optdks(G,k)$; and
	\item $\sum_{1\le j\le r}|E(H_j)|\ge 0.1\cdot \floor{\frac M{k\log k}}\cdot\optdks(G,k)$.
\end{itemize}

 We start by showing that good event $\event^g$ happens with a sufficiently high probability.

\begin{claim}\label{claim: disjoint copies}
$	\prob{\event^g}\geq 0.8$.
\end{claim}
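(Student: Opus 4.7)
The plan is to construct the subgraphs $H_j$ as randomly shifted copies of an optimal \DkS solution. Let $S^* \subseteq V(G)$ be an optimal solution to $\pdks(G,k)$, so $|S^*|=k$ and $|E_G(S^*)|=\mathrm{OPT}:=\optdks(G,k)$. For each $t \in [M]$, consider the ``shifted copy'' of $G[S^*]$ inside $H$ with vertex set $V^{(t)}:=\{u_{f(i)+t \bmod M} : v_i \in S^*\}$ and edge set $E^{(t)}:=\{(u_{f(i)+t},u_{f(j)+t}) : (v_i,v_j)\in E_G(S^*)\}$. A birthday argument gives $\Pr_f[f|_{S^*} \text{ is injective}] \geq 1 - \binom{k}{2}/M \geq 1 - N^{-3}$, and I condition on this event, which ensures $|V^{(t)}|=k$ and $|E^{(t)}|=\mathrm{OPT}$ for every shift $t$.

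I then sample $r$ shifts $t_1,\dots,t_r$ independently and uniformly from $[M]$. Two ``slots'' $(a,i)$ and $(b,j)$ with $a \neq b$ collide iff $t_a - t_b \equiv f(j)-f(i) \pmod{M}$, so union-bounding over $b \neq a$ and $v_j \in S^*$ shows that slot $(a,i)$ collides with \emph{some} other slot with probability at most $(r-1)k/M \leq 1/\log k$. Let $D_a \subseteq S^*$ denote the colliding slots in shift $a$. Define $H_a$ to have vertex set $\{u_{f(i)+t_a}:v_i \in S^* \setminus D_a\}$, padded up to exactly $k$ vertices using otherwise-unused vertices of $V(H)$, and edge set equal to $E^{(t_a)}$ restricted to surviving endpoints. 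By construction the $H_a$'s are pairwise vertex-disjoint, the padding is feasible because the total vertex budget satisfies $rk + \sum_a |D_a| \leq 2rk \leq M$, and $|V(H_a)|=k$ with $|E(H_a)| \leq |E_G(S^*)|=\mathrm{OPT}$, meeting the first two bullets of $\event^g$.

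The quantitative heart of the argument is the edge lower bound. The edges of $E^{(t_a)}$ destroyed by $D_a$ number at most $\sum_{v_i \in D_a} \deg_{G[S^*]}(v_i)$, so linearity of expectation together with the per-slot collision bound gives $\expect{\text{edges lost in shift }a} \leq (1/\log k) \sum_{v_i \in S^*} \deg_{G[S^*]}(v_i) = 2\,\mathrm{OPT}/\log k$. Summed over shifts, $\expect{\sum_a |E(H_a)|} \geq r\,\mathrm{OPT}(1 - 2/\log k) \geq 0.9\,r\,\mathrm{OPT}$ once $k$ is above a suitable constant. Since $\sum_a |E(H_a)| \leq r\,\mathrm{OPT}$ deterministically, Markov applied to the nonnegative random variable $r\,\mathrm{OPT} - \sum_a |E(H_a)|$ yields $\Pr_{\text{shifts}}\!\bigl[\sum_a |E(H_a)| \geq 0.1\,r\,\mathrm{OPT}\bigr] \geq 1 - O(1/\log k) > 0$, conditional on injectivity of $f|_{S^*}$. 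Thus for every $f$ in the injectivity event there exist shifts witnessing $\event^g$, whence $\Pr_f[\event^g] \geq 1 - N^{-3} \geq 0.8$. The main subtlety I anticipate is keeping the correlation between collision events and degrees of dropped vertices under control; this works out because the per-slot collision bound $(r-1)k/M$ is independent of the identity of $v_i$ and hence factors cleanly out of $\sum_{v_i} \Pr[v_i \in D_a]\cdot \deg_{G[S^*]}(v_i)$.
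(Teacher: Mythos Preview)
Your proof is correct and follows essentially the same approach as the paper: sample $r$ random shifts of the optimal \DkS solution inside $H$, discard vertices that collide across shifts, bound the expected edge loss per shift by $O(\mathrm{OPT}/\log k)$ via the per-slot collision probability $(r-1)k/M \leq 1/\log k$, and finish with a Markov-type averaging. Your explicit conditioning on injectivity of $f|_{S^*}$ and the padding step to make $|V(H_a)|=k$ exactly are minor technical refinements that the paper handles more loosely, but the core argument is the same.
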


\begin{proof}
Let $S$ be the optimal solution to instance $\pdks(G,k)$ of the \DkS problem, and let $T$ be a subset of vertices of $H$, defined as $T=\set{u_{f(i)}\mid v_i\in S}$.
Let $I$ be a collection of $r=\floor{M/(k\log k)}$ integers from $[M]$, obtained by selecting each integer independently uniformly at random  (with replacement) from $[M]$. For every index $j\in I$, we define a set $T'_j$ of vertices of $H$ as follows: $T'_{j}=\set{u_{f(i)+j}\mid v_i\in S}$.
Finally, for every index $j\in I$, we define another set $T_j$ of vertices of $H$, by starting with the set $T'_j$ of vertices, and then removing from it every vertex that lies in set $\bigcup_{i\in I\setminus\set{j}}T'_i$.
Clearly, all resulting vertex sets in $\set{T_j}_{j\in I}$ are mutually disjoint, and each such set contains at most $k$ vertices.

For every index $j\in I$, we denote by $\hat E(T_j)$ the set of edges  $e\in E_H(T_j)$, such that an origin of $e$ in $G$ lies in $E_G(S)$. Equivalently: $\hat E(T_j)=\set{(u_{f(i)+j},u_{f(i')+j})\mid (v_i,v_{i'})\in E_G(S)}$. Clearly, $|\hat E(T_j)|\le \optdks(G,k)$ for all $j\in I$, and so $\sum_{j\in I}|\hat E(T_{j})|\leq r\cdot \optdks(G,k)=\floor{\frac{M}{k\log k}}\cdot \optdks(G,k)$.

We prove the following observation.

\begin{observation}\label{obs: expected cost}
\[\expect{\sum_{j\in I}|\hat E(T_{j})|}\ge 0.9\cdot\floor{\frac{M}{k\log k}}\cdot \optdks(G,k).\]
\end{observation}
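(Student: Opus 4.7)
The plan is a linearity-of-expectation calculation with a modest amount of bookkeeping. For each $j\in I$ and each edge $e=(v_i,v_{i'})\in E_G(S)$, let $Y_{j,e}$ be the indicator of the event that both shifted vertices $u_{f(i)+j}$ and $u_{f(i')+j}$ survive in $T_j$ (i.e., neither is removed in the step that forms $T_j$ from $T'_j$). Since every edge of $\hat E(T_j)$ has both endpoints in $T_j$ and some origin in $E_G(S)$, the sum $\sum_{e\in E_G(S)}Y_{j,e}$ over-counts $|\hat E(T_j)|$ only through (a) \emph{self-loops}, where $f(i)=f(i')$ so the shifted image is not an edge of $H$, and (b) \emph{edge collisions}, where two distinct origin edges $e_1\neq e_2\in E_G(S)$ produce the same shifted edge in $H$. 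Hence
\[
\sum_{j\in I}|\hat E(T_j)|\;\geq\;\sum_{j\in I}\sum_{e\in E_G(S)}Y_{j,e}\;-\;Z_{\mathrm{sl}}\;-\;Z_{\mathrm{col}},
\]
where $Z_{\mathrm{sl}}$ and $Z_{\mathrm{col}}$ count the total number of self-loops and distinct-origin edge collisions across all shifts.

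The main estimate is $\expect{Y_{j,e}}\ge 1-2/\log k$. The vertex $u_{f(i)+j}$ is removed from $T'_j$ only if $u_{f(i)+j}=u_{f(i'')+j'}$ for some $j'\in I\setminus\{j\}$ and $v_{i''}\in S$, which requires $f(i'')-f(i)\equiv j-j'\pmod M$. The case $i''=i$ forces $j'=j$ and is excluded; for $i''\neq i$, the values $f(i),f(i''),j,j'$ are independent and uniform on $[M]$, so the coincidence occurs with probability $1/M$. A union bound over the at most $(r-1)(k-1)$ relevant pairs yields $\prob{u_{f(i)+j}\notin T_j}\le rk/M\le 1/\log k$, and applying the same bound to the other endpoint gives $\expect{Y_{j,e}}\ge 1-2/\log k$. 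Summing via linearity,
\[
\expect{\sum_{j\in I}\sum_{e\in E_G(S)}Y_{j,e}}\;\ge\;r\cdot|E_G(S)|\cdot\bigl(1-2/\log k\bigr).
\]

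Finally I would dispose of the error terms using the huge modulus $M\ge N^{5q}\ge N^{5\log N}$. Each edge of $E_G(S)$ becomes a self-loop at a fixed shift with probability $1/M$, so $\expect{Z_{\mathrm{sl}}}\le r\,|E_G(S)|/M$. For edge collisions: pairs of distinct edges in $E_G(S)$ sharing a vertex number at most $O(k^3)$ and collide at each shift with probability $1/M$, while vertex-disjoint pairs number at most $O(k^4)$ and collide with probability $O(1/M^2)$, so $\expect{Z_{\mathrm{col}}}\le O(rk^3/M+rk^4/M^2)$. Since $\optdks(G,k)\le k^2\le k^{O(1)}\ll M$, both error terms are asymptotically negligible compared to $r\cdot\optdks(G,k)$. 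Combining, and using that $k$ may be assumed larger than a sufficiently large constant so that $1-2/\log k\ge 0.95$, we conclude
\[
\expect{\sum_{j\in I}|\hat E(T_j)|}\;\ge\;0.9\,r\cdot\optdks(G,k)\;=\;0.9\floor{\tfrac{M}{k\log k}}\cdot\optdks(G,k).
\]
The only real obstacle is the bookkeeping around the two failure modes (self-loops and distinct-origin collisions); once one observes that the extreme choice $M\ge N^{5\log N}$ was engineered precisely to suppress them, the rest is a direct union bound.
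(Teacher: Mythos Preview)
Your argument matches the paper's: linearity of expectation over edges of $E_G(S)$, with a union bound showing each shifted endpoint survives the pruning step with probability at least $1-rk/M\ge 1-1/\log k$. Your explicit self-loop and collision terms $Z_{\mathrm{sl}},Z_{\mathrm{col}}$ are in fact more careful than the paper's write-up, which silently ignores these failure modes (harmlessly, as you note, since $M\ge N^{5\log N}$ swamps them). One small slip: the case $i''=i$ is not actually ``excluded''---because $I$ is sampled \emph{with replacement}, two distinct positions in the list can carry the same value, so $u_{f(i)+j}$ can lie in another copy of $T'_j$---but that event also has probability $1/M$ per position and merely changes $(r-1)(k-1)$ to $(r-1)k$ in your union bound, leaving the stated inequality $\le rk/M$ intact.
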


Assume first that the observation holds. 
For simplicity of notation, denote $B=	\floor{\frac{M}{k\log k}}\cdot \optdks(G,k)$. 
For all $j\in I$, we define a subgraph $H_j$ of $H$, whose vertex set is $T_j$, and edge set is $\hat E(T_j)$. Clearly, the graphs in $\set{H_j\mid j\in I}$ are disjoint, and, for all $j\in I$, $|V(H_j)|=|T|\leq k$ holds. Additionally, from the above discussion, for all $j\in I$, $|E(H_j)|\leq  \optdks(G,k)$. If, additionally, $\sum_{j\in I}|\hat E(T_{j})|\geq 0.1B$ holds, then Event $\event^g$ happens. As observed above, $\sum_{j\in I}|\hat E(T_{j})|\leq B$ must hold. Let $p$ denote the probability that $\sum_{j\in I}|\hat E(T_{j})|\geq 0.1B$. Then:

\[\expect{\sum_{j\in I}|\hat E(T_{j})|}\leq (1-p)\cdot 0.1\cdot B+p\cdot B=0.1B+0.9Bp.\]

Since, from \Cref{obs: expected cost}, $\expect{\sum_{j\in I}|\hat E(T_{j})|}\geq 0.9B$, we conclude that $p\geq 0.8$, and so $\prob{\event^g}\geq 0.8$, as required. In order to complete the proof of \Cref{claim: disjoint copies}, it is now enough to prove \Cref{obs: expected cost}, which we do next.

\begin{proofof}{\Cref{obs: expected cost}}
We associate a collection $\set{X_1,\ldots,X_r}$ of random variables  with the set $I$ of indices. In order to do so, we view set $I$ as being constructed as follows. For each $1\le i\le r$, sample a value $X_i$ uniformly at random from $[M]$, and then let $I=\set{X_1,\ldots,X_r}$ be the collection of these sampled values.
Consider now any index $1\leq i\leq r$, the corresponding set $T'_{X_i}$ of vertices of $H$, and any edge $e=(v_a,v_b)\in E_G(S)$. Edge $e'=(u_{f(a)+X_i},u_{f(b)+X_i})$ of $H$ corresponding to $e$ belongs to set $\hat E(T_{X_i})$ if and only if neither of the vertices $u_{f(a)+X_i}$, $u_{f(b)+X_i}$ lies in $\bigcup_{j\in I\setminus\set{i}}T'_{X_{j}}$. Consider now an index $j\in I\setminus\set{i}$. The probability that a fixed vertex $u\in V(H)$ lies in $T'_{X_j}$ is at most $k/M$ (since for every vertex $v_z\in S$, there is a single index $s$ with $u_{f(z)+s}=u$). From the union bound, the probability that a fixed vertex $u\in V(H)$ lies in $\bigcup_{j\in I\setminus\set{i}}T'_{X_{j}}$ is at most $\frac{r\cdot k}{M}\leq \frac{1}{\log k}$. In particular, the probability that any of the endpoints of edge  $e'=(u_{f(a)+X_i},u_{f(b)+X_i})$ lies in 
$\bigcup_{j\in I\setminus\set{i}}T'_{X_{j}}$  is at most $\frac{2}{\log k}$. Therefore, $\expect{|\hat E(T_j)|}\geq |E_G(S)|\cdot\left (1-\frac{2}{\log k}\right )\geq 0.9\cdot |E_G(S)|$. Altogether, from the linearity of expectation, 
$\expect{\sum_{1\le i\le r}|\hat E(T_{X_i})|}\ge 0.9\cdot |E_G(S)|\cdot r=0.9\cdot\floor{\frac{M}{k\log k}}\cdot \optdks(G,k)$.
\end{proofof}
\end{proof}

\paragraph{Ensemble and Bad Event $\event^b$.}
Next, we define the notion of an ensemble. Recall that $q=\ceil{\log N}$. An ensemble $\bset$ consists of a collection $I\subseteq [N]$ of at most $2q$ indices, and, for every index $i\in I$, an integer $-q\leq x_i\leq q$ with $x_i\neq 0$. We denote the ensemble by $\bset=\set{I,\set{x_i}_{i\in I}}$. We say that ensemble  $\bset=\set{I,\set{x_i}_{i\in I}}$ is \emph{bad} if $\sum_{i\in I}x_i\cdot f(i)\equiv 0 \text{ }(\textnormal{mod } M)$. We let $\event^b$ be the bad event that there exists a bad ensemble. We start by showing that the probability of Event $\event^b$ happening is low. Later, we show that, if Event $\event^b$ does not happen, then graph $H$ has some useful properties.

\begin{observation}\label{obs: bound bad ensemble}
	$\prob{\event^b}\leq \frac{1}{N^q}$.
\end{observation}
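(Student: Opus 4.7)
The plan is a straightforward union bound over all ensembles, using the fact that a single fixed ensemble is bad with probability exactly $1/M$ because $M$ is prime.

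First I would fix an arbitrary ensemble $\bset = \{I, \{x_i\}_{i\in I}\}$ and bound the probability that it is bad. Pick any index $i_0 \in I$. Since $M \geq N^{5q}$ is prime and $|x_{i_0}| \leq q$ is strictly less than $M$ (recall $q = \ceil{\log N}$ and $N$ may be assumed sufficiently large), $x_{i_0}$ is invertible in $\mathbb{Z}/M\mathbb{Z}$. Conditioning on the values of $f(i)$ for all $i \in I \setminus \{i_0\}$, the congruence $\sum_{i \in I} x_i f(i) \equiv 0 \pmod M$ becomes a linear equation in the single unknown $f(i_0)$ with an invertible leading coefficient, so it has exactly one solution modulo $M$. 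Since $f(i_0)$ was chosen uniformly and independently from $[M]$, the probability that this solution is realized is exactly $1/M \leq 1/N^{5q}$.

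Next I would bound the total number of ensembles. Let $s = |I|$, where $1 \leq s \leq 2q$. The subset $I$ can be chosen in at most $\binom{N}{s} \leq N^s$ ways, and then each of the $s$ coefficients $x_i$ takes one of at most $2q$ values in $\{-q, \ldots, q\} \setminus \{0\}$. Hence the number of ensembles is at most
\[
\sum_{s=1}^{2q} N^s (2q)^s \;\leq\; 2q \cdot (2qN)^{2q} \;\leq\; (3qN)^{2q}.
\]

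Finally I would combine the two bounds by a union bound to obtain
\[
\prob{\event^b} \;\leq\; (3qN)^{2q} \cdot \frac{1}{M} \;\leq\; \frac{(3qN)^{2q}}{N^{5q}}.
\]
It then remains to check that the numerator is at most $N^{4q}$, i.e.\ that $(3q)^{2q} \cdot N^{2q} \leq N^{4q}$, equivalently $(3q)^{2q} \leq N^{2q}$, which holds whenever $3q \leq N$; since $q = \ceil{\log N}$, this is immediate for $N$ larger than a fixed constant (and the cases of small $N$ are handled by our standing assumption that $N$ exceeds a large enough constant). This yields $\prob{\event^b} \leq N^{-q}$, as required. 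The only mildly delicate step is the primality/invertibility argument in the first paragraph; everything else is counting.
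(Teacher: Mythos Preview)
Your proof is correct and follows essentially the same approach as the paper: fix an ensemble, use primality of $M$ to get probability $1/M$ for each, count ensembles by at most $N^{4q}$, and apply a union bound with $M \ge N^{5q}$. The only cosmetic difference is in the bookkeeping of the ensemble count.
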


\begin{proof}
	Consider any fixed ensemble $\bset=\set{I,\set{x_i}_{i\in I}}$. Let $i^*\in I$ be any fixed index. Consider now the following two-step process: in the first step, we select the values $f(i)$ for all indices $i\in I\setminus \set{i^*}$ independently uniformly at random from $[M]$. We then denote $S=\sum_{i\in I\setminus \set{i^*}}x_i\cdot f(i)$. In the second step, we select the value $a=f(i^*)$ at random from $[M]$. 
	Ensemble $\bset$ is bad if and only if $a\cdot x_{i^*}+S=0\mod M$. Since $M$ is a prime number, there is exactly one value $a'\in [M]$ with $a'\cdot x_{i^*}+S=0\mod M$ (indeed, if two such values $a',a''$ exist, then $a'\cdot x_{i^*}=a''\cdot x_{i^*}\mod M$, implying that $a'=a''$ must hold). The probability to choose $f(i^*)=a'$ is then $1/M$, and so the probability that a fixed ensemble $\bset$ is bad is $1/M$.
	
	Notice that the total number of ensembles is bounded by $\Big(\sum_{t=1}^{2q}\binom{N}{t}\Big)\cdot (2q)^{2q}\le (2q)\cdot\binom{N}{2q}\cdot (2q)^{2q}\le N^{4q}$ . Using the Union Bound, $\prob{\event^b}\le \frac{N^{4q}}{M}\leq \frac{1}{N^q}$, since $M\geq N^{5q}$.	
\end{proof}

Next, we show that, if Event $\event^b$ does not happen, then every edge $e\in E(H)$ has a unique origin edge in $G$.

\begin{observation}\label{obs: unique origin of an edge}
	Assume that  Event $\event^b$ did not happen. Let $e$ be any edge of $H$. Then there is a unique edge $e'\in E(G)$, such that $e'$ is an origin edge of $e$.
\end{observation}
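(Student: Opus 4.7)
The plan is a proof by contradiction, leveraging the definition of $\event^b$. Suppose some edge $e=(u_a,u_b)\in E(H)$ admits two distinct origin edges $e_1=(v_i,v_j)$ and $e_2=(v_{i'},v_{j'})$ in $G$, i.e., $\{i,j\}\neq\{i',j'\}$ as unordered pairs. The strategy is to extract from these two origins a congruence of the form $\sum_{\ell\in I}x_\ell\cdot f(\ell)\equiv 0\pmod M$ whose data $(I,\{x_\ell\})$ forms a bad ensemble, contradicting the hypothesis that $\event^b$ did not happen.

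First I would unpack the meaning of $e\in J(e_1)$: by the definition of $J(e_1)$, there exists $t_1\in[M]$ with $\{a,b\}=\{f(i)+t_1,\,f(j)+t_1\}$, which forces $b-a\equiv \pm(f(j)-f(i))\pmod M$; the analogous conclusion holds from $e\in J(e_2)$. Adding or subtracting the two congruences with the appropriate choice of signs eliminates $b-a$ and yields a relation of the shape
\[
c_i\,f(i)+c_j\,f(j)+c_{i'}\,f(i')+c_{j'}\,f(j')\equiv 0\pmod M,
\]
with every $c_\ell\in\{-1,+1\}$. Collecting repeated indices (summing their coefficients, and discarding any index whose total coefficient is $0$) produces a candidate ensemble $(I,\{x_\ell\})$ supported on the distinct elements of the multiset $\{i,j,i',j'\}$, with $|I|\le 4\le 2q$ and $|x_\ell|\le 2\le q$, provided $N$ is large enough that $q=\lceil\log N\rceil\ge 2$. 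By construction this ensemble, if non-empty, is bad.

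The main obstacle is ensuring that $I$ remains non-empty after cancellations, i.e., that the coefficients cannot all vanish simultaneously. This requires a short case analysis on the intersection pattern of $\{i,j\}$ and $\{i',j'\}$: if these pairs are disjoint, all four $\pm 1$ coefficients survive and $|I|=4$; if they share exactly one index, the shared index either cancels (leaving a two-index ensemble with coefficients $\pm 1$) or doubles (leaving a three-index ensemble with coefficients in $\{\pm 1,\pm 2\}$); sharing two indices is excluded since that would force $\{i,j\}=\{i',j'\}$. In every sub-case we obtain a valid non-empty bad ensemble, contradicting the assumption $\neg\event^b$ and establishing that the origin of $e$ must be unique.
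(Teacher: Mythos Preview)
Your proposal is correct and follows essentially the same approach as the paper: both derive a linear relation $\sum_\ell c_\ell f(\ell)\equiv 0\pmod M$ from two distinct origins of the same edge, and then do a case analysis on index coincidences to exhibit a non-trivial bad ensemble. The paper fixes orientations (WLOG) to get a single specific sign pattern $f(i_1)+f(i_2')\equiv f(i_1')+f(i_2)$ and then splits into cases on cross-coincidences (e.g.\ $i_1=i_2'$), whereas you keep the $\pm$ signs from the unordered edges and split on $|\{i,j\}\cap\{i',j'\}|$; the bookkeeping differs slightly but the argument is the same.
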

\begin{proof}
Denote $e=(u_j,u_{j'})$, and assume for contradiction that there are two distinct edges $e_1,e_2\in E(G)$ that both serve as origin edges of $e$. Denote $e_1=(v_{i_1},v_{i'_1})$ and $e_2=(v_{i_2},v_{i'_2})$. From the construction of $H$, since edge $e_1$ is an origin edge of $e$, there exists an integer $t_1$, such that:

\begin{equation}\label{eq: first edge}
j\equiv f(i_1)+t_1 \text{ }(\textnormal{mod } M),\quad\mbox{and}\quad j'\equiv f(i'_1)+t_1 \text{ }(\textnormal{mod } M).
\end{equation}

Similarly, since edge $e_2$ is an origin edge of $e$, there exists an integer $t_2$, such that:

\begin{equation}\label{eq: second edge}
j\equiv f(i_2)+t_2 \text{ }(\textnormal{mod } M),\quad\mbox{and}\quad\quad j'\equiv f(i'_2)+t_2 \text{ }(\textnormal{mod } M).
\end{equation}

By adding the first equation of (\ref{eq: first edge}) to the second equation of (\ref{eq: second edge}), we get that $f(i_1)+f(i_2')\equiv j+j'-t_1-t_2\text{ }(\textnormal{mod } M)$. Similarly, by adding the second equation of (\ref{eq: first edge}) to the first equation of (\ref{eq: second edge}), we get that $f(i'_1)+f(i_2)\equiv j+j'-t_1-t_2\text{ }(\textnormal{mod } M)$. In other words, we get that:

\begin{equation}\label{eq: all four}
f(i_1)+f(i'_2)\equiv f(i'_1)+f(i_2) \text{ }(\textnormal{mod } M).
\end{equation}

Since $e_1$ is an edge of $G$, $i_1\neq i_1'$ must hold, and similarly, since $e_2$ is an edge of $G$, $i_2\neq i_2'$ must hold. Moreover, since $e_1\neq e_2$, either $i_1\neq i_2$, or $i_1'\neq i_2'$ must hold. Combining this with Equation (\ref{eq: all four}), we conclude that both $i_1\neq i_2$ and $i_1'\neq i_2'$ must hold.

We now consider four cases. The first case is when all indices in $\set{i_1,i_1',i_2,i_2'}$ are distinct. In this case, we consider the ensemble $\bset=\set{I,\set{x_i}_{i\in I}}$, where $I=\set{i_1,i'_1,i_2,i'_2}$, with $x_{i_1}=x_{i'_2}=1$ and $x_{i_2}=x_{i'_1}=-1$. From Equation \ref{eq: all four}, we get that $\sum_{i\in I}x_i\cdot f(i)\equiv 0 \text{ }(\textnormal{mod } M)$, so ensemble $\bset$ is bad, contradicting the fact that Event $\event^b$ did not happen.

The second case is when $i_1=i_2'$ but $i_1'\neq i_2$. Then we construct an ensemble $\bset=\set{I,\set{x_i}_{i\in I}}$, where $I=\set{i_1,i'_1,i_2}$, with $x_{i_1}=2$ and $x_{i_2}=x_{i'_1}=-1$. As before, from Equation \ref{eq: all four}, we get that $\sum_{i\in I}x_i\cdot f(i)\equiv 0 \text{ }(\textnormal{mod } M)$, so ensemble $\bset$ is bad, contradicting the fact that Event $\event^b$ did not happen.
	
The third case is when $i_1'=i_2$ but $i_1\neq i_2'$. We consider the ensemble $\bset=\set{I,\set{x_i}_{i\in I}}$, where $I=\set{i_1,i'_1,i'_2}$, with $x_{i_1}=x_{i'_2}=1$ and $x_{i_2}=-2$. From Equation \ref{eq: all four}, we get that $\sum_{i\in I}x_i\cdot f(i)\equiv 0 \text{ }(\textnormal{mod } M)$, so ensemble $\bset$ is bad, contradicting the fact that Event $\event^b$ did not happen.

From the above discussion, the only remaining case is when both $i_1'=i_2$ and $i_1=i_2'$ hold. But in this case, $e_1=e_2$, contradicting our assumption that these two edges are distinct.
\end{proof}

Assume that bad event $\event^b$ did not happen. For every edge $e\in E(H)$, we denote by $R(e)$ the unique edge of $G$ that serves as the origin edge of $e$. For a subgraph $H'\subseteq H$, we let $R(H')$ be the subgraph of $G$ induced by the set $\set{R(e)\mid e\in E(H')}$ of edges; we refer to $R(H')$ as the \emph{origin graph of $H'$}. In other words, the set of edges of graph $R(H')$ is $\set{R(e)\mid e\in E(H')}$, and the set of its vertices contains every vertex of $G$ that serves as an endpoint to any of these edges.

 In the next observation we show that, if bad Event $\event^b$ did not happen, then for every cycle $C\subseteq H$ containing at most $q$ edges, every vertex in the corresponding origin-graph $R(C)$ has an even degree.

\begin{observation}\label{obs: cycle origin}
	Assume that Event  $\event^b$ did not happen. Let $C\subseteq H$ be any simple cycle containing at most $q$ edges, and let $G'=R(C)$ be the origin graph of $C$. Then every vertex of $G'$ has an even degree in $G'$.
\end{observation}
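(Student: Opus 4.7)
I plan to prove the observation by extracting a single linear relation on the values $\{f(i)\}$ from a traversal of the cycle $C$, and then using the absence of bad ensembles to force this relation to be trivial. First I would fix a cyclic ordering of $C$ with edges $e_1,\dots,e_\ell$ and vertices $u_{j_1},\dots,u_{j_\ell}$ so that $e_s=(u_{j_s},u_{j_{s+1}})$, all indices taken modulo $\ell$. By \Cref{obs: unique origin of an edge}, every $e_s$ has a unique origin $R(e_s)=(v_{p_s},v_{q_s})$ in $G$, and the direction of traversal of $C$ picks out a labeling of this pair together with a shift $t_s\in[M]$ such that $j_s\equiv f(p_s)+t_s\pmod M$ and $j_{s+1}\equiv f(q_s)+t_s\pmod M$. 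The fact that $e_s$ and $e_{s+1}$ meet at $u_{j_{s+1}}$ then gives $f(p_{s+1})-f(q_s)\equiv t_s-t_{s+1}\pmod M$, and summing over $s=1,\ldots,\ell$ the right-hand side telescopes to $0$, yielding
\[
\sum_{s=1}^{\ell}\bigl(f(p_s)-f(q_s)\bigr)\equiv 0\pmod M.
\]

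Next I would rewrite this as $\sum_{i\in[N]} c_i\,f(i)\equiv 0\pmod M$ where $c_i=|\{s:p_s=i\}|-|\{s:q_s=i\}|$. Since every edge $e_s$ contributes at most one to each of the two counts (the two endpoints of $R(e_s)$ are distinct in $G$), one has $|c_i|\le|\{s:p_s=i\}|+|\{s:q_s=i\}|\le\ell\le q$, and $\mathrm{supp}(c)\subseteq V(R(C))$ has size at most $2\ell\le 2q$. Hence if $c$ is not identically zero, the pair $(I=\mathrm{supp}(c),\{c_i\}_{i\in I})$ fits the definition of an ensemble and is bad, contradicting the hypothesis $\neg\event^b$. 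I therefore conclude $c_i=0$ for every $i$, i.e., $|\{s:p_s=i\}|=|\{s:q_s=i\}|$ for every $i\in[N]$.

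From this balance the conclusion is immediate: for every vertex $v_i\in V(R(C))$, the number of incidences of $v_i$ on the edges $R(e_1),\ldots,R(e_\ell)$ is $|\{s:p_s=i\}|+|\{s:q_s=i\}|=2|\{s:p_s=i\}|$, which is even, and this is precisely the degree of $v_i$ in $G'=R(C)$.

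The main obstacle I anticipate is the bookkeeping of the ensemble parameters: the linear relation extracted from the cycle must have coefficients of magnitude at most $q$ and support of size at most $2q$, which is exactly where the length bound $\ell\le q$ enters (and why the observation is restricted to short cycles). Once \Cref{obs: unique origin of an edge} is available to make the origin-and-shift decomposition of each edge well-defined, the remaining work is the short telescoping identity along $C$ together with this parameter check.
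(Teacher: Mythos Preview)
Your proposal is correct and follows essentially the same route as the paper: you set up the cycle with shifts, derive the telescoping identity $\sum_s(f(p_s)-f(q_s))\equiv 0\pmod M$, check that a nonzero coefficient vector would be a bad ensemble (with the same bounds $|c_i|\le q$ and $|\mathrm{supp}(c)|\le 2q$), and conclude that every vertex is hit evenly. The only cosmetic difference is that the paper telescopes via the $j_i-j_{i+1}$ side of the identity while you telescope via the $t_s-t_{s+1}$ side.
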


\begin{proof}
Throughout the proof, we assume that Event $\event^b$ did not happen, and we fix a simple cycle  $C=(u_{j_1},\ldots,u_{j_z})$ in $H$, with $z\le q$. For all $1\le i\le z$, we denote $e_i=(u_{j_i},u_{j_{i+1}})$, and we denote the origin-edge of $e_i$ by $e'_i=R(e_i)=(v_{a_i},v_{b_i})$. From the definition of graph $H$, there must be an integer $t_i\in [M]$, with $j_{i}\equiv f(a_i)+t_i \text{ }(\textnormal{mod } M)$ and $j_{i+1}\equiv f(b_i)+t_i \text{ }(\textnormal{mod } M)$. Therefore, for all $1\leq i\leq z$:
\[
f(a_i)-f(b_i)\equiv j_{i}-j_{i+1}\text{ }(\textnormal{mod } M).
\]
Summing up the above equality over all $i=1,\ldots z$, we get that
\begin{equation}\label{eq: ensemble is bad}
\sum_{1\le i\le z}f(a_i)-\sum_{1\le i\le z}f(b_i)\equiv 0\text{ }(\textnormal{mod } M).
\end{equation}

Let $A$ be the set of indices lying in $\set{a_1,\ldots,a_z}$ (if an index appears several times in $\set{a_1,\ldots,a_z}$, we only include it once in $A$). For every index $a^*\in A$, let $x'_{a^*}$ be the number of integers $i\in \set{1,\ldots,z}$ with $a_i=a^*$. Similarly, we let $B$ be the set of indices lying in $\set{b_1,\ldots,b_z}$, and for every index $b^*\in B$, we let $x''_{b^*}$ be the number of integers $i\in \set{1,\ldots,z}$ with $b_i=a$. We claim that $A=B$ must hold, and, for every index $a\in A$, $x'_a=x''_a$ must hold. Indeed, assume otherwise. We then construct an ensemble $\bset=\set{I,\set{x_i}_{i\in I}}$ as follows. Set $I$ includes every index $i\in (A\setminus B)\cup (B\setminus A)$; for each such index $i$, we set $x_i=x'_i$ if $i\in A\setminus B$ and $x_i=-x''_i$ otherwise. Additionally, for every index $i\in A\cap B$ with $x'_i\neq x''_i$, we include index $i$ in $I$, with $x_i=x'_i-x''_i$. From our assumptions, $I\neq\emptyset$, $|I|\leq 2q$, and for all $i\in I$, $-q\leq x_i\leq q$, with $x_i\neq 0$. Therefore, 
$\bset=\set{I,\set{x_i}_{i\in I}}$ is a valid ensemble. But then, from Equation \ref{eq: ensemble is bad}, $\sum_{i\in I}x_i\cdot f(i)\equiv 0 \text{ }(\textnormal{mod } M)$. In other words, ensemble $\bset$ is bad, contradicting the assumption that bad event $\event^b$ did not happen.

We conclude that $A=B$ must hold, and, for every index $a\in A$, $x'_a=x''_a$ must hold.
Therefore, for every vertex $v\in V(G)$, the number of times that $v$ lies in $\set{v_{a_1},\ldots,v_{a_z}}$ is equal to the number of times that $v$ lies in $\set{v_{b_1},\ldots,v_{b_z}}$. Therefore, the number of edges of $\set{R(e_i)\mid 1\leq i\leq z}$ that are incident to $v$ is even.
\end{proof}

Lastly, we need the following claim.

\begin{claim}\label{claim: small origin}
	Let $H'$ be any subgraph of $H$ with $|V(H')|\leq N$. If Event $\event^b$ did not happen, then the origin graph $R(H')$ of $H'$ contains at most $c^*\cdot |V(H')|$ vertices, where $c^*$ is a constant independent of $N$.
\end{claim}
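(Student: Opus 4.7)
My plan is to exploit the uniqueness-of-origin property (\Cref{obs: unique origin of an edge}) to reduce the claim to a bound on ``realized'' differences, and then use the cycle observation (\Cref{obs: cycle origin}) to argue that this bound is linear in $|V(H')|$. Throughout, I will work under $\overline{\event^b}$.

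First I set up the reduction. By uniqueness of origin, each edge $(u_j,u_{j'})\in E(H')$ has a unique origin $(v_a,v_b)\in E(G)$, determined by the ordered difference $j-j'=f(a)-f(b)$. The injectivity of $(a,b)\mapsto f(a)-f(b)$ on ordered pairs with $a\ne b$, which is a direct consequence of $\overline{\event^b}$ applied to ensembles of size at most $4$, shows that $E(R(H'))$ is in bijection with the set $U$ of unordered ``realized'' differences $\{\pm\delta\}\subseteq V(H')-V(H')$ satisfying $\delta=f(a)-f(b)$ for some edge $(v_a,v_b)\in E(G)$. Since $|V(R(H'))|\le 2|E(R(H'))|=2|U|$, it suffices to show $|U|\le c\cdot|V(H')|$.

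To bound $|U|$, I introduce an auxiliary simple graph $K$ with vertex set $V(K)=V(H')$ and edge set $E(K)=\bigl\{\{u_j,u_{j'}\}:j-j'\in\pm U\bigr\}$, so that $|U|\le|E(K)|$. The key claim is that $K$ is forced to be nearly a forest: for any simple cycle $u_{j_1},u_{j_2},\ldots,u_{j_\ell},u_{j_1}$ in $K$ of length $\ell\le q=\lceil\log N\rceil$, telescoping $\sum_l(j_l-j_{l+1})=0$ and substituting $j_l-j_{l+1}=\varepsilon_l(f(a_l)-f(b_l))$ for some signs $\varepsilon_l\in\{\pm 1\}$ and edges $(v_{a_l},v_{b_l})\in E(G)$ gives a linear relation $\sum_l\varepsilon_l\bigl(f(a_l)-f(b_l)\bigr)\equiv0\pmod M$ among at most $2\ell\le 2q$ values of $f$ with coefficients in $\{\pm 1\}$. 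By $\overline{\event^b}$ the associated ensemble must be trivial, and the multiset analysis from the proof of \Cref{obs: cycle origin} forces the multisets of ``left'' and ``right'' indices to coincide, which in turn collapses some edges of the cycle to identical realized differences. A bookkeeping argument then shows that each chord of a spanning forest of $K$ whose fundamental cycle has length $\le q$ is redundant.

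The main obstacle is ensuring that every fundamental cycle remains short enough for the ensemble argument to apply. My plan is to take a BFS spanning forest $F$ of $K$ of depth at most $q/2$, which guarantees that every chord of $F$ creates a fundamental cycle of length at most $q$; if $K$ has larger diameter, I decompose each connected component of $K$ into BFS layers of radius $q/2$ and apply the argument to each ball separately, paying only $O(1)$ new edges per ball for the boundary edges between consecutive balls. Summing the resulting per-ball linear bounds gives $|E(K)|\le c^*\cdot|V(H')|$, hence $|V(R(H'))|\le 2c^*\cdot|V(H')|$ for an absolute constant $c^*$. The delicate step will be verifying that the ensemble constraints survive the ball decomposition, in particular that redundancies identified inside a ball translate correctly into bounds on the global $K$.
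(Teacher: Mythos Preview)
Your reduction to bounding $|U|=|E(R(H'))|$ is too lossy, and the subsequent claim about $K$ is false. Take $G=K_N$ and let $H'$ be the complete graph on $\{u_{f(1)},\ldots,u_{f(N)}\}$ (these are distinct under $\overline{\event^b}$, and every pair is an edge of $H$ with origin $(v_i,v_j)$). Then $|V(H')|=N$, $|V(R(H'))|=N$, but $|U|=|E(R(H'))|=\binom{N}{2}$ and $|E(K)|=\binom{N}{2}$. So $|E(K)|$ is not $O(|V(H')|)$. More specifically, your key step---that in a short cycle of $K$ the multiset identity forces two edges to share a realized difference---fails already for the triangle $u_{f(1)},u_{f(2)},u_{f(3)}$: the differences $f(1)-f(2),\ f(2)-f(3),\ f(3)-f(1)$ telescope to zero and the left/right index multisets coincide, yet all three differences are distinct. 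No chord is ``redundant'' in the sense you need, and the BFS ball decomposition cannot repair this.

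The paper avoids this by never trying to bound $|E(R(H'))|$. It greedily builds a subgraph $H''\subseteq H'$ of girth greater than $q$; a Moore-type bound gives $|E(H'')|\le |V(H')|^{1+O(1/q)}=O(|V(H')|)$, so $|V(R(H''))|\le 2|E(H'')|=O(|V(H')|)$. The crucial observation is then that $V(R(H'))=V(R(H''))$: for each discarded edge $e\in E(H')\setminus E(H'')$ there is a cycle of length at most $q$ in $H''\cup\{e\}$, and the even-degree property of its origin graph (\Cref{obs: cycle origin}) forces the two endpoints of $R(e)$ to be odd-degree vertices of $R(C\setminus\{e\})\subseteq R(H'')$. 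Note this only places the \emph{endpoints} of $R(e)$ inside $V(R(H''))$; it does not place $R(e)$ itself in $E(R(H''))$, which is exactly the distinction between the (true) vertex bound and the (false) edge bound you were pursuing.
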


\begin{proof}
	Recall that the \emph{girth} of an unweighted graph $G^*$ is the length of the shortest cycle in $G^*$. For an integer $t\geq 1$, we say that a subgraph $G'$ of $G^*$ is a \emph{$t$-spanner} of $G^*$ if $V(G')= V(G^*)$, and, for every pair $v,v'$ of vertices of $G^*$, if we denote by $\dist_{G^*}(v,v')$ the length of the shortest $v$-$v'$ path in $G^*$, and we define $\dist_{G'}(v,v')$ similarly for $G'$, then  $\dist_{G'}(v,v')\le t\cdot \dist_{G^*}(v,v')$.
	
	Consider now any subgraph $H'$ of $H$. 
	We use the following algorithm of \cite{althofer1993sparse}, whose goal is to construct a $q$-spanner $H''$ of $H'$ that contains few edges. The algorithm starts with graph $H''$, whose vertex set is $V(H'')=V(H')$, and edge set is empty. It then processes every edge $e\in E(H')$ one by one. If graph $H''\cup\set{e}$ contains a cycle of length at most $q$, then we continue to the next iteration; otherwise, we add $e$ to $H''$, and continue to the next iteration. Consider the final graph $H''$ that is obtained at the end of the algorithm, once very edge of $H'$ is processed. It is immediate to see that the girth of $H''$ is greater than $q$. One can also show that the resulting graph  $H''$ is a $q$-spanner of $H'$, but we do not need to use this fact.
	We use the following theorem from \cite{bollobas2004extremal}.
	
	\begin{theorem}[Theorem 3.7 from \cite{bollobas2004extremal}]
		\label{thm: spanner}
		Let $G$ be an $n$-vertex graph with girth greater than $q$, for any integer $q>1$. Then $|E(G)|\le n\cdot \ceil{n^{2/(q-2)}}$.
	\end{theorem} 

From the above theorem, $|E(H'')|\leq |V(H')|^{1+O(1/q)}=O(|V(H')|)$, as $|V(H')|\le N$ and $q=\ceil{\log N}$. 

We denote by $W'\subseteq G$ the origin graph of $H'$, and we denote by $W''\subseteq G$ the origin graph of $H''$. 
Note that $|E(W'')|\leq |E(H'')|\leq O(|V(H')|)$, and so $|V(W'')|\leq O(|E(W'')|)\leq O(|V(H')|)$. We next show the following observation.

\begin{observation}\label{obs: vertex set the same}
	If Event $\event^b$ did not happen, then $V(W')=V(W'')$.
	\end{observation} 

Notice that the observation implies that $|V(W')|\leq O(|V(H')|)$,  completing the proof of \Cref{claim: small origin}. It now remains to prove \Cref{obs: vertex set the same}.

\begin{proofof}{\Cref{obs: vertex set the same}}
 Consider any edge $e\in E(H')\setminus E(H'')$. From the construction of graph $H''$, there must be a simple cycle $C$ in graph $H''\cup \set{e}$, whose length is at most $q$. Consider now the subgraph $W_C$ of $G$, induced by the edges of $\set{R(e')\mid e'\in E(C)}$; in other words, $W_C=R(C)$. Since the event $\event^b$ did not happen, from \Cref{obs: cycle origin}, graph $W_C$ is an even-degree graph. Therefore, if we denote by $\hat e=R(e)$ the origin-edge of $e$, then graph $W_C\setminus\set{\hat e}$ contains exactly two odd-degree vertices, that serve as endpoints of edge $\hat e$ in $G$. Notice however that $W_C\setminus\set{\hat e}\subseteq W''$. Therefore, for every edge $e\in E(H')\setminus E(H'')$, the endpoints of the origin edge $R(e)$ lie in $W''$. It then follows that $W'=W''$. 
\end{proofof}
\end{proof}

\paragraph{Bad Event $\event$.}
We say that the bad event $\event$ happens if either bad event $\event^b$ happens, or bad event $\event^g$ does not happen. By using the Union bound, together with \Cref{claim: disjoint copies} and \Cref{obs: bound bad ensemble}, we get that $\prob{\event}\leq 0.1$.

We are now ready to complete the proof of \Cref{thm: alg_DkC gives alg_DkS}.

\subsection{Completing the Reduction from \DkS to \DkC}
\label{subsec: finish DkS to DkC}

Let $\pdks(G,k)$ be an input instance of the \DkS problem. Denote $V(G)=\set{v_1,\ldots,v_N}$. We start by constructing the auxiliary graph $H$, from instance $\pdks(G,k)$. We add isolated vertices to graph $H$, until $|V(H)|$ becomes an integral multiple of $k$, and we denote $|V(H)|=n$. Clearly, $n\leq N^{O(\log N)}$. We then consider instance $\pcl(H,k)$ of the \DkC problem, where the parameter $k$ remains unchanged. Note that, if Event $\event$ did not happen, then, from the definition of Events $\event$ and $\event^g$, there is a collection $\set{H_1,\ldots,H_r}$ of $r=\floor{\frac M {k\log k}}$ disjoint subsets of vertices of $H$, such that for all $1\le j\leq r$, $|V(H_j)|\leq k$ holds, and additionally, $\sum_{1\le j\le r}|E(H_j)|\ge 0.1\cdot \floor{\frac M{k\log k}}\cdot\optdks(G,k)$. We can then define a solution $(S_1,\ldots,S_{n/k})$ to instance $\pcl(H,k)$ of the \DkC problem, as follows. For all $1\leq i\leq r$, we initially set $S_i=V(H_i)$, and for all $r<i\leq n/k$, we set $S_i=\emptyset$. Let $U=V(H)\setminus\left(\bigcup_{i=1}^rV(H_i)\right )$. Next, we partition the vertices of $U$ by adding them to sets $S_1,\ldots,S_{n/r}$ arbitrarily, to ensure that the cardinality of each set is exactly $k$. From the above discussion, if Event $\event$ did not happen, then:
 $$\sum_{i=1}^{n/k}|E_H(S_i)|\geq \sum_{i=1}^r|E(H_i)|\geq 0.1\cdot \floor{\frac M{k\log k}}\cdot\optdks(G,k)\geq \Omega\left (\frac n {k\log k}\right )\cdot\optdks(G,k).$$ 
 
 We conclude that, if Event $\event$ did not happen, then $\optcl(H,k)\ge \Omega\left (\frac n{k\log k}\right )\cdot\optdks(G,k)$.

 We apply the $\alpha(n)$-approximation algorithm for \DkC to instance $\pcl(H,k)$, and we denote the resulting solution by $(U_1,\ldots,U_{n/k})$. Note that: 
 
 $$\sum_{i = 1}^{n/k}|E_H(U_i)|\ge \frac{\optcl(H,k)}{\alpha(n)}\geq \Omega\left (\frac n{k\cdot \alpha(n)\cdot \log k}\right )\cdot\optdks(G,k) $$

must hold.
We let $U\in \set{U_1,\ldots,U_{n/k}}$ be a subset maximizing $|E_H(U_i)|$, so that 

$$|E_H(U)|\ge \frac{\sum_{i = 1}^{n/k}|E_H(U_i)|}{n/k}\geq \Omega\left (\frac {\optdks(G,k)}{\alpha(n)\cdot \log k}\right ).$$

Let $H'=H[U]$, and let $W=R(H')$ be the origin graph of $H'$. 
From \Cref{claim: small origin}, if Event $\event$ did not happen, then $|V(W)|\leq c^*\cdot |V(H')|\leq c^*k$, for some universal constant $c^*$.

Lastly, we apply the algorithm from \Cref{lem: size reducing} to graph $W$, to obtain a subgraph $W'$ of $W$ with $|V(W')|=k$ and $|E(W')|\ge \Omega(|E(W)|)$. We then return $S=|V(W')|$ as the solution to the input instance  $\pdks(G,k)$  of the \DkS problem. From the above discussion, $|E_G(S)|\geq \Omega(|E(W)|)\geq \Omega\left (\frac {\optdks(G,k)}{\alpha(n)\cdot \log k}\right )\geq \Omega\left (\frac {\optdks(G,k)}{\alpha\left (N^{O(\log N)}\right ) \cdot \log N}\right )$.
Therefore, if the event $\event$ does not happen, we obtain an 	$O(\alpha(N^{O(\log N)})\cdot\log N)$-approximate solution to the input instance of the \DkS problem. Recall that the probability of Event $\event$ happening is at most $0.1$. Lastly, since $|V(H)|\leq N^{O(\log N)}$, it is easy to verify that the running time of the algorithm is at most $N^{O(\log N)}$.

\subsection{Completing the Reduction from \DkS to \WGP}
\label{subsec: finish DkS to WGP}

Let $\pdks(G,k)$ be an input instance of the \DkS problem with $|V(G)|=N$. 
Our algorithm requires the knowledge of an estimate $h$ on the value of $\optdks(G,k)$, with $h/2\leq \optdks(G,k)\leq h$. In order to overcome this difficulty, we run the algorithm for every value of $h$ that is an integral power of $2$ between $1$ and $|E(G)|$, and output the best resulting solution. Therefore, it is now enough to provide a randomized algorithm that, given an estimate $h$ with $h/2\leq \optdks(G,k)\leq h$, with a constant probability produces a solution to instance $\pdks(G,k)$ of \DkS whose value is at least
$\Omega\left (\frac{\optdks(G,k)}{(\alpha(N^{O(\log N)}))^3\cdot\log^2 N}\right )$, such that the running time of the algorithm is  $N^{O(\log N)}$. From now on we assume that we are given an integer $h$ with $h/2\leq \optdks(G,k)\leq h$.

As before, we denote $V(G)=\set{v_0,\ldots,v_{N-1}}$, and we construct the auxiliary graph $H$ from instance $\pdks(G,k)$ of \DkS. We denote $|V(H)|=n$, so $n\leq N^{O(\log N)}$ holds. We then consider instance
$\pwgp(H, r, h)$  of the \WGP problem, where $r=\floor{\frac n {k\log k}}$. 

Note that, if Event $\event$ did not happen, then, from the definition of Events $\event$ and $\event^g$, there is a collection $\set{H_1,\ldots,H_r}$ of $r=\floor{\frac n {k\log k}}$ disjoint subgraphs of $H$, such that for all $1\le j\leq r$, $|E(H_j)|\leq \optdks(G,k)\leq h$ holds, and
$\sum_{1\le j\le r}|E(H_j)|\ge 0.1\cdot \floor{\frac n{k\log k}}\cdot\optdks(G,k)$.
Therefore, we obtain a solution $(H_1,\ldots,H_r)$ to instance $\pwgp(H,r,h)$ of \WGP, whose value is at least $0.1\cdot \floor{\frac n{k\log k}}\cdot\optdks(G,k)$.
We conclude that, if Event $\event$ did not happen, then $\optwgp(H,r,h)\ge \Omega\left (\frac n{k\log k}\right )\cdot\optdks(G,k)$.

We apply the $\alpha(n)$-approximation algorithm to instance $\pwgp(H,r,h)$ of \WGP, obtaining a solution $(H'_1,\ldots,H'_r)$, whose value is at least $\frac{\optwgp(H,r,h)}{\alpha(n)}$.
Denote $\hset=\set{H'_1,\ldots,H'_r}$. We partition set $\hset$ into two subsets: set $\hset'$ containing all graphs $H'_j\in \hset$ with $|V(H')|\leq 
100\cdot\alpha(n) k\log k$, and set $\hset''$ containing all remaining graphs. 

Assume for now that Event $\event$ did not happen. Then, as observed above:

\[\sum_{H'_j\in \hset}|E(H'_j)|\geq \frac{\optwgp(H,r,h)}{\alpha(n)}\geq  \frac{n\cdot \optdks(G,k)}{20k\cdot \alpha(n)\cdot \log k }. \]

Clearly, $|\hset''|\leq \frac{n}{100\cdot \alpha(n)\cdot k\log k}$, and so: 

$$\sum_{H'_j\in \hset''}|E(H'_j)|\leq \frac{n\cdot h}{100\cdot \alpha(n)\cdot k\log k}\leq \frac{n\cdot\optdks(G,k)}{100\cdot \alpha(n)\cdot k\log k}.$$

Altogether, we get that, if Event $\event$ did not happen, then:

\[\sum_{H'_j\in \hset'}|E(H)|\geq  \frac{n\cdot \optdks(G,k)}{50k\cdot \alpha(n)\cdot \log k }. \]

We let $H^*\in \hset'$ be the graph maximizing the number of edges. Since $|\hset'|\leq |\hset|=r=\floor{\frac n {k\log k}}$, from the above discussion, if Event $\event$ did not happen, then:

\[|E(H^*)|\geq \frac{n\cdot \optdks(G,k)}{r\cdot 50k\cdot \alpha(n)\cdot \log k }\geq  \frac{\optdks(G,k)}{50 \alpha(n)}.\]

From the definition of the collection $\hset'$ of graphs, $|V(H^*)|\leq 100\cdot\alpha(n) k\log k$.

Let $W=R(H^*)$ be the origin graph of $H^*$. 
From \Cref{claim: small origin}, if Event $\event$ did not happen, then $|V(W)|\leq c^*\cdot |V(H^*)|\leq O(\alpha(n) k\log k)$.

Lastly, we apply the algorithm from \Cref{lem: size reducing} to graph $W$, to obtain a subgraph $W'$ of $W$ with $|V(W')|\leq k$ and $|E(W')|\ge \Omega\left(\frac{|E(W)|}{(\alpha(n))^2\cdot \log ^2k}\right )$. We then return $S=|V(W')|$ as the solution to the input instance  $\pdks(G,k)$  of the \DkS problem. From the above discussion, 
$|V(S)|\leq k$, and, if Event $\event$ did not happen, then:

\[\begin{split}
|E_G(S)|&\geq \Omega\left(\frac{|E(W)|}{(\alpha(n))^2\cdot \log ^2k}\right ) \\ &\geq \Omega\left(\frac{|E(H^*)|}{(\alpha(n))^2\cdot \log ^2k}\right )\\
&\geq \Omega\left(\frac{\optdks(G,k)}{(\alpha(n))^3\cdot \log ^2k}\right )\\
&\geq \Omega\left(\frac{\optdks(G,k)}{(\alpha(N^{O(\log N)}))^3\cdot \log^2N}\right ).
\end{split}
\]

Therefore, if the event $\event$ does not happen, we obtain an 	$O((\alpha(N^{O(\log N)}))^3\cdot\log^2 N)$-approximate solution to the input instance of the \DkS problem. Recall that the probability of Event $\event$ happening is at most $0.1$. Lastly, since $|V(H)|\leq N^{O(\log N)}$, it is easy to verify that the running time of the algorithm is at most $N^{O(\log N)}$.

\section{Reductions between \WGP and \BCS}
\label{sec: WGP and BCS}

In this section we establish a connection between the \WGP and \BCS problems, by proving the following two theorems.

\begin{theorem}
\label{thm: alg_WGP gives alg_CN}
Let $\alpha: \mathbb{Z^+}\to \mathbb{Z^+}$ be an increasing function with $\alpha(n)=o(n)$. Assume that there exists an efficient $\alpha(n)$-approximation algorithm for the \WGP problem, where $n$ is the number of vertices in the input graph. Then there exists an efficient $O(\alpha(N)\cdot\poly\log N)$-approximation algorithm for \BCS, where $N$ is the number of vertices in the input instance of \BCS.
\end{theorem}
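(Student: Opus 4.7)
Given an input instance $\pcn(G, L)$ of \BCS with $|V(G)| = N$, the overall plan is to reduce \BCS to \WGP by guessing a scale parameter $h$: for each power-of-$2$ value of $h$ in $[1, |E(G)|]$, set $r = \lfloor L/h^2 \rfloor$, run the assumed \WGP algorithm on $\pwgp(G, r, h)$, convert the resulting \WGP solution into a feasible \BCS solution, and output the best \BCS solution obtained.

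\textbf{Converting a \WGP solution into a \BCS solution.} Suppose $(H_1, \ldots, H_r)$ is a \WGP solution on $\pwgp(G, r, h)$, so the $H_i$ are vertex-disjoint and each has at most $h$ edges. Let $H = \bigcup_{i=1}^r H_i$. Using the fact that any graph with $m$ edges admits a plane drawing with at most $m^2$ crossings (e.g.\ place vertices on a line and draw each edge as a curve above or below), we draw each $H_i$ inside its own disk of the plane with at most $h^2$ crossings. Since the $H_i$ are vertex-disjoint, this yields a drawing of $H$ with $\CrN(H) \leq rh^2 \leq L$, so $H$ is feasible for $\pcn(G, L)$ with $|E(H)| = \sum_i |E(H_i)|$.

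\textbf{Lower-bounding the \WGP optimum.} The main work is proving that for the appropriate choice of $h$, there is a \WGP-feasible decomposition of the optimal \BCS solution $H^*$ whose total edge count is $\Omega(m^*)$, where $m^* = \optcn(G, L)$. To produce such a decomposition conceptually, I recursively apply the balanced edge-cut lemma: any graph $H'$ with $m'$ edges, max degree $\Delta$, and a drawing with at most $L'$ crossings has a balanced edge-cut of size $O\bigl(\sqrt{L' + \Delta m'}\bigr)$. Starting from $H^*$ and recursing until each piece has at most $h$ edges, the recursion has depth $O(\log(m^*/h))$, contains at most $2^t$ pieces at level $t$, and the crossings and edges partition across the pieces at each level. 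Applying Cauchy--Schwarz across pieces within a level and summing the geometric series over levels bounds the total number of edges removed by $O\bigl(\sqrt{(m^*/h)(L + \Delta(G) \cdot m^*)}\bigr)$. Choosing $h \geq C \cdot \bigl(L/m^* + \Delta(G)\bigr)$ for a sufficiently large constant $C$ guarantees the total loss is at most $m^*/2$, so the pieces form a feasible solution to $\pwgp(G, r, h)$ of value at least $m^*/2$, hence $\optwgp(G, r, h) \geq m^*/2$.

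\textbf{Putting it together and the main obstacle.} For the ``right'' value of $h$, the assumed \WGP $\alpha(N)$-approximation returns a solution of value $\Omega(m^*/\alpha(N))$, which via the forward conversion becomes a \BCS solution of the same value, giving an $O(\alpha(N))$-approximation for $\pcn(G, L)$. Trying all $O(\log N)$ powers-of-$2$ values of $h$ costs only an additional $O(\log N)$ factor, producing the claimed $O(\alpha(N) \poly\log N)$-approximation. The most delicate step is the recursive decomposition analysis: one must verify that the $\Delta(G)$ dependence inside the balanced-cut size is absorbed by a suitable choice of $h$ even when $\Delta(G)$ is large, using that the ``interesting'' regime of \BCS (namely $L = \Omega(N)$, and in fact much larger for dense optimal subgraphs by the Crossing Number Inequality) makes $L/m^* + \Delta(G)$ still polynomially smaller than $\sqrt{L}$, so that a valid $h$ with $rh^2 \leq L$ and $r \geq 1$ exists. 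Once this parameter balancing is checked, the Cauchy--Schwarz aggregation across recursion levels yields the advertised polylog overhead directly.
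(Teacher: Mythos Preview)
Your high-level plan---guess a scale $h$, decompose the optimal \BCS solution $H^*$ via recursive balanced cuts, and convert a \WGP solution back using $\CrN(H_i)\le |E(H_i)|^2$---is the right shape and matches the paper's strategy. But there is a genuine gap in the step you yourself flag as ``most delicate'': the $\Delta$ term is \emph{not} absorbed by parameter balancing.

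Concretely, take $G=K_{1,N-1}$ and any $L\ge 0$. Then $H^*=G$, $m^*=N-1$, $\CrN(H^*)=0$, and $\Delta(H^*)=N-1$. The balanced-cut bound $O(\sqrt{\CrN+\Delta m})$ is useless here (its hypothesis $\Delta<m/2^{40}$ fails), and in fact every $3/4$-balanced edge-cut of a star has size $\Omega(m)$, so your recursion loses essentially all edges regardless of $h$. More generally, whenever $H^*$ contains a high-degree ``star-like'' part whose edges form a constant fraction of $m^*$, your decomposition destroys that part; and your constraint $h\le\sqrt{L}$ combined with $h\ge C\Delta$ can be simultaneously unsatisfiable (e.g.\ when $L<N^2$ while $\Delta(H^*)=\Theta(N)$, which is consistent with $m^*$ as large as $N^{4/3}$ via the Crossing Number Inequality). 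Your assertion that the ``interesting regime'' forces $\Delta(G)\ll\sqrt{L}$ is simply false.

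The paper fixes this with two ingredients you are missing. First, a spanning-forest fallback: if $m^*<4N\log^6 N$ then a spanning forest (which has $\ge N/2$ edges and crossing number $0$) is already a $\mathrm{polylog}$-approximation. Second, and more importantly, the paper's recursion uses a \emph{different stopping rule}: it stops cutting a piece $H'$ not when $|E(H')|\le h$, but when the approximate balanced cut returned has value $\ge |E(H')|/\log^2 N$. A simple charging argument then shows at most half the edges are lost. Among the resulting pieces, the paper keeps only the \emph{dense} ones (those with $|E(H')|\ge |V(H')|\log^6 N$); for these, $\Delta(H')\le |V(H')|\le |E(H')|/\log^6 N$ automatically, so the balanced-cut lower bound certifies $\CrN(H')\ge \Omega(|E(H')|^2/\log^5 N)$. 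Summing gives $\sum_i |E(H_i)|^2\le L\cdot\mathrm{polylog}\,N$, which is exactly the relation $rh^2\lesssim L$ you need. Sparse pieces are discarded, but since $\sum_i|V(H_i)|\le N$ they carry at most $N\log^6 N<m^*/4$ edges in total.

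A secondary point: your ``geometric series over levels'' is not geometric---each level contributes a comparable amount, so you pick up an extra $\log N$ factor. That is harmless for the theorem, but your claim $\optwgp(G,r,h)\ge m^*/2$ should really be $\Omega(m^*/\mathrm{polylog}\,N)$, and you would also need to verify that the number of pieces fits within $r=\lfloor L/h^2\rfloor$ (it does not in general without the $\mathrm{polylog}$ slack that the paper buys by allowing $rh^2\le L\log^6 N$ and subsampling edges by $\log^3 N$ in the forward conversion).
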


\begin{theorem}\label{thm: alg for CN gives alg for WGP}
	Let $\alpha: \mathbb{Z^+}\to \mathbb{Z^+}$ be an increasing function with $\alpha(n)=o(n)$. Assume that there exists an efficient $\alpha(N)$-approximation algorithm for the \BCS problem, where $N$ is the number of vertices in the input graph. Then there exists an efficient $O((\alpha(n))^2\cdot\poly\log n)$-approximation algorithm for \WGP, where $n$ is the number of vertices in the input instance of \WGP.
\end{theorem}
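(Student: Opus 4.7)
The plan is to reduce the given WGP instance $\pwgp(G,r,h)$ to a BCS instance, apply the assumed BCS approximation, and then decompose the returned low-crossing subgraph into vertex-disjoint pieces of size at most $h$ via recursive balanced separators.

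First I would set $L = r h^2$ and invoke the $\alpha(n)$-approximation for \BCS on $\pcn(G,L)$. The key observation is that for any optimal WGP solution $(H_1^*,\ldots,H_r^*)$ the subgraphs are vertex-disjoint, so one can draw each $H^*_i$ using at most $|E(H^*_i)|^2 \le h^2$ crossings in a small disjoint region of the plane, yielding a drawing of $H^* = \bigcup_i H^*_i$ with at most $r h^2$ crossings. Hence $\optcn(G,L) \ge \optwgp(G,r,h)$, so the algorithm returns a subgraph $H' \subseteq G$ with $\crn(H') \le L$ and $|E(H')| \ge \optwgp/\alpha(n)$.

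Next I would recursively decompose $H'$ into vertex-disjoint pieces each having at most $h$ edges, using the balanced-cut property for low-crossing graphs quoted in the excerpt: any graph with crossing number at most $L_0$, maximum degree at most $\Delta$, and $m$ edges has a balanced vertex partition whose cut consists of $O(\sqrt{L_0+\Delta\cdot m})$ edges. Starting from $H'$, while a current piece has more than $h$ edges, find such a cut, discard the cut edges, and recurse on both sides; this produces a vertex-partition of $V(H')$ into pieces with at most $h$ edges each, and we return the $r$ pieces with the largest edge counts as the WGP solution.

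The main obstacle is controlling the total edge loss of the recursion, which depends on the maximum degree of the recursive subgraphs. In the optimal WGP solution, $H^*$ automatically has maximum degree at most $h$ (every vertex lies in a single group containing at most $h$ edges), but the BCS output $H'$ need not. To address this I would insert a preprocessing step that reduces the maximum degree of $H'$ to $O(h)$ by computing the subgraph $H'' \subseteq H'$ of maximum edge count subject to the constraint $\Delta(H'') \le h$, a tractable degree-bounded subgraph (\textit{b}-matching) problem. A charging argument comparing $H''$ with the optimum low-crossing max-degree-$h$ subgraph $H^*$ would bound this step's loss by an additional factor of $\alpha(n)$; combined with the $\alpha(n)$ from BCS and the $\polylog n$ introduced by the separator recursion over $O(\log n)$ levels, this would give the claimed $O(\alpha(n)^2 \polylog n)$ approximation.

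The hardest technical point is making the charging argument in the degree-reduction step precise, since $H'$ and $H^*$ need not share edges and the two are only related through having high edge density while respecting a common crossing budget. Resolving this likely requires showing that near-optimal BCS solutions can be postprocessed (perhaps via random subsampling of edges incident to high-degree vertices, or via LP rounding of the degree-bounded subgraph relaxation) to obtain a max-degree-$h$ subgraph whose edge count is within an $\alpha(n)\polylog n$ factor of $\optwgp$, thereby absorbing the degree reduction into a single extra $\alpha$-factor.
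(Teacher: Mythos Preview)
Your opening reduction is right: setting $L=rh^2$ and applying the \BCS algorithm yields $H'$ with $\crn(H')\le L$ and $|E(H')|\ge \optwgp/\alpha(n)$. The gap is in the decomposition step. Recursing until every piece has at most $h$ edges does not control the total separator loss, and the obstruction is not (only) the degree term. Even granting $\Delta\le h$, at a node with $m_v$ edges and crossing number $L_v$ the cut costs $\Theta(\sqrt{L_v+hm_v})$; summing the $\sqrt{L_v}$ contributions over the whole recursion tree (with $\sum_v L_v\le L\cdot O(\log n)$ and $\Theta(m/h)$ nodes) gives, via Cauchy--Schwarz, a loss of order $\sqrt{(m/h)\cdot rh^2\log n}=\sqrt{mrh\log n}$. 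When $m\approx rh/\alpha$ this is $\Theta(m\sqrt{\alpha\log n})\gg m$, so the recursion can annihilate all edges. Your degree-reduction step is therefore a red herring: it is unresolved (as you say), but even if it worked it would not save the analysis.

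The paper avoids this by changing the stopping rule: it recurses only while the approximate balanced cut is at most $m_v/\log^2 n$, and declares a piece \emph{inactive} otherwise. This guarantees the total loss is at most $|E(H')|/2$ by a standard charging argument. The resulting inactive pieces have uncontrolled sizes, but each dense one (with $|E|\ge |V|\log^6 n$) now has a \emph{large} minimum balanced cut, which via the separator theorem forces $\crn\ge \Omega(|E|^2/\log^5 n)$. Since the crossing numbers of vertex-disjoint pieces sum to at most $L$, bucketing yields $r'',h''$ with $r''(h'')^2\le L\cdot\polylog n$ and $r''h''\ge \Omega(|E(H')|/\log n)$. One then argues that (for the correctly guessed $r',h'$) necessarily $h''\le O(h'\cdot\alpha(n)\polylog n)$; trimming each piece from $h''$ down to $h'$ edges is where the second $\alpha$ factor enters. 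Finally, the case $|E(H')|<4n\log^6 n$ (where the density hypothesis fails) is handled separately by a degree-bounded spanning-forest argument that either already gives a good solution or reduces to a smaller instance falling back into the dense case.
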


By combining \Cref{thm: alg for CN gives alg for WGP} with \Cref{cor: hardness of WGP and DkC}, we obtain the following corollary.

\begin{corollary}\label{cor: hardness of BDS}
	Assume that \Cref{conj: main1} holds and that  $\NP\not \subseteq \BPTIME(n^{O(\log n)})$. Then for some constant $0<\eps'\leq 1/2$, there is no efficient $2^{(\log n)^{\eps'}}$-approximation algorithm for \BCS.
\end{corollary}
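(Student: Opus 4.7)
The plan is to prove the corollary by a straightforward contradiction, chaining \Cref{thm: alg for CN gives alg for WGP} (the reduction from \WGP to \BCS) with \Cref{cor: hardness of WGP and DkC} (the conditional hardness of \WGP). Specifically, I would fix the constant $0<\eps\leq 1/2$ provided by \Cref{cor: hardness of WGP and DkC}, so that under the stated assumptions there is no efficient $2^{(\log n)^{\eps}}$-approximation algorithm for \WGP, and then choose $\eps'>0$ sufficiently smaller than $\eps$ (for instance $\eps'=\eps/3$, which will be more than enough to absorb the squaring and the polylogarithmic factor).

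Next I would assume for contradiction that there exists an efficient $2^{(\log N)^{\eps'}}$-approximation algorithm for \BCS. Applying \Cref{thm: alg for CN gives alg for WGP} with $\alpha(N)=2^{(\log N)^{\eps'}}$, I obtain an efficient approximation algorithm for \WGP with approximation factor
\[
O\!\left((\alpha(n))^{2}\cdot \poly\log n\right)\;=\;O\!\left(2^{2(\log n)^{\eps'}}\cdot \poly\log n\right).
\]
Here I should verify that the function $\alpha$ is increasing and satisfies $\alpha(n)=o(n)$, both of which hold for $\eps'<1/2$, so \Cref{thm: alg for CN gives alg for WGP} is indeed applicable.

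Finally, I would show that for all sufficiently large $n$,
\[
2\,(\log n)^{\eps'}+O(\log\log n)\;\leq\;(\log n)^{\eps},
\]
which is immediate from $\eps'<\eps$, and therefore the resulting \WGP-approximation factor is strictly smaller than $2^{(\log n)^{\eps}}$. This contradicts \Cref{cor: hardness of WGP and DkC}, completing the proof. The argument is entirely mechanical; there is no real obstacle to address, only careful bookkeeping of the exponents to make sure the squaring in \Cref{thm: alg for CN gives alg for WGP} and the polylogarithmic overhead can both be swallowed by choosing $\eps'$ a small constant fraction of $\eps$.
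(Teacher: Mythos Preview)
Your proposal is correct and follows essentially the same approach as the paper: fix the $\eps$ from \Cref{cor: hardness of WGP and DkC}, set $\eps'$ to a small constant fraction of $\eps$, assume a $2^{(\log N)^{\eps'}}$-approximation for \BCS, feed it into \Cref{thm: alg for CN gives alg for WGP}, and observe that the resulting $O((\alpha(n))^2\cdot\poly\log n)$ factor is below $2^{(\log n)^{\eps}}$, contradicting the hardness of \WGP. The paper writes $\eps'=\eps/c$ for a large enough constant $c$ rather than your explicit $\eps'=\eps/3$, but the content is identical; your added check that $\alpha$ is increasing and $o(n)$ is a nice touch the paper omits.
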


\begin{proof}
	Assume that \Cref{conj: main1} holds and that  $\NP\not \subseteq \DTIME(n^{O(\log n)})$. Then, from \Cref{cor: hardness of WGP and DkC}, for some constant $0<\eps\leq 1/2$, there is no efficient factor-$2^{(\log n)^{\eps}}$-approximation algorithm for \WGP, where $n$ is the number of vertices in the input graph.
	
	We let $\eps'=\eps/c$, where $c$ is a sufficiently large constant. We now prove that there is no efficient $2^{(\log n)^{\eps'}}$-approximation algorithm for \BCS. Indeed, assume for contradiction that there is an efficient $2^{(\log n)^{\eps'}}$-approximation algorithm $\aset$ for \BCS.  From
	\Cref{thm: alg for CN gives alg for WGP}, there is an efficient  $c'\cdot (\alpha(N))^2\cdot\poly\log N$-approximation algorithm for \WGP, where $N$ is the number of vertices in the input instance of \WGP, $c'$ is some constant, and $\alpha(N)=2^{(\log N)^{\eps'}}$. Notice however that $c'\cdot (\alpha(N))^2\cdot\poly\log N\leq 2^{(\log N)^{\eps}}$ holds, if the constant $c$ is large enough. 
	
	Therefore, we obtain an efficient factor-$2^{(\log n)^{\eps}}$-approximation algorithm for \WGP, a contradiction.
\end{proof}

In the remainder of this section, we prove Theorems \ref{thm: alg for CN gives alg for WGP} and \ref{cor: hardness of WGP and DkC}.
We start by proving two auxiliary lemmas that will be used in the proofs of both theorems. We then complete the proofs of \Cref{thm: alg_WGP gives alg_CN} and  \Cref{thm: alg for CN gives alg for WGP} in sections \Cref{subsec: from BCS to WGP} and \Cref{subsec: from WGP to BCS}, respectively.

\subsection{Auxiliary Lemmas}
\label{subsec: aux lemma}

We start with the following definition, that will be used in the proofs of both theorems.

\begin{definition}
Let $\pwgp(G,r,h)$ be an instance of \WGP, and let $\set{H_1,\ldots,H_r}$  be a solution to this instance. We say that this solution is \emph{good}, if for all $1\leq i\leq r$, $h/2\le |E(H_i)|\le h$.
\end{definition}


We are now ready to state the first auxiliary lemma.

\begin{lemma}
	\label{obs: equal_size}
	There is an efficient algorithm, that, given a graph $G$ with $|V(G)|=n$, integers $r,h>0$ and any solution $\hset$ to instance $\pwgp(G,r,h)$ of \WGP, computes positive integers $r^*\le r, h^*\le h$ and a subset $\hset^*\subseteq \hset$ of subgraphs of $G$, such that $\hset^*$ is a good solution to instance $\pwgp(G,r^*,h^*)$, of \WGP, and $\sum_{H\in \hset^*}|E(H)|\geq \frac{\sum_{H'\in \hset}|E(H')|}{4\log n}$. 
\end{lemma}

\begin{proof}
	The idea of the proof is to partition the graphs in $\hset$ geometrically into groups by the cardinalities of their edge sets, and then select a group maximizing the total number of edges in its subgraphs.
	
	Specifically, let $q=\ceil{2\log n}$. For all $1\leq i\leq q$, we let $\hset_i\subseteq \hset$ contain all graphs $H$ with  $2^{i-1}\le |E(H)|< 2^{i}$. It is easy to verify that $\hset_1\ldots,\hset_q$ partition $\hset$. Clearly, there must be an index $1\leq i^*\leq q$, with $\sum_{H\in \hset_{i^*}}|E(H)|\geq \frac{\sum_{H'\in \hset}|E(H')|}{q}\geq \frac{\sum_{H'\in \hset}|E(H')|}{4\log n}$. We set $r^*=|\hset_{i^*}|$, $h^*=2^{i^*}$, and we let $\hset^*=\hset_{i^*}$. It is immediate to verify that $\hset^*$ is a good solution to instance $\pwgp(G,r^*,h^*)$, of \WGP, and, from the above discussion,  $\sum_{H\in \hset^*}|E(H)|\geq \frac{\sum_{H'\in \hset}|E(H')|}{4\log n}$. 
%
\end{proof}

We are now ready to prove our second auxiliary lemma.

\begin{lemma}
\label{lem: WGP_BCS}
There is an efficient algorithm, whose input consists of an instance $\pcn(G,L)$ of the \BCS problem with $|V(G)|=N$,  where $N$ is greater than a sufficiently large constant, together with a solution $H$ to this instance, such that  $|E(H)|\ge 4N\log^6N$ holds. The algorithm computes integers $r,h>0$, such that $r\cdot h^2\le L\cdot \log^6N$ and $r\cdot h\ge \Omega(|E(H)|/ \log N)$ hold, together with a good solution $\set{H_1,\ldots,H_r}$ to instance $\pwgp(G,r,h)$ of \WGP.
\end{lemma}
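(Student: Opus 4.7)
The plan is to recursively partition $H$ using balanced separators for graphs of bounded crossing number, producing a family of vertex-disjoint pieces each containing at most some target $h$ edges, and then invoke Lemma~\ref{obs: equal_size} to extract a sub-family whose sizes are mutually within a factor of two. The central tool is the balanced-separator statement referenced earlier in the paper: a graph $F$ with $m'$ edges, maximum degree $\Delta$ and crossing number at most $L'$ admits a vertex partition $(V_1,V_2)$ with $|E(F[V_1])|,|E(F[V_2])|\le\tfrac{2}{3}m'$ whose cut contains at most $O(\sqrt{L'+\Delta m'})$ edges (obtained by applying the Lipton--Tarjan planar separator theorem to the planarization of an optimal drawing of $F$).

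Because this bound degrades with the maximum degree, the first step is to reduce to the bounded-degree case. For a threshold $D=\poly\log N$ to be fixed, I would replace every vertex $v$ of degree more than $D$ in $H$ by a cycle of $\deg_H(v)$ degree-$3$ copies, distributing the edges originally incident to $v$ among the copies arbitrarily. The cycle can be drawn inside an arbitrarily small disk around the image of $v$, so the crossing number does not grow; the edge count grows by at most a factor of $3$; and any vertex partition of the resulting graph $\tilde H$ pulls back to a vertex partition of $H$ with essentially the same edge distribution. From now on we may assume without loss of generality that $H$ itself has maximum degree at most $D$.

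Next, I would run the recursion on $H$: at each step apply the separator to the current piece, delete the cut edges, and recurse on the two sides; stop recursing on a piece once it contains at most $h$ edges, for a target $h$ to be fixed. The recursion has depth $O(\log(m/h))=O(\log N)$. Summing the bound $O(\sqrt{L'+Dm'})$ over all pieces at a given level by Cauchy--Schwarz gives $O(\sqrt{2^i(L+Dm)})$ at level $i$, and summing over levels yields a total cut of $O(\sqrt{(m/h)(L+Dm)})$. The hypothesis $|E(H)|\ge 4N\log^6 N$ combined with the Crossing Lemma forces $L\ge\Omega(m^3/N^2)\ge\Omega(m\log^{12}N)$, so for, e.g., $D=\log^4 N$ one has $Dm\le L$ and the total cut is $O(\sqrt{mL/h})$; choosing $h=\Theta(L\log^4 N/m)$ keeps the loss below $m/\log N$. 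After the recursion we hold a family of vertex-disjoint subgraphs of $H$ whose total edge count is at least $m-m/\log N=\Omega(m)$ and each has at most $h$ edges. Applying Lemma~\ref{obs: equal_size} then yields integers $r^*$ and $h^*\le h$ and a sub-family $\{H_1,\ldots,H_{r^*}\}$ forming a good solution to $\pwgp(G,r^*,h^*)$ with $r^*h^*\ge\Omega(m/\log N)$; since vertex-disjointness forces $r^*h^*\le 2m$, we get $r^*(h^*)^2\le 2m\cdot h\le L\log^6 N$, exactly as required.

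The main obstacle I anticipate is choosing the degree threshold $D$ so that both halves of the argument work simultaneously: on the one hand $Dm$ must be dominated by $L$ so that the separator bound essentially reduces to $\sqrt{L}$; on the other hand $D$ must be small enough that the recursion depth and separator overhead together lose only a polylogarithmic fraction of the edges. The tight spot is that the bound $L=\Omega(m\log^{12}N)$ coming from the Crossing Lemma leaves only a polylogarithmic window for $D$. Fortunately the log-exponent $6$ in the target inequality $rh^2\le L\log^6 N$, together with the factor $4\log N$ from Lemma~\ref{obs: equal_size}, gives enough slack to absorb the overhead from the $\log N$ levels of separator recursion and the grouping step.
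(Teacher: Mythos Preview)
Your approach has a genuine gap at the degree-reduction step. The assertion that ``any vertex partition of the resulting graph $\tilde H$ pulls back to a vertex partition of $H$ with essentially the same edge distribution'' is not justified and is in general false. When the cycle replacing a high-degree vertex $v$ is cut during the recursion, the copies of $v$ land in several pieces of $\tilde H$; to produce vertex-disjoint subgraphs of $H$ (which is what $\pwgp(G,r,h)$ demands), $v$ can be placed in only one of them, and every original edge of $H$ whose $v$-copy sits in a different piece is lost. This loss is not controlled by the total cut size: cutting the cycle of a degree-$d$ vertex in just two places can already cost up to $d/2$ original edges while contributing only $2$ to the cut. Since under the hypothesis $|E(H)|\ge 4N\log^6 N$ the average degree exceeds $8\log^6 N\gg D=\log^4 N$, essentially every vertex gets expanded, so the problem is pervasive rather than marginal. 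And you cannot simply drop the degree reduction: without it the $\Delta m'$ term in the separator bound can be as large as $Nm$, which may dwarf $L$ (e.g.\ when $m=\Theta(N\log^6 N)$ one has $L=\Theta(m^3/N^2)=\Theta(N\log^{18}N)$ but $Nm=\Theta(N^2\log^6 N)$).

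The paper avoids this difficulty by never reducing degrees. Its recursion stops a piece $H'$ not when it reaches a target size, but as soon as the approximate balanced cut returned by the algorithm of Theorem~\ref{thm: bal_cut alg} has value at least $|E(H')|/\log^2 N$. Pieces that end up \emph{sparse} (with $|E(H')|<|V(H')|\log^6 N$) are discarded; since $\sum_{H'}|V(H')|\le N$, they hold at most $N\log^6 N\le|E(H)|/4$ edges altogether. For each remaining \emph{dense} piece the trivial bound $\Delta(H')\le|V(H')|\le|E(H')|/\log^6 N$ already makes $\Delta m'$ negligible, and Theorem~\ref{thm: bal_cut size} together with the large balanced-cut value forces $\CrN(H')\ge\Omega(|E(H')|^2/\log^5 N)$. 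Summing and using $\sum_{H'}\CrN(H')\le\CrN(H)\le L$ gives $\sum_{H'}|E(H')|^2\le O(L\log^5 N)$, from which $r\cdot h^2\le L\log^6 N$ follows after the size-bucketing of Lemma~\ref{obs: equal_size}. So the hypothesis $|E(H)|\ge 4N\log^6 N$ is not used to invoke the Crossing Number Inequality globally, but rather to ensure that sparse pieces can be thrown away cheaply and that in every surviving piece the maximum degree is automatically a polylog factor below the edge count.
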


\begin{proof}
Let $\hat G$ be any graph. A \emph{cut} in $\hat G$ is a partition $(A,B)$ of vertices of $\hat G$ into two non-empty subsets. The \emph{value} of the cut $(A,B)$ is $|E(A,B)|$. For a parameter $1/2<\beta<1$, we say that cut $(A,B)$ is \emph{$\beta$-balanced}, if $|E(A)|,|E(B)|\le \beta\cdot |E(\hat G)|$. We say that cut $(A,B)$ is a \emph{minimum $\beta$-balanced cut} if it is a $\beta$-balanced cut whose value is the smallest among all such cuts.  We use the following theorem that follows from the results of \cite{ARV}, and was formally proved in \cite{chuzhoy2022subpolynomial}. 
	
	\begin{theorem}[Theorem 4.11 in the full version of \cite{chuzhoy2022subpolynomial}]
		\label{thm: bal_cut alg}
		There is an efficient algorithm, that, given a graph $\hat G$ with $|V(\hat G)|=\hat N$, computes a $\gamma$-balanced cut in $\hat G$, whose value is at most $O(\sqrt{\log \hat N})$ times the value of the minimum $(3/4)$-balanced cut in $\hat G$, for some universal constant $3/4<\gamma<1$ that does not depend on $\hat N$. 
	\end{theorem}
	
	We use the following theorem, that is a simple corollary of the Planar Separator Theorem by Lipton and Tarjan \cite{lipton1979separator}, and was formally proved in \cite{chuzhoy2022subpolynomial}. 
	A variation of this theorem for vertex-balanced cuts was proved in \cite{pach1996applications}. 
	
	\begin{theorem}[Lemma 4.12 in the full version of \cite{chuzhoy2022subpolynomial}]
		\label{thm: bal_cut size}
		Let $\hat G$ be a connected graph with $m$ edges and maximum vertex degree $\Delta<\frac{m}{2^{40}}$. If $\CrN(\hat G)\le \frac{m^2}{2^{40}}$, then the value of the minimum $(3/4)$-balanced cut in $\hat G$ is at most $O\left (\sqrt{\CrN(\hat G)+\Delta\cdot m}\right )$.
	\end{theorem}
	
	We are now ready to complete the proof of \Cref{lem: WGP_BCS}. Recall that we are given an instance $\pcn(G,L)$  of \BCS, where $|V(G)|=N$, together with a solution $H$ to this instance, such that  $|E(H)|\ge 4N\log^6 N$. 
	
	The algorithm starts by iteratively decomposing graph $H$ into smaller subgraphs. Throughout the decomposition procedure, we maintain a collection $\hset$ of connected subgraphs of $H$, that are all mutually disjoint. Each graph $H'\in \hset$ is marked as either \emph{active} or \emph{inactive}. At the beginning of the algorithm, we let $\hset$ contain all connected components of $H$, which are all marked as active. The algorithm performs iterations, as long as at least one graph in $\hset$ is inactive. 
	
	In order to execute an iteration, we select an arbitrary active graph $H'\in \hset$. We apply the algorithm from \Cref{thm: bal_cut alg} to compute a $\gamma$-balanced cut $(A,B)$ of $H'$. If $|E_{H'}(A,B)|\ge \frac{|E(H')|}{\log^2 N}$, then we mark $H'$ as inactive and continue to the next iteration. Otherwise, we remove graph $H'$ from $\hset$, and we add all connected components of graphs $H'[A]$ and $H'[B]$ to $\hset$, that are all marked as active graphs. We then continue to the next iteration. This completes the description of the decomposition procedure. Let $\hset'$ be the collection $\hset$ of subgraphs of $H$ that we obtain at the end of the procedure. We prove the following simple observation.
	
	\begin{observation}\label{obs: many edges remain}
		$\sum_{H'\in \hset'}|E(H')|\ge |E(H)|/2$.
	\end{observation}
	\begin{proof}
		We use a charging scheme. We observe the set $\hset$ of graphs over the course of the partitioning procedure. Throughout the execution of the partitioning procedure, we denote by $E'=E(H)\setminus \left(\bigcup_{H'\in \hset}E(H')\right )$, and we call the edges of $E'$ \emph{deleted edges}. Over the course of the partitioning procedure we maintain, for every edge $e\in E(H)$, a non-negative value $c(e)$, that we refer to as the \emph{charge} of $e$. We will ensure that, at every point of the algorithm's execution, $\sum_{e\in E(H)}c(e)\geq |E'|$, and that, for every edge $e\in E(H)$, $c(e)\leq 1/2$ always holds. We note that, even when an edge $e$ is added to the set $E'$ of deleted edges, its charge $c(e)$ may remain strictly positive. It is then easy to verify that, at the end of the algorithm, $|E'|\leq \sum_{e\in E(H)}c(e)\leq |E(H)|/2$ holds, and so $\sum_{H'\in \hset'}|E(H')|\geq |E(H)|/2$.
		
		It now remains to describe the assignment of the charge values $c(e)$ to the edges $e\in E(H)$, for which the above properties hold. Initially, $E'=\emptyset$, and we set $c(e)=0$ for every edge $e\in E(H)$.
		
		Consider now some iteration of the algorithm, and assume that, at the beginning of the iteration, $\sum_{e\in E(H)}c(e)\geq |E'|$ holds. Let $H'\in \hset$ be the graph that was processed in the current iteration, and let $(A,B)$ be the cut in $H'$ that the algorithm computed. If $|E_{H'}(A,B)|\ge \frac{|E(H')|}{\log^2 N}$, then no new edges were added to $E'$ in the current iteration, and the charge values $c(e)$ remain unchanged for all edges $e\in E(H)$. Assume now that $|E_{H'}(A,B)|< \frac{|E(H')|}{\log^2 N}$ holds, and denote $E''=E_{H'}(A,B)$. Then in the current iteration, the edges of $E''$ were added to set $E'$. We increase the charge $c(e)$ of every edge $e\in E(H')$ by $\frac{|E''|}{|E(H')|}$, and leave all other edge charges unchanged. This ensures that $\sum_{e\in E(H)}c(e)\geq |E'|$ holds at the end of the iteration. Since  $|E''|< \frac{|E(H')|}{\log^2 N}$, for every edge $e\in E(H')$, the charge $c(e)$ increases by at most $\frac 1{\log^2N}$ in the current iteration.
		
		From the above discussion, at the end of the algorithm, $|E'|\leq \sum_{e\in E(H)}c(e)$ holds. It now remains to show that for every edge $e\in E(H)$, $c(e)\leq 1/2$ holds at the end of the algorithm.
		
		Consider any edge $e\in E(H)$, and denote by $H_1,H_2,\ldots,H_r$ the sequence of subgraphs of $H$ that belonged to $\hset$ over the course of the algorithm, and contained $e$. In other words, $H_1=H$, and, for all $1<i\leq r$, graph $H_i$ was obtained via a balanced cut from graph $H_{i-1}$. Then the charge of $e$ has increased in at most $r+1$ iterations, and in each such iteration, the increase in the charge was bounded by   $\frac 1{\log^2N}$. Furthermore, for all $1<i\leq r$, $|E(H_i)|\leq \gamma\cdot |E(H_{i+1})|$ holds, and so $r\leq O(\log N)$ as $\gamma$ is a constant. Therefore, at the end of the algorithm, $c(e)\leq \frac{r+1}{\log^2N}\leq \frac 1 2$, since we have assumed that $N$ is sufficiently large.
	\end{proof}
	
	Consider now the final collection $\hset'$ of graphs. We say that a graph $H'\in \hset'$ is \emph{dense} iff $|E(H')|\ge |V(H')|\cdot \log^6 N$, and otherwise we say it is \emph{sparse}. We partition the set $\hset'$ of graphs into a collection $\hset^d$ containing all dense graphs and a collection $\hset^s$ containing all sparse graphs. We need the following simple observation.

	\begin{observation}\label{obs: few edges in dense graphs}
		$\sum_{H'\in \hset^d}|E(H')|\geq \frac{|E(H)|}{4}\geq N\log^6N$.
	\end{observation}
\begin{proof}
	Assume for contradiction that $\sum_{H'\in \hset^d}|E(H')|< \frac{|E(H)|}{4}$. Since, from \Cref{obs: many edges remain}
	$\sum_{H'\in \hset'}|E(H')|\ge |E(H)|/2$, we get that $\sum_{H'\in \hset^s}|E(H')|> \frac{|E(H)|}{4}$. However:
	
	\[ \sum_{H'\in \hset^s}|E(H')|\leq \sum_{H'\in \hset^s}|V(H')|\cdot \log^6N\leq N\log^6N.  \]
	
	We then conclude that $|E(H)|< 4\sum_{H'\in \hset^s}|E(H')|\leq 4N\log^6N$, contradicting the statement of \Cref{lem: WGP_BCS}.
\end{proof}

We also need the following obsevation.

\begin{observation}\label{obs: dense graph high CN}
	Let $H'\in \hset^d$ be a dense graph. Then $\CrN(H')\geq\Omega\left(\frac{|E(H')|^2}{\log^{5} N}\right )$.
\end{observation}

We provide the proof of \Cref{obs: dense graph high CN} below, after we complete the proof of \Cref{lem: WGP_BCS} using it. Let $r'=|\hset^d|$ and $h'=\max_{H'\in \hset^d}\set{|E(H')|}$. Clearly, collection $\hset^d$ of graphs is a valid solution to instance $\pwgp(G,r',h')$ of \WGP. We apply the algorithm from \Cref{obs: equal_size} to the soluton $\hset^d$ to instance $\pwgp(G,r',h')$ of \WGP. Recall that the algorithm computes positive integers $r\leq r',h\leq h'$, and a subset $\hset^*\subseteq \hset^d$ of subgraphs of $G$, such that $\hset^*$ is a good solution to instance $\pwgp(G,r,h)$, of \WGP, and $\sum_{H'\in \hset^*}|E(H')|\geq \frac{\sum_{H'\in \hset^d}|E(H')|}{4\log N}$. 
It now remains to verify that $r\cdot h^2\le L\cdot \log^6N$ and $r\cdot h\ge \Omega(|E(H)|/ \log N)$ hold.


Observe first that:

\[r\cdot h\geq \sum_{H'\in \hset^*}|E(H')|\ge \frac{\sum_{H'\in \hset^d}|E(H')|}{4\log N} \geq \frac{|E(H)|}{16\log N}, \]

from \Cref{obs: few edges in dense graphs}.

Finally, since for each graph $H'\in \hset^d$, $\CrN(H')\geq\Omega\left(\frac{|E(H')|^2}{\log^{5} N}\right )$ holds, and since, for every graph $H'\in \hset^*$, $|E(H')|\geq h/2$ holds, we get that:

\[r\cdot h^2\le \sum_{H'\in \hset^*}4\cdot |E(H')|^2
\le \sum_{H'\in \hset^*}O(\log^{5} N)\cdot \CrN(H')
\le O(\log^5 N)\cdot \CrN(H)\le L\cdot \log^6 N,
\]

since $N$ is large enough.

In order to complete the proof of \Cref{lem: WGP_BCS}, it is now enough to prove \Cref{obs: dense graph high CN}, which we do next.

\begin{proofof}{\Cref{obs: dense graph high CN}}
	 Since graph $H'$ is marked inactive by the algorithm, the $\gamma$-balanced cut of $H'$ computed by the algorithm from \Cref{thm: bal_cut alg} had value at least $\frac{|E(H')|}{\log^2 N}$. Therefore the minimum $(3/4)$-balanced cut of $H'$ has value at least $\Omega\left (\frac{|E(H')|}{\log^{2.5} N}\right )$. 

	Let $\Delta$ denote the maximum vertex degree in $H'$. Clearly, $\Delta\leq V(H')$ must hold. On the other hand, from the definition of a dense graph, $|E(H')|\geq |V(H')|\cdot \log^6N$, and so $\Delta\leq |V(H')|\leq \frac{|E(H')|}{\log^6N}<\frac{|E(H')|}{2^{40}}$, if $N$ is sufficiently large.

	Recall that, from \Cref{thm: bal_cut size}, either $\CrN(H')\ge \frac{|E(H')|^2}{2^{40}}$, or the value of minimum $(3/4)$-balanced cut in $H'$ is at most  $\sqrt{\CrN(H')+\Delta\cdot |E(H')|}$. In the former case, we immediately get that $\CrN(H')\geq\Omega\left(\frac{|E(H')|^2}{\log^{5} N}\right )$. In the latter case, since the value of the minimum $(3/4)$-balanced cut in $H'$ is $\Omega\left (\frac{|E(H')|}{\log^{2.5} N}\right )$, we get that
	 $\sqrt{\CrN(H')+\Delta\cdot |E(H')|}\geq \Omega\left (\frac{|E(H')|}{\log^{2.5} N}\right )$. Moreover, since
	$\Delta\leq \frac{|E(H')|}{\log^6N}$:
	
	\[ \CrN(H')\geq \Omega\left (\frac{|E(H')|^2}{\log^{5} N}\right )-\Delta\cdot |E(H')|\geq \Omega\left (\frac{|E(H')|^2}{\log^{5} N}\right )-\frac{|E(H')|^2}{\log^6N}\geq  \Omega\left (\frac{|E(H')|^2}{\log^{5} N}\right ),    \]
	
	since $N$ is sufficiently large.
\end{proofof}	
\end{proof}

\subsection{Reduction from \BCS to \WGP: Proof of \Cref{thm: alg_WGP gives alg_CN}}
\label{subsec: from BCS to WGP}

In this subsection we prove \Cref{thm: alg_WGP gives alg_CN}. 
Let $\pcn(G,L)$ be a given instance of \BCS, with $|V(G)|=N$.
Note that we can assume without loss of generality that $G$ contains no isolated vertices, since all such vertices can be deleted without changing the problem.

As our first step, we compute an arbitrary spanning forest $F$ of graph $G$. Since $G$ contains no isolated vertices, $|E(F)|\geq N/2$. Clearly, $\CrN(F)=0$. Consider now the optimal solution $H^*$ to instance $\pcn(G,L)$ of \BCS. If $|E(H^*)|<4N\cdot \log^6N$, then $F$ is a factor-$O(\log^6 N)$ approximate solution to instance $\pcn(G,L)$.

Assume now that $|E(H^*)|\ge 4N\log^6N$. Then, from \Cref{lem: WGP_BCS}, there exist integers 
$r,h>0$ with $r\cdot h^2\le L\cdot \log^6N$ and $r\cdot h\ge \Omega(|E(H^*)|/ \log N)$, such that there exists a good solution to instance $\pwgp(G,r,h)$ of \WGP. We will now attempt to guess such integers $r,h$, and then use the approximation algorithm for the \WGP problem, in order to compute a solution to the corresponding instance $\pwgp(G,r,h)$ of \WGP, whose value is sufficiently high.

We say that a pair $(r,h)$ of positive integers is \emph{eligible}, if $r\cdot h^2\le L\cdot\log^6 N$. 
For each eligible pair $(r,h)$ of integers, we consider the instance $\pwgp(G,r,h)$ of \WGP and we use the $\alpha(n)$-approximation algorithm for the \WGP problem to compute a solution $\hset_{r,h}$ to the instance $\pwgp(G,r,h)$, such that 
$\sum_{H'\in \hset_{r,h}}|E(H')|\ge \frac{\optwgp(G,r,h)}{\alpha(N)}$ (since $|V(G)|=N$). We use the following observation.

\begin{observation}
\label{obs: eligible}
If $\optcn(G,L)\ge 4N\log^6 N$, then there exists an eligible pair $(r,h)$ of integers, with: 
\[\sum_{H'\in \hset_{r,h}}|E(H')|\ge \Omega\bigg(\frac{\optcn(G,L)}{\alpha(N)\log N}\bigg).\]
\end{observation}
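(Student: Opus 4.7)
The plan is to apply Lemma \ref{lem: WGP_BCS} directly to an optimal solution of the \BCS instance, and then invoke the approximation guarantee of the \WGP algorithm on the \WGP instance that the lemma produces.

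More concretely, let $H^{*}\subseteq G$ be an optimal solution to $\pcn(G,L)$, so that $|E(H^{*})|=\optcn(G,L)$ and $\CrN(H^{*})\le L$. By assumption $|E(H^{*})|\ge 4N\log^{6}N$, so the hypothesis of Lemma \ref{lem: WGP_BCS} is satisfied (with $H^{*}$ as the input solution). The lemma then yields positive integers $r,h$ such that
\[
r\cdot h^{2}\le L\cdot\log^{6}N\qquad\text{and}\qquad r\cdot h\ge \Omega\!\left(\frac{|E(H^{*})|}{\log N}\right)=\Omega\!\left(\frac{\optcn(G,L)}{\log N}\right),
\]
together with a \emph{good} solution $\{H_{1},\ldots,H_{r}\}$ to instance $\pwgp(G,r,h)$ of \WGP. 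The first inequality says that the pair $(r,h)$ is eligible in our enumeration, so the algorithm will have constructed the solution $\hset_{r,h}$ for this pair.

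It remains to lower-bound $\sum_{H'\in\hset_{r,h}}|E(H')|$. Since $\{H_{1},\ldots,H_{r}\}$ is a \emph{good} solution, each $|E(H_{i})|\ge h/2$, and therefore
\[
\optwgp(G,r,h)\;\ge\;\sum_{i=1}^{r}|E(H_{i})|\;\ge\;\frac{r\cdot h}{2}\;\ge\;\Omega\!\left(\frac{\optcn(G,L)}{\log N}\right).
\]
By the guarantee of the $\alpha(n)$-approximation algorithm for \WGP applied to $\pwgp(G,r,h)$ (and using $|V(G)|=N$),
\[
\sum_{H'\in\hset_{r,h}}|E(H')|\;\ge\;\frac{\optwgp(G,r,h)}{\alpha(N)}\;\ge\;\Omega\!\left(\frac{\optcn(G,L)}{\alpha(N)\log N}\right),
\]
as required. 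Since the only nontrivial ingredients are Lemma \ref{lem: WGP_BCS} and the assumed approximation algorithm, there is no real obstacle in this step; the observation is essentially a bookkeeping consequence of the two. The main work has already been done in Lemma \ref{lem: WGP_BCS}, and the only thing to verify carefully is that the pair $(r,h)$ produced by that lemma is indeed eligible, which is immediate from the first displayed inequality above.
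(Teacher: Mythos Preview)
Your proof is correct and follows essentially the same approach as the paper: apply Lemma~\ref{lem: WGP_BCS} (existentially) to an optimal solution $H^*$ of $\pcn(G,L)$ to obtain an eligible pair $(r,h)$ with a good \WGP solution of value $\Omega(\optcn(G,L)/\log N)$, and then invoke the $\alpha(N)$-approximation guarantee on $\pwgp(G,r,h)$. The paper's argument is identical up to notation.
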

\begin{proof}
Let $H^*$ be an optimal solution to the instance $\pcn(G,L)$. From \Cref{lem: WGP_BCS}, there is an eligible pair $(r^*,h^*)$ of integers, with $r^*\cdot h^*\geq \Omega(E(H^*)|/\log N)$, so that there exists a good solution $\hset^*$ to instance $\pwgp(G,r^*,h^*)$ of \WGP. From the definition of a good solution, 
\[\sum_{H'\in \hset^*}|E(H')|\ge \frac{r^*h^*}{2}\ge \Omega\bigg(\frac{|E(H^*)|}{\log N}\bigg)
\ge \Omega\bigg(\frac{\optcn(G,L)}{\log N}\bigg).\]
Therefore, if $\hset_{r^*,h^*}$ is the approximate solution that we obtained for instance $\pwgp(G,r^*,h^*)$ of \WGP, then:

$$\sum_{H'\in \hset_{r^*,h^*}}|E(H')|\ge \Omega\bigg(\frac{\optwgp(G,r^*,h^*)}{\alpha(N)}\bigg)
\ge
\Omega\bigg(\frac{\sum_{H'\in \hset^*}|E(H')| }{\alpha(N)}\bigg)
\ge \Omega\bigg(\frac{\optcn(G,L)}{\alpha(N)\log N}\bigg),$$
and the observation follows.
\end{proof}

Let $(r',h')$ be the eligible pair of integers that maximizes $\sum_{H\in \hset_{r',h'}}|E(H)|$. 
For every graph $H\in \hset_{r',h'}$, let $\tilde H$ be a graph that is obtained from $H$ as follows. We set $V(\tilde H)=V(H)$, and we let $E(\tilde H)$ contain an arbitrary subset of $\floor{\frac{|E(H)|}{\log^3N}}$ edges of $E(H)$. Note that:

\[ \frac{|E(H)|}{2\log^3N}\leq  |E(\tilde H)|\leq \frac{h'}{\log^3N}.\]

Finally, we define a graph $H'=\bigcup_{H\in \hset_{r',h'}}\tilde H$. Since the crossing number of any $m$-edge graph is bounded by $m^2$, it is easy to verify that:

\[\CrN(H')\le \sum_{H\in \hset_{r',h'}}\CrN(\tilde H)\le r'\cdot\frac{(h')^2}{\log^6N}\le L.\]

Moreover:

\[|E(H')|=\sum_{H\in \hset_{r',h'}}|E(\tilde H)|\geq \sum_{H\in \hset_{r,h}}\frac{|E(H)|}{\log^3N}\geq \Omega\bigg(\frac{\optcn(G,L)}{\alpha(N)\log^4 N}\bigg). \]

Recall that we have computed a spanning forest $F$ of $G$. We return the graph in $\set{F,H'}$ that contains more edges as the outcome of the algorithm. From the above discussion, we obtain an $O(\alpha(N)\cdot\poly\log N)$-approximate solution.

\subsection{Reduction from \WGP to \BCS -- Proof of \Cref{thm: alg for CN gives alg for WGP}}
\label{subsec: from WGP to BCS}

In this subsection we prove \Cref{thm: alg for CN gives alg for WGP}.
Let $\pwgp(G,r,h)$ be the input instance of \WGP, and  denote $n=|V(G)|$. For convenience, we will assume that the value $C^*=\optwgp(G,r,h)$ is known to the algorithm: since $0\leq \optwgp(G,r,h)\leq |E(G)|$ holds, and $\optwgp(G,r,h)$ is an integer, we can try all possible guesses for the value $C^*$, and then output the best of the resulting solutions. It is sufficient to ensure that the algorithm correctly computes an $O((\alpha(n))^2\cdot\poly\log n)$-approximate solution to instance $\pwgp(G,r,h)$ if the guess $C^*$ is correct, that is, $C^*=\optwgp(G,r,h)$. From now on we assume that we are given a value $C^*=\optwgp(G,r,h)$.

We distinguish between two cases. The first case happens if $C^*\geq 16n\alpha(n)\log^7n$. In this case, we proceed as follows. By applying  \Cref{obs: equal_size} to the optimal solution to instance $\pwgp(G,r,h)$, we conclude that there are positive integers $r^*\le r, h^*\le h$ and a collectoin $\hset^*$ of subgraphs of $G$, such that $\hset^*$ is a good solution to instance $\pwgp(G,r^*,h^*)$, of \WGP, and $r^*h^*\geq \sum_{H\in \hset^*}|E(H)|\geq  \frac{C^*}{4\log n}$. Since the values of integers $r^*,h^*$ are not known to our algorithm, we will try all possible candidate values $1\leq r'\leq r$ and $1\leq h'\leq h$ with $r'h'\geq \frac{C^*}{4\log n}$.
 For each such pair $(r',h')$ of integers, we will compute a solution $\hset_{r',h'}$ to instance $\pwgp(G,r',h')$. We will then output the best solution from among $\set{\hset_{r',h'} \mid r'h'\geq \frac{C^*}{4\log n}}$. It is sufficient to ensure that, for integers $(r',h')=(r^*,h^*)$, the value of the resulting solution $\hset_{r',h'}$ is close to $C^*=\optwgp(G,r,h)$.

Consider now a pair of integers $1\leq r'\leq r$ and $1\leq h'\leq h$ with $r'h'\geq \frac{C^*}{4\log n}$, and assume that values $r',h'$ were guessed correctly, that is, $(r',h')=(r^*,h^*)$. In other words, there is a good solution $\hset^*$ to instance  $\pwgp(G,r^*,h^*)$, of \WGP, whose value is at least  $\frac{C^* }{4\log n}$. Consider now the graph $H'=\bigcup_{H\in \hset^*}H$. Since the crossing number of a graph $H$ may not be higher than $|E(H)|^2$, we get that $\CrN(H)\leq r'\cdot (h')^2$. Let $L=r'\cdot (h')^2$, and consider instance $\pcn(G,L)$ of the \BCS problem. From the above discussion, the value of the optimal solution to this problem is at least $|E(H)|\geq \frac{ \optwgp(G,r,h) }{4\log n}$. Therefore, by applying the $\alpha(N)$-approximation algorithm for \BCS to this instance, we obtain a solution $G'$ to instance  $\pcn(G,L)$ of \MBCS, whose value is at least  $\frac{ C^* }{4\alpha(n)\log n}$. Since we have assumed that $C^*\geq 16n\alpha(n)\log^7n$, we get that $|E(G')|\geq 4n\log^6n$. We can now use the algorithm from \Cref{lem: WGP_BCS} to compute integers $r'',h''>0$, such that $r''\cdot (h'')^2\leq L\cdot \log ^6n$, together with a good solution $\hset'$ to instance $\pwgp(G,r'',h'')$, whose value is at least $\Omega\left(\frac{|E(G')|}{\log n} \right )\geq \Omega \left( \frac{ C^* }{\alpha(n)\log^2 n}\right )$. Notice however that it is possible that $r''>r'$ or $h''>h'$ hold, so the solution that we obtain may not be a valid solution to instance $\pwgp(G,r',h')$ of \WGP. We show that, if $(r',h')=(r^*,h^*)$, then $h''$ cannot be much larger than $h'$. We then slightly modify solution $\hset'$, to transform it into a valid solution $\hset_{r',h'}$ to instance $\pwgp(G,r',h')$, while only decreasing the solution cost slightly. This completes the computation of the solution $\hset_{r',h'}$ associated with parameters $(r',h')$, and the algorithm for the first case.

Consider now the second case, where $C^*< 16n\alpha(n)\log^7n$. In this case, we start by computing a maximal subgraph $F$ of $G$, such that $F$ is a forest, with maximum vertex degree at most $h$. 
Let $S$ be the set of all vertices of $G$ that are adjacent to at least one edge of $F$, and denote $|S|=n'$.

We consider two subcases of Case 2. The first subcase happens if $|E(F)|\geq \frac{C^*}{32\alpha(n')\cdot \log^7n'}$. In this case, we show an algorithm that decomposes $F$ into $r$ subgraphs containing at most $h$ edges each, so that the total number of edges in all such subgraphs is close to $|E(F)|$. Therefore, we obtain a solution $\hset$ to instance $\pwgp(G,r,h)$, whose value is close to $C^*=\optwgp(G,r,h)$. Consider now the second subcase, where $|E(F)|<\frac{C^*}{32 \alpha(n')\cdot \log^7n'}$.  We show that in this case, there is a solution to instance $\pwgp(G[S],r,h)$ of \WGP, whose value is at least  $\frac{C^*}{2}$. From now on we only consider instane $\pwgp(G[S],r,h)$. We assume again that we are given the value $C^{**}$ of the optimal solution to this instance, where $\frac{C^*}2\leq C^{**}\leq C^*$. As before, this can be assumed since we can try all guesses for the value $C^{**}$, and it is sufficient to ensure that the algorithm works correctly if the value $C^{**}$ is guessed correctly. Recall that we have denoted $n'=|S|$. Since $(h-1)\cdot |S|\leq |E(F)|\leq \frac{C^*}{32 \alpha(n')\cdot \log^7n'}$, while $C^{**}\geq \frac{C^*}{2}$, we get that $C^{**}\geq 16n'\cdot \alpha(n')\cdot \log^7n'$.

We have now obtained a new instance $\pwgp(G[S],r,h)$ of $\WGP$, in which the value of the optimal solution $C^{**}\geq 16n\alpha(n)\log^7n'$, where $n'=|S|$. We can now repeat our algorithm for Case 1, to obtain the desired approximate solution to instance $\pwgp(G[S],r,h)$, which, in turn will provide an approximate solution to the original instance $\pwgp(G,r,h)$.

We now turn to the formal proof of \Cref{thm: alg for CN gives alg for WGP}. We assume that we are given an instance $\pwgp(G,r,h)$ of the \WGP problem, where $|V(G)|=n$, together with a guess $C^*$ on the value $\optwgp(G,r,h)$ of the optimal solution to this instance. Our goal is to compute a solution $\hset$ to instance $\pwgp(G,r,h)$ of \WGP, such that, if $C^*=\optwgp(G,r,h)$, then the value of the solution $\hset$ is at least $\Omega\left(\frac{C^*}{(\alpha(n))^2\poly\log n}\right )$. Note that we can assume that $n$ is greater than a sufficiently large constant, since otherwise we can solve the problem efficiently via exhaustive search.
We distinguish between two cases, depending on whether 
$C^*\geq 16n\alpha(n)\log^7n$ holds.

\subsubsection{Case 1: $C^*\geq 16n\alpha(n)\log^7n$}

Applying \Cref{obs: equal_size} to the optimal solution tp instance $\pwgp(G,r,h)$, we conclude that there are positive integers $r^*\le r, h^*\le h$ and a collection $\hset^*$ of subgraphs of $G$, such that $\hset^*$ is a good solution to instance $\pwgp(G,r^*,h^*)$, of \WGP. Moreover, if $C^*=\optwgp(G,r,h)$, then $r^*\cdot h^*\geq \sum_{H\in \hset^*}|E(H)|\geq  \frac{C^*}{4\log n}$. Our algorithm tries all possible values of integers $1\leq r'\leq r$ and $1\leq h'\leq h$ with $r'h'\geq \frac{C^*}{4\log n}$.
For each such pair $(r',h')$ of integers, we will compute a solution $\hset_{r',h'}$ to instance $\pwgp(G,r',h')$. At the end, our algorithm will output the best solution from among $\set{\hset_{r',h'} \mid r'h'\geq \frac{C^*}{4\log n}}$. It is sufficient to ensure that, for integers $(r',h')=(r^*,h^*)$, the value of the resulting solution $\hset_{r',h'}$ is  at least $\Omega\left(\frac{C^*}{(\alpha(n))^2\poly\log n}\right )$.

From now on we fix a pair $1\leq r'\leq r$, $1\leq h'\leq h$ of integers, with $r'h'\geq \frac{C^*}{4\log n}$.
Let $L=r'\cdot (h')^2$.
We apply the $\alpha(N)$-approximation algorithm for the \BCS problem to instance $\pcn(G,L)$, and obtain a solution that we denote by $H$. We use the following observation.

\begin{observation}\label{obs: small solution incorrect values}
	If $C^*=\optwgp(G,r,h)$, $r'=r^*$ and $h'=h^*$, then $|E(H)|\geq \frac{C^*}{4\alpha(n)\log n}$ must hold.
\end{observation}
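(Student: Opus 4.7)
The plan is to exhibit a feasible solution to the \BCS instance $\pcn(G,L)$ of value at least $C^*/(4\log n)$, and then invoke the approximation guarantee of the \BCS algorithm. Specifically, assume the guesses are correct, so that $C^*=\optwgp(G,r,h)$ and $(r',h')=(r^*,h^*)$. Then there is a good solution $\hset^*=\{H_1^*,\ldots,H_{r^*}^*\}$ to instance $\pwgp(G,r^*,h^*)$, with $|E(H_i^*)|\le h^*$ for every $i$ and $\sum_{i=1}^{r^*}|E(H_i^*)|\geq C^*/(4\log n)$ (as obtained from \Cref{obs: equal_size} applied to the optimal solution to $\pwgp(G,r,h)$).

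Let me form the graph $H^\star = \bigcup_{i=1}^{r^*} H_i^*$. Since the $H_i^*$'s are vertex-disjoint subgraphs of $G$, the graph $H^\star$ is a subgraph of $G$ and its connected components are contained in the individual $H_i^*$'s. The crossing number is sub-additive on vertex-disjoint unions, so $\CrN(H^\star)\le \sum_{i=1}^{r^*}\CrN(H_i^*)$. Using the elementary bound $\CrN(H_i^*)\le |E(H_i^*)|^2\le (h^*)^2$, I then obtain $\CrN(H^\star)\le r^*\cdot(h^*)^2 = r'\cdot(h')^2 = L$. Hence $H^\star$ is a feasible solution to the \BCS instance $\pcn(G,L)$, and its number of edges equals $\sum_{i=1}^{r^*}|E(H_i^*)|\ge C^*/(4\log n)$. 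This gives $\optcn(G,L)\ge C^*/(4\log n)$.

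Finally, since $H$ was produced by applying the $\alpha(N)$-approximation algorithm for \BCS to $\pcn(G,L)$ and $N=|V(G)|=n$, the guarantee of that algorithm yields
\[
|E(H)|\ge \frac{\optcn(G,L)}{\alpha(n)}\ge \frac{C^*}{4\alpha(n)\log n},
\]
as required. I do not anticipate any real obstacle here; the only thing to double-check is the sub-additivity of crossing number under vertex-disjoint union (which is immediate by drawing each component in a separate region of the plane) and the use of the bound $\CrN(H_i^*)\le |E(H_i^*)|^2$, both of which are elementary facts already invoked implicitly elsewhere in the paper.
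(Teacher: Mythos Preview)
Your proof is correct and essentially identical to the paper's own argument: both form $\hat H=\bigcup_{H'\in\hset^*}H'$, bound $\CrN(\hat H)\le r'(h')^2=L$ via sub-additivity on vertex-disjoint unions together with $\CrN(H')\le |E(H')|^2$, and then invoke the $\alpha(n)$-approximation guarantee.
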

\begin{proof}
	Assume that $C^*=\optwgp(G,r,h)$, $r'=r^*$ and $h'=h^*$. Recall that there exists a good solution $\hset^*$ to instance  $\pwgp(G,r^*,h^*)$, with $\sum_{H'\in \hset^*}|E(H')|\geq  \frac{C^*}{4\log n}$. Consider the graph $\hat H=\bigcup_{H'\in \hset^*}H'$. Since the crossing number of a graph $G'$ may not be higher than $|E(G')|^2$, we get that $\CrN(\hat H)\leq \sum_{H'\in \hset^*}\CrN(H')\leq r'\cdot (h')^2=L$. Therefore, $\hat H$ is a valid solution to instance $\pcn(G,L)$, whose value is at least  $\frac{C^*}{4\log n}$. Since we use an $\alpha(N)$-approximation algorithm for \BCS, and $|V(G)|=n$, we get that $|E(H)|\geq \frac{C^*}{4\alpha(n)\log n}$.
\end{proof}

If $|E(H)|<\frac{C^*}{4\alpha(n)\log n}$, then we terminate the algorithm and return an empty solution: in this case, we are guaranteed that either $C^*$, or $r',h'$ are guessed incorrectly. Therefore, we assume from now on that 
$|E(H)|\geq \frac{C^*}{4\alpha(n)\log n}$ holds. Note that, since in Case 1, $C^*\geq 16n\alpha(n)\log^7n$ holds, we are guaranteed that $|E(H)|\geq 4n\log^6n$.

Next, we apply the algorithm from \Cref{lem: WGP_BCS} to instance $\pcn(G,L)$ of \BCS, to compute integers $r'',h''>0$, such that $r''\cdot (h'')^2\leq L\cdot \log^6n=r'\cdot (h')^2\cdot \log^6n$, and $r''\cdot h''\geq \Omega\left(\frac{|E(H)|}{\log n}\right )\ge  \Omega\left(\frac{C^*}{\alpha(n)\log^2 n}\right )$. The algorithm also computes a good solution $\hset'$ to instance $\pwgp(G,r'',h'')$ of \WGP.
Note that, while the value of the solution $\hset'$ to instance $\pwgp(G,r'',h'')$  is guaranteed to be close to $C^*$, we are only guaranted that $\hset'$ is a valid solution to instance  $\pwgp(G,r'',h'')$ of the problem, and it may not be a valid solution to instance  $\pwgp(G,r',h')$. In our next steps, we will either correctly established that at least one of $C^*,r',h'$ was not guessed correctly; or we will slightly modify $\hset'$ to obtain a valid solution to intance  $\pwgp(G,r',h')$ of \WGP, whose value remains close to that of $\hset'$.
We start with the following observation.

\begin{observation}\label{obs: h not too large}
	There is a large enough constant $c'$, such that,	if $C^*=\optwgp(G,r,h)$, $r'=r^*$ and $h'=h^*$ hold, then $h''\leq c'h'\cdot \alpha(n)\cdot \log^8n$.
\end{observation}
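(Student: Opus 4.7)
The plan is to bound $h''$ by simply combining the two inequalities governing $r''$ and $h''$ that we have already established, and then using a size bound on $r'h'$ that comes from the fact that $\hset^*$ is a good solution. Specifically, from the algorithm of Lemma~\ref{lem: WGP_BCS} we know $r''\cdot(h'')^2 \leq L\cdot\log^6 n = r'(h')^2\log^6 n$, and we also know $r''h'' \geq \Omega\!\left(\frac{C^*}{\alpha(n)\log^2 n}\right)$. Dividing the former by the latter isolates $h''$:
\[
h'' \;=\; \frac{r''(h'')^2}{r''h''} \;\leq\; O\!\left(\frac{r'(h')^2\cdot\alpha(n)\cdot\log^8 n}{C^*}\right).
\]

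The main step is then to show $r'(h')^2 \leq O(C^*\cdot h')$, equivalently $r'h' \leq O(C^*)$. This is where I would invoke the assumption $(r',h')=(r^*,h^*)$. Since $\hset^*$ is a \emph{good} solution to $\pwgp(G,r^*,h^*)$, every subgraph in $\hset^*$ contains at least $h^*/2$ edges, so $\sum_{H\in\hset^*}|E(H)| \geq r^*h^*/2$. On the other hand, this sum is at most $\optwgp(G,r^*,h^*)$, which is at most $\optwgp(G,r,h)=C^*$ because any feasible solution to $\pwgp(G,r^*,h^*)$ (with $r^*\le r$ and $h^*\le h$) can be extended with empty subgraphs to a feasible solution of $\pwgp(G,r,h)$ of the same value. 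Hence $r'h' = r^*h^* \leq 2C^*$.

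Plugging this in, we obtain
\[
h'' \;\leq\; O\!\left(\frac{h'\cdot (r'h')\cdot\alpha(n)\cdot\log^8 n}{C^*}\right) \;\leq\; O\!\left(\frac{h'\cdot 2C^*\cdot\alpha(n)\cdot\log^8 n}{C^*}\right) \;=\; O\!\left(h'\cdot\alpha(n)\cdot\log^8 n\right),
\]
which is exactly the claimed bound for a suitable constant $c'$. I do not expect any real obstacle here; every inequality needed is either a direct restatement of the conclusion of Lemma~\ref{lem: WGP_BCS} or an immediate consequence of the definition of a good solution and the trivial monotonicity $\optwgp(G,r^*,h^*)\leq\optwgp(G,r,h)$. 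The only thing to be careful about is making sure all three guesses ($C^*$, $r'$, $r''$) are being used consistently, in particular that the lower bound $r''h''\geq\Omega(C^*/(\alpha(n)\log^2 n))$ really relied only on $C^*$ being correct (via Observation~\ref{obs: small solution incorrect values}) together with the guarantee of Lemma~\ref{lem: WGP_BCS}, both of which indeed hold under the hypothesis of the observation.
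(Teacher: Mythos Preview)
Your proposal is correct and follows essentially the same approach as the paper: both combine the two bounds $r''(h'')^2 \leq r'(h')^2\log^6 n$ and $r''h'' \geq \Omega(C^*/(\alpha(n)\log^2 n))$ to isolate $h''$, and both use the fact that $\hset^*$ is a good solution (together with the monotonicity $\optwgp(G,r^*,h^*)\leq \optwgp(G,r,h)=C^*$) to conclude $r'h'\leq 2C^*$. The paper phrases it as a proof by contradiction while you divide directly, but the content is identical.
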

\begin{proof}
Recall that we have established that:

\[ r''\cdot h''\geq \Omega\left(\frac{C^*}{\alpha(n)\log^2 n}\right ).  \]	

Assume for contradicton that $h''> c'h'\cdot \alpha(n)\cdot \log^8n$, where $c'$ is a large enough constant. Then:

\[r''\cdot (h'')^2\geq 4h'\cdot C^*\log^6n.\]

Notice that $C^*\geq r'h'/2$ must hold. Indeed, since $r'\leq r$ and $h'\leq h$, any solution to  instance $\pwgp(G,r',h')$ of \WGP is also a feasible solution to instance $\pwgp(G,r,h)$. Since we have assumed that $r'=r^*$ and $h'=h^*$, there is a good solution to instance $\pwgp(G,r',h')$ of \WGP, and the value of any such good solution is at least $r'h'/2$. Since we have assumed that $C^*=\optwgp(G,r,h)$, we get that $r'h'/2\leq C^*$ must hold.
We conclude that, if $h''> h'\cdot \alpha(n)\cdot \log^8n$, then:

\[r''\cdot (h'')^2\geq  2r'(h')^2\log^6n. \]

But we have already established above that $r''\cdot (h'')^2\leq r'\cdot (h')^2\cdot \log^6n$, a contradiction.
\end{proof}

We will now slightly modify the collection $\hset'$ of subgraphs of $G$ to obtain a feasible solution to instance $\pwgp(G,r,h)$, whose value is close to the value of $\hset'$. First, for every cluster $H\in \hset'$, if $|E(H)|>h'/2$, then we discard arbitrary edges from graph $H$, until $|E(H)|=h'/2$ holds. From \Cref{obs: h not too large}, if $C^*=\optwgp(G,r,h)$, $r'=r^*$ and $h'=h^*$ hold, then the total number of edges in the graphs of $\hset'$ decreases by at most factor $O(\alpha(n)\cdot \log^8n)$ as the result of this transformation, and so $\sum_{H\in \hset'}|E(H)|\geq \Omega\left(\frac{r''\cdot h''}{\alpha(n)\cdot \log^8n}\right )\geq  \Omega\left(\frac{C^*}{(\alpha(n))^2\log^{10} n}\right )$ holds. Also, if, at the end of this transformation, $\sum_{H\in \hset'}|E(H)|>\frac{h'\cdot r'}{2}$ holds, then we discard arbitrary edges from the graphs in $\hset'$ until $\sum_{H\in \hset'}|E(H)|\leq \frac{h'\cdot r'}{2}$ holds.  Since $h'r'\geq\frac{C^*}{4\log n}$, $\sum_{H\in \hset'}|E(H)|\geq \Omega\left(\frac{C^*}{(\alpha(n))^2\log^{10} n}\right )$ continues to hold.

If $|\hset'|\leq r'$, then we have obtained a valid solution to instance $\pwgp(G,r',h')$ of value $\Omega\left(\frac{C^*}{(\alpha(n))^2\log^{10} n}\right )$. Otherwise, we perform further transformations to the set $\hset'$ of graphs as follows.

While $|\hset'|>r'$, we let $H',H''\in \hset'$ be a pair of graphs with smallest number of edges, breaking ties arbitrarily. We remove $H'$ and $H''$ from $\hset'$, and we add a new graph $H=H'\cup H''$ to $\hset'$ instead. The procedure is terminated once $|\hset'|=r'$ holds. We claim that at the end of this procedure, for every graph $H\in \hset'$, $|E(H)|\leq h'$ holds. Indeed, assume otherwise. Consider the first time when a graph $H$ with $|E(H)|>h'$ was added to $\hset'$. Then $H=H'\cup H''$ must hold, where $H',H''$ are two graphs that belonged to $\hset'$ prior to this iterations. Then at least one of these two graphs must contain more than $h'/2$ edges. From the choice of the graphs $H',H''$, and from the fact that $|\hset'|>r'$ held at the beginning of the iteration, we get that, at the beginning of the iteration, there were at least $r'$ graphs $\tilde H\in \hset'$ with $|E(\tilde H)|>h'/2$. But then $\sum_{\tilde H\in \hset'}|E(\tilde H)|>\frac{r'h'}{2}$ held at the beginning of the iteration. Since the total number of edges contained in the graphs of $\hset'$ does not change over the course of the algorithm, we reach a contradiction, since we have ensured that, at the beginning of the algorithm, $\sum_{\tilde H\in \hset'}|E(\tilde H)|\leq \frac{h'\cdot r'}{2}$ held. We return the resulting collection $\hset'$ of subgraphs of $G$, which is guaranteed to be a feasible solution to instance $\pwgp(G,r',h')$, of value at least $\Omega\left(\frac{C^*}{(\alpha(n))^2\log^{10} n}\right )$.

\subsubsection{Case 2: $C^*< 16n\alpha(n)\log^7n$}

In this case, we start by computing a maximal subgraph $F$ of $G$, such that $F$ is a forest, and maximum vertex degree in $F$ is at most $h$. Such a graph $F$ can be computed via a simple greedy algorithm. We start with graph $F$ containing the set $V(F)=V(G)$ of vertices and no edges. We then consider the edges of $G$ one by one. For each such edge $e\in E(G)$, if graph $F\cup \set{e}$ remains a forest with maximum vertex degree at most $h$, then we add $e$ to $F$. Once every edge of $G$ is processed, we obtain the final graph $F$. Let $S$ be the set of all vertices of $G$ that are adjacent to at least one edge of $F$, and denote $|S|=n'$.

We consider two subcases of Case 2. The first subcase, Case 2a happens if $|E(F)|\geq \frac{C^*}{64\alpha(n')\cdot \log^7n'}$. In this case, we use the following simple observation, that will allow us to decompose the forest $F$ to obtain a solution to instance $\pwgp(G,r',h')$ of $\WGP$, whose value is close to $|E(F)|$.

\begin{observation}
	\label{obs: cutting a tree}
	There is an efficient algorithm, that, given a tree $T$ with maximum vertex degree at most $\Delta$ and $|E(T)|\ge \Delta/2$, computes a collection $\tset$ of vertex-disjoint subgraphs of $T$, such that for each subgraph $T'\in \tset$, $\frac{\Delta}2\le |E(T')|\le \Delta$, and  $\sum_{T'\in \tset}|E(T')|\ge \frac{|E(T)|} 2$.
\end{observation}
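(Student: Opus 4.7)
The plan is to root $T$ at an arbitrary vertex and process the vertices in post-order, greedily packing children's edges into pieces of size in $[\Delta/2,\Delta]$. For each vertex $v$ I would maintain a pending subtree $P_v \subseteq T_v$ containing $v$, consisting of edges of $T_v$ not yet assigned to any piece in $\tset$, enforcing the invariant $|E(P_v)| < \Delta/2$. When processing $v$ with children $u_1,\ldots,u_d$ in the current forest, each $u_i$ contributes $P_{u_i}$ of size $a_i < \Delta/2$; letting $b_i = a_i + 1$ account for the connecting edge $(v,u_i)$, the tentative combined size at $v$ is $B = \sum_i b_i$. If $B < \Delta/2$, I would set $P_v := \{v\} \cup \bigcup_i (P_{u_i} \cup \{(v,u_i)\})$ and pass it to the parent, preserving the invariant.

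If $B \geq \Delta/2$, I would split at $v$. The idea is to sort the children by $b_i$ in increasing order and greedily accumulate them into a distinguished group $S_0$ until $\sum_{i \in S_0} b_i$ first reaches $\Delta/2$; since each $b_i \leq \Delta/2$, the accumulated sum lies in $[\Delta/2,\Delta]$, so the piece $\{v\} \cup \bigcup_{i \in S_0}(P_{u_i} \cup \{(v,u_i)\})$ is a connected subtree of $T$ with edge count in $[\Delta/2,\Delta]$ and can be added to $\tset$. The remaining children would be packed greedily into further groups of the form $\bigcup_{i \in G} P_{u_i}$ (vertex-disjoint forest-subgraphs of $T$ not containing $v$ and not using the edges $(v,u_i)$), each of total size in $[\Delta/2,\Delta]$; any leftover pending subtrees that cannot be packed to reach $\Delta/2$ would be discarded, and the corresponding connecting edges $(v,u_i)$ lost. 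Since $v$ now lies in the distinguished piece, $P_v := \emptyset$ is passed to $v$'s parent.

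The hard part will be the amortized accounting argument establishing $\sum_{T' \in \tset}|E(T')| \geq |E(T)|/2$. Every lost edge is either a connecting edge $(v,u_i)$ lost at a cut where $u_i$ ended up in a phase-two group or a leftover, or it belongs to the final pending subtree at the root (of size at most $\Delta/2 - 1$). Each completed piece contributes at least $\Delta/2$ edges, and ordering by increasing $b_i$ when forming $S_0$ absorbs as many small-contribution children as possible into the distinguished piece, minimizing the number of children forced into phase two. The charging I would use assigns each lost edge to an edge of an adjacent piece of size at least $\Delta/2$: a phase-two group of $|G|$ children loses at most $|G|$ connecting edges but contributes at least $\sum_{i \in G} a_i \geq \Delta/2$ edges to $\tset$, while the distinguished piece similarly dominates its lost neighbors. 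Summing this local bound over all cut operations yields the $|E(T)|/2$ ratio, and the hypothesis $|E(T)| \geq \Delta/2$ guarantees that at least one piece is produced so that $\tset$ is non-empty.
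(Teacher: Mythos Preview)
Your charging does not close. Take $\Delta=10$ and a ``broom'': a path $v_0v_1\cdots v_5$ with nine extra leaves attached to $v_5$, so $|E(T)|=14$. Rooting at $v_0$ and processing $v_5$, all nine children have $b_i=1$; your rule stops $S_0$ at five of them (a piece of size $5$), the remaining four have $a_i=0$ so no phase-two group can form, and those four connecting edges are lost. Since $v_5$ now lies in a piece, the edge $(v_4,v_5)$ must also be discarded---your formula for $P_{v_4}$ would otherwise include it and hence $v_5$, breaking vertex-disjointness; this is a case your description does not handle. No split triggers on the path, and the root pending is the $4$-edge path $v_0\cdots v_4$, also lost. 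You keep $5$ of $14$ edges. The single distinguished piece of size $\Delta/2$ must absorb the leftover, the parent edge, \emph{and} the root pending, which together are nearly $\Delta$; your ``charge each lost edge to an adjacent piece'' is not a one-to-one matching here.

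The paper's argument is far simpler and does not attempt the per-piece lower bound at all. Root $T$; while $|E(T)|>\Delta$, pick any non-leaf $u$ all of whose children are leaves, add the subtree $T_u$ (at most $\Delta-1$ edges) to $\tset$, delete it, and declare an arbitrary surviving edge responsible for the one lost parent edge of $u$; when $|E(T)|\le\Delta$, add the remainder to $\tset$. Each surviving edge is responsible for at most one deleted edge, so $\sum_{T'\in\tset}|E(T')|\ge|E(T)|/2$. This establishes $|E(T')|\le\Delta$ and the $1/2$ total, which is all the application uses. It does \emph{not} prove the claimed lower bound $|E(T')|\ge\Delta/2$, and the algorithm as written can output one-edge pieces; you were chasing a stronger guarantee than the paper actually provides or needs.
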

\begin{proof}
We root the tree $T$ in an arbitrary vertex $v$. Initially, we let $\tset=\emptyset$. As long as $|E(T)|>\Delta$, we perform iterations. In every iteration, we consider an arbitrary vertex $u$ in the current tree $T$, that is a non-leaf vertex, but all children of $u$ are leaf vertices. Let $T'$ be the subtree of $T$ rooted at vertex $u$. Note that $1\le |E(T')|\leq \Delta-1$. We add graph $T'$ to $\tset$, and we delete all vertices of $T'$ from $T$. Note that, as the result of this iteration, the unique edge $e$ connecting $u$ to its parent-vertex in the tree $T$ is deleted from $T$, and it does not belong to any graph in $\tset$. We let $e'\in E(T)$ be an arbitrary edge (which must exist since $|E(T)|\geq 1$), and we say that $e'$ is \emph{responsible} for the deletion of the edge $e$. We then continue to the next iteration. The algorithm terminates once $|E(T)|\leq \Delta$ holds. We then add graph $T$ to the collection $\tset$ and terminate the algorithm. It is immediate to verify that, for every graph $T'\in \tset$, $|E(T')|\leq \Delta$ holds. Moreover, if an edge $e$ belonged to the original graph $T$, and it does not belong to $\bigcup_{T'\in \tset}E(T')$, then some edge of $\bigcup_{T'\in \tset}E(T')$ is designated as being responsible for deleting $e$. It is easy to verify that every edge $e'\in \bigcup_{T'\in \tset}E(T')$ may be responsible for the deletion of at most one edge. Therefore, at the end of the algorithm, $\sum_{T'\in \tset}|E(T')|\geq |E(T)|/2$ holds.
\end{proof}

Initially, we construct a collection $\hset'$ of subgraphs of $F$ as follows. For every tree $T$ of the forest $F$, if $|V(T)|\leq h$, then we add $T$ to $\hset'$. Otherwise, we apply  
\Cref{obs: cutting a tree} to tree $T$ with parameter $\Delta=h$, and add the graphs in the resulting collection $\tset$ to $\hset'$. At the end of this algorithm, for every graph $H'\in \hset'$, $|E(H')|\leq h$ holds, and $\sum_{H'\in \hset'}|E(H')|\geq\frac{|E(F)|}{2}\geq \frac{C^*}{128\alpha(n')\cdot \log^7n'}$. For every graph $H'\in \hset'$, if $|E(H')|>h/2$, then we delete edges from $H'$ until $|E(H')|\leq h/2$ holds. Clearly, after this transformation, $\sum_{H'\in \hset'}|E(H')|\geq \frac{C^*}{256\alpha(n')\cdot \log^7n'}\geq \Omega\left(\frac{C^*}{\alpha(n)\cdot \log^7n}\right )$ holds. If $\sum_{H\in \hset'}|E(H)|>hr/2$, then we discard arbitrary edges from the graphs in $\hset'$, until $\sum_{H\in \hset'}|E(H)|=hr/2$ holds. Since, if $C^*=\optwgp(G,r,h)$, $C^*\leq rh$, we are still guaranteed that $\sum_{H'\in \hset'}|E(H')|\geq\Omega\left(\frac{C^*}{\alpha(n)\cdot \log^7n}\right )$ holds.

If $|\hset'|\leq r$, then we have obtained a valid solution to instance $\pwgp(G,r,h)$ of value $\Omega\left(\frac{C^*}{\alpha(n)\cdot \log^7n}\right )$. Otherwise, 
we proceed exactly like in Case 1 in order to transform $\hset'$ into a valid solution to instance $\pwgp(G,r,h)$ of $\WGP$, without changing the total number of edges that lie in the graphs of $\hset'$.
While $|\hset'|>r$, we let $H',H''\in \hset'$ be a pair of graphs with smallest number of edges, breaking ties arbitrarily. We remove $H'$ and $H''$ from $\hset'$, and we add a new graph $H=H'\cup H''$ to $\hset'$ instead. The procedure is terminated once $|\hset'|=r$ holds. We claim that at the end of this procedure, for every graph $H\in \hset'$, $|E(H)|\leq h$ holds. Indeed, assume otherwise. Consider the first time when a graph $H$ with $|E(H)|>h$ was added to $\hset'$. Then $H=H'\cup H''$ must hold, where $H',H''$ are two graphs that belonged to $\hset'$ prior to this iterations. Then at least one of these two graphs must contain more than $h/2$ edges. From the choice of the graphs $H',H''$, and from the fact that $|\hset'|>r$ held at the beginning of the iteration, we get that, at the beginning of the iteration, there were at least $r$ graphs $\tilde H\in \hset'$ with $|E(\tilde H)|>h/2$. But then $\sum_{\tilde H\in \hset'}|E(\tilde H)|>\frac{r h}{2}$ held at the beginning of the iteration. Since the total number of edges contained in the graphs of $\hset'$ does not change over the course of the algorithm, we reach a contradiction, since we have ensured that, at the beginning of the algorithm, $\sum_{\tilde H\in \hset'}|E(\tilde H)|\leq \frac{h\cdot r}{2}$ held. We return the resulting set $\hset'$ of subgraphs of $G$, which is guaranteed to be a feasible solution to instance $\pwgp(G,r,h)$ of \WGP, of value at least $\Omega\left(\frac{C^*}{\alpha(n)\log^{7} n}\right )$.

It now remains to consider Case (2b), where $C^*< 16n\alpha(n)\log^7n$ and $|E(F)|< \frac{C^*}{64\alpha(n')\cdot \log^7n'}$. Recall that $S$ is the set of all vertices of $G$ that are adjacent to at least one edge of $F$, and recall that we have denoted $|S|=n'$.

In this case, we let $G'=G[S]$, and we consider instance $\pwgp(G',r,h)$ of \WGP. Notice that, if $\hset'$ is a valid solution to instance $\pwgp(G',r,h)$ of \WGP, then it is also a valid solution to instance $\pwgp(G,r,h)$ of \WGP. We start by showing that $\optwgp(G',r,h)$ is close to $C^*$.

\begin{observation}\label{obs: case 2b}
If $C^*=\optwgp(G,r,h)$, and Case (2b) happens, then $\optwgp(G',r,h)\geq \frac{C^*}{2}$.
\end{observation}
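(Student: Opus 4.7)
The plan is to exploit the maximality of the forest $F$ to control edges of the optimal solution that are incident to vertices outside $S$. Let $T = V(G) \setminus S$ denote the set of vertices not touched by any edge of $F$; by definition every vertex of $T$ is isolated in $F$. First I would establish the following structural claim: any edge $e = (u,v) \in E(G)$ with $u \in T$ must satisfy $v \in S$ and $\deg_F(v) = h$. Indeed, since $u$ is isolated in $F$, adding $e$ to $F$ cannot create a cycle, nor can it push $u$'s degree above $h$. Since $F$ is maximal with respect to being a forest of maximum degree $\leq h$, the only remaining obstruction to adding $e$ must be that $v$ already has degree exactly $h$ in $F$. (In particular this rules out $v \in T$, so there are no edges of $G$ with both endpoints in $T$.)

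Next, let $S_{\mathrm{hi}} = \{v \in S : \deg_F(v) = h\}$. A direct degree count gives $h \cdot |S_{\mathrm{hi}}| \leq \sum_{v} \deg_F(v) = 2|E(F)|$, so $|S_{\mathrm{hi}}| \leq 2|E(F)|/h$. Now let $\hset^* = \{H_1, \ldots, H_r\}$ be an optimal solution to $\pwgp(G,r,h)$, so $\sum_i |E(H_i)| = C^*$. Define $H_i'$ to be the subgraph of $H_i$ obtained by deleting all vertices of $T$ and their incident edges. Since $V(H_i') \subseteq S$, the collection $\hset' = \{H_1', \ldots, H_r'\}$ consists of vertex-disjoint subgraphs of $G' = G[S]$ with $|E(H_i')| \leq |E(H_i)| \leq h$, so $\hset'$ is a feasible solution to $\pwgp(G',r,h)$.

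It remains to lower-bound $\sum_i |E(H_i')|$. Every edge deleted in forming $\hset'$ from $\hset^*$ has at least one endpoint in $T$, and by the structural claim its other endpoint lies in $S_{\mathrm{hi}}$. The $H_i$'s are vertex-disjoint, so each $v \in S_{\mathrm{hi}}$ lies in at most one $H_{i(v)}$; and within that $H_{i(v)}$ it has degree at most $|E(H_{i(v)})| \leq h$. Summing, the total number of deleted edges is at most $h \cdot |S_{\mathrm{hi}}| \leq 2|E(F)|$. By the Case (2b) hypothesis $|E(F)| < \frac{C^*}{64\alpha(n')\log^7 n'} \leq C^*/4$, so the edges lost are at most $C^*/2$, yielding $\sum_i |E(H_i')| \geq C^* - 2|E(F)| \geq C^*/2$, as required.

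The main obstacle is the maximality argument in the first step: one must carefully separate the two ways $F \cup \{e\}$ could fail to remain a valid forest of max degree $\leq h$ (cycle versus degree violation) and use the fact that $u \in T$ is isolated in $F$ to rule out the cycle case, forcing the degree condition on $v$. Once that structural lemma is in hand, the remaining steps are a clean degree-counting argument.
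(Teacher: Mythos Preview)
Your proof is correct and follows essentially the same approach as the paper's: both use maximality of $F$ to show that every edge of $G$ with an endpoint outside $S$ has its other endpoint in the set of degree-$h$ vertices of $F$, bound that set's size via a degree count, and then bound the edges lost from an optimal solution by $2|E(F)|$ using the constraint $|E(H_i)|\le h$. The only cosmetic difference is that you phrase the structural claim for all edges of $G$ rather than just edges in the optimal solution, and you derive $|S_{\mathrm{hi}}|\le 2|E(F)|/h$ via the handshake identity whereas the paper uses a slightly different (and slightly looser) bound; the conclusions are identical.
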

\begin{proof}
	Let $S'\subseteq S$ be the set of all vertices whose degree in $F$ is $h$, and let $S^*$ the set of all vertices of $F$ that are isolated.
	
	Let $\hset$ be the optimal solution to instance $\pwgp(G,r,h)$, and let $E'=\bigcup_{H\in \hset}E(H)$. We partition the set $E'$ of edges into two subsets: set $E'_1$ containing all edges that lie in $G'=G[S]$, and set $E'_2$ containing all remaining edges. Clearly, for every edge $e=(x,y)\in E'_2$, at least one endpoint of $e$ must lie in $S^*$. Assume w.l.o.g. that $x\in S^*$. We claim that $y\in S'$ must hold. Indeed, otherwise $F\cup\set{e}$ remains a forest, with maximum vertex degree at most $h$, contradicting the fact that $F$ is a maximal subgraph of $G$ with these properties.
	
	Therefore, very edge of $E'_2$ connects a vertex of $S^*$ to a vertex of $S'$. We claim that $|E'_2|\leq |S'|\cdot h$. Indeed, from the definition of the \WGP problem, for every graph $H\in \hset$, $|E(H)|\leq h$, and all graphs in $\hset$ are disjoint in their vertices. Therefore, every vertex of $G$ may be incident to at most $h$ edges of $E'$. Since every edge of $E'_2$ has a vertex of $S'$ as its endpoint, we get that $|E'_2|\leq |S'|\cdot h$.
	
	Recall that, from our definition, every vertex $v\in S'$ has degree $h$ in $F$. Therefore, $|E(F)|\geq (h-1)\cdot |S'|$. We conclude that $|E(F)|\geq |E'_2|/2$, and so $|E'_2|\leq 2|E(F)|<\frac{C^*}{4}$. Since $|E'|=C^*$, we get that $|E'_1|\geq C^*/2$. 
	
	We now define a solution $\hset'$ to instance $\pwgp(G',h,r)$ of \WGP. For every graph $H\in \hset$, we let $H'$ be a graph that is obtained from $H$ by deleting all vertices of $S^*$ from it, and we let $\hset'=\set{H'\mid H\in \hset}$. It is easy to verify that $\hset'$ is a valid solution to instance $\pwgp(G',r,h)$, and that its value is at least $|E'\setminus E'_2|\geq \frac{C^*}{2}$. We conclude that $\optwgp(G',r,h)\geq \frac{C^*}{2}$.
\end{proof}

Denote $C'=\optwgp(G',r,h)$. From the above discussion $\frac{C^*}2\leq C'\leq C^*$. Notice that for every tree $T$ of $F$, if $T$ is not a singleton vertex, then $|E(T)|\geq |V(T)|-1\geq \frac{|V(T)|}{2}$. Therefore, $|E(F)|\geq \frac{|S|}{2}=\frac{n'}2$. On the other hand, from our assumpution, $|E(F)|< \frac{C^*}{64\alpha(n')\cdot \log^7n'}$. We then conclude that $C^*>32n'\alpha(n')\cdot \log^7n$.

We will now focus on solving instance $\pwgp(G',r,h)$ of the \WGP problem. As before, we will try all guesses $C^{**}$ on the value $C'$ of the optimal solution for this problem. Note that we only need to consider values $C^{**}$ that are integers, with $\frac{C^*}2\leq C^{**}\leq C^*$. Furthermore, from the above discussion, for each such guess, $C^{**}\ge \frac{C^*}{2}\geq 16n'\alpha(n')\cdot \log^7n$ holds, so Case 1 will occur. We execute the algorithm from Case 1 for each such guessed value $C^{**}$ and output the best among the resulting solutions. We are then guaranteed to obtain a solution $\hset$ to instance $\pwgp(G',r,h)$ of value at least $\Omega\left(\frac{C'}{(\alpha(n))^2\log^{10} n}\right )\geq \Omega\left(\frac{C^*}{(\alpha(n))^2\log^{10} n}\right )$. Clearly, $\hset$ is also a valid solution to instance $\pwgp(G,r,h)$. Overall, we obtain an efficient $O((\alpha(n))^2\cdot\poly\log n)$-approximation algorithm for \WGP.

\section{Acknowledgement}
The authors thank Irit Dinur and Uri Feige for insightful and helpful discussions.

\newpage

\appendix

\section{Proof of \Cref{lem: DkS and Dk1k2S}}
\label{apd: Proof of DkS and Dk1k2S}

We prove each of the directions of the reductions separately, in the following two subsections.

\subsection{Reduction from \DBS to \DkS}

Assume that exists an $\alpha(n)$-approximation algorithm $\aset$ for the \DkS problem with running time at most $T(n)$, where $n$ is the number of vertices in the input graph. We show an  $O(\alpha(N^2))$-approximation algorithm for the \DBS problem, whose running time is at most $O(T(N^2)\cdot \poly(N))$, where $N$ is the number of vertices in the input graph.

	Let $\pdks(G,k_1,k_2)$ be the input instance to the \DBS problem. Denote $G=(A,B,E)$, so $|A\cup B|=N$. We construct another bipartite graph $H=(A',B',E')$, that will serve as input to the \DkS problem, as follows. We define, for every vertex $u\in A$, a collection $T_u=\set{u^{1},\ldots, u^{k_2}}$ of vertices that we call \emph{copies of $u$}, and we let $A'=\bigcup_{u\in A}T_u$. Similarly, we define, for every vertex $v\in B$, a set $T_v=\set{v^{1},\ldots, v^{k_1}}$ of $k_1$ vertices, that we call \emph{copies of $v$}, and we let $B'=\bigcup_{v\in B}T_v$. 
	 The set $E'$ of edges of $H$ contains, for every edge $e=(u,v)\in E(G)$, all edges in $T_u\times T_v$. Note that $|V(H)|\leq \max\set{k_1,k_2}\cdot |A\cup B|\le N^2$.
	 
	 Let $k=2k_1k_2$, and 
consider the instance $\pdks(H,k)$ of the \DkS problem. We use the following observation to lower-bound its optimal solution cost.
	
	\begin{observation}\label{obs: opt for dks}
	$\optdks(H,k)\ge k_1k_2\cdot \optbdks(G,k_1,k_2)$.
\end{observation}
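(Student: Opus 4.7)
The plan is to exhibit an explicit solution to the \DkS instance $\pdks(H,k)$ whose value is at least $k_1 k_2 \cdot \optbdks(G,k_1,k_2)$, by ``lifting'' an optimal solution of $\pbdks(G,k_1,k_2)$ to $H$ via the copy operation.

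Concretely, let $S^* \subseteq V(G)$ be an optimal solution to $\pbdks(G,k_1,k_2)$, so $|S^* \cap A| = k_1$, $|S^* \cap B| = k_2$, and $|E_G(S^*)| = \optbdks(G,k_1,k_2)$. Define
\[
S \;=\; \bigcup_{u \in S^* \cap A} T_u \;\cup\; \bigcup_{v \in S^* \cap B} T_v \;\subseteq\; V(H).
\]
The first step is to check that $|S| = k$: each $u \in S^* \cap A$ contributes $|T_u| = k_2$ vertices and each $v \in S^* \cap B$ contributes $|T_v| = k_1$ vertices, and all these copy-sets are disjoint, giving $|S| = k_1 \cdot k_2 + k_2 \cdot k_1 = 2 k_1 k_2 = k$. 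Therefore $S$ is a feasible solution to $\pdks(H,k)$.

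The second step is to count the edges of $H[S]$. By construction of $H$, for every edge $(u,v) \in E(G)$ the set $T_u \times T_v$ is entirely contained in $E(H)$. In particular, for every edge $(u,v) \in E_G(S^*)$ (so $u \in S^* \cap A$ and $v \in S^* \cap B$), we have $T_u, T_v \subseteq S$, and hence all $k_1 \cdot k_2$ edges of $T_u \times T_v$ lie in $E_H(S)$. Since the edge-sets $T_u \times T_v$ are disjoint for distinct pairs $(u,v) \in E_G(S^*)$, summing gives
\[
|E_H(S)| \;\ge\; k_1 k_2 \cdot |E_G(S^*)| \;=\; k_1 k_2 \cdot \optbdks(G,k_1,k_2),
\]
and since $\optdks(H,k) \ge |E_H(S)|$, the observation follows.

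There is no real obstacle here: the argument is just a counting check after the copy construction, and both the cardinality and edge-count calculations are immediate from the definitions of $H$ and $k$. The only thing worth verifying carefully is that the copy-sets $T_u$ across different $u$ are disjoint (which they are by definition, as distinct vertices in $V(G)$ give rise to disjoint copy sets in $V(H)$), so that the contributions to $|S|$ and to $|E_H(S)|$ add without overlap.
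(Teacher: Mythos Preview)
Your proof is correct and follows essentially the same approach as the paper: take an optimal solution $S^*$ to $\pbdks(G,k_1,k_2)$, lift it to $H$ by taking all copies of its vertices, verify the resulting set has size $2k_1k_2=k$, and count that each edge of $E_G(S^*)$ contributes $k_1k_2$ edges to $E_H(S)$.
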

\begin{proof}
	Let $S^*$ be the optimal solution to instance $\pbdks(G,k_1,k_2)$ of \DBS.
	Denote $S^*_A=S^*\cap A$ and $S^*_B=S^*\cap B$, so $|S^*_A|=k_1$ and $|S^*_B|=k_2$ hold. We define $T^*_A=\bigcup_{u\in S^*_A}T_u$ and $T^*_B=\bigcup_{v\in S^*_B}T_v$. From the construction of $H$, it is clear that $|T^*_A|=|T^*_B|=k_1\cdot k_2$, and $|E_H(T^*_A,T^*_B)|=k_1k_2\cdot|E_G(S^*_A,S^*_B)|$.
	Therefore, $T^*_A\cup T^*_B$ is a feasible solution to instance $\pdks(H,k)$ of \DkS, and so $\optdks(H,2k_1k_2)\ge k_1k_2\cdot \optbdks(G,k_1,k_2)$.
\end{proof}

In order to complete the reduction, we need the following claim.

\begin{claim}
\label{clm: DKS to DBS}
There is an efficient algorithm, that, given any solution $W$ to the instance $\pdks(H,k)$ of the \DkS problem, computes a solution to instance $\pbdks(G,k_1,k_2)$ of \BDkS, whose value is at least $|E_H(W)|/(4k_1k_2)$.
\end{claim}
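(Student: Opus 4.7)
The plan is to construct $S = S_A \cup S_B$ with $|S_A| = k_1$ and $|S_B| = k_2$ by a two-step greedy rule that mirrors the fractional mass distribution of $W$ across the copies. First, for every $u \in A$ set $a_u = |T_u \cap W|$ and for every $v \in B$ set $b_v = |T_v \cap W|$. Since, by construction, each edge $(u,v) \in E(G)$ contributes the full bipartite graph between $T_u$ and $T_v$ to $H$, the subgraph $H[W]$ contains exactly $a_u \cdot b_v$ edges for every $(u,v) \in E(G)$, so $|E_H(W)| = \sum_{(u,v) \in E(G)} a_u b_v$. Also $a_u \leq |T_u| = k_2$, $b_v \leq |T_v| = k_1$, and $\sum_u a_u + \sum_v b_v = |W| \leq k = 2k_1 k_2$.

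The selection is done in two rounds. First I would compute, for each $u \in A$, the weighted degree $c(u) = \sum_{v\,:\,(u,v) \in E(G)} b_v$, and let $S_A \subseteq A$ be the set of $k_1$ vertices of $A$ with largest $c(u)$ values, breaking ties arbitrarily. Then, with $S_A$ fixed, I would compute $d(v) = |N_G(v) \cap S_A|$ for every $v \in B$ and let $S_B$ be the set of top $k_2$ vertices by $d(v)$. Both steps are deterministic and run in polynomial time. Clearly $|S_A| = k_1$ and $|S_B| = k_2$ (assuming, as we may, that the instance is feasible, so $|A| \geq k_1$ and $|B| \geq k_2$), and $|E_G(S_A \cup S_B)| \geq |E_G(S_A, S_B)| = \sum_{v \in S_B} d(v)$.

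The analysis compares the greedy choice to the fractional solution $x_u = a_u/(2k_2)$ and $y_v = b_v/(2k_1)$. Because $a_u \leq k_2$ and $b_v \leq k_1$, we have $x_u, y_v \in [0,1]$, and because $\sum_u a_u \leq 2k_1 k_2$, we have $\sum_u x_u \leq k_1$; similarly $\sum_v y_v \leq k_2$. Since $c(u) \geq 0$, the linear function $x \mapsto \sum_u c(u) x_u$ is maximized over the polytope $\{x \in [0,1]^A : \sum_u x_u \leq k_1\}$ by the indicator of the top $k_1$ entries of $c$, hence $\sum_{u \in S_A} c(u) \geq \sum_u c(u) x_u = \sum_{(u,v) \in E(G)} a_u b_v / (2k_2) = |E_H(W)|/(2k_2)$. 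Rewriting this sum by interchanging the order of summation yields $\sum_{v \in B} b_v \cdot d(v) \geq |E_H(W)|/(2k_2)$. Applying the very same maximization principle to $y$ and $d$ in place of $x$ and $c$ gives $\sum_{v \in S_B} d(v) \geq \sum_v d(v) y_v = \sum_v b_v d(v) / (2k_1) \geq |E_H(W)|/(4 k_1 k_2)$, which is exactly the promised bound.

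There is no significant obstacle in this argument; the only subtlety is bookkeeping the factor $4$, which arises because in each of the two rounding steps we have to halve the fractional weights in order to guarantee $\sum_u x_u \leq k_1$ and $\sum_v y_v \leq k_2$ from the weaker pooled bound $\sum_u a_u + \sum_v b_v \leq 2k_1 k_2$ (one side could be much larger than $k_1 k_2$ at the expense of the other). The algorithm is deterministic as required to preserve the ``if $\DkS$ algorithm is deterministic then so is the $\BDkS$ one'' part of Lemma~\ref{lem: DkS and Dk1k2S}.
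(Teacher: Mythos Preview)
Your proof is correct and takes a genuinely different route from the paper. The paper partitions $W_A$ into $2k_2$ buckets of size at most $k_1$ each (via a round-robin assignment that ensures no bucket contains two copies of the same vertex of $A$), partitions $W_B$ into $2k_1$ buckets analogously, and then applies averaging over the $4k_1k_2$ bucket pairs to find one pair $(W_A^{i^*},W_B^{j^*})$ with at least $|E_H(W)|/(4k_1k_2)$ edges; mapping this pair back to $G$ gives the solution. Your approach instead reinterprets the copy-counts $a_u,b_v$ as a fractional solution and performs two rounds of greedy top-$k$ selection, analyzed via the standard fact that a nonnegative linear objective over $\{x\in[0,1]^n:\sum x_i\le k\}$ is maximized at a $0/1$ vertex. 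The paper's bucketing is slightly more elementary and self-contained, while your LP-rounding viewpoint is arguably cleaner in that it works directly on $G$ and never needs to track which copies collide; both yield exactly the same factor $4k_1k_2$ and both are deterministic.
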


\begin{proof}
Denote $W_A=W\cap A$ and $W_B=W\cap B$.
We start by computing a partition $(W^0_A,\ldots,W^{2k_2-1}_A)$ of the vertices of $W_A$ into $2k_2$ subsets, containing at most $k_1$ vertices each, so that, for every vertex $u$ of $A$, no two copies of $u$ appear in the same subset.

In order to do so, we let $\sigma$ be an arbitrary ordering of the vertices of $W_A$, in which, for every vertex $u\in A$, all copies of $u$ that belong to $W_A$ appear consecutively. For all $0\le i\le 2k_2-1$, we let $W^i_A\subseteq W_A$ to be the set of all vertices $x\in W_A$, whose index is $i \text{ }(\text{mod }2k_2)$ in this ordering.
Since, for every vertex $u\in A$, $|T_u|=k_2$, it is immediate to verify that all copies of $u$ in $W_A$ lie in distinct sets. It is also immediate to verify that, for all $0\leq i<2k_2$, $|W^i_A|\leq k_1$.

We similarly compute a partition $(W^0_B,\ldots,W^{2k_1-1}_B)$ of the  vertices of $W_B$ into $2k_1$ subsets, containing at most $k_2$ vertices each, so that, for every vertex $v$ of $B$, no two copies of $v$ appear in the same subset. 

%
%

Let $0\leq i^*<2k_2,0\leq j^*<2k_1$ be a pair of indices, for which $|E_H(W^{i^*}_A,W^{j^*}_B)|$ is maximized. Clearly, $|E_H(W^{i^*}_A,W^{j^*}_B)|\ge |E_H(W)|/(4k_1k_2)$.
Finally, let $X\subseteq V(G)$ be the set of vertices containing every vertex $u\in V(G)$, whose copy lies in  $W^{i^*}_A\cup W^{j^*}_B$. Note that $|X\cap A|= |W^{i^*}_A|\le k_1$ and $|X\cap B|= |W^{j^*}_B|\le k_2$ must hold, so $X$ is a valid solution to instance $\pbdks(G,k_1,k_2)$ of \BDkS. Since $W^{i^*}_A\cup W^{j^*}_B$ contains at most one copy of every vertex of $V(G)$, from the construction of graph $H$, it is easy to verify that $|E_G(X)|=|E_H(W^{i^*}_A,W^{j^*}_B)|\ge |E_H(W)|/(4k_1k_2)$.
\end{proof}	

We are now ready to complete our reduction. 
We apply the approximation algorithm $\aset$ for the \DkS problem to instance $\pdks(H,k)$, to obtain a solution $W$. 
Since $|V(H)|\leq N^2$, and since $\aset$ is a factor-$\alpha(n)$ approximation algorithm, from \Cref{obs: opt for dks}, we get that:

\[|E_H(W)|\geq \frac{\optdks(H,k)}{\alpha(N^2)}\geq \frac{k_1\cdot k_2\cdot \optbdks(G,k_1,k_2)}{\alpha(N^2)}.\]

Additionally, the running time of the algorithm is $O(T(N^2))$.

We then apply the algorithm from \Cref{clm: DKS to DBS}, whose running time is bounded by $O(\poly(N))$ to solution $W$ to instance  $\pdks(H,k)$, to obtain a solution $X$ to instance $\pdks(G,k_1, k_2)$ of \BDkS. We are guaranteed that:

$$
|E_G(X)|\ge \frac{|E_H(W)|}{4k_1k_2}\ge \Omega\bigg(\frac{\optdks(G,k_1,k_2)}{\alpha(N^2)}\bigg).$$

It is easy to verify that the running time of the algorithm is bounded by $O(T(N^2)\cdot \poly(N))$.

\subsection{Reduction from \DkS to \DBS}

We now assume that there exists an efficient $\alpha(N)$-approximation algorithm $\aset'$ for the \DBS problem, where $N$ is the number of vertices in the input graph. We show that there exists an efficient $O(\alpha(2n))$-approximation algorithm for the \DkS problem, where $n$ is the number of vertices in the input graph.

Let $\pdks(G,k)$ be the input instance for the \DkS problem, so $|V(G)|=n$. We construct a bipartite graph $H=(V_1,V_2,E)$, where the vertex sets are $V_1=\set{u^1\mid u\in V(G)}$, $V_2=\set{u^2\mid u\in V(G)}$, and the edge set is $E=\set{(u^1,v^2),(u^2,v^1)\mid (u,v)\in E(G)}$. We denote $N=|V(H)|=2n$.
Consider the instance $\pdks(H,k_1,k_2)$ of \DBS, where $k_1=k_2=k$.
We use the following observation to lower-bound the optimal solution cost of this instance.

\begin{observation}\label{obs: lower bound opt 2}
	$\optbdks(H,k_1,k_2)\ge 2\cdot \optdks(G,k)$.
\end{observation}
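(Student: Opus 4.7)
}
The plan is to exhibit an explicit feasible solution to the instance $\pbdks(H,k_1,k_2)$ of \BDkS whose value is at least $2\cdot\optdks(G,k)$. Let $S\subseteq V(G)$ be an optimal solution to instance $\pdks(G,k)$, so $|S|=k$ and $|E_G(S)|=\optdks(G,k)$. I will lift $S$ to the bipartite double cover $H$ by taking both copies of each vertex of $S$, i.e., define $S^*=\{u^1\mid u\in S\}\cup\{u^2\mid u\in S\}\subseteq V(H)$.

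The verification then has two short parts. First, I check feasibility: by construction $|S^*\cap V_1|=|S|=k=k_1$ and $|S^*\cap V_2|=|S|=k=k_2$, so $S^*$ is a legal solution for $\pbdks(H,k_1,k_2)$. Second, I count the edges of $H$ induced by $S^*$. For every edge $(u,v)\in E_G(S)$ (so $u,v\in S$), the construction of $H$ places two edges in $E(H)$, namely $(u^1,v^2)$ and $(u^2,v^1)$, and both of these edges have both endpoints in $S^*$. Furthermore, as $u,v$ range over distinct edges of $G[S]$, the resulting pairs of edges in $H$ are all distinct. Hence
\[
|E_H(S^*)|\ \ge\ 2\cdot |E_G(S)|\ =\ 2\cdot\optdks(G,k),
\]
which yields $\optbdks(H,k_1,k_2)\ge |E_H(S^*)|\ge 2\cdot\optdks(G,k)$, as required.

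There is really no obstacle here: the only point that requires any care is to make sure no double counting occurs when passing from $E_G(S)$ to $E_H(S^*)$, which is immediate since the two lifts $(u^1,v^2)$ and $(u^2,v^1)$ of an edge $(u,v)$ lie in different ``sides'' of the bipartite incidence pattern and are never produced by any other edge of $G[S]$.
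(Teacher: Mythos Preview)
Your proof is correct and follows essentially the same approach as the paper: take the optimal $k$-subset $S$ in $G$, lift it to $S^*=\{u^1,u^2\mid u\in S\}$ in the bipartite double cover $H$, verify feasibility, and observe that each edge of $G[S]$ yields two distinct edges in $H[S^*]$. Your write-up is in fact a bit more explicit than the paper's in spelling out why no double counting occurs.
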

\begin{proof}
Let $V^*$ be the optimal solution to instance $\pdks(G,k)$ of $\DkS$. We define $U^*=\set{v^1,v^2\mid v\in V^*}$, so $|U^*\cap V_1|=k$ and $|U^*\cap V_2|=k$. Clearly, $U^*$ is a feasible solution to instance $\pbdks(H,k,k)$. Moreover, it is easy to verify that $|E_H(U^*)|=2\cdot |E_G(V^*)|$, so $\optbdks(H,k_1,k_2)\ge 2\cdot \optdks(G,k)$.
\end{proof}

 We apply Algorithm $\aset'$ to instance, $\pdks(H,k_1,k_2)$ of \DBS, obtaining a solution $U'$. We denote $U_1=U'\cap V_1$ and $U_2=U'\cap V_2$, so $|U_1|=|U_2|=k$, and $|E_H(U_1,U_2)|\ge \optbdks(H,k,k)/\alpha(N)\geq 2\optdks(G,k)/\alpha(2n)$ from \Cref{obs: lower bound opt 2}.

Let $U=\set{v\mid v^1\in U_1 \text{ or }v^2\in U_2}$ be a subset of vertices of $G$. Clearly, $|U|\le 2k$, and $|E_G(U)|\geq \frac{|E_H(U^*)|}2\geq \frac{\optdks(G,k)}{\alpha(2n)}$.

We then apply the algorithm from \Cref{lem: size reducing} to graph $G$ and set $U$ of vertices, with parameter $\beta=1/2$, to obtain a set $\tilde U\subseteq U$ of vertices, with $|\tilde U|\leq k$, and $|E_G(\tilde U)|\geq \Omega(|E_G(U)|)\geq \Omega(\optdks(G,k)/\alpha(2n))$. Therefore, we obtained an $O(\alpha(2n))$-approximate solution to instance $\pdks(G,k)$ of the \DkS problem.

\section{Reduction from \WGP to \DkS}
\label{apd: WGP to DkS}

In this section we complete the  proof \Cref{thm: alg_DkS gives alg_DkC}, by showing a reduction from \WGP to \DkS. The reduction is very similar to the reduction from \DkC to \DkS described in \Cref{sec: DkC and WGP to DkS}. 

 We start by formulating an LP-relaxation of the problem, whose number of constraints is bounded by $O(N)$, but the number of variables may be large. We then show an LP-rounding algorithm for this LP-relaxation, whose running time is $O(\poly(N))$ if it is given a solution to the LP-relaxation whose support size is bounded by $O(\poly(N))$. In order to compute an approximate LP-solution whose support size is sufficiently small, we design an approximate separation oracle for the dual of the LP-relaxation. We start with describing the LP-relaxation and providing an LP-rounding algorithm for it.

\subsection{Linear Programming Relaxation and an LP-Rounding Algortihm}

Let $\pwgp(G,r,h)$ be the input instance of \WGP, and denote $|V(G)|=N$.
We let $\hset$ be the collection of all subgraphs $H\subseteq G$ with $|E(H)|\leq h$. For each such subgraph $H$, we denote $m(H)=|E(H)|$.
We consider the following LP-relaxation of the \WGP problem, that has a variable $x_H$ for every graph $H\in \hset$.

\begin{eqnarray*}
	\mbox{(LPW-P)}&&\\
		\max &\sum_{H\in \hset} m(H)\cdot x_H&\\
	\mbox{s.t.}&&\\
	&\sum_{\stackrel{H\in \hset:}{v\in V(H)}} x_H\leq 1 &\forall v\in V(G)\\
	&\sum_{H\in \hset}x_H \leq r\\
	&x_H\geq 0&\forall H\in \hset
\end{eqnarray*}

It is easy to verify that (LPW-P) is an LP-relaxation of the \WGP problem. Indeed, consider a solution $(H_1,\ldots,H_r)$ to the input instance $\pwgp(G,r,h)$. For all $1\leq i\leq r$, we set $x_{H_i}=1$, and for every other graph $H\in \hset$, we set $x_H=0$. This provides a feasible solution to (LPW-P), whose value is precisely $\sum_{i=1}^r|E(H)|$. We denote the value of the optimal solution to (LPW-P) by $\opt_{\textnormal{LP-P}}$. From the above discussion, $\opt_{\textnormal{LP-P}}\ge \optwgp(G,r,h)$. 

In the following claim we provide an LP-rounding algorithm for (LPW-P). The claim is an analogue of \Cref{claim: LP-rounding}. Its proof is almost identical and is provided here for completeness.

\begin{claim}\label{claim: LP-rounding WGP}
	There is an efficient randomized algorithm, whose input consists of an instance $\pwgp(G,r,h)$ of the \WGP problem with $N=|V(G)|$, such that $N$ is greater than a large enough constant, and a solution $\set{x_H\mid H\in \hset}$ to (LPW-P), in which the number of variables $x_H$ with $x_H>0$ is bounded by $O(\poly(N))$, and $\sum_{H\in \hset}m(H)\cdot x_H\geq \opt_{\textnormal{LP-P}}/\beta$, for some parameter $1\leq \beta\leq N^3$; the solution is given by only specifying values of variables $x_H$ that are non-zero. The algorithm with high probability returns an integral solution $(H_1,\ldots,H_r)$ to instance $\pwgp(G,r,h)$, such that $\sum_{i=1}^{r}|E(H_i)|\geq \frac{\optwgp(G,r,h)}{2000\beta \log^3N}$.
\end{claim}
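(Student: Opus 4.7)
The plan is to follow the proof template of \Cref{claim: LP-rounding} nearly verbatim, since (LPW-P) has the same structural form as (LP-P): the variables $x_H$ indexed by subgraphs $H\in \hset$ play the role of the variables $x_S$ indexed by $S\in\rset$; the constraint $\sum_{H\ni v} x_H\le 1$ replaces the vertex-incidence constraint; and the cardinality bound $\sum x_H\le r$ replaces $\sum x_S\le N/k$. Throughout, write $C=\sum_{H\in\hset}m(H)\cdot x_H\ge \optwgp(G,r,h)/\beta$. I would first dispose of the large-subgraph case: if there is any $H$ in the support of the LP with $m(H)\ge C/(300\log^3 N)$, then the $r$-tuple $(H,\emptyset,\ldots,\emptyset)$ is a feasible WGP solution of value at least $\Omega(\optwgp(G,r,h)/(\beta\log^3 N))$. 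Otherwise every $H$ in the support satisfies $m(H)<C/(300\log^3 N)$, and we proceed with randomized rounding.

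Form a random collection $\hset''$ by including each $H$ in the support independently with probability $x_H$. Define three bad events mirroring the DkC proof: $\event_1$ that some vertex $v\in V(G)$ lies in $V(H)$ for more than $5\log N$ subgraphs of $\hset''$; $\event_2$ that $|\hset''|>5r\log N$; and $\event_3$ that $\sum_{H\in\hset''}m(H)<C/8$. The first two are controlled by Chernoff (\Cref{lem: Chernoff}) applied to the LP-constraints $\sum_{H\ni v}x_H\le 1$ and $\sum_H x_H\le r$ respectively. For $\event_3$ I would reuse the geometric bucketing argument of \Cref{claim: LP-rounding}: partition the support of $\hset$ by $\lfloor\log m(H)\rfloor$, call a bucket \emph{good} if it contributes at least $C/(8\log N)$ to $C$, and observe that the pointwise upper bound $m(H)<C/(300\log^3 N)$ forces every good bucket to have $\sum x_H = \Omega(\log^2 N)$, so Chernoff keeps at least half of its fractional mass with probability $\ge 1-1/N^4$. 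A union bound gives $\prob{\event_1\cup\event_2\cup\event_3}\le O(1/N^3)$.

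Condition now on none of the three events. Sort $\hset''$ in decreasing order of $m(H)$ and take the top $r$ subgraphs to form $\hset^*$; since $|\hset''|\le 5r\log N$, this retains at least a $1/(5\log N)$ fraction, giving $\sum_{H\in\hset^*}m(H)\ge C/(40\log N)$. The subgraphs in $\hset^*$ may share vertices, so I would deduplicate: for each vertex $v$ lying in $a_v\le 5\log N$ subgraphs of $\hset^*$, pick one uniformly at random and delete $v$ (together with any edge of $\hset^*$ incident to $v$) from every other subgraph. This produces a vertex-disjoint family, and each $|E(H)|\le h$ is preserved under edge deletion. For any edge $e=(u,w)\in E(H)$ with $H\in\hset^*$, the endpoint-assignment decisions for $u$ and $w$ are independent, and each keeps the endpoint in $H$ with probability $\ge 1/(5\log N)$, so $e$ survives with probability $\ge 1/(25\log^2 N)$. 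Linearity of expectation then yields an expected $\Omega(C/\log^3 N)=\Omega(\optwgp(G,r,h)/(\beta\log^3 N))$ surviving edges; pad $\hset^*$ to exactly $r$ subgraphs by appending empty graphs.

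To convert expectation into a high-probability guarantee, I would repeat the averaging trick of \Cref{claim: LP-rounding}: if $p'$ is the conditional probability (given no bad event) of achieving at least $\optwgp(G,r,h)/(3000\beta\log^3 N)$ edges, then since the conditional expectation is $\Omega(\optwgp(G,r,h)/(\beta\log^3 N))$ and the trivial upper bound is $\optwgp(G,r,h)$, we get $p'\ge\Omega(1/(\beta\log^3 N))$, so a single round succeeds with probability $\Omega(1/(\beta\log^3 N))$, and $\poly(N)$ independent repetitions push this to $1-1/N^{\Omega(1)}$. The one point requiring modest care is the independence claim in the edge-survival step: although $\hset''$ is random, the deduplication uses fresh randomness so the two endpoint choices are independent \emph{conditional on} $\hset^*$, which is exactly what is needed; otherwise the proof goes through with only cosmetic substitutions ($H\leftrightarrow S$, $r\leftrightarrow N/k$, $h\leftrightarrow k$) relative to \Cref{claim: LP-rounding}.
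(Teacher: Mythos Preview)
Your proposal is correct and follows essentially the same approach as the paper's proof: randomized rounding of the LP, the same three bad events bounded via Chernoff and the geometric-bucketing argument, top-$r$ selection followed by random vertex-assignment to make the family disjoint, and the averaging trick plus $\poly(N)$ repetitions. In fact you are slightly more explicit than the paper, which in the \WGP version skips the initial large-$m(H)$ case and simply says the proof is ``almost identical'' to \Cref{claim: LP-rounding} and omits the proof of the bad-event observation.
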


\begin{proof}
We assume that we are given a solution  $\set{x_H\mid H\in \hset}$ to (LPW-P), in which the number of variables $x_H$ with $x_H>0$ is bounded by $O(\poly(N))$. Denote $C=\sum_{H\in \hset}m(H)\cdot x_H$, and recall that $C\geq \opt_{\textnormal{LP-P}}/\beta$ holds. We denote by $\hset'\subseteq \hset$ the collection of all graphs $H\in \hset$ with $x_H>0$. 

We construct another collection $\hset''\subseteq \hset'$ of subgraphs of $G$ as follows. For every subgraph $H\in \hset'$, we add $H$ to $\hset''$ independently, with probability $x_H$. Clearly, $\expect{\sum_{H\in \hset''}m(H)}=\sum_{H\in \hset}m(H)\cdot x_H=C$.
	
We say that a bad event $\event_1$ happens if some vertex $v\in V(G)$ lies in more than $5\log N$ graphs in $\hset''$. We say that a bad event $\event_2$ happens if $|\hset''|> 5r\log N$. We say that a bad event $\event_3$ happens if $\sum_{H\in \hset''}m(H)<\frac{C}{8}$.
Lastly, we say that a bad event $\event$ happens if either of the events $\event_1, \event_2$, or $\event_3$ happen.
The following observation is an analogue of \Cref{obs: first bad event}. Its proof is identical and is omitted here.

\begin{observation}\label{obs: first bad event WGP}
	$\prob{\event}\leq 2/N^3$.
\end{observation}

Observe that we can efficiently check whether Event $\event$ happened. If Event $\event$ happens, then we terminate the algorithm with a FAIL. We assume from now on that Event $\event$ did not happen. Then $\sum_{H\in \hset''}m(H)\geq \frac{C}{8}\geq \frac{\optwgp(G,r,h)}{8\beta}$ must hold. We denote $\hset''=\set{H_1,H_2,\ldots,H_z}$, where the graphs are indexed according to their value $m(H)$, so that $m(H_1)\geq m(H_2)\geq \cdots\geq m(H_z)$. We then let $\hset^*=\set{H_1,\ldots,H_{r}}$ (if $z<r$, then  we set $H_{z+1}=\cdots=H_r=\emptyset$). For all $1\leq i\leq r$, we denote $E_i=E(H_i)$, so $|E_i|=m(H_i)$. Recall that, since Event $\event$ did not happen, $|\hset''|\leq 5r\log N$ holds. Therefore:

\[ \sum_{i=1}^r|E_i|\geq \frac{\sum_{H\in \hset''}m(H)}{5\log N}\geq \frac{\optwgp(G,r,h)}{40\beta\log N}. \]

As before, the graphs in set $\hset$ may not be mutually disjoint. However, since Event $\event$ did not happen, every vertex of $V(G)$ may lie in at most $5\log N$ such graphs. We now construct a new collection $\hset^{**}=\set{H_1',\ldots,H_r'}$ of graphs, as follows. For all $1\leq i\leq r$, we will define a subset $V_i\subseteq V(H_i)$ of vertices, and we will then set $H_i'=H_i[V_i]$. 
In order to define vertex sets $V_1,\ldots,V_r$, we start by setting $V_1=V_2=\cdots=V_r=\emptyset$, and then process vertices $v\in V(H)$ one by one.
Consider any vertex $v\in V(G)$, and let $H_{i_1},H_{i_2},\ldots,H_{i_a}\in \hset^*$ be the graphs of $\hset^*$ containing $v$. Vertex $v$ chooses an index $i^*\in \set{i_1,\ldots,i_a}$ at random, and is then added to $V_{i^*}$.
Once all vertices of $V(G)$ are processed, we obtain a final collection $V_1,\ldots,V_r$ of sets of vertices, where for all $1\leq i\leq r$, $V_i\subseteq V[H_i]$. For all $1\leq i\leq r$, we then set $H'_i=H_i[V_i]$. Since $H'_i\subseteq H_i$, we are then guaranteed that $|E(H'_i)|\leq h$ holds.

Note that for all $1\leq j\leq r$, for every vertex $v\in V(H_j)$, the probability that $v\in V_j$ is at least $1/(5\log N)$. We say that an edge $e=(u,v)\in E_j$ \emph{survives} if both $u,v\in V_j$. We denote by $E''\subseteq \bigcup_{i=1}^rE_i$ the set of all edges that survive. Since $\prob{u\in V_j}\geq 1/(5\log N)$, $\prob{v\in V_j}\geq 1/(5\log N)$, and the two events are independent, we get that the probability that edge $e$ survives is at least $1/(25\log^2N)$. Overall, we get that:

\[\expect{|E''|}\geq \frac{\sum_{i=1}^r|E_i|}{25\log^2N}\geq \frac{\optwgp(G,r,h)}{1000\beta\log^3 N} .\]

The final solution to instance $\pwgp(G,r,h)$ is $\hset^{**}=\set{H_1',\ldots,H_r'}$. Clearly, the value of this solution is $|E''|$.

So far we have obtained a randomized algorithm that either returns FAIL (with probability at most $2/N^3$), or it returns a solution to instance instance $\pwgp(G,r,h)$ of the \WGP problem, whose expected value is at least $\frac{\optwgp(G,r,h)}{1000\beta\log^3 N}$.

Let $p'$ be the probability that the algorithm returned a solution of value at least   $\frac{\optwgp(G,r,h)}{2000\beta\log^3 N}$, given that it did not return FAIL. Note that the expected solution value, assuming the algorithm did not return FAIL, is at most $\frac{\optwgp(G,r,h)}{2000\beta\log^3 N}+p'\cdot \optwgp(G,r,h)$. Since this expectation is also at least $\frac{\optwgp(G,r,h)}{1000\beta\log^3 N}$, we get that $p'\geq \frac{1}{1000\beta\log^3N}$. Overall, the probability that our algorithm successfully returns a solution of value at least $\frac{\optwgp(G,r,h)}{2000\beta\log^3 N}$ is $p'\cdot \prob{\neg\event}\geq \Omega\left(\frac{1}{\beta\log^3N}\right )$. By repeating the algorithm $\poly(N)$ times we can ensure that it successfully computes a solution of value at least $\frac{\optwgp(G,r,h)}{2000\beta\log^3 N}$ with high probability.
\end{proof}

\subsection{Approximately Solving the LP-Relaxation}

In this subsection we provide an approximate separation oracle for the dual linear program of (LPW-P). This is sufficient in order to obtain an algorithm with running time $O(\poly(N))$ that approximately solves (LPW-P) using the methods described in \Cref{subsec: solve the LP}. The following Linear Program is a Dual of (LPW-P). It has a variable $y_v$ for every vertex $v\in V(G)$, and an additional variable $z$.

\begin{eqnarray*}
	\mbox{(LPW-D)}&&\\
	\min& r\cdot z+\sum_{v\in V(G)} y_v\\
	\mbox{s.t.}&&\\
	& z+\sum_{v\in V(H)} y_v\geq m(H) &\forall H\in \hset\\
	&z\ge 0\\
	&y_v\geq 0&\forall v\in V(G)
\end{eqnarray*}

We denote the value of the optimal solution to (LPW-D) by $\opt_{\textnormal{LPW-D}}$.

The following lemma provides a randomized separation oracle for (LPW-D). It is an analogue of \Cref{lemma: separation oracle}, and its proof is essentially identical. We provide it here for completeness.

\begin{lemma}\label{lemma: separation oracle WGP}
Assume that there is an efficient $\alpha(n)$-approximation algorithm for the \DkS problem, where $\alpha$ is an increasing function, and $n$ is the number of vertices in the input graph. Then	there is a randomized $\beta(N)$-approximate separation oracle for (LPW-D), where $N$ is the number of variables in the input graph $G$, and $\beta(N)=O(\alpha(N^2)\cdot \log^2N)$.
\end{lemma}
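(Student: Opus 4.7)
The plan is to mirror, almost step-by-step, the proof of Lemma \ref{lemma: separation oracle} for (LP-D), adapting it to the edge-bounded constraints of (LPW-D). Given input values $(z, \{y_v\})$, we first dispense with the trivial constraints $z\geq 0$ and $y_v\geq 0$ by direct inspection. We then discretize: set $m$ to be the smallest power of $2$ exceeding $|E(G)|$, cap each $y_v$ at $m$, zero out those below $1/4$, round the remaining values up to the nearest power of $2$, and set $z'=2z$. Call a subgraph $H\in \hset$ \emph{problematic} if $z' + \sum_{v\in V(H)} y'_v < 8\,m(H)/\beta$, where $\beta = c\cdot \alpha(N^2)\cdot \log^2 N$ for a sufficiently large constant $c$. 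Exact analogues of \Cref{obs: relating old and new LP values} and \Cref{obs: relating old to new 2} go through verbatim (the only change is that a ``set $S$'' is replaced by ``subgraph $H$'' whose vertex set is the set of endpoints of its edges), so it suffices to find a violated constraint for $(z', \{y'_v\})$ under the assumption that problematic subgraphs exist.

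The algorithm itself is essentially the one from Lemma \ref{lemma: separation oracle}. Pick a uniformly random partition $V(G) = A \sqcup B$, set $q = \log(8m)$, and bucket vertices of $A$ (resp.\ $B$) by their $y'$-value into $A_0,\ldots, A_q$ (resp.\ $B_0,\ldots, B_q$). For each bucket pair $(i,j)$ form the bipartite graph $G_{i,j}$ on $A_i\cup B_j$ with the edges of $E_G(A_i, B_j)$, and for every pair of non-negative integers $(k_1,k_2)$ with $k_1\le |A_i|$, $k_2\le |B_j|$, apply the $O(\alpha(N^2))$-approximation for \BDkS (obtained from the \DkS algorithm via \Cref{lem: DkS and Dk1k2S}) to obtain a set $S^{k_1,k_2}_{i,j}$. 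For each such $S = S^{k_1,k_2}_{i,j}$, let $E_S = E_{G_{i,j}}(S)$. If $|E_S|\le h$ and $z' + \sum_{v\in S} y'_v < |E_S|$, return $H = (V(E_S), E_S)$ as a violated constraint. If $|E_S|>h$, replace $E_S$ by an arbitrary $h$-edge subset $E'\subseteq E_S$ and check whether $z' + \sum_{v\in V(E')} y'_v < h$; if so, return $H = (V(E'), E')$ as a violated constraint. If no check succeeds, return ``accept''.

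The main step is the analogue of \Cref{obs: if problematic then fail}: if there exists a problematic subgraph $H^*$, the algorithm returns ``accept'' with probability at most $2/3$. Letting $E^* = E(H^*)$, a standard computation shows that with probability at least $1/3$ the random bipartition keeps a subset $E''\subseteq E^*$ of bichromatic edges with $|E''|\ge |E^*|/8$; conditioning on this, there is a bucket pair $(i,j)$ for which $|E''\cap E_{i,j}| \ge |E''|/(4q^2) \ge |E^*|/(128\log^2 m)$. Set $A'_i = V(H^*)\cap A_i$, $B'_j = V(H^*)\cap B_j$, and $k_1 = |A'_i|$, $k_2=|B'_j|$. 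Then $A'_i\cup B'_j$ is a feasible solution to $\pbdks(G_{i,j}, k_1, k_2)$ with at least $|E''\cap E_{i,j}|$ edges, so the approximation returns $S''$ with $|E_{G_{i,j}}(S'')|\ge \Omega(|E^*|/(\alpha(N^2)\log^2 m))$. Because all vertices in a given bucket share their $y'$-value (up to a factor $2$), we get $\sum_{v\in A\cap S''} y'_v \le \sum_{v\in A'_i} y'_v$ and similarly for $B$, hence $z' + \sum_{v\in S''} y'_v \le z' + \sum_{v\in V(H^*)} y'_v < 8\,m(H^*)/\beta \le 8h/\beta$. For $c$ large enough this is strictly less than $\min(|E_{G_{i,j}}(S'')|, h)$, so either the ``untruncated'' or the ``truncated'' check triggers and the algorithm returns a violated constraint.

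The main obstacle I anticipate is the slightly awkward interplay between the edge budget $h$ (which is the constraint indexing parameter in (LPW-D)) and the vertex budgets $(k_1,k_2)$ that parametrize \BDkS. This is why we must include the truncation step: the approximate \BDkS solution may return a set $S''$ whose induced bipartite subgraph has more than $h$ edges, and we need the returned violated constraint to correspond to a valid subgraph in $\hset$ (i.e.\ with at most $h$ edges). Restricting to any $h$-edge subset preserves the key inequality since deleting edges can only decrease $|V(H)|$, and hence only decrease $\sum_{v\in V(H)} y'_v$, so the truncated subgraph $H'$ satisfies $z' + \sum_{v\in V(H')} y'_v \le z' + \sum_{v\in S''} y'_v < h = m(H')$ whenever the original inequality holds strictly with $h$ on the right-hand side. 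Once this is established, the union bound over the random bipartition and the $1/3$-probability argument from \Cref{obs: if problematic then fail} yield the $2/3$ bound on the probability of erroneously accepting, which is what the definition of a randomized approximate separation oracle requires.
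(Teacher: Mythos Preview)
Your proposal is correct and follows essentially the same approach as the paper's proof. The only noteworthy difference is cosmetic: you cap $y'_v$ at $m\approx |E(G)|$ and use $q=\log(8m)$ buckets, whereas the paper caps at the smallest power of $2$ above $h$ and uses $q=\lceil\log(8h)\rceil$ buckets (since $m(H)\le h$ for every $H\in\hset$, the tighter cap suffices); both give $q=O(\log N)$ and the same $\beta(N)$, and your explicit ``truncated vs.\ untruncated'' check is equivalent to the paper's single success condition $z'+\sum_{v\in S}y'_v<\min\{h,\,m_{i,j}^{k_1,k_2}\}$ followed by edge deletion.
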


\begin{proof}
Recall that we are given as input real values $z$ and $\set{y_v\mid v\in V(G)}$. As before, we can efficiently check whether $z\geq 0$, and whether $y_v\geq 0$ for all $v\in V(G)$. If this is not the case, we can return the corresponding violated constraint.

We say that a subgraph $H\in \hset$ is \emph{bad} if $z+\sum_{v\in V(H)}y_v<m(H)/\beta$ holds, where $\beta=c\cdot \alpha(N^2)\cdot \log^2N$, and $c$ is a large enough constant whose value we set later.
Our goal is to design an efficient algorithm that either returns a violated constraint of the LP (that is, a graph $H\in \hset$ for which 
$z+\sum_{v\in V(H)}y_v<m(H)$ holds); or it returns ``accept''. We require that, if there exists a bad subgraph $H\in \hset$, then the probability that the algorithm returns ``accept'' is at most $2/3$.

We slightly modify the input values in $\set{y_v\mid v\in V(G)}$, almost exactly like in the proof of \Cref{lemma: separation oracle WGP}.  First, for every vertex $v\in V(G)$ with $y_v>h$, we let $y'_v$ be the smallest integral power of $2$ that is greater than $h$, and for every vertex $v\in V(G)$ with $y_v<1/4$, we set $y'_v=0$. For each remaining vertex $v$, we let $y'_v$ be the smallest integral power of $2$ that is greater than $4y_v$. Notice that, for every vertex $v$ with $y'_v\neq 0$, $1\leq y'_v\leq 4h$ holds, and $y'_v$ is an integral power of $2$. We also set $z'=2z$. We say that a subgraph $H\in \hset$ is \emph{problematic} if 
 $z'+\sum_{v\in V(H)}y'_v<8m(H)/\beta$ holds. We
use the following two observations, that are analogues of \Cref{obs: relating old and new LP values} and \Cref{obs: relating old to new 2}; their proofs are also almost identical.

\begin{observation}\label{obs: relating old and new LP values WGP}
	If $H\in \hset$ is a bad subgraph of $G$, then it is a problematic subgraph of $G$. 
\end{observation}
\begin{proof}
	Recall that, if $H$ is a bad subgraph, then $z+\sum_{v\in V(H)}y_v<m(H)/\beta$ must hold. Since, for every vertex $v\in V(G)$, $y'_v\leq 8y_v$, and $z'=2z$, we get that:
	
\[z'+\sum_{v\in V(H)}y'_v\leq 2z+8\sum_{v\in V(H)}y_v\leq 8\left(z+\sum_{v\in V(H)}y_v  \right )<8m(H)/\beta.\]
	
Therefore, subgraph $H$ is problematic.	
\end{proof}

\begin{observation}\label{obs: relating old to new 2 WGP}
	Assume that there exists a subgraph $H\in \hset$, for which $z'+\sum_{v\in V(H)}y'_v<m(H)$ holds. Let $H'\subseteq H$ be the graph obtained from $H$ after removing all isolated vertices from it. Then $z+\sum_{v\in V(H')}y_v<m(H')$ holds. 
\end{observation}

\begin{proof}
Since every vertex $v\in V(H)\setminus V(H')$ is isolated in $H$, we get that $m(H')=|E(H')|=|E(H)|=m(H)$. We partition the vertices of $H'$ into two subsets: set $X$ containing all vertices $v\in V(H')$ with $y_v<1/4$, and set $Y$ containing all remaining vertices. Clearly, $\sum_{v\in X}y_v<\frac{|X|}{4}\leq \frac{m(H')} 2$ (since $m(H')\geq |V(H')|/2\geq |X|/2$, as graph $H'$ contains no isolated vertices).

Assume for contradiction that $z+\sum_{v\in V(H')}y_v\geq m(H')$. Then:

\[ z+\sum_{v\in Y}y_v\geq m(H')-\sum_{v\in X}y_v\geq m(H')/2.\]

We now consider two cases. The first case is when there is some vertex $v\in Y$ with $y_v\geq h$. In this case, $y'_v\geq h$ holds, and $z'+\sum_{v\in S}y'_v\geq h>m(H')$ holds, a contradiction.

Otherwise,  for every vertex $v\in Y$, $y'_v\geq 4y_v$ holds. Since $z'=2z$ also holds, we get that:

\[z'+\sum_{v\in V(H)}y'_v\geq z'+\sum_{v\in Y}y'_v\geq 2z+4\sum_{v\in Y}y_v\geq m(H')=m(H),\]

a contradiction. 
\end{proof}

From now on we focus on values $z',\set{y'_v\mid v\in V(G)}$. It is now enough to design an efficient randomized algorithm, that either computes a subgraph $H\in \hset$, for which $z'+\sum_{v\in V(H)}y'_v<m(H)$ holds, or returns ``accept''. It is enough to ensure that, if there is a problematic subraph $H\in \hset$, then the algorithm returns ``accept'' with probability at most $2/3$. Indeed, if there is a bad subgraph $H\in \hset$, then, from  \Cref{obs: relating old and new LP values WGP}, there is a problematic subgraph, and the algorithm will return ``accept'' with probability at most $2/3$. On the other hand, if the algorithm computes a subgraph $H\in \hset$ of vertices, for which $z'+\sum_{v\in S}y'_v<m(S)$ holds, then we can return the subgraph $H'\subseteq H$ from the statement of \Cref{obs: relating old to new 2 WGP}, that defines a violated constraint with respect to the original LP-values.

Our algorithm is essentially the same as before: it computes a random partition $(A,B)$ of the vertices of $G$, where every vertex $v\in V(G)$ is independently added to $A$ or to $B$ with probability $1/2$ each. Let $q=\ceil{\log(8h)}$. For all $1\leq i\leq q$, we define a set $A_i\subseteq A$ of vertices: $A_i=\set{v\in A\mid y'_v=2^{i-1}}$, and we let $A_0=\set{v\in A\mid y'_v=0}$. Clearly, $(A_0,\ldots,A_q)$ is a partition of the set $A$ of vertices.

We compute  a partition $(B_0,\ldots,B_q)$ of the vertices of $B$ similarly. For all $0\leq i,j\leq  q$, we denote by $E_{i,j}$ the set of all edges $e=(u,v)$ with $u\in A_i$ and $v\in B_j$, and we define a bipartite graph $G_{i,j}$, whose vertex set is $A_i\cup B_j$, and edge set is $E_{i,j}$.

Recall that we have assumed that there is an efficient $\alpha(n)$-approximation algorithm for the \DkS problem, where $n$ is the number of vertices in the input graph. From \Cref{lem: DkS and Dk1k2S}, there exists an efficient $O(\alpha(\hat n^2))$-approximation algorithm for the \BDkS problem, where $\hat n$ is the number of vertices in the input graph.
We denote this algorithm by $\aset'$.

For every pair $0\leq i,j\leq q$ of integers, and every pair $0\leq k_1,k_2\leq N$ of integers, we apply Algorithm $\aset'$ for the \BDkS problem to graph $G_{i,j}$, with parameters $k_1$ and $k_2$. Let $S_{i,j}^{k_1,k_2}$ be the output of this algorithm, and let $m_{i,j}^{k_1,k_2}$ be the number of edges in the subgraph of $G_{i,j}$ that is induced by the set $S_{i,j}^{k_1,k_2}$ of vertices. We say that the application of algorithm $\aset'$ is \emph{successful} if $z'+\sum_{v\in S_{i,j}^{k_1,k_2}}y'_v<\min\set{h,m_{i,j}^{k_1,k_2}}$, and otherwise it is \emph{unsuccessful}. If, for any quadruple $(i,j,k_1,k_2)$ of indices, the application of algorithm $\aset'$ was successful, then we return a graph $H$, that is defined as the subgraph of $G$ induced by the set  $S=S_{i,j}^{k_1,k_2}$ of vertices; if this graph contains more than $h$ edges, then we delete arbitrary edges from it, until $|E(H)|=h$ holds. Clearly, $H\in \hset$ must hold. Moreover, we are guaranteed that $z'+\sum_{v\in V(H)}y'_v<\min\set{h,m_{i,j}^{k_1,k_2}}\leq m(H)$, as required. If every application of algorithm $\aset'$ is unsuccessful, then we return ``accept''. The following observation will finish the proof of \Cref{lemma: separation oracle WGP}. The observation is an analogue of \Cref{obs: if problematic then fail} and its proof is essentially identical.

\begin{observation}\label{obs: if problematic then fail WGP}
	Suppose there is a problematic subgraph $H\in \hset$. Then the probability that the algorithm returns ``accept'' is at most $2/3$.
\end{observation}

\begin{proof}
	Let $H\in \hset$ be a problematic subgraph, and denote $S=V(H)$, so $z'+\sum_{v\in S}y'_v<8m(S)/\beta$ holds. Let $E'=E(H)$, so $|E'|=m(H)\leq h$. 
	
	Denote $A_S=A\cap S,B_S=B\cap S$, and let $E''\subseteq E'$ be the set of edges $e$, such that exactly one endpoint of $e$ lies in $A$. Clearly, for every edge $e\in  E'$, $\prob{e\in E''}=1/2$. Therefore, $\expect{|E''|}=|E'|/2$. Let $\event'$ be the bad event that $|E''|<|E'|/8$. Using the same arguments as in the proof of \Cref{obs: if problematic then fail}, $\prob{\event'}\leq 2/3$. Next, we show that, if Event $\event'$ does not happen, then the algorithm does not return ``accept''. 
	
	From now on we assume that Event $\event'$ did not happen, so $|E''|\geq |E'|/8$. 
	Therefore: 
	
	$$z'+\sum_{v\in S}y'_v <\frac{8m(H)}{\beta}\leq \frac{64|E''|}{\beta}$$
	
	 holds.

	Clearly, there must be a pair $0\leq i,j\leq q$ of indices, such that $|E''\cap E_{i,j}|\geq \frac{|E''|}{4q^2}\geq\frac{|E''|}{128\log^2 m}$. We now fix this pair $i,j$ of indices, and denote $A'_i=A_i\cap S$ and $B'_j=B_j\cap S$. We also denote $k_1=|A'_i|$ and let $k_2=|B'_j|$.
	Denote $M_{i,j}=|E''\cap E_{i,j}|$.
	Fom our choice of indices $i,j$, we get that:

	\[ z'+\sum_{v\in A'_i\cup B'_j}y'_v\leq z'+\sum_{v\in S}y'_v\leq  \frac{64|E''|}{\beta}\leq \frac{M_{i,j}}{\beta}\cdot (2^{13}\cdot\log^2m).\]
	
	 Notice that the set $S'=A'_i \cup B'_j$ of vertices provides a solution to the instance of the \BDkS problem on graph $G_{i,j}$ with parameters $k_1,k_2$, whose value is at least $M_{i,j}$. Let $S''=S_{i,j}^{k_1,k_2}$ be the set of vertices obtained by applying Algorithm $\aset'$ to graph $G_{i,j}$ with parameters $k_1,k_2$. Since $|V(G_{i,j})|\leq N$, and since $\aset'$ is an $O(\alpha(N^2))$-approximation algorithm for \BDkS, we are guaranteed that $|E_G(S'')|\geq \Omega\left(\frac{M_{i,j}}{\alpha(N^2)}\right )$. Recall that $|A\cap S''|\leq k_1$; $A\cap S''\subseteq A_i$, and all vertices  $v\in A_i$ have an identical value $y'_v$. Therefore, $\sum_{v\in A\cap S''}y'_v\leq \sum_{v\in A'_i}y'_v$. Using a similar reasoning, 
	 $\sum_{v\in B\cap S''}y'_v\leq \sum_{v\in B'_j}y'_v$. Overall, we then get that:

	 \[
	 \begin{split}
	 z'+\sum_{v\in S''}y'_v& \leq z'+\sum_{v\in S'}y'_v\\
	 &\leq   \frac{M_{i,j}}{\beta}\cdot (2^{13}\cdot\log^2m)\\
	 &\leq O\left (\frac{\alpha(N^2)\cdot 2^{13}\cdot \log^2m}{\beta}\right )\cdot \min\set{|E_G(S'')|,h}.
	 \end{split}\]

Recall that $\beta=c\cdot \alpha(N^2)\cdot \log^2N$. By letting the value of the constant $c$ be large enough, we can ensure that $z'+\sum_{v\in S''}y'_v<\min\set{|E_G(S'')|,h}$, and so the application of algorithm $\aset'$ to graph $G_{i,j}$ with parameters $k_1$ and $k_2$ is guaranteed to be successful. Therefore, if Event $\event'$ does not happen, and we set $c$ to be a large enough constant, then our algorithm does not return ''accept''. Since $\prob{\event'}\leq 2/3$, the observation follows.
\end{proof}
\end{proof}

We can use the separation oracle described in \Cref{lemma: separation oracle WGP} in order to obtain a $\beta(N)$-approximate solution to (LPW-P), whose support size is bounded by $O(\poly(N))$ using the standard techniques that were described in \Cref{subsec: solve the LP}; we do not repeat them here. By applying the LP-rounding algorithm from \Cref{claim: LP-rounding WGP} to the resulting LP-solution, with high probability we obtain, in time $O(\poly(N))$,  an integral solution $(H_1,\ldots,H_r)$ to instance $\pwgp(G,r,h)$, such that $\sum_{i=1}^{r}|E(H_i)|\geq \Omega\left(\frac{\optwgp(G,r,h)}{300\beta(N) \log^3N}\right )$.  Since $\beta(N)=O(\alpha(N^2)\cdot \log^2N)$, with high probability we obtain an $O(\alpha(N^2)\cdot \poly\log N)$-approximate solution to the input instance of \WGP.

\bibliographystyle{alpha}

\bibliography{REF}

\end{document}